\let\C\relax
\author{
   Xue Chen\thanks{Supported by NSF Grant CCF-1526952.}\\
  \texttt{xchen@cs.utexas.edu}\\
  The University of Texas at Austin
  \and
  Daniel M. Kane\\
  \texttt{dakane@cs.ucsd.edu}\\
  University of California, San Diego
  \and
  Eric Price \\
  \texttt{ecprice@cs.utexas.edu}\\
  The University of Texas at Austin
  \and
  Zhao Song \\
  \texttt{zhaos@utexas.edu}\\
  The University of Texas at Austin
}
\date{\today}
\title{Fourier-sparse interpolation without a frequency gap}
\newtheorem{theorem}{Theorem}[section]
\newtheorem{lemma}[theorem]{Lemma}
\newtheorem{definition}[theorem]{Definition}
\newtheorem{proposition}[theorem]{Proposition}
\newtheorem{corollary}[theorem]{Corollary}
\newtheorem{observation}[theorem]{Observation}
\newtheorem{fact}[theorem]{Fact}
\newtheorem{remark}[theorem]{Remark}
\newtheorem{claim}[theorem]{Claim}
\newcommand{\abs}[1]{|#1|}
\newcommand{\wh}{\widehat}
\newcommand{\wt}{\widetilde}
\newcommand{\ov}{\overline}
\newcommand{\eps}{\epsilon}
\newcommand{\N}{\mathcal{N}}
\newcommand{\R}{\mathbb{R}}
\newcommand{\volume}{\mathrm{volume}}
\newcommand{\Conpoly}{4}
\newcommand{\Conlog}{3}
\newcommand{\finalC}{5}
\newcommand{\polydelta}{\poly(k,\log(1/\delta))/T}
\newcommand{\RHS}{\mathrm{RHS}}
\newcommand{\LHS}{\mathrm{LHS}}
\renewcommand{\i}{\mathbf{i}}
\newcommand{\norm}[1]{\left\lVert#1\right\rVert}
\newenvironment{proofof}[1]{\bigskip \noindent {\it Proof of #1.}\quad }
{\qed\par\vskip 4mm\par}
\DeclareMathOperator*{\E}{\mathbb{E}}
\DeclareMathOperator*{\var}{\mathrm{Var}}
\DeclareMathOperator*{\Z}{\mathbb{Z}}
\DeclareMathOperator*{\C}{\mathbb{C}}
\DeclareMathOperator*{\median}{median}
\DeclareMathOperator*{\mean}{mean}
\DeclareMathOperator{\supp}{supp}
\DeclareMathOperator{\poly}{poly}
\DeclareMathOperator{\loc}{loc}
\DeclareMathOperator{\repeats}{repeat}
\DeclareMathOperator{\heavy}{heavy}
\DeclareMathOperator{\emp}{emp}
\DeclareMathOperator{\est}{est}
\DeclareMathOperator{\rect}{rect}
\DeclareMathOperator{\sinc}{sinc}
\DeclareMathOperator{\Gram}{Gram}
\DeclareMathOperator{\Gaussian}{Gaussian}
\DeclareMathOperator{\dis}{dis}
\DeclareMathOperator{\Sym}{Sym}
\DeclareMathOperator{\Comb}{Comb}
\newcommand*{\RN}[1]{\expandafter\@slowromancap\romannumeral #1@}
\newcommand{\define}[4][ignore]{%
  \ifstrequal{#1}{ignore}{}{
  \@namedef{thmtitle@#2}{#1}}%
  \@namedef{thm@#2}{#4}%
  \@namedef{thmtypen@#2}{lemma}%
  \newtheorem{thmtype@#2}[theorem]{#3}%
  \newtheorem*{thmtypealt@#2}{#3~\ref{#2}}%
}
\newcommand{\state}[1]{%
  \@namedef{curthm}{#1}
  \@ifundefined{thmtitle@#1}{
  \begin{thmtype@#1}
    }{
  \begin{thmtype@#1}[\@nameuse{thmtitle@#1}]
  }
    \label{#1}
    \@nameuse{thm@#1}
  \end{thmtype@#1}
  \@ifundefined{thmdone@#1}{
  \@namedef{thmdone@#1}{stated}%
  }{}
}
\newcommand{\restate}[1]{%
  \@namedef{curthm}{#1}
  \@ifundefined{thmtitle@#1}{
    \begin{thmtypealt@#1}
    }{
  \begin{thmtypealt@#1}[\@nameuse{thmtitle@#1}]
  }
    \@nameuse{thm@#1}
  \end{thmtypealt@#1}
  \@ifundefined{thmdone@#1}{
  \@namedef{thmdone@#1}{stated}%
  }{}
}
\newcommand{\thmlabel}[1]{
  \@ifundefined{thmdone@\@nameuse{curthm}}{\label{#1}
    }{\tag*{\eqref{#1}}}
}
\begin{document}

\begin{titlepage}
  \maketitle
  \begin{abstract}
We consider the problem of estimating a Fourier-sparse signal from
noisy samples, where the sampling is done over some interval $[0, T]$
and the frequencies can be ``off-grid''.  Previous methods for this
problem required the gap between frequencies to be above $1/T$, the
threshold required to robustly identify individual frequencies.  We
show the frequency gap is not necessary to estimate the signal as a
whole: for arbitrary $k$-Fourier-sparse signals under $\ell_2$ bounded
noise, we show how to estimate the signal with a constant factor
growth of the noise and sample complexity polynomial in $k$ and
logarithmic in the bandwidth and signal-to-noise ratio.

As a special case, we get an algorithm to interpolate degree $d$
polynomials from noisy measurements, using $O(d)$ samples and
increasing the noise by a constant factor in $\ell_2$.

  \end{abstract}
  \thispagestyle{empty}
\end{titlepage}


\tableofcontents
\newpage

\section{Introduction}

In an interpolation problem, one can observe $x(t) = x^*(t) + g(t)$,
where $x^*(t)$ is a structured signal and $g(t)$ denotes noise, at
points $t_i$ of one's choice in some interval $[0, T]$.  The goal is
to recover an estimate $\wt{x}$ of $x^*$ (or of $x$).  Because we can
sample over a particular interval, we would like our approximation to
be good on that interval, so for any function $y(t)$ we define
\[
\norm{y}^2_T = \frac{1}{T} \int_0^T \abs{y(t)}^2\mathrm{d}t.
\]
to be the $\ell_2$ error on the sample interval.  For some parameters
$C$ and $\delta$, we would then like to get
\begin{align}\label{eq:interpolate}
  \norm{\wt{x} - x^*}_T &\leq C\norm{g}_T + \delta \norm{x^*}_T
\end{align}
while minimizing the number of samples and running time.  Typically,
we would like $C$ to be $O(1)$ and to have $\delta$ be very small
(either zero, or exponentially small).  Note that, if we do not care
about changing $C$ by $O(1)$, then by the triangle inequality it
doesn't matter whether we want to estimate $x^*$ or $x$ (i.e. we could
replace the LHS of~\eqref{eq:interpolate} by $\norm{\wt{x} - x}_T$).

Of course, to solve an interpolation problem one also needs $x^*$ to
have structure.  One common form of structure is that $x^*$ have a
sparse Fourier representation.  We say that a function $x^*$ is
$k$-Fourier-sparse if it can be expressed as a sum of $k$ complex
exponentials:
\[
x^*(t) = \sum_{j=1}^k v_je^{2\pi \i f_j t}.
\]
for some $v_j \in \C$ and $f_j \in [-F, F]$, where $F$ is the
``bandlimit''.  Given $F$, $T$, and $k$, how many samples must we take
for the interpolation~\eqref{eq:interpolate}?

If we ignore sparsity and just use the bandlimit, then Nyquist
sampling and Shannon-Whittaker interpolation uses $FT + 1/\delta$
samples to achieve~\eqref{eq:interpolate}.  Alternatively, in the
absence of noise, $x^*$ can be found from $O(k)$ samples by a variety
of methods, including Prony's method from 1795 or Reed-Solomon
syndrome decoding~\cite{M69}, but these methods are not robust to
noise.

If the signal is periodic with period $T$---i.e., the frequencies are
multiples of $1/T$---then we can use sparse discrete Fourier transform
methods, which take $O(k \log^c (FT/\delta))$ time and samples
(e.g. \cite{GGIMS,HIKP,IKP}).  If the frequencies are not multiples of
$1/T$ (are ``off the grid''), then the discrete approximation is only
$k/\delta$ sparse, making the interpolation less efficient; and even
this requires that the frequencies be well separated.

A variety of algorithms have been designed to recover off-grid
frequencies directly, but they require the minimum gap among the
frequencies to be above some threshold.  With frequency gap at least
$1/T$, we can achieve a $k^c$ approximation factor using $O(FT)$
samples~\cite{M15}, and with gap above $O(\log^2 k)/T$ we can get a
constant approximation using $O(k \log^c(FT/\delta))$ samples and
time~\cite{PS15}.

Having a dependence on the frequency gap is natural.  If two
frequencies are very close together---significantly below $1/T$---then
the corresponding complex exponentials will be close on $[0, T]$, and
hard to distinguish in the presence of noise.  In fact, from a lower
bound in~\cite{M15}, below $1/T$ frequency gap one cannot recover the
frequencies in the presence of noise as small as $2^{-\Omega(k)}$.
The lower bound proceeds by constructing two signals using
significantly different frequencies that are exponentially close over
$[0, T]$.

But if two signals are so close, do we need to distinguish them?  Such
a lower bound doesn't apply to the interpolation problem, it just says
that you can't solve it by finding the frequencies.  Our question
becomes: can we benefit from Fourier sparsity in a regime where we
can't recover the individual frequencies?

We answer in the affirmative, giving an algorithm for the
interpolation using $O(\poly(k \log(FT/\delta))$ samples.  Our main
theorem is the following:

\define{thm:main}{Theorem}{
  Let $x(t) = x^*(t) + g(t)$, where $x^*$ is $k$-Fourier-sparse signal with
  frequencies in $[-F, F]$.  Given samples of $x$ over $[0, T]$ we can
  output $\wt{x}(t)$ such that with probability at least $1-2^{-\Omega(k)}$,
  \[
  \norm{\wt{x} - x^*}_T \lesssim \norm{g}_T + \delta \norm{x^*}_T.
  \]
  Our algorithm uses $\poly(k,\log (1/\delta) ) \cdot \log(FT)$
  samples and $\poly(k,\log(1/\delta)) \cdot \log^2(FT)$ time.  The output $\wt{x}$
  is $\poly(k,\log(1/\delta))$-Fourier-sparse signal.
}
\state{thm:main}
Relative to previous work, this result avoids the need for a frequency
gap, but loses a polynomial factor in the sample complexity and time.
We lose polynomial factors in a number of places; some of these are
for ease of exposition, but others are challenging to avoid.

Degree $d$ polynomials are the special case of $d$-Fourier-sparse
functions in the limit of $f_j \to 0$, by a Taylor expansion.  This is
a regime with no frequency gap, so previous sparse Fourier results
would not apply but Theorem~\ref{thm:main} shows that
$\poly(d\log(1/\delta))$ samples suffices.  In fact, in this special
case we can get a better polynomial bound:
\define{thm:faster_poly_learning}{Theorem}{%
  For any degree $d$ polynomial $P(t)$ and an arbitrary function $g(t)$, Procedure \textsc{RobustPolynomialLearning}
  in Algorithm~\ref{alg:main_1}
  takes $O(d)$ samples from $x(t)=P(t) + g(t)$
  over $[0,T]$ and reports a degree $d$ polynomial $Q(t)$ in time
  $O(d^\omega)$ such that, with probability at least $99/100$,
\begin{equation*}
\|P(t)-Q(t)\|_T^2 \lesssim \|g(t)\|_T^2.
\end{equation*}
where $\omega<2.373$ is matrix multiplication exponent \cite{S69},\cite{CW87},\cite{W12}.
}
\state{thm:faster_poly_learning}

We also show how to reduce the failure probability to an arbitrary $p
> 0$ with $O(\log(1/p))$ independent repetitions, in Theorem
\ref{thm:accurate_poly_learning}.

Although we have not seen such a result stated in the literature, our
method is quite similar to one used in~\cite{CDL13}.  Since $d$
samples are necessary to interpolate a polynomial without noise, the
result is within constant factors of optimal.

One could apply Theorem~\ref{thm:faster_poly_learning} to approximate
other functions that are well approximated by polynomials or piecewise
polynomials.  For example, a Gaussian of standard deviation at least
$\sigma$ can be approximated by a polynomial of degree
$O(\left(\frac{T}{\sigma}\right)^2 + \log(1/\delta))$; hence the same
bound applies as the sample complexity of improper interpolation of a
positive mixture of Gaussians.

\subsection{Related work}

\paragraph{Sparse discrete Fourier transforms.}
There is a large literature on sparse discrete Fourier transforms.
Results generally are divided into two categories: one category of
results that carefully choose measurements that allow for sublinear
recovery time, including~\cite{GGIMS,GMS,HIKP12,
  Iw13,HIKP,IK,IKP,K16}.  The other category of results expect
randomly chosen measurements and show that a generic recovery
algorithm such as $\ell_1$ minimization will work with high
probability; these results often focus on proving the Restricted
Isometry Property~\cite{CRT06,RV08,B14, H15}.  At the moment, the
first category of results have better theoretical sample complexity
and running time, while results in the second category have better
failure probabilities and empirical performance.  Our result falls in
the first category.  The best results here can achieve $O(k \log n)$
samples~\cite{IK}, $O(k \log^2 n)$ time~\cite{HIKP12}, or within
$\log \log n$ factors of both~\cite{IKP}.

For signals that are not periodic, the discrete Fourier transform will
not be sparse: it takes $k/\delta$ frequencies to capture a $1-\delta$
fraction of the energy.  To get a better dependence on $\delta$, one
has to consider frequencies ``off the grid'', i.e. that are not
multiples of $1/T$.

\paragraph{Off the grid.} Finding the frequencies of a signal with
sparse Fourier transform off the grid has been a question of extensive
study.  The first algorithm was by Prony in 1795, which worked in the
noiseless setting.  This was refined by classical algorithms like
MUSIC~\cite{schmidt81} and ESPRIT~\cite{roy86}, which empirically work
better with noise.  Matrix pencil~\cite{BM86} is a method for
computing the maximum likelihood signal under Gaussian noise and
evenly spaced samples.  The question remained how accurate the maximum
likelihood estimate is;~\cite{M15} showed that it has an $O(k^c)$
approximation factor if the frequency gap is at least $1/T$.

Now, the above results all use $FT$ samples, which is analogous to $n$
in the discrete setting.  This can be decreased down till $O(k)$ by
only looking at a subset of time, i.e. decreasing $T$; but doing so
increases the frequency gap needed for decent robustness results.

A variety of works have studied how to adapt sparse Fourier techniques
from the discrete setting to get sublinear sample complexity; they
all rely on the minimum separation among the frequencies to be at least
$c/T$ for $c \geq 1$.  \cite{TBSR} showed that a convex program can
recover the frequencies exactly in the noiseless setting, for $c \geq
4$.  This was improved in \cite{CF14} to $c \geq 2$ for complex
signals and $c \geq 1.87$ for real signals.  \cite{CF14} also gave a
result for $c \geq 2$ that was stable to noise, but this required the
signal frequencies to be placed on a finely spaced grid.  \cite{YX15}
gave a different convex relaxation that empirically requires smaller
$c$ in the noiseless setting.  \cite{BD13} used model-based compressed
sensing when $c = \Omega(1)$, again without theoretical noise
stability.  Note that, in the noiseless setting, exact recovery can be
achieved without any frequency separation using Prony's method or
Berlekamp-Massey syndrome decoding~\cite{M69}; the benefit of the
above results is that a convex program might be robust to noise, even
if it has not been proven to be so.

In the noisy setting, \cite{FL12} gave an extension of Orthogonal
Matching Pursuit (OMP) that can recover signals when $c = \Omega(k)$,
with an approximation factor $O(k)$, and a few other assumptions.
Similarly,~\cite{BCGLS} gave a method that required $c = \Omega(k)$
and was robust to certain kinds of noise.  \cite{HK15} got the
threshold down to $c = O(1)$, in multiple dimensions, but with
approximation factor $O(FT k^{O(1)})$.

\cite{TBR15} shows that, under Gaussian noise and with separation $c
\geq 4$, a semidefinite program can optimally estimate $x^*(t_i)$ at
evenly spaced sample points $t_i$ from observations $x^*(t_i) +
g(t_i)$.  This is somewhat analogous to our setting, the differences
being that (a) we want to estimate the signal over the entire
interval, not just the sampled points, (b) our noise $g$ is
adversarial, so we cannot hope to reduce it---if $g$ is also
$k$-Fourier-sparse, we cannot distinguish $x^*$ and $g$, and of course
(c) we want to avoid requiring frequency separation.

In \cite{PS15}, we gave the first algorithm with $O(1)$ approximation
factor, finding the frequencies when $c \gtrsim \log(1/\delta)$, and
the signal when $c \gtrsim \log(1/\delta) + \log^2 k$.

Now, all of the above results algorithms are designed to recover the
frequencies; some of the ones in the noisy setting then show that this
yields a good approximation to the overall signal (in the noiseless
setting this is trivial).  Such an approach necessitates $c \geq 1$:
\cite{M15} gave a lower bound, showing that any algorithm finding the
frequencies with approximation factor $2^{o(k)}$ must require $c \geq
1$.

Thus, in the current literature, we go from not knowing how to get any
approximation for $c < 1$, to getting a polynomial approximation at $c
= 1$ and a constant approximation at $c \gtrsim \log^2 k$.  In this work,
we show how to get a constant factor approximation to the signal
regardless of $c$.

\paragraph{Polynomial interpolation.}  Our result is a generalization
of robust polynomial interpolation, and in
Theorem~\ref{thm:faster_poly_learning} we construct an optimal method
for polynomial interpolation as a first step toward interpolating
Fourier-sparse signals.

Our result here can be seen as essentially an extension of a technique
shown in~\cite{CDL13}.  The focus of~\cite{CDL13} is on the setting
where sample points $x_i$ are chosen independently, so $\Theta(d\log
d)$ samples are necessary.  One of their examples, however, shows
essentially the same thing as our
Corollary~\ref{cor:good_approximation}.  From this, getting our
theorem is not difficult.

The recent work~\cite{GZ16} looks at robust polynomial interpolation
in a different noise model, featuring $\ell_\infty$ bounded noise with
some outliers.  In this setting they can get a stronger $\ell_\infty$
guarantee on the output than is possible in our setting.




\paragraph{Nyquist sampling.} The classical method for learning
bandlimited signals uses Nyquist sampling---i.e., samples at rate $1/F$,
for $FT$ points---and interpolates them using Shannon-Nyquist
interpolation.  This doesn't require any frequency gap, but also
doesn't benefit from sparsity like sparse Fourier transform-based
techniques.  As discussed in~\cite{PS15}, on the signal $x(t) = 1$ it
takes $FT + O(1/\delta)$ samples to get $\delta$ error on average.
Our dependence is logarithmic on both those terms.

\subsection{Our techniques}

Previous results on sparse Fourier transforms with robust recovery all
required a frequency gap.  So consider the opposite situation, where
all the frequencies converge to zero and the coefficients are adjusted
to keep the overall energy fixed.  If we take a Taylor expansion of
each complex exponential, then the signal will converge to a degree
$k$ polynomial.  So robust polynomial interpolation is a necessary
subproblem for our algorithm.

\paragraph{Polynomial interpolation.}  Let $P(x)$ be a degree $d$
polynomial, and suppose that we can query $f(x) = P(x) + g(x)$ over
the interval $[-1, 1]$, where $g$ represents adversarial noise.  We
would like to query $f$ at $O(d)$ points and output a degree $d$
polynomial $Q(x)$ such that $\norm{P - Q} \lesssim \norm{g}$, where we
define $\norm{h}^2 := \int_{-1}^1 \abs{h(x)}^2 \mathrm{d}x$.

One way to do this would be to sample points $S \subset [-1, 1]$
uniformly, then output the degree $d$ polynomial $Q$ with the smallest
empirical error
\[
\norm{P + g - Q}^2_S := \frac{1}{\abs{S}}\sum_{x \in S} \abs{(P + g - Q)(x)}^2
\]
on the observed points.  If $\norm{R}_S \approx \norm{R}$ for all
degree $d$ polynomials $R$, in particular for $P - Q$, then since
usually $\norm{g}_S \lesssim \norm{g}$ by Markov's inequality, the result
follows.

This has two problems: first, uniform sampling is poor because
polynomials like Chebyshev polynomials can have most of their energy
within $O(1/d^2)$ of the edges of the interval.  This necessitates
$\Omega(d^2)$ uniform samples before $\norm{R}_S \approx \norm{R}$
with good probability on a single polynomial.  Second, the easiest
method to extend from approximating one polynomial to approximating
all polynomials uses a union bound over a net exponential in $d$,
which would give an $O(d^3)$ bound.

To fix this, we need to bias our sampling toward the edges of the
interval and we need our sampling to not be iid.  We partition $[-1,
1]$ into $O(d)$ intervals $I_1, \dotsc, I_n$ so that the interval
containing each $x$ has width at most $O(\sqrt{1 - x^2})$, except for
the $O(1/d^2)$ size regions at the edges.  For any degree $d$
polynomial $R$ and any choice of $n$ points $x_i \in I_i$, the
appropriately weighted empirical energy is close to $\norm{R}$.  This
takes care of both issues with uniform sampling.  If the points are
chosen uniformly at random from within their intervals, then
$\norm{g}$ is probably bounded as well, and the empirically closest
degree $d$ polynomial $Q$ will satisfy our requirements.

This result is shown in Section~\ref{sec:robustpoly}.

\paragraph{Clusters.}
Many previous sparse Fourier transform algorithms start with a
one-sparse recovery algorithm, then show how to separate frequencies
to get a $k$-sparse algorithm by reducing to the one-sparse case.
Without a frequency gap, we cannot hope to reduce to the one-sparse
case; instead, we reduce to individual clusters of nearby frequencies.

Essentially the problem is that one \emph{cannot} determine all of the
high-energy frequencies of a function $x$ only by sampling it on a
bounded interval, as some of the frequencies might cancel each other
out on this interval. We also cannot afford to work merely with the
frequencies of the truncation of $x$ to the interval $[0,T]$, as the
truncation operation will spread the frequencies of $x$ over too wide
a range. To fix this problem, we must do something in between the
two. In particular, we instead study $x\cdot H$ for a judiciously
chosen function $H$. We want $H$ to approximate the
indicator function of the interval $[0,T]$ and have small
Fourier-support, $\supp(\wh{H}) \subset [-k^c/T, k^c/T]$. By using
some non-trivial lemmas about the growth rate of $x^*$, we can show
that the difference between $x\cdot H$ on $\R$ and the truncation of
$x$ to $[0,T]$ has small $L^2$ mass, so that we can use the former as
a substitute for the latter.

On the other hand, the Fourier transform of $x\cdot H$ is the convolution $\wh{x}*\wh{H}$, which has most of its mass within $\poly(k)/T$ of the frequencies of $x^*$. Although it is impossible to determine the individual frequencies of $x^*$, we can hope to identify $O(k)$ intervals each of length $\poly(k)/T$ so that all but a small fraction of the energy of $\wh{x}$ is contained within these intervals.

Note that many of these intervals will represent not individual frequencies of $x^*$, but small clusters of such frequencies. Furthermore, some frequencies of $x^*$ might not show up in these intervals either because they are too small, or because they cancel out other frequencies when convolved with $\wh{H}$.

\paragraph{One-cluster recovery.} Given our notion of clusters, we
start looking at Fourier-sparse interpolation in the special case of
\emph{one-cluster recovery}.  This is a generalization of one-sparse
recovery where we can have multiple frequencies, but they all lie in
$[f - \Delta, f + \Delta]$ for some base frequency $f$ and bandwidth
$\Delta = k^c/T$.  Because all the frequencies are close to each
other, values $x(a)$ and $x(a + \beta)$ will tend to have ratio close
to $e^{2\pi \i f \beta}$ when $\beta$ is small enough.  We find that
$\beta < \frac{1}{\Delta \sqrt{T \Delta}}$ is sufficient, which lets
us figure out a frequency $\wt{f}$ with $\abs{\wt{f} - f} \leq \Delta
\sqrt{T \Delta} = k^{O(1)}/T$.

Once we have the frequency $\wt{f}$, we can consider $x'(t) = x(t)
e^{-2\pi \i \wt{f}}$.  This signal is $k$-Fourier-sparse with
frequencies bounded by $k^{O(1)}/T$.  By taking a Taylor approximation
to each complex exponential\footnote{There is a catch here, that the
  coefficients of the exponentials are potentially unbounded, if the
  frequencies are arbitrarily close together.  We first use Gram
  determinants to show that the signal is $\delta$-close to one with
  frequency gap $\delta 2^{-k}$, and coefficients at most
  $2^k/\delta$.}, can show $x^*$ is $\delta$-close to $P(t) e^{2\pi \i
  \wt{f}}$ for a degree $d = O(k^c + k \log(1/\delta))$ polynomial $P$.
Thus we could apply our polynomial interpolation algorithm to recover
the signal.

\paragraph{$k$-cluster frequency estimation.} Reminiscent of algorithms
such as~\cite{HIKP,PS15}, we choose random variables $\sigma \approx
T/k^c$, $a \in [0, 1]$, and $b \in [0, 1/\sigma]$ and look at $v \in
\C^{k^c}$ given by
\[
v_i = (x \cdot H)(\sigma(i - a))e^{-2\pi \i \sigma b i} G(i)
\]
where $G$ is a filter function.  That is, $G$ has compact support
($\supp(G) \subset [-k^c, k^c]$), and $\wh{G}$ approximates an
interval of length $\Theta(\frac{2\pi}{k})$. In other words, $G$ is
the same as $\wh{H}$ with different parameters: an interval convolved
with itself $k^c$ times, multiplied by a sinc function.

We alias $v$ down to $O(k)$ dimensions and take the discrete Fourier
transform, getting $\wh{u}$.  It has been implicit in previous
work---and we make it explicit---that $\wh{u}_j$ is equal to
$z_{\sigma a}$ for a vector $z$ defined by
\[
\wh{z} = (\wh{x} * \wh{H}) \cdot \wh{G}^{(j)}_{\sigma, b}
\]
where $\wh{G}^{(j)}_{\sigma, b}$ is a particular permutation of
$\wh{G}$.  In particular, $\wh{G}^{(j)}_{\sigma, b}$ has period
$1/\sigma$, and approximates an interval of size $\frac{1}{\sigma B}$
within each period.

In previous work, when $\sigma$ and $b$ were chosen randomly, each
individual frequency would have a good chance of being the only
frequency preserved in $\wh{z}$, and we could apply one-sparse
recovery by choosing a variety of $a$.  Without a frequency gap we
can't quite say that: we pick $1/\sigma \gg \Delta$ so that the entire
cluster usually lands in the same bin, but then nearby clusters can
also often land in the same bin.  Fortunately, it is still usually
true that only nearby clusters will collide.  Since our $1$-cluster
algorithm works when the signal frequencies are nearby, we apply it to
find a frequency approximation within $\frac{\sqrt{T/\sigma}}{\sigma}
= k^{O(1)}/T$ of the cluster.

The above algorithm recovers each individual frequency with constant
probability.  By repeating it $O(\log k)$ times, with
high probability we find a list $L$ of $O(k)$ frequencies within
$k^{O(1)}/T$ of each significant cluster.

\paragraph{$k$-sparse recovery.}
Because different clusters aren't anywhere close to orthogonal, we
can't simply approximate each cluster separately and add them up.
Instead, given the list $L$ of candidate frequencies, we consider the
$O(kd)$-dimensional space of functions
\[
\wt{x}(t) := \sum_{\wt{f} \in L} \sum_{i=0}^d \alpha_{\wt{f},i} t^i e^{2\pi\i \wt{f} t}
\]
where $d = O(k^{O(1)} + \log(1/\delta))$.  We then take a
bunch of random samples of $x$, and choose the $\wt{x}(t)$ minimizing
the empirical error using linear regression.  This regression can be
made slightly faster using oblivious subspace embeddings
\cite{CW13}, \cite{NN13}, \cite{W14},\cite{CNW15}.

Our argument to show this works is analogous to the naive method we
considered for polynomial recovery.  Similarly to the one-cluster
setting, using Taylor approximations and Gram determinants, we can
show that this space includes a sufficiently close approximation to
$x$.  Since polynomials are the limit of sparse Fourier as frequencies
tend to zero, these functions are arbitrarily close to
$O(kd)$-Fourier-sparse functions.  Hence we know that the maximum of
$\abs{\wt{x}(t)}$ is at most a $\poly(kd)$ factor larger than its
average over $[0, T]$.  Using a net argument, this shows $\poly(kd)$
samples are sufficient to find a good approximation to the nearest
function in our space.

\paragraph{Growth rate of Fourier-sparse signals.}
We need that $\frac{1}{\sqrt{T}}\norm{x^* \cdot H}_2 \approx
\norm{x^*}_T$, where $H$ approximates the interval $1_{[0,
  T]}$. Because $H$ has support size $k^c/T$, it has a transition
region of size $T/k^{c'}$ at the edges, and it decays as
$(t/T)^{-k^{c''}}$ for $t \gg T$.  The difference between
$\frac{1}{\sqrt{T}}\norm{x^* \cdot H}_2$ and $\norm{x^*}_T$ involves
two main components: mass in the transition region that is lost, and
mass outside the sampling interval that is gained.  To show the
approximation, we need that $\abs{x^*(t)} \lesssim \wt{O}(k^2) \norm{x^*}_T$
within the interval and $\abs{x^*(t)} \lesssim (kt/T)^{O(k)}
\norm{x^*}_T$ outside.

We outline the bound of $\underset{t\in [0,T] }{\max} |x^*(t)|$ in terms of its average $\norm{x^*}_T$ to bound $\abs{x^*(t)}$ within the interval. Notice that we can assume $|x^*(0)|=\underset{t\in [0,T] }{\max} |x^*(t)|$: if $t^*= \underset{ t \in [0,T] }{ \arg\max } |x^*(t)|^2$ is not $0$ or $T$, we can rescale the two intervals $[0,t^*]$ and $[t^*,T]$ to $[0,T]$ separately. Then we show that for any $t'$, there exist $m=\wt{O}(k^2)$ and constants $C_1,\cdots,C_m$ such that $x^*(0)=\sum_{j \in [m]} C_j \cdot x^*(j \cdot t')$. Then we take the integration of $t'$ over $[0,T/m]$ to bound $|x^*(0)|^2$ by its average. For any outside $t>T$, we follow this approach to show $x^*(t)=\sum_{j \in [k]} C_j \cdot x^*(t_j)$ where $t_j \in [0,T]$ and $|C_j| \le \poly(k) \cdot (kt/T)^{O(k)}$ for each $j \in [k]$. These results are shown in Section~\ref{sec:technicalsparse}.

\subsection{Organization}
This paper is organized as follows. We provide a brief overview about signal recovery in Section~\ref{sec:proof_sketch}. We introduce some notations and tools in Section~\ref{sec:preli}. Then we show our main Theorem~\ref{thm:faster_poly_learning} about polynomial interpolation in Section~\ref{sec:robustpoly}. For signals with $k$-sparse Fourier transform, we show two bounds on their growth rate in Section~\ref{sec:technicalsparse} and describe the hash functions and filter functions in Section~\ref{sec:hashfilter}. We provide the algorithm for frequency estimation and its proof in Section~\ref{sec:freq_recover}. In Section~\ref{sec:one_cluster_recovery}, we describe the algorithm for one-cluster recovery. In Section~\ref{sub:kmountains}, we show the proof of Theorem~\ref{thm:main}. We defer several technical proofs in Appendix~\ref{sec:proofs}. Appendix~\ref{sec:known_facts} gives a summary of several well-known facts are existing in literature.  We provide the analysis of hash functions and filter functions in Appendix~\ref{sec:proof_hashfilter}.

\section{Proof Sketch}\label{sec:proof_sketch}

We first consider one-cluster recovery centered at zero, i.e., $x^*(t)=  \overset{k}{ \underset{j=1}{\sum}} v_j \cdot e^{2 \pi \i f_j t}$ where every $f_j$ is in $[ - \Delta,  \Delta]$ for some small $\Delta>0$. The road map is to replace $x^*$ by a low degree polynomial $P$ such that $\|x^*(t)-P(t)\|^2_T \lesssim \delta \|x^*\|^2_T$ then recover a polynomial $Q$ to approximate $P$ through the observation $x(t)=P(t) + g'(t)$ where $g'(t)=g(t) + \big(x^*(t)- P(t)\big)$.

A natural way to replace $x^*(t)=\overset{k}{ \underset{j=1}{\sum}} v_j e^{2 \pi \i f_j t}$ by a low degree polynomial $P(t)$ is the Taylor expansion. To bound the error after taking the low degree terms in the expansion by $\delta \|x^*\|_T$, we show the existence of $x'(t)=\overset{k}{ \underset{j=1}{\sum}} v'_j e^{2 \pi \i f'_j t}$ approximating $x^*$ on $[0,T]$ with an extra property---any coefficient $v'_j$ in $x'(t)$ has an upper bound in terms of $\|x'\|^2_T=\frac{1}{T} \int_{0}^T |x'(t)|^2 \mathrm{d} t$. We prove the existence of $x'(t)$ via two more steps, both of which rely on the estimation of some Gram matrix constituted by these $k$ signals.

The first step is to show the existence of a $k$-Fourier-sparse signal $x'(t)$ with frequency gap $\eta \ge \frac{\exp\left(-\poly(k)\right) \cdot \delta}{T}$ that is sufficiently close to $x^*(t)$.
\define{lem:existence_gap}{Lemma}{
There is a universal constant $C_1>0$ such that, for any $x^*(t) = \overset{k}{ \underset{j=1}{\sum}} v_j e^{2\pi \i f_j t}$ and any $\delta>0$ , there always exist $\eta \ge \frac{\delta}{T} \cdot k^{-C_1 k^{2}}$ and $x'(t) = \overset{k}{ \underset{j=1}{\sum} } v'_j e^{2\pi \i f'_j t}$  satisfying
\begin{equation*}
\|x'(t)-x^*(t)\|_T \leq \delta \|x^*(t)\|_T
\end{equation*}
with $ \underset{i\neq j}{\min}  | f'_{i} - f'_j| \geq \eta$ and $ \underset{j\in [k]}{ \max} \{|f'_j - f_j|\} \leq k\eta$.
}
\state{lem:existence_gap}
We outline our approach and defer the proof to Section \ref{sec:one_cluster_recovery}. We focus on the replacement of one frequency $f_k$ in $x^*=\sum_{j \in [k]} v_j e^{2\pi \i f_j t}$ by a new frequency $f_{k+1} \neq f_k$ and its error. The idea is to consider every signal $e^{2\pi \i f_j t}$ as a vector and prove that for any vector $x^*$ in the linear subspace $\mathrm{span}\{e^{2\pi \i f_j t}| j \in [k]\}$, there exists a vector in the linear subspace $\mathrm{span}\{e^{2 \pi \i f_{k+1} t}, e^{2\pi \i f_j t}| j \in [k-1]\}$ with distance at most $\exp(k^2) \cdot \left(|f_k-f_{k+1}|T\right) \cdot \|x^*\|_T$ to $x^*$.

The second step is to lower bound $\|x'\|^2_T$ by its coefficients through the frequency gap $\eta$ in $x'$.
\define{lem:relation_energy_coef}{Lemma}{%
There exists a universal constant $c>0$ such that for any  $x(t) = \overset{k}{ \underset{j=1}{ \sum } } v_j e^{2\pi \i f_j t}$ with frequency gap $\eta=\underset{i\neq j}{\min}| f_i - f_j|$,
\begin{equation*}
 \|x(t)\|^2_T \ge k^{-c k^2} \min\left((\eta T)^{2k},1\right) \sum_{j=1}^k |v_j|^2.
\end{equation*}
}
\state{lem:relation_energy_coef}
Combining Lemma \ref{lem:existence_gap} and Lemma \ref{lem:relation_energy_coef}, we bound $|v'_j|$ by $\exp(\poly(k)) \cdot \delta^{-O(k)} \cdot \|x'\|_T$ for any coefficient $v'_j$ in $x'$. Now we apply the Taylor expansion on $x'(t)$ and keep the first $d=O(\Delta T+\poly(k)+ k \log \frac{1}{\delta})$ terms of every signal $v'_j \cdot e^{2 \pi \i f'_j t}$ in the expansion to obtain a polynomial $P(t)$ of degree at most $d$. To bound the distance between $P(t)$ and $x'(t)$, we observe that the error of every point $t \in [0,T]$ is at most $(\frac{2 \pi \Delta \cdot T}{d})^d \sum_{j} |v'_j|$, which can be upper bounded by $\delta \|x'(t)\|_T$ via the above connection. We summarize all discussion above as follows.
\define{lem:low_degree_approximates_concentrated_freq}{Lemma}{
For any $\Delta>0$ and any $\delta>0$, let $x^*(t)=\sum_{j \in [k]} v_j e^{2 \pi \i f_j t}$ where $|f_j| \le \Delta$ for each $j\in [k]$. There exists a polynomial $P(t)$ of degree at most
\[ d=O(T \Delta + k^3 \log k + k \log 1/\delta) \] such that
\[ \|P(t) - x^*(t)\|^2_T \le \delta \|x^*\|^2_T.\]
}
\state{lem:low_degree_approximates_concentrated_freq}
To recover $x^*(t)$, we observe $x(t)$ as a degree $d$ polynomial $P(t)$ with noise. We use properties of the Legendre polynomials to design a method of random sampling such that we only need $O(d)$ random samples to find a polynomial $Q(t)$ approximating $P(t)$.

\restate{thm:faster_poly_learning}

We can either report the polynomial $Q(t)$ or transfer $Q(t)$ to a signal with $d$-sparse Fourier transform. We defer the technical proofs and the formal statements to Section \ref{sec:one_cluster_recovery} and discuss the recovery of $k$ clusters  from now on.


As mentioned before, we apply the filter function $(H(t),\wh{H}(f))$ on $x^*$ such that $\wh{x^* \cdot H}$ has at most $k$ clusters given $\wh{x^*}$ with $k$-sparse Fourier transform. First, we show that all frequencies in the ``heavy'' clusters of $\widehat{x^* \cdot H}$ constitute a good approximation of $x^*$ in Section \ref{sub:kmountains}.
\define{def:heavy_clusters}{Definition}{
Given $x^*(t)= \overset{k}{\underset{j=1}{\sum} } v_j e^{2 \pi \i f_j t}$, any $\N>0$, and a filter function $(H,\widehat{H})$ with bounded support in frequency domain. Let  
$L_j$ denote the interval of $ ~\supp(\wh{e^{2 \pi \i f_j t} \cdot H})$ for each $j\in [k]$.

Define an equivalence relation $\sim$ on the frequencies $f_i$ by the transitive closure of the relation $f_i\sim f_j$ if $L_i\cap L_j \neq \emptyset$. Let $S_1,\ldots,S_n$ be the equivalence classes under this relation. 

Define $C_i = \underset{f\in S_i}{ \cup } L_i $ for each $i\in [n]$. We say $C_i$ is a ``heavy'' cluster iff $\int_{C_i} |\wh{H \cdot x^*}(f)|^2 \mathrm{d} f \ge T \cdot \N^2/k$.
}
\state{def:heavy_clusters}
\define{cla:guarantee_removing_x**_x*}{Claim}{
Given $x^*(t)= \overset{k}{ \underset{j=1}{\sum} } v_j e^{2 \pi \i f_j t}$ and any $\N>0$, let $H$ be the filter function defined in  Appendix \ref{sec:properties_of_H} and $C_1,\cdots,C_l$ be the heavy clusters from Definition \ref{def:heavy_clusters}. For
\begin{equation*}
S=\left\{j \in [k]\bigg{|}f_j \in C_1 \cup \cdots C_l \right\},
\end{equation*}
 we have $x^{(S)}(t)= \underset{j\in S}{\sum} v_j e^{2 \pi \i f_j t}$ approximating $x^*$ within distance $\|x^{(S)}(t) - x^*(t) \|_T^2 \lesssim \N^2.$
}
\state{cla:guarantee_removing_x**_x*}
Hence it is enough to recover $x^{(S)}$ for the recovery of $x^{*}$. Let $\Delta_h$ denote the bandwidth of $\wh{H}$.   In Section \ref{sec:freq_recover}, we choose $\Delta>k \cdot \Delta_h$  such that for any $j \in S, \int_{f_j-\Delta}^{f_j + \Delta} |\wh{H \cdot x^*}(f)|^2 \mathrm{d} f \ge T \cdot \N^2/k$ from the fact $|C_i| \le k \cdot \Delta_h$. Then we prove Theorem \ref{thm:frequency_recovery_k_cluster} in Section \ref{sec:freq_recover}, which finds $O(k)$ frequencies to cover all heavy clusters of $\widehat{x^* \cdot H}$.
\define{thm:frequency_recovery_k_cluster}{Theorem}{
 Let $x^*(t) = \overset{k}{ \underset{j=1}{\sum} } v_j e^{2\pi\i f_j
    t}$ and $x(t)= x^*(t) +g(t)$ be our observable signal where $\|g(t) \|_T^2 \le
  c\|x^*(t)\|_T^2$ for a sufficiently small constant $c$. Then
  Procedure \textsc{FrequencyRecoveryKCluster} 
  returns a set $L$ of $O(k)$
  frequencies that covers all heavy clusters of $x^*$, which uses $\poly(k, \log(1/\delta) ) \log(FT)$ samples and $\poly(k, \log(1/\delta)) \log^2 (FT)$ time.
  In particular, for $\Delta=\polydelta$ and $\N^2 :
  = \| g(t) \|_T^2 + \delta \| x^*(t) \|_T^2$, with probability $1- 2^{-\Omega(k)}$, for any $f^*$ with
  \begin{equation}
    \int_{f^*-\Delta}^{f^*+\Delta} | \widehat{x\cdot H}(f) |^2 \mathrm{d} f \geq T\N^2/k,
  \end{equation}
there  exists an $\wt{f} \in L$ satisfying
  \begin{equation*}
  |f^*-\widetilde{f} |
  \lesssim \Delta \sqrt{\Delta T}.
  \end{equation*}
}
\state{thm:frequency_recovery_k_cluster}


Let $L=\{\widetilde{f}_1,\cdots,\widetilde{f}_l\}$ be the list of frequencies from the output of Procedure \textsc{FrequencyRecoveryKCluster} in Theorem \ref{thm:frequency_recovery_k_cluster}. The guarantee is that, for any $f_j$ in $x^{(S)}$, there exists some $p_j \in [l]$ such that $|\widetilde{f}_{p_j} - f_j| \lesssim \Delta \sqrt{\Delta T}$ for $\Delta=\polydelta$. Hence we rewrite $x^{(S)}(t)=\sum_{i \in [l]} e^{2 \pi \i \widetilde{f}_{i} t} (\sum_{j \in S: p_j=i} e^{2\pi \i (f_j - \widetilde{f}_{i})t})$. For each $i \in [l]$, we apply Lemma \ref{lem:low_degree_approximates_concentrated_freq} of one-cluster recovery on $\sum_{j \in S: p_j=i} e^{2\pi \i (f_j - \widetilde{f}_{i})t}$ to approximate it by a degree~$d$ polynomial $P_i(t)$.

Now we consider $x(t) = \sum_{i \in [l]}e^{2\pi \i \widetilde{f}_{i} t} \cdot P_i(t) + g''(t)$ where $\|g''(t)\|_T \lesssim \|g(t)\|_T + \delta \|x^*(t)\|_T$. To recover $\sum_{i \in [l]}e^{2\pi \i \widetilde{f}_{i} t} \cdot P_i(t)$, we treat it as a vector in the linear subspace \[V=\mathrm{span}\left\{e^{2\pi \i \widetilde{f}_{i} t} \cdot t^j\bigg{|} j \in \{0,\cdots,d\},i\in [l] \right\}\] with dimension at most $l(d+1)$ and find a vector in this linear subspace approximating it.

We show that for any $v \in V$, the average of $\poly(kd)$ random samples on $v$ is enough to estimate $\|v\|_T^2$. In particular, any vector in this linear subspace satisfies that the maximum of it in $[0,T]$ has an upper bound in terms of its average in $[0,T]$. Then we apply the Chernoff bound to prove that $\poly(kd)$ random samples are enough for the estimation of one vector $v \in V$.
\define{cla:max_bounded_Q}{Claim}{
For any $\vec{u} \in \mathrm{span}\left\{e^{2\pi \i \widetilde{f}_{i} t} \cdot t^j\bigg{|} j \in \{0,\cdots,d\},i\in [l] \right\}$, there exists some universal constants $C_1 \le \Conpoly$ and $C_2 \le \Conlog$ such that
\begin{equation*}
\max_{t \in [0,T]}\{|\vec{u}(t)|^2\} \lesssim (l d)^{C_1} \log^{C_2} (l d) \cdot \|\vec{u}\|^2_T
\end{equation*}
}
\state{cla:max_bounded_Q}
At last we use an $\epsilon$-net to argue that $\poly(kd)$ random samples from $[0,T]$ are enough to interpolate $x(t)$ by a vector $v\in V$. Because the dimension of this linear subspace is at most $l(d+1)=O(kd)$, there exists an $\epsilon$-net in this linear subspace for unit vectors with size at most $\exp(kd)$. Combining the Chernoff bound on all vectors in the $\epsilon$-net and Claim \ref{cla:max_bounded_Q}, we know that $\poly(kd)$ samples are sufficient to estimate $\|v\|_T^2$ for any vector $v \in V$. In Section \ref{sub:kmountains}, we show that a vector $v \in V$ minimizing the distance on $\poly(kd)$ random samples is a good approximation for $\sum_{i \in [l]}e^{2\pi \i \widetilde{f}_{i} t} \cdot P_i(t)$, which is a good approximation for $x^*(t)$ from all discussion above.
%

\restate{thm:main}







%
\section{Preliminaries}\label{sec:preli}
We first provide some notations in Section \ref{sub:notations} and basic Fourier facts in Section \ref{sub:fourier_facts}. Then we review some probability inequalities in Section \ref{sec:prob_tools}. 
At last, we introduce Legendre polynomials in Section \ref{sec:legendre} and review some basic properties of Gram matrix and its determinant in Section \ref{sec:gram_determinant}.

\subsection{Notation}\label{sub:notations}
For any function $f$, we define $\widetilde{O}(f)$ to be $f \cdot \log^{O(1)} (f)$. We use $[n]$ to denote $\{1,2,\cdots, n\}$. Let $\i $ denote $\sqrt{-1}$. For any Complex number $z = a+ \i b \in C$, where $a,b\in \mathbb{R}$. We define $\overline{z}$ to be $a- \i b$ and $|z| = \sqrt{a^2+b^2}$ such that $|z|^2 = z \overline{z}$. For any function $f(t) :\mathbb{R} \rightarrow \mathbb{C}$, we use $\supp(f)$ to denote the support of $f$. 

For convenience, we define the sinc function and the Gaussian distribution $\mathrm{Gaussian}_{\mu,\sigma}$ on $\mathbb{R}$ with expectation $\mu$ and variance $\sigma^2$ as follows:
\[\sinc(t) = \frac{\sin(\pi t)}{\pi t}, \quad \mathrm{Gaussian}_{\mu,\sigma}(t)=\frac{1}{\sigma \sqrt{2\pi}} e^{- \frac{ (t-\mu)^2}{2\sigma^2}}. \]

For a fixed $T>0$, we define the inner product of two functions $x,y:[0,T] \rightarrow \mathbb{C}$ as 
\begin{equation*}
\langle x,y \rangle_T =\frac{1}{T} \int_0^T x(t) \overline{y}(t) \mathrm{d} t.  
\end{equation*}

We define the $\|\cdot \|_T$ norm as
\begin{equation*}
\|x(t) \|_T = \sqrt{\langle x(t) ,x(t)\rangle_T} =\sqrt{\frac{1}{T} \int_0^T |x(t)|^2 \mathrm{d} t}.
\end{equation*}

\subsection{Facts about the Fourier transform}\label{sub:fourier_facts}
In this work, we always use $x(t)$ to denote a signal from $\mathbb{R} \rightarrow \mathbb{C}$. The Fourier transform $\widehat{x}(f)$ of an integrable function $x : \mathbb{R} \rightarrow \mathbb{C}$ is defined as
\begin{equation*}
\widehat{x}(f) = \int_{-\infty}^{+\infty} x(t) e^{-2\pi \i f t} \mathrm{d} t, \text{~for~any~real~number~}f.
\end{equation*}
Similarly, $x(t)$ is determined from $\widehat{x}(f)$ by the inverse transform:
\begin{equation*}
x(t) = \int_{-\infty}^{+\infty} \widehat{x}(f) e^{2\pi \i f t} \mathrm{d} f, \text{~for~any~real~number~}t.
\end{equation*}

Let $\mathrm{CFT}$ denote the continuous Fourier transform, $\mathrm{DTFT}$ denote the discrete-time Fourier transform, $\mathrm{DFT}$ denote the discrete Fourier transform, and $\mathrm{FFT}$ denote the fast Fourier transform.

\begin{figure}[t]
  \centering
    \includegraphics[width=0.8\textwidth]{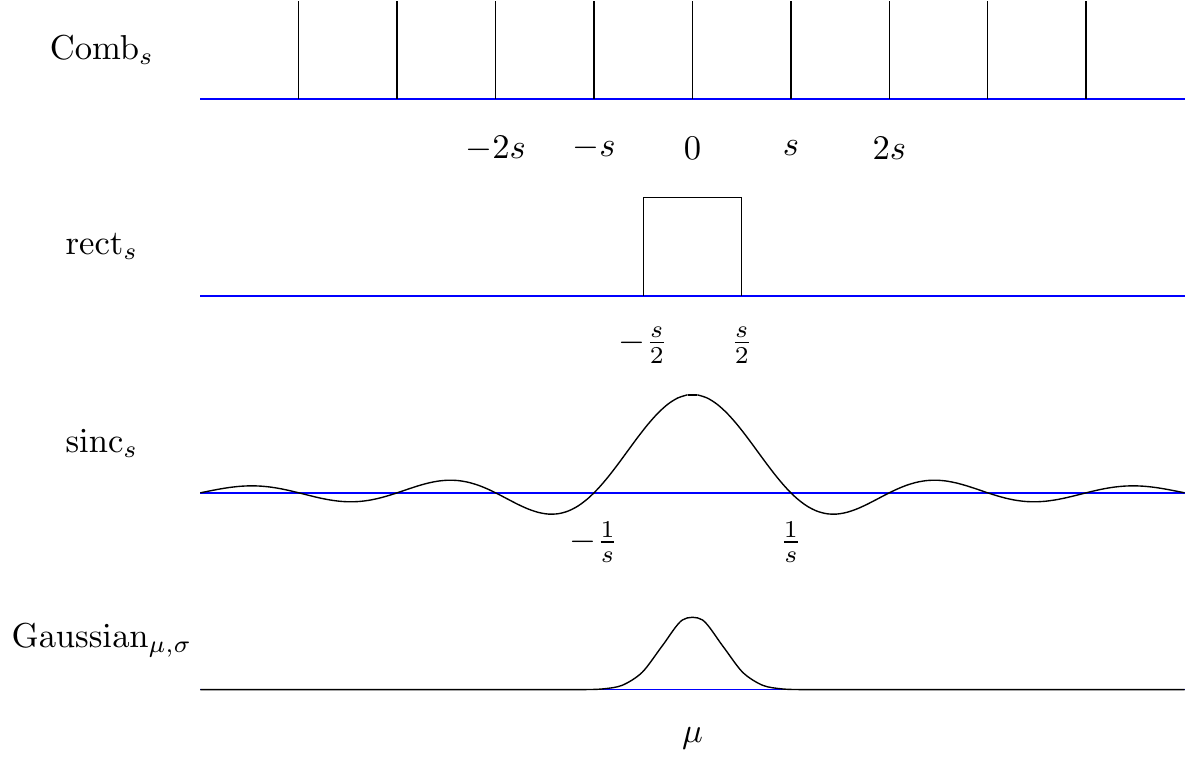}
    \caption{A picture of a $\Comb_s$, $\rect_s$, $\sinc_s$, $\Gaussian_{\mu,\sigma}$ .}\label{fig:example_comb_rect_sinc_gaussian}
\end{figure}

For any signal $x(t)$ and $n \in \mathbb{N}_+$, we define $x^{* n}(t) = \underbrace{ x(t) * \cdots * x(t) }_{n}$ and $\widehat{x}^{\cdot n}(f) = \underbrace{ \widehat{x}(f) \cdot \cdots \cdot \widehat{x}(f)  }_{n}$.
\begin{fact}
Let $\delta_{\Delta}(f)$ denote the Dirac delta at $\Delta$. Then 
\[\widehat{\delta}_{\Delta} (t) = \int_{-\infty}^{+\infty} \delta_{\Delta} (f) e^{2\pi \i f t} \mathrm{d} f = e^{2\pi\i t \Delta}.\]
\end{fact}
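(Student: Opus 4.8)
The plan is to invoke the defining (sifting) property of the Dirac delta directly. Recall that $\delta_{\Delta}$ is the distribution characterized by $\int_{-\infty}^{+\infty} \delta_{\Delta}(f)\, h(f)\, \mathrm{d}f = h(\Delta)$ for every function $h$ continuous at $\Delta$. The quantity $\widehat{\delta}_{\Delta}(t)$ is, by the definition of the inverse Fourier transform given just above in Section~\ref{sub:fourier_facts}, exactly $\int_{-\infty}^{+\infty} \delta_{\Delta}(f)\, e^{2\pi \i f t}\, \mathrm{d}f$. So the whole proof is: apply the sifting property with $h(f) = e^{2\pi \i f t}$, which is continuous (indeed entire) in $f$, to conclude $\widehat{\delta}_{\Delta}(t) = e^{2\pi \i \Delta t}$.

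First I would state the sifting property as the operative definition of $\delta_{\Delta}$ (this is the only ``input'' needed, and it is standard). Then I would simply substitute $h(f) = e^{2\pi \i f t}$ and evaluate at $f = \Delta$, noting that for each fixed $t \in \R$ the integrand is a bounded continuous function of $f$, so the pairing is well-defined. That yields the displayed identity. If one prefers to avoid distribution-theoretic language, an alternative is to take $\delta_{\Delta}$ as a limit of approximate identities $\phi_{\varepsilon}(f) = \varepsilon^{-1}\phi((f-\Delta)/\varepsilon)$ with $\phi$ a fixed bump of unit mass, and pass the limit $\varepsilon \to 0$ through the integral $\int \phi_{\varepsilon}(f) e^{2\pi \i f t}\, \mathrm{d}f$ using continuity of $f \mapsto e^{2\pi \i f t}$ and dominated convergence; both routes give the same answer, so I would just present the one-line sifting argument.

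There is essentially no obstacle here: the statement is a definitional computation, and the only thing worth a sentence is noting that $e^{2\pi \i f t}$ is continuous in $f$ so the sifting property applies for every fixed $t$.
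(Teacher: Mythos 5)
Your proof is correct: the sifting property applied to $h(f) = e^{2\pi\i f t}$ is the standard (and essentially only) way to verify this identity, and your observation that the displayed integral is the inverse-transform pairing is the right reading of the paper's notation. The paper states this Fact without proof, so there is nothing to compare against; your one-line argument is exactly what would be supplied.
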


\begin{fact}
For any $s>0$, let $\Comb_s(t) = \underset{j\in \Z}{\sum} \delta_{js}(t) $. Then the Fourier transform of $\Comb_s(t)$  is
\begin{equation*}
\widehat{\Comb}_s(f) = \frac{1}{s} \Comb_{1/s}(f).
\end{equation*}
\end{fact}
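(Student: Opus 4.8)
The plan is to exploit the fact that $\Comb_s$ is $s$-periodic, expand it in a Fourier series, and then transform the series term by term using the duality already recorded in the excerpt ($\widehat{\delta}_\Delta(t)=e^{2\pi\i t\Delta}$).

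First I would note that $\Comb_s(t+s)=\Comb_s(t)$, so as a ($s$-periodic, tempered) distribution $\Comb_s$ admits a Fourier series $\Comb_s(t)=\sum_{n\in\Z} c_n e^{2\pi\i n t/s}$ with coefficients $c_n=\frac1s\int_{-s/2}^{s/2}\Comb_s(t)e^{-2\pi\i n t/s}\,\mathrm{d}t$. On the window $(-s/2,s/2)$ the only surviving term of $\Comb_s$ is $\delta_0$, and $e^{-2\pi\i n\cdot 0/s}=1$, so $c_n=\frac1s$ for every $n$; hence
\[
\Comb_s(t)=\frac1s\sum_{n\in\Z} e^{2\pi\i n t/s}.
\]
Next I would apply the Fourier transform to both sides. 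By the stated fact $\widehat{\delta}_\Delta(t)=e^{2\pi\i t\Delta}$ together with Fourier inversion (i.e. the duality between the forward and inverse transforms), the Fourier transform of $t\mapsto e^{2\pi\i(n/s)t}$ is $\delta_{n/s}(f)$. Transforming the series term by term then gives $\widehat{\Comb}_s(f)=\frac1s\sum_{n\in\Z}\delta_{n/s}(f)=\frac1s\Comb_{1/s}(f)$, which is exactly the claim.

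The only real obstacle is rigor: none of these objects is an integrable function, so every equality must be interpreted in the sense of tempered distributions, and in particular one must justify interchanging the infinite sum with the Fourier transform. I would handle this by observing that the Dirichlet-kernel partial sums of the periodic distribution $\Comb_s$ converge to $\Comb_s$ in the space of tempered distributions, on which the Fourier transform is continuous, so the term-by-term transform is legitimate. Alternatively, one can avoid distributions altogether by pairing both candidate identities against an arbitrary Schwartz test function $\varphi$: the left-hand side yields $\sum_{j\in\Z}\widehat{\varphi}(js)$ and the right-hand side yields $\frac1s\sum_{n\in\Z}\varphi(n/s)$, and their equality is precisely the Poisson summation formula, which in turn follows by expanding the $s$-periodization $t\mapsto\sum_{j}\widehat{\varphi}(t+js)$ in its own (absolutely convergent, since $\varphi$ is Schwartz) Fourier series and evaluating at $t=0$. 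Either route is routine; I would present the Fourier-series computation above as the main argument and relegate the convergence justification to a remark.
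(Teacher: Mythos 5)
Your proof is correct. The paper states this as a background \textbf{Fact} without supplying any proof, so there is no ``paper's approach'' to compare against; it is treated as a well-known consequence of the Poisson summation formula. Your argument via the Fourier-series expansion of the $s$-periodic Dirac comb, with the term-by-term transform justified by continuity of the Fourier transform on tempered distributions (or, equivalently, by pairing against a Schwartz test function to recover Poisson summation), is a standard and complete derivation of exactly this identity. No gaps.
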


The following fact says the the Fourier transform of a rectangle function is a sinc function.
\begin{fact}\label{fac:def_of_rect_and_sinc}
We use
\begin{equation*}
\rect_s(t) =
  \begin{cases}
    1  & \quad \rm{~if~} |t|\leq \frac{s}{2},\\
    0  & \quad \rm{~otherwise.}\\
  \end{cases}
\end{equation*}
Then the Fourier transform of $\rect_s(t)$ is $\widehat{\rect_s}(f) = \frac{\sin(\pi f s) }{\pi f s} = \sinc(fs)$.
\end{fact}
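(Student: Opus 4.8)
The plan is a direct computation from the definition of the Fourier transform in Section~\ref{sub:fourier_facts}. Since $\rect_s(t)=1$ for $|t|\le s/2$ and $\rect_s(t)=0$ otherwise, the defining integral truncates to a finite one,
\[
\widehat{\rect_s}(f)=\int_{-\infty}^{+\infty}\rect_s(t)\,e^{-2\pi\i ft}\,\mathrm{d}t=\int_{-s/2}^{s/2}e^{-2\pi\i ft}\,\mathrm{d}t,
\]
so the whole task is to evaluate this elementary integral. For $f\neq 0$ I would use the antiderivative $\frac{-1}{2\pi\i f}e^{-2\pi\i ft}$, plug in the endpoints $t=\pm s/2$, and apply Euler's identity $e^{\i\theta}-e^{-\i\theta}=2\i\sin\theta$ with $\theta=\pi fs$; the bracketed difference becomes $2\i\sin(\pi fs)$, and cancelling $2\i$ against the $2\pi\i f$ in the denominator leaves a ratio of $\sin(\pi fs)$ to a linear factor in $f$, which is exactly the $\sinc$ expression asserted (with the normalization recorded in the statement).

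The only point needing a remark is $f=0$. There the integrand is identically $1$, so $\widehat{\rect_s}(0)$ is simply the length of the interval of integration, while the closed form $\sin(\pi fs)/(\pi fs)$ is a $0/0$ expression at $f=0$ and must be read through its removable singularity: $\lim_{u\to 0}\sin(\pi u)/(\pi u)=1$, the same convention by which $\sinc$ is extended to the origin (cf.\ the definition of $\sinc$ in Section~\ref{sub:notations}), verified by L'H\^opital's rule or the expansion $\sin u=u-u^3/6+\cdots$. Thus the proof is a two-case computation: $f\neq 0$ via the antiderivative, and $f=0$ by continuity of the $\sinc$ extension.

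There is no genuine obstacle here — the argument is two lines of calculus — and the only thing to be careful about is bookkeeping the factors of $2\pi$ in the Fourier-transform normalization used throughout Section~\ref{sub:fourier_facts}. If one prefers to sidestep the endpoint case split, an alternative is to use that $\rect_s$ is the indicator of a symmetric interval, discard the odd imaginary part of $e^{-2\pi\i ft}$, and write the integral as $2\int_0^{s/2}\cos(2\pi ft)\,\mathrm{d}t$; integrating the cosine directly reaches the same expression without ever dividing by $f$, and the $f=0$ value then drops out by plain substitution.
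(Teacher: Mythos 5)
Your overall approach — evaluate the integral $\int_{-s/2}^{s/2}e^{-2\pi\i ft}\,\mathrm{d}t$ directly, either via the antiderivative with Euler's identity or via the even/cosine shortcut — is the right one, and there is nothing conceptually deeper to do here. But if you carry your own arithmetic through to the end, it does \emph{not} produce $\sinc(fs)$. The antiderivative route gives
\[
\int_{-s/2}^{s/2}e^{-2\pi\i ft}\,\mathrm{d}t
=\frac{e^{\pi\i fs}-e^{-\pi\i fs}}{2\pi\i f}
=\frac{\sin(\pi fs)}{\pi f}
= s\cdot\frac{\sin(\pi fs)}{\pi f s}
= s\cdot\sinc(fs),
\]
which differs from the asserted $\sinc(fs)$ by a factor of $s$. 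You write that the $f\neq 0$ computation ``leaves a ratio of $\sin(\pi fs)$ to a linear factor in $f$, which is exactly the $\sinc$ expression asserted'' — but $\sin(\pi fs)/(\pi f)$ is not $\sinc(fs)$; the latter has an extra $s$ in the denominator. Your proposed $f=0$ sanity check, correctly executed, actually catches the discrepancy rather than confirming the formula: the integral at $f=0$ is the interval length $s$, whereas the removable-singularity value of $\sin(\pi fs)/(\pi fs)$ is $1$. You present these two numbers as matching, which they do not unless $s=1$. The same factor of $s$ appears in your cosine variant, $2\int_0^{s/2}\cos(2\pi ft)\,\mathrm{d}t=\sin(\pi fs)/(\pi f)$.

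So the gap in your write-up is a dropped constant, not a missing idea. The correct conclusion from your computation is $\widehat{\rect_s}(f)=s\,\sinc(fs)$ under the Fourier convention defined in Section~\ref{sub:fourier_facts}. The Fact as printed in the paper omits this $s$, and the paper provides no proof to compare against, so your job was precisely to be the careful bookkeeper here; instead the write-up silently reconciles two quantities that differ by $s$. (It is worth noting that the paper is broadly cavalier about such multiplicative constants — e.g.\ the definition of $\wh{H}_1$ versus $H_1$ in Appendix~\ref{sec:properties_of_H} also drops factors of $s_1$ and $s_2$, which are then absorbed into the normalization $s_0$ — but that context is a reason to \emph{flag} the missing $s$, not to assert a false equality.)
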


The Fourier transform of a Gaussian function is another Gaussian function.
\begin{fact}
 For $\Gaussian_{\mu,\sigma}(t) = \frac{1}{\sigma \sqrt{2\pi} }e^{-\frac{(t-\mu)^2}{2\sigma^2}}$. Then the Fourier transform is
 \begin{equation*}
\widehat{\Gaussian}_{\mu,\sigma}(f) = e^{-2\pi \i fu} \frac{1}{\sigma\sqrt{2\pi}} \Gaussian_{0,\sigma'}(f) \textit{ for } \sigma' = 1/(2\pi \sigma).
 \end{equation*}
\end{fact}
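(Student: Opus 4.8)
The plan is to peel off the mean $\mu$ using the shift property of the Fourier transform, and then reduce the Fourier transform of a centered Gaussian to the ordinary Gaussian integral by completing the square in the exponent.

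First I would write $\Gaussian_{\mu,\sigma}(t) = \Gaussian_{0,\sigma}(t-\mu)$ and substitute $s = t-\mu$ in the definition $\wh{\Gaussian}_{\mu,\sigma}(f) = \int_{-\infty}^{\infty}\Gaussian_{\mu,\sigma}(t)\,e^{-2\pi\i ft}\,\mathrm{d}t$, which yields $\wh{\Gaussian}_{\mu,\sigma}(f) = e^{-2\pi\i f\mu}\,\wh{\Gaussian}_{0,\sigma}(f)$ and already produces the leading factor $e^{-2\pi\i f\mu}$ in the claimed formula. It then remains to show $\wh{\Gaussian}_{0,\sigma}(f) = e^{-2\pi^2\sigma^2 f^2}$; once this is in hand, a one-line substitution (using $\tfrac{1}{\sigma\sqrt{2\pi}}\cdot\tfrac{1}{\sigma'\sqrt{2\pi}} = 1$ and $\tfrac{1}{2\sigma'^2} = 2\pi^2\sigma^2$ for $\sigma' = 1/(2\pi\sigma)$) identifies $e^{-2\pi^2\sigma^2 f^2}$ with $\tfrac{1}{\sigma\sqrt{2\pi}}\Gaussian_{0,\sigma'}(f)$ and finishes the proof.

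For the centered case I would complete the square: $-\tfrac{t^2}{2\sigma^2} - 2\pi\i ft = -\tfrac{1}{2\sigma^2}\bigl(t + 2\pi\i\sigma^2 f\bigr)^2 - 2\pi^2\sigma^2 f^2$. Factoring the $f$-only term out of the integral leaves $\wh{\Gaussian}_{0,\sigma}(f) = e^{-2\pi^2\sigma^2 f^2}\cdot\tfrac{1}{\sigma\sqrt{2\pi}}\int_{-\infty}^{\infty} e^{-(t+2\pi\i\sigma^2 f)^2/(2\sigma^2)}\,\mathrm{d}t$, and I would argue the remaining integral equals $\int_{-\infty}^{\infty} e^{-t^2/(2\sigma^2)}\,\mathrm{d}t = \sigma\sqrt{2\pi}$.

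The only step that is not pure bookkeeping is justifying that translating the contour by the purely imaginary amount $2\pi\i\sigma^2 f$ leaves the value of the integral unchanged. I would do this by applying Cauchy's theorem to the rectangle with corners $\pm R$ and $\pm R + 2\pi\i\sigma^2 f$: the integrand $e^{-z^2/(2\sigma^2)}$ is entire, and on each vertical side $z = \pm R + \i y$ one has $\mathrm{Re}(z^2) = R^2 - y^2 \ge R^2 - (2\pi\sigma^2 f)^2$, so those side integrals are $O(e^{-R^2/(2\sigma^2)}) \to 0$ and the two horizontal integrals agree as $R\to\infty$. An alternative that avoids complex analysis entirely: set $\phi(f) := \wh{\Gaussian}_{0,\sigma}(f)$, differentiate under the integral sign, integrate by parts using $t\,e^{-t^2/(2\sigma^2)} = -\sigma^2\tfrac{\mathrm{d}}{\mathrm{d}t}e^{-t^2/(2\sigma^2)}$, and obtain the first-order ODE $\phi'(f) = -4\pi^2\sigma^2 f\,\phi(f)$ with $\phi(0) = 1$, whose unique solution is $e^{-2\pi^2\sigma^2 f^2}$. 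In both routes the interchange of the limit (or derivative) with the integral is justified by the super-exponential decay of the Gaussian, so there is no genuine obstacle here; this is a classical identity, and I expect the proof in the paper to be little more than one of these computations.
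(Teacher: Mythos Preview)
Your proposal is correct and follows essentially the same route as the paper: the paper's proof substitutes $t\mapsto t+\mu$ to extract the phase $e^{-2\pi\i f\mu}$, completes the square exactly as you do, and then evaluates the resulting Gaussian integral. The only difference is that the paper simply asserts the value of $\int e^{-(t+2\pi\i\sigma^2 f)^2/(2\sigma^2)}\,\mathrm{d}t$ without justifying the imaginary contour shift, whereas you supply the Cauchy-rectangle (or ODE) argument; your write-up is therefore a strictly more detailed version of the paper's computation.
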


\begin{proof}
From the definition of the Fourier transform,
\begin{align*}
\widehat{\Gaussian}_{\mu,\sigma}(f) = \quad & \int_{-\infty}^{+\infty} \frac{1}{\sigma \sqrt{2\pi}} e^{-\frac{(t-\mu)^2}{2\sigma^2}} e^{-2\pi \i ft} \mathrm{d} t\\
= \quad & e^{-2\pi\i fu} \int_{-\infty}^{+\infty} \frac{1}{\sigma \sqrt{2\pi}} e^{-\frac{t^2}{2\sigma^2}} e^{-2\pi \i ft} \mathrm{d} t\\
= \quad & e^{-2\pi\i fu} \int_{-\infty}^{+\infty} \frac{1}{\sigma \sqrt{2\pi}} e^{-\frac{(t+2 \pi \i \sigma^2 f)^2}{2\sigma^2} - 2 \pi^2 f^2 \sigma^2} \mathrm{d} t\\
=\quad & e^{-2\pi \i fu} e^{-\frac{f^2}{2\sigma'^2}}
\end{align*}
where {~$\sigma'= 1/(2\sigma \pi)$}, which is $e^{-2\pi \i fu} \cdot \sigma' \sqrt{2\pi} \cdot \Gaussian_{0,\sigma'}(f)$.
\end{proof}

\subsection{Tools and inequalities}\label{sec:prob_tools}

From the Chernoff Bound (Lemma \ref{lem:chernoff_bound}), we show that if the maximum of a signal is bounded by $d$ times its energy over some fixed interval, then taking more than $d$ samples (each sample is drawn i.i.d. over that interval) suffices to approximate the energy of the signal on the interval with high probability.

\define{lem:max_is_bounded_imply_sample_is_small}{Lemma}{
Given any function $x(t) : \mathbb{R} \rightarrow \mathbb{C}$ with $\underset{t\in [0,T] }{\max} | x(t)|^2 \leq d \| x(t) \|_T^2$. Let $S$ denote a set of points from $0$ to $T$. If each point of $S$ is chosen uniformly at random from $[0,T]$, we have
\begin{eqnarray*}
\mathsf{Pr} \left[ \left| \frac{1}{|S|} \sum_{i\in S} |x(t_i)|^2- \| x(t)\|_T^2 ]  \right|  \geq \epsilon \| x(t)\|_T^2 \right] & \leq & e^{-\Omega (\epsilon^2 |S|/ d )}
\end{eqnarray*}
}
\state{lem:max_is_bounded_imply_sample_is_small}
We provide a proof in Appendix \ref{sec:proo_max_is_bounded_imply_sample_is_small}.

Because $d \cdot \frac{1}{2d} + \frac{1}{2} \cdot (1- \frac{1}{2d}) \le 1$, we have the following inequality when the maximum of $|x(t)|^2$ is at most $d$ times its average.  
\begin{lemma}\label{lem:max_is_bounded_imply_sample_is_good}
Given any function $x(t) : \mathbb{R} \rightarrow \mathbb{C}$ with $\underset{t\in [0,T] }{\max} | x(t)|^2 \leq d \| x(t) \|_T^2$. Let $S$ denote a set of points from $0$ to $T$. For any point $a$ is sampled uniformly at random from $[0,T]$, we have,
\begin{equation*}
\underset{a\sim [0,T]}{ \mathsf{Pr}} \left[ |x(a)|^2 \geq \frac{1}{2} \| x(t) \|_T^2 \right]  \geq \frac{1}{2d}.
\end{equation*}
\end{lemma}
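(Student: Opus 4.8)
The plan is to reduce the statement to an elementary averaging inequality. First I would set $E := \|x(t)\|_T^2 = \frac1T\int_0^T |x(t)|^2\,\mathrm{d}t$ and observe that, since $a$ is drawn uniformly from $[0,T]$, this equals $\E_{a\sim[0,T]}\big[|x(a)|^2\big]$. If $E=0$ the event $\{|x(a)|^2\ge \tfrac12 E\}$ holds with probability $1$ and there is nothing to prove, so I would assume $E>0$; note this already forces $d\ge 1$, because $\max_{t\in[0,T]}|x(t)|^2 \ge E$.

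Next, writing $p := \Pr_{a\sim[0,T]}\big[\,|x(a)|^2 \ge \tfrac12 E\,\big]$, I would split the expectation of $|x(a)|^2$ according to whether the event $\{|x(a)|^2\ge \tfrac12 E\}$ occurs: on that event bound $|x(a)|^2$ by $dE$ using the hypothesis $\max_{t}|x(t)|^2 \le dE$, and off it bound $|x(a)|^2$ by $\tfrac12 E$ by definition of the complement. This gives
\[
E \;=\; \E_a\big[|x(a)|^2\big] \;\le\; p\cdot dE \;+\; (1-p)\cdot\tfrac12 E,
\]
so that, dividing by $E>0$, we get $1 \le pd + \tfrac12(1-p) = p\big(d-\tfrac12\big) + \tfrac12$.

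Finally I would extract $p\ge \tfrac1{2d}$ from this: the right-hand side is increasing in $p$ (as $d\ge 1>\tfrac12$), and evaluated at $p=\tfrac1{2d}$ it equals $d\cdot\tfrac1{2d} + \tfrac12\big(1-\tfrac1{2d}\big) = 1-\tfrac1{4d} < 1$ — which is exactly the inequality recorded just before the lemma. Hence $p$ cannot be as small as $\tfrac1{2d}$, i.e.\ $p > \tfrac1{2d}$, which in particular gives $p \ge \tfrac1{2d}$. (Solving the inequality directly even yields the slightly stronger $p \ge \tfrac1{2d-1}$.) The set $S$ in the statement plays no role; the claim concerns a single uniformly random point $a$.

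I do not expect a genuine obstacle here — the argument is a one-line first-moment/pigeonhole estimate. The only points needing a word of care are the degenerate case $E=0$ and recording that $E>0$ implies $d\ge1$, which is what makes the monotonicity step in the last paragraph legitimate.
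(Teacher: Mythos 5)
Your proof is correct and is exactly the elementary first-moment argument the paper has in mind: the paper's cryptic one-line justification ``$d \cdot \frac{1}{2d} + \frac{1}{2}\left(1 - \frac{1}{2d}\right) \leq 1$'' is precisely the arithmetic you spell out when splitting $\E_a[|x(a)|^2]$ at the threshold $\frac{1}{2}\|x\|_T^2$ and plugging in $p = \frac{1}{2d}$. You fill in the steps the paper omits (the split, the degenerate case $E=0$, the deduction $d \geq 1$), and your observation that one in fact gets the slightly sharper $p \geq \frac{1}{2d-1}$ is a correct small bonus.
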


\subsection{Legendre polynomials}\label{sec:legendre}
We provide an brief introduction to Legendre polynomials (please see \cite{Dunster} for a complete introduction). For convenience, we fix $\|f(t) \|^2_T=\frac{1}{2} \int_{-1}^1 |f(t)|^2 \mathrm{d} t$ in this section. 
\begin{definition}
Let $L_n(x)$ denote the Legendre polynomials of degree $n$, the solution to Legendre's differential equation:
\begin{equation}\label{eq:Legendre_differential}
\frac{\mathrm{d} }{ \mathrm{d} x} \left[ (1-x^2) \frac{\mathrm{d} }{ \mathrm{d} x} L_n(x) \right] + n(n+1) L_n(x) = 0
\end{equation}
\end{definition}
We will the following two facts about the Legendre polynomials in this work.
\begin{fact}
$L_n(1)=1$ for any $n \ge 0$ in the Legendre polynomials.
\end{fact}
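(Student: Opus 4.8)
The plan is to reduce the claim to a standard explicit representation of the Legendre polynomials and then specialize to $x = 1$. Note first that equation~\eqref{eq:Legendre_differential} by itself determines each $L_n$ only up to a scalar (its polynomial solution space is one-dimensional, the other solutions being the Legendre functions of the second kind), so the statement $L_n(1) = 1$ is really a statement about the conventional normalization; I will take that normalization to be the one supplied by the generating function, which is the cleanest to work with. Concretely, use
\[
\sum_{n=0}^{\infty} L_n(x)\, t^n = \frac{1}{\sqrt{1 - 2xt + t^2}},
\]
valid for $|t|$ small, which one can either quote from \cite{Dunster} or verify by checking that the right-hand side, as a function of $x$ with $t$ a parameter, has a power series in $t$ whose coefficients satisfy~\eqref{eq:Legendre_differential}. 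Setting $x = 1$ collapses the radicand: $1 - 2t + t^2 = (1-t)^2$, so the generating function becomes $\frac{1}{1-t} = \sum_{n \ge 0} t^n$ for $|t| < 1$. Matching the coefficient of $t^n$ on both sides gives $L_n(1) = 1$ for every $n \ge 0$.

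An equivalent route avoids power series entirely and uses Rodrigues' formula $L_n(x) = \frac{1}{2^n n!}\,\frac{d^n}{dx^n}(x^2-1)^n$. Write $(x^2-1)^n = (x-1)^n (x+1)^n$ and expand the $n$-th derivative of the product by the Leibniz rule. Every term in which at least one derivative falls on the factor $(x-1)^n$ retains a positive power of $(x-1)$ and hence vanishes at $x = 1$; the unique surviving term is the one in which all $n$ derivatives act on $(x-1)^n$, contributing $n!\,(x+1)^n$. Evaluating at $x = 1$ gives $L_n(1) = \frac{1}{2^n n!}\cdot n!\cdot 2^n = 1$.

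The only genuine subtlety is bookkeeping about which normalization convention pins down $L_n$; once either the generating function or Rodrigues' identity is accepted (both classical, and available from \cite{Dunster}), the computation above is immediate. As a sanity check one can also read off a consistency relation directly from~\eqref{eq:Legendre_differential}: expanding it as $(1-x^2)L_n''(x) - 2x L_n'(x) + n(n+1)L_n(x) = 0$ and setting $x = 1$ yields $L_n'(1) = \tfrac{n(n+1)}{2}\,L_n(1)$, confirming that $x = 1$ is a regular point compatible with the value $L_n(1) = 1$. I expect no real obstacle here beyond stating clearly which defining property of $L_n$ is being used.
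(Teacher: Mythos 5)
Your proof is correct. Worth noting: the paper states this as a bare \emph{Fact} with no proof at all, treating it as a standard property of Legendre polynomials and pointing the reader to \cite{Dunster}; so there is no ``paper's approach'' to compare against. Either of your two arguments (generating function collapsing to $\tfrac{1}{1-t}$ at $x=1$, or Rodrigues' formula plus Leibniz with only the ``all derivatives on $(x-1)^n$'' term surviving) is a standard and complete derivation, and you correctly flag the one genuine point — that equation~\eqref{eq:Legendre_differential} alone fixes $L_n$ only up to scale, so the normalization convention must be invoked before $L_n(1)=1$ is even a well-posed claim. Your sanity-check relation $L_n'(1) = \tfrac{n(n+1)}{2} L_n(1)$ is also correct (the $(1-x^2)L_n''$ term vanishes at $x=1$), though it only shows consistency rather than fixing the value. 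Either of your two arguments, stated alone, would suffice.
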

\begin{fact}\label{fac:legendre_inner_product}
The Legendre polynomials constitute an orthogonal basis with respect to the inner product on interval $[-1,1]$:
\begin{equation*}
\int_{-1}^1 L_m (x) L_n (x) \mathrm{d} x  = \frac{2}{2n+1} \delta_{mn}
\end{equation*}
where $\delta_{mn}$ denotes the Kronecker delta, i.e., it equals to $1$ if $m=n$ and to $0$ otherwise.
\end{fact}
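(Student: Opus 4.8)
The plan is to handle the two cases $m \neq n$ and $m = n$ separately, both starting from the defining differential equation~\eqref{eq:Legendre_differential}, which I would first rewrite in self-adjoint Sturm--Liouville form: $\frac{\mathrm{d}}{\mathrm{d}x}\big[(1-x^2) L_n'(x)\big] = -n(n+1) L_n(x)$.

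For $m \neq n$, I would multiply the equation for $L_n$ by $L_m$ and the one for $L_m$ by $L_n$, subtract, and integrate over $[-1,1]$. Integrating the left-hand side by parts, the interior contributions cancel and only the boundary term $\big[(1-x^2)\big(L_m(x) L_n'(x) - L_n(x) L_m'(x)\big)\big]_{-1}^{1}$ survives; this vanishes because the factor $(1-x^2)$ is $0$ at $x = \pm 1$ while the Legendre polynomials and their derivatives are finite there. Hence $\big(m(m+1) - n(n+1)\big)\int_{-1}^1 L_m(x) L_n(x)\,\mathrm{d}x = 0$, and since $t \mapsto t(t+1)$ is injective on the nonnegative integers, $m \neq n$ forces $\int_{-1}^1 L_m L_n\,\mathrm{d}x = 0$.

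For $m = n$, the task is to evaluate $\int_{-1}^1 L_n(x)^2\,\mathrm{d}x$. The cleanest route is Rodrigues' formula $L_n(x) = \frac{1}{2^n n!}\frac{\mathrm{d}^n}{\mathrm{d}x^n}(x^2-1)^n$, which one can derive from~\eqref{eq:Legendre_differential} together with the normalization $L_n(1)=1$, or simply cite from~\cite{Dunster}. Plugging this in and integrating by parts $n$ times --- each time the boundary term vanishes because $(x^2-1)^n$ has a zero of order $n$ at $\pm 1$ --- turns the integral into $\frac{(-1)^n}{(2^n n!)^2}\int_{-1}^1 (x^2-1)^n \frac{\mathrm{d}^{2n}}{\mathrm{d}x^{2n}}(x^2-1)^n\,\mathrm{d}x$. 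Since $(x^2-1)^n$ is monic of degree $2n$, its $2n$-th derivative is the constant $(2n)!$, so this equals $\frac{(2n)!}{(2^n n!)^2}\int_{-1}^1 (1-x^2)^n\,\mathrm{d}x$, and the Wallis-type integral $\int_{-1}^1 (1-x^2)^n\,\mathrm{d}x = \frac{2^{2n+1}(n!)^2}{(2n+1)!}$ then gives exactly $\frac{2}{2n+1}$.

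The only place with genuine computation (rather than bookkeeping) is the normalization constant in the diagonal case: one must either push the repeated integration by parts through Rodrigues' formula and evaluate the Wallis integral, or instead set up a recursion for $c_n := \int_{-1}^1 L_n(x)^2\,\mathrm{d}x$ from the three-term recurrence $(n+1)L_{n+1}(x) = (2n+1)x L_n(x) - n L_{n-1}(x)$ combined with the orthogonality just proved, obtaining $c_{n+1} = \frac{2n+1}{2n+3} c_n$ and $c_0 = 2$. The off-diagonal vanishing is the robust, easy part. Since this is a textbook fact, I expect the paper simply to invoke~\cite{Dunster}; the argument above is the self-contained version.
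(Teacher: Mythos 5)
Your proposal is correct; both the off-diagonal vanishing (via the self-adjoint form of the Legendre equation, an integration by parts, and the observation that the boundary factor $(1-x^2)$ kills the boundary terms) and the diagonal normalization (via Rodrigues' formula, $n$ further integrations by parts, and the Wallis integral $\int_{-1}^1 (1-x^2)^n\,\mathrm{d}x = \tfrac{2^{2n+1}(n!)^2}{(2n+1)!}$) check out, as does the alternative recursion $c_{n+1} = \tfrac{2n+1}{2n+3}c_n$ with $c_0 = 2$. The paper, however, offers no proof at all: it records this statement as a Fact and simply points to \cite{Dunster} for the theory of Legendre polynomials, exactly as you anticipated in your closing remark. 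So there is no paper argument to compare against; your write-up is a correct self-contained substitute for a citation the authors chose to leave as a black box.
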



For any polynomial $P(x)$ of degree at most $d$ with complex coefficients, there exists a set of coefficients from the above properties such that
\begin{equation*}
P(x) = \sum_{i=0}^{d} \alpha_i \cdot L_i(x), \textit{ where }\alpha_i \in \mathbb{C}, \forall i \in \{0,1,2, \cdots, d\}.
\end{equation*}

\define{lem:max_average_square}{Lemma}{
For any polynomial $P(t)$ of degree at most $d$ from $R$ to $\mathbb{C}$, for any interval $[S,T]$, \[\max_{t \in [S,T]} |P(t)|^2 \le (d+1)^2 \cdot \frac{1}{T-S}\int_{S}^T |P(t)|^2 \mathrm{d} x.\]
}
\state{lem:max_average_square}

We provide a proof in Appendix \ref{sec:proof_max_average_square}.

\subsection{Gram matrix and its determinant}\label{sec:gram_determinant}
We provide an brief introduction to Gramian matrices (please see \cite{Hazewinkel} for a complete introduction). We use $\langle x,y\rangle$ to denote the inner product between vector $x$ and vector $y$.

Let $\vec{v}_1,\cdots,\vec{v}_n$ be $n$ vectors in an inner product space and $\mathrm{span}\{\vec{v}_1,\cdots,\vec{v}_n\}$ be the linear subspace spanned by these $n$ vectors with coefficients in $\mathbb{C}$, i.e., $\left\{ \underset{i\in [n]}{ \sum} \alpha_i \vec{v}_i| \forall i \in [n], \alpha_i \in \mathbb{C} \right\}$. The Gram matrix $\Gram_n$ of $\vec{v}_1,\cdots,\vec{v}_n$ is an $n \times n$ matrix defined as $\Gram_n(i,j)=\langle \vec{v}_i, \vec{v}_j \rangle$ for any $i \in [n]$ and $j \in [n]$.
\begin{fact}
$\det(\Gram_n)$ is the square of the volume of the parallelotope formed by $\vec{v}_1,\cdots,\vec{v}_n$.
\end{fact}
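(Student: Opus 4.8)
The plan is to prove this by induction on $n$ via Gram--Schmidt orthogonalization (equivalently, via the $QR$ factorization). Throughout, the volume of the parallelotope spanned by $\vec v_1,\dots,\vec v_n$ means the product $\prod_{i=1}^n h_i$, where $h_1=\|\vec v_1\|$ and, for $i\ge 2$, $h_i=\dist\big(\vec v_i,\ \mathrm{span}\{\vec v_1,\dots,\vec v_{i-1}\}\big)$ is the orthogonal ``height''; so the claim to be proved is $\det(\Gram_n)=\prod_{i=1}^n h_i^2$. This is the only sensible reading when the $\vec v_i$ live in an abstract inner-product space of functions, as in this paper.

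First I would dispose of the degenerate case. If $\vec v_1,\dots,\vec v_n$ are linearly dependent, then some $\vec v_j$ is a fixed linear combination of the others, hence the $j$-th column of $\Gram_n$ is that same linear combination of the other columns and $\det(\Gram_n)=0$; likewise the corresponding $h_i$ vanishes, so both sides are $0$. Hence I may assume the vectors are linearly independent, so all $h_i>0$.

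For the inductive step, write $\vec v_n=\vec w+\vec h$ with $\vec w\in\mathrm{span}\{\vec v_1,\dots,\vec v_{n-1}\}$ and $\vec h\perp\mathrm{span}\{\vec v_1,\dots,\vec v_{n-1}\}$, so $\|\vec h\|=h_n$. Passing from the list $\vec v_1,\dots,\vec v_{n-1},\vec v_n$ to the list $\vec v_1,\dots,\vec v_{n-1},\vec h$ is a column operation $V\mapsto VE$ on the matrix $V$ whose columns are the $\vec v_i$, where $E$ is unit upper-triangular with $\det E=1$. Since (up to a harmless transpose/conjugation, which does not affect the determinant) $\Gram_n=V^*V$, the Gram matrix of the new list is $E^*\Gram_n E$, whose determinant is $|\det E|^2\det(\Gram_n)=\det(\Gram_n)$. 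But that new Gram matrix is block-diagonal: for $i<n$ we have $\langle \vec v_i,\vec h\rangle=\overline{\langle \vec h,\vec v_i\rangle}=0$, so the last row and column vanish off the diagonal, the $(n,n)$ entry is $\|\vec h\|^2=h_n^2$, and the leading $(n-1)\times(n-1)$ block is $\Gram_{n-1}$. Therefore $\det(\Gram_n)=\det(\Gram_{n-1})\cdot h_n^2$, and the induction hypothesis $\det(\Gram_{n-1})=\prod_{i=1}^{n-1}h_i^2$ closes the step; the base case $n=1$ is $\det(\Gram_1)=\langle\vec v_1,\vec v_1\rangle=h_1^2$.

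No step here is a genuine obstacle — this is just the standard ``base $\times$ height'' recursion for volume, mirrored on the determinant side. The only points requiring care are the Hermitian bookkeeping (use the conjugate transpose $V^*$, so that $\Gram_n$ is Hermitian with real determinant, and note that the column operation contributes $|\det E|^2$, not $\det E$) and making explicit that ``volume'' is defined through the Gram--Schmidt heights in the first place. As an alternative one-line route, once linear independence is assumed one may invoke the $QR$ factorization $V=QR$ with $Q$ having orthonormal columns and $R$ upper-triangular: then $\Gram_n=R^*R$, so $\det(\Gram_n)=|\det R|^2=\prod_i|R_{ii}|^2=\prod_i h_i^2$.
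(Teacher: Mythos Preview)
Your proof is correct and follows the standard Gram--Schmidt/QR argument. Note, however, that the paper does not actually prove this statement: it is recorded as a \emph{Fact} without proof and then used in the subsequent claim that $\|\vec v_n^{\perp}\|^2=\det(\Gram_n)/\det(\Gram_{n-1})$. Your induction is precisely the ``base $\times$ height'' recursion that underlies the paper's one-line use of the fact, so the two are fully consistent; you have simply filled in a proof the paper omits.
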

Let $\Gram_{n-1}$ be the Gram matrix of $\vec{v}_1,\cdots,\vec{v}_{n-1}$. Let $\vec{v}^{\parallel}_n$ be the projection of $v_n$ onto the linear subspace $\mathrm{span}\{\vec{v}_1,\cdots,\vec{v}_{n-1}\}$ and $\vec{v}^{\perp}_n=\vec{v}_n-\vec{v}^{\parallel}_n$. We use $\|\vec{v}\|$ to denote the length of $\vec{v}$ in the inner product space, which is $\sqrt{\langle \vec{v},\vec{v} \rangle}$.
\begin{claim}\label{cla:orthogonal_distance}
 \[\|\vec{v}^{\perp}_n\|^2=\frac{\det(\Gram_{n-1})}{\det(\Gram_n)}.\]
\end{claim}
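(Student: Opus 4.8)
The plan is to prove the one-step recursion $\det(\Gram_n)=\|\vec v_n^\perp\|^2\cdot\det(\Gram_{n-1})$ for Gram determinants; dividing by $\det(\Gram_{n-1})$ then gives Claim~\ref{cla:orthogonal_distance} in the form $\|\vec v_n^\perp\|^2=\det(\Gram_n)/\det(\Gram_{n-1})$ (the determinant of the full $n$-tuple on top; equivalently $1/\|\vec v_n^\perp\|^2=\det(\Gram_{n-1})/\det(\Gram_n)$, which is the $(n,n)$ entry of $\Gram_n^{-1}$ by Cramer's rule). First I would dispose of the degenerate case: if $\vec v_1,\dots,\vec v_{n-1}$ are linearly dependent then so are $\vec v_1,\dots,\vec v_n$, both Gram matrices are singular, $\det(\Gram_{n-1})=\det(\Gram_n)=0$, and the recursion reads $0=0$; the ratio form is then simply not invoked, and one applies the claim only when the first $n-1$ vectors are independent. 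So assume independence, whence $\vec v_n^\parallel$ and $\vec v_n^\perp=\vec v_n-\vec v_n^\parallel$ are well defined and $\det(\Gram_{n-1})\neq 0$.

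The core step is a volume-preserving replacement of the last vector. Write $\vec v_n^\parallel=\sum_{i=1}^{n-1}c_i\vec v_i$ with $c_i\in\mathbb C$; then $\langle\vec v_i,\vec v_n^\perp\rangle=0$ for every $i\le n-1$, and hence also $\langle\vec v_n^\perp,\vec v_i\rangle=\overline{\langle\vec v_i,\vec v_n^\perp\rangle}=0$. Let $M$ be the unit lower-triangular matrix equal to the identity on the first $n-1$ coordinates with last row $(-c_1,\dots,-c_{n-1},1)$, and let $\Gram_n'$ be the Gram matrix of $(\vec v_1,\dots,\vec v_{n-1},\vec v_n^\perp)$. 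Expanding $\langle\sum_k M_{ik}\vec v_k,\ \sum_\ell M_{j\ell}\vec v_\ell\rangle$ entrywise shows $\Gram_n'=M\,\Gram_n\,M^{*}$, so $\det(\Gram_n')=|\det M|^2\det(\Gram_n)=\det(\Gram_n)$ since $\det M=1$. By the orthogonality relations, $\Gram_n'$ is block diagonal with upper-left block $\Gram_{n-1}$ and lower-right $1\times 1$ block $[\,\|\vec v_n^\perp\|^2\,]$, so $\det(\Gram_n')=\det(\Gram_{n-1})\cdot\|\vec v_n^\perp\|^2$. Equating the two evaluations of $\det(\Gram_n')$ yields the recursion, and dividing through finishes the proof.

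I expect the only genuine obstacle to be the bookkeeping of complex conjugates: one must verify both $\langle\vec v_i,\vec v_n^\perp\rangle=0$ and $\langle\vec v_n^\perp,\vec v_i\rangle=0$, so that $\Gram_n'$ is genuinely block diagonal rather than merely block triangular, and confirm that the row/column operations that replace the last vector act on the Hermitian form $\Gram_n$ exactly as the congruence $M(\cdot)M^{*}$ with $\det M=1$ (so the determinant is preserved exactly, not merely up to modulus). Everything else — existence and uniqueness of the orthogonal projection from linear independence, the determinant of a block-diagonal matrix being the product of its block determinants, and a superset of a linearly dependent set being linearly dependent — is standard, so beyond this conjugate-tracking the argument is short.
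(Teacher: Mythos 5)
Your argument is correct and arrives at the right identity, but it is worth being explicit that the identity you prove, $\|\vec v_n^\perp\|^2=\det(\Gram_n)/\det(\Gram_{n-1})$, is the \emph{reciprocal} of what Claim~\ref{cla:orthogonal_distance} literally states. The claim as printed is a typo: the paper's own one-line proof establishes $\det(\Gram_n)=\volume^2(\vec v_1,\dots,\vec v_{n-1})\cdot\|\vec v_n^\perp\|^2=\det(\Gram_{n-1})\cdot\|\vec v_n^\perp\|^2$ (with a second typo, $\det(\Gram_n)$ written where $\det(\Gram_{n-1})$ is meant), and every downstream use of the claim --- e.g.\ in Lemma~\ref{lem:relation_energy_coef} and in the $\delta^2$ computation inside Lemma~\ref{lem:shifting_one_frequency_preserve_the_energy} --- uses $\|\vec v^\perp\|^2=\det(\text{larger Gram})/\det(\text{smaller Gram})$. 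So your conclusion is the one the paper actually needs; just don't present it as ``the claim in the form \dots'' without noting the sign of the exponent flipped.

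Your route is also genuinely different from the paper's. The paper appeals to the geometric fact that $\det(\Gram_n)$ equals the squared volume of the parallelotope spanned by $\vec v_1,\dots,\vec v_n$, and then uses the base-times-height decomposition $\volume(\vec v_1,\dots,\vec v_n)=\volume(\vec v_1,\dots,\vec v_{n-1})\cdot\|\vec v_n^\perp\|$. You instead replace $\vec v_n$ by $\vec v_n^\perp$ via a unit lower-triangular matrix $M$, observe that the Gram matrix transforms by the congruence $G\mapsto MGM^\ast$ with $|\det M|=1$ so the determinant is preserved, and then read off the block-diagonal structure. Your version is more self-contained (it does not invoke the volume interpretation as an external fact), it handles the complex/Hermitian bookkeeping explicitly, and it cleanly separates out the degenerate linearly-dependent case. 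Both proofs are short; the paper's is one geometric sentence, yours is a short linear-algebra computation that a reader unfamiliar with the volume fact could verify directly.
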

\begin{proof}
\begin{equation*}
\det(\Gram_n)=\volume^2(\vec{v}_1,\cdots,\vec{v}_n)=\volume^2(\vec{v}_1,\cdots,\vec{v}_{n-1}) \cdot \|\vec{v}^{\perp}_n\|^2=\det(\Gram_n) \cdot \|\vec{v}^{\perp}_n\|^2.
\end{equation*}
\end{proof}

\section{Robust Polynomial Interpolation Algorithm}\label{sec:robustpoly}
In Section \ref{sec:constant_success_probability}, we show how to learn a low degree polynomial by using linear number of samples, running polynomial time, and achieving constant success probability. In Section \ref{sec:boosting_success_probability}, we show to how boost the success probability by rerunning previous algorithm several times.
\subsection{Constant success probability}\label{sec:constant_success_probability}

We show how to learn a degree-$d$ polynomial $P$ with $n=O(d)$ samples and prove Theorem \ref{thm:faster_poly_learning} in this section.  For convenience, we first fix the interval to be $[-1,1]$ and use $\|f\|_{[-1,1]}^2=\frac{1}{2} \int_{-1}^1  |f(t)|^2 \mathrm{d}t$.

\begin{lemma}\label{lmn:points_approx_polynomial}
Let $d \in \mathbb{N}$ and $\epsilon \in \mathbb{R^+}$, there exists an efficient algorithm to compute a partition of $[-1,1]$ to $n=O(d/\epsilon)$ intervals $I_1,\cdots,I_n$ such that for any degree $d$ polynomial $P(t) : \mathbb{R} \rightarrow \mathbb{C}$ and any $n$ points $x_1,\cdots,x_n$ in the intervals $I_1,\cdots,I_n$ respectively, the function $Q(t)$ defined by \begin{equation*}
Q(t) = P(x_j) \quad \text{if} \quad t \in I_j
\end{equation*}
approximates $P$ by
\begin{equation}\label{eq:points_approx_polynomial}
\|Q-P\|_{[-1,1]} \le \epsilon \|P\|_{[-1,1]}.
\end{equation}
\end{lemma}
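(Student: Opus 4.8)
The plan is to build the partition so that on each interval $I_j$ the function $Q$ (a piecewise-constant sample of $P$) differs from $P$ by an amount controlled by the derivative of $P$ on $I_j$, and then to invoke a Markov/Bernstein-type inequality for polynomials to relate $\int |P'|^2$ on a subinterval back to $\int |P|^2$. Concretely, for $t \in I_j$ we have $|Q(t) - P(t)| = |P(x_j) - P(t)| \le |I_j| \cdot \max_{s \in I_j} |P'(s)|$, so $\|Q - P\|_{[-1,1]}^2 \le \frac{1}{2}\sum_j |I_j|^3 \max_{s \in I_j}|P'(s)|^2$. I would like to dominate the right-hand side by $\epsilon^2 \cdot \frac{1}{2}\int_{-1}^1 |P(t)|^2\,\mathrm{d}t$. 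The classical tool here is the pointwise Bernstein/Markov inequality: for a degree-$d$ polynomial, $|P'(x)| \lesssim \frac{d}{\sqrt{1-x^2}}\,\max|P|$ away from the endpoints, and $|P'(x)| \lesssim d^2 \max|P|$ globally; combined with the fact (analogous to Lemma \ref{lem:max_average_square}) that $\max_{[S,T]}|P|^2 \lesssim (d+1)^2 \cdot \frac{1}{T-S}\int_S^T |P|^2$, one can bound $\max_{s\in I_j}|P'(s)|^2$ in terms of the local average $\frac{1}{|I_j|}\int_{I_j}|P|^2$, provided $I_j$ is not too close to the endpoints and $|I_j| \lesssim \sqrt{1-x^2}$ for $x \in I_j$.

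The construction of the partition is therefore: reserve two end-intervals of width $\Theta(\epsilon/d^2)$ near $\pm 1$ (so that the crude global Markov bound $|P'| \lesssim d^2\max|P|$ suffices there, and the cubic weight $|I_j|^3 d^4$ is small), and tile the remaining $[-1+\Theta(\epsilon/d^2),\,1-\Theta(\epsilon/d^2)]$ by intervals $I_j$ with $|I_j| = \Theta\!\big(\frac{\epsilon}{d}\sqrt{1-x^2}\big)$ for $x \in I_j$. A standard change-of-variables argument (thinking of $x = \cos\theta$, under which $\sqrt{1-x^2}\,\mathrm{d}\theta$-uniform spacing corresponds to this width rule) shows $\Theta(d/\epsilon)$ intervals suffice to cover the whole interval. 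On such an $I_j$, Bernstein gives $\max_{I_j}|P'|^2 \lesssim \frac{d^2}{1-x^2}\max_{I_j}|P|^2 \lesssim \frac{d^2}{1-x^2}(d+1)^2 \frac{1}{|I_j|}\int_{I_j}|P|^2$, so
\[
|I_j|^3 \max_{I_j}|P'|^2 \lesssim |I_j|^2 \cdot \frac{d^4}{1-x^2}\int_{I_j}|P|^2 \lesssim \Big(\frac{\epsilon}{d}\Big)^2 (1-x^2)\cdot\frac{d^4}{1-x^2}\int_{I_j}|P|^2 = \epsilon^2 d^2 \int_{I_j}|P|^2,
\]
and summing over $j$ (the $I_j$ are disjoint) yields $\|Q-P\|^2 \lesssim \epsilon^2 d^2 \|P\|^2$. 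This has an extra factor $d^2$, so in practice I would run the argument with $|I_j| = \Theta\!\big(\frac{\epsilon}{d^2}\sqrt{1-x^2}\big)$, giving $n = O(d^2/\epsilon)$ — which suggests the paper is either being more careful with the Bernstein constants (e.g. using that the $\max|P|^2 \lesssim d^2 \cdot \text{avg}$ bound and the Bernstein bound are not simultaneously tight, or bootstrapping via the orthogonality of Legendre coefficients rather than a crude $\max$) or absorbing a mismatch. The cleanest fix that keeps $n=O(d/\epsilon)$ is to avoid passing through $\max|P'|$ altogether: write $P = \sum_i \alpha_i L_i$, use $\int_{I_j}|P'|^2 \le |I_j|\max_{I_j}|P'|^2$ only after a sharper per-interval estimate, or directly bound $\sum_j |I_j|^3\max_{I_j}|P'|^2$ against $\int (1-x^2)|P'(x)|^2\,\mathrm{d}x = \sum_i \alpha_i^2 \cdot \frac{2i(i+1)}{2i+1} \le d(d+1)\|P\|^2$ by the Legendre ODE \eqref{eq:Legendre_differential} and Fact \ref{fac:legendre_inner_product}; choosing $|I_j|^2 \lesssim \frac{\epsilon^2}{d^2}(1-x^2)$ on the bulk and handling endpoints separately then gives exactly $\sum_j |I_j|^3\max_{I_j}|P'|^2 \lesssim \frac{\epsilon^2}{d^2}\int(1-x^2)|P'|^2 \lesssim \epsilon^2\|P\|^2$ with $n = O(d/\epsilon)$.

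The main obstacle is getting the interval count to be $O(d/\epsilon)$ rather than $O(d^2/\epsilon)$ while keeping the error bound clean: the naive route (crude Bernstein $\times$ crude $\max$-vs-average) loses a factor of $d^2$, so the real work is the weighted estimate tying $\sum_j |I_j|^3 \max_{I_j}|P'|^2$ to the Legendre-diagonalized quantity $\int_{-1}^1(1-x^2)|P'(x)|^2\,\mathrm{d}x$, together with the book-keeping that the two endpoint intervals of width $\Theta(\epsilon/d^2)$ contribute $O(\epsilon^2/d^4)\cdot d^4\|P\|^2 = O(\epsilon^2\|P\|^2)$ under the global Markov bound. The efficiency of computing the partition is immediate once the width rule $|I_j| \asymp \frac{\epsilon}{d}\sqrt{1-x^2}$ is fixed — one sweeps from $-1$ to $1$ choosing each successive breakpoint greedily — so I would state that part briefly and spend the bulk of the proof on the two estimates above.
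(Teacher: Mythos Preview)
Your plan has all the right ingredients and matches the paper's architecture closely: interval widths $|I_j| \asymp \frac{\epsilon}{d}\sqrt{1-x^2}$ in the bulk, two short end-intervals handled via $\max_{[-1,1]}|P|^2 \le (d+1)^2\|P\|_{[-1,1]}^2$ (Lemma~\ref{lem:max_average_square}), and the key weighted derivative bound $\int_{-1}^1(1-x^2)|P'(x)|^2\,\mathrm{d}x \le 2d^2\|P\|_{[-1,1]}^2$ coming from Legendre orthogonality (this is exactly the paper's Lemma~\ref{lem:derivative_poly}). You also correctly diagnose that the naive route through pointwise Bernstein costs an extra $d$ or $d^2$.

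The gap is in your ``cleanest fix''. You write that choosing $|I_j|^2 \lesssim \frac{\epsilon^2}{d^2}(1-x^2)$ gives $\sum_j |I_j|^3\max_{I_j}|P'|^2 \lesssim \frac{\epsilon^2}{d^2}\int(1-x^2)|P'|^2$, but this last inequality still passes through $\max_{I_j}|P'|^2$ and is not justified: after substituting the width rule you are left needing $\sum_j |I_j|\,(1-x_j^2)\max_{I_j}|P'|^2 \lesssim \int(1-x^2)|P'(x)|^2\,\mathrm{d}x$, i.e.\ a Riemann sum with $\sup$ controlled by the true integral. Per interval, $\max_{I_j}|P'|^2$ can exceed $\mathrm{avg}_{I_j}|P'|^2$ by a factor $\Theta(d)$ (e.g.\ when $P'$ has a high-order zero inside $I_j$), so this step does not close without additional argument. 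You even flag ``avoid passing through $\max|P'|$'' as the remedy, but the displayed fix does not actually avoid it.

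The paper's proof resolves this with a one-line change: instead of bounding $|P(x_j)-P(t)| \le |I_j|\max_{I_j}|P'|$, apply Cauchy--Schwarz to $P(x_j)-P(t)=\int_{x_j}^{t}P'$ to get $|P(x_j)-P(t)|^2 \le |t-x_j|\int_{x_j}^t|P'(y)|^2\,\mathrm{d}y$, integrate over $t\in I_j$, and swap the order of integration. This yields $\int_{I_j}|P(x_j)-P(t)|^2\,\mathrm{d}t \le |I_j|^2\int_{I_j}|P'(t)|^2\,\mathrm{d}t$ (no $\max$), and since by construction $|I_j|^2 \le \frac{2(1-t^2)}{m^2}$ for every $t\in I_j$ with $m=\Theta(d/\epsilon)$, summing over $j$ gives exactly $\frac{2}{m^2}\int_{-1}^1(1-t^2)|P'(t)|^2\,\mathrm{d}t \le \frac{4d^2}{m^2}\|P\|_{[-1,1]}^2 \lesssim \epsilon^2\|P\|_{[-1,1]}^2$. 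With this replacement your outline becomes a complete proof.
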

One direct corollary from the above lemma is that observing $n=O(d/\epsilon)$ points each from $I_1,\cdots,I_n$ provides a good approximation for all degree $d$ polynomials. For any set $S=\{t_1,\cdots,t_m\}$ where each $t_i \in [-1,1]$ and a distribution with support $\{ w_1,\cdots,w_m \}$ on $S$ where $ \overset{m}{ \underset{i=1}{\sum} } w_i=1$ and $w_i \geq 0$ for each $i\in [m]$, we define $\|x \|_{S,w}= ( \sum_{i=1}^m w_i \cdot |x(t_i)|^2 )^{1/2}.$
\begin{corollary}\label{cor:good_approximation}
Let $I_1,\cdots,I_n$ be the intervals in the above lemma and $w_j=|I_j|/2$ for each $j \in [n]$. For any $x_1,\cdots,x_n$ in the intervals $I_1,\cdots,I_n$ respectively, we consider $S=\{x_1,\cdots,x_n\}$ with the distribution $w_1,\cdots,w_n$. Then for any degree $d$ polynomial $P$, we have
\[
\|P\|_{S,w} \in \left[(1-\epsilon) \|P\|_{[-1,1]}, (1+\epsilon) \|P\|_{[-1,1]}\right].\]
\end{corollary}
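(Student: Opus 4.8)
The plan is to deduce this directly from Lemma~\ref{lmn:points_approx_polynomial} by recognizing that the weighted empirical norm $\|P\|_{S,w}$ is exactly the $\|\cdot\|_{[-1,1]}$-norm of the piecewise-constant function $Q$ produced in that lemma. Concretely, first I would unfold the definitions: with $w_j = |I_j|/2$ and $Q(t) = P(x_j)$ for $t \in I_j$, we have
\[
\|Q\|_{[-1,1]}^2 = \frac{1}{2}\int_{-1}^1 |Q(t)|^2\,\mathrm{d}t = \frac{1}{2}\sum_{j=1}^n |P(x_j)|^2\,|I_j| = \sum_{j=1}^n w_j\,|P(x_j)|^2 = \|P\|_{S,w}^2,
\]
using that $I_1,\dots,I_n$ partition $[-1,1]$ so the integral splits over the pieces (and incidentally $\sum_j w_j = 1$, so $w$ is indeed a distribution as claimed).

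Next I would invoke Lemma~\ref{lmn:points_approx_polynomial}, which guarantees exactly that $\|Q - P\|_{[-1,1]} \le \epsilon\|P\|_{[-1,1]}$ for this same $Q$, provided $n = O(d/\epsilon)$ and each $x_j \in I_j$ — which are the hypotheses of the corollary. Then the triangle inequality applied in the normed space $(\,\cdot\,,\|\cdot\|_{[-1,1]})$ yields
\[
\big|\,\|Q\|_{[-1,1]} - \|P\|_{[-1,1]}\,\big| \le \|Q - P\|_{[-1,1]} \le \epsilon\|P\|_{[-1,1]},
\]
so $\|Q\|_{[-1,1]} \in [(1-\epsilon)\|P\|_{[-1,1]},\,(1+\epsilon)\|P\|_{[-1,1]}]$. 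Substituting the identity $\|P\|_{S,w} = \|Q\|_{[-1,1]}$ from the first step finishes the proof.

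There is essentially no hard step here: the corollary is a bookkeeping consequence of the lemma, and the only thing to be careful about is the measure-theoretic identity $\frac{1}{2}\int |Q|^2 = \sum_j w_j |P(x_j)|^2$, which is immediate once one notes that $Q$ is constant on each $I_j$ and the $I_j$ tile $[-1,1]$. If anything warrants a sentence of care, it is checking that the normalization conventions match — that the lemma's $\|\cdot\|_{[-1,1]}$ carries the same $\tfrac12$ prefactor as the one used to define $w_j = |I_j|/2$ — but this is consistent throughout Section~\ref{sec:constant_success_probability}.
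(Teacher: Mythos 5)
Your proof is correct and fills in exactly the reasoning the paper leaves implicit when it calls this a ``direct corollary'' of Lemma~\ref{lmn:points_approx_polynomial}: the identity $\|P\|_{S,w} = \|Q\|_{[-1,1]}$ for the piecewise-constant $Q$, followed by the reverse triangle inequality. The normalization bookkeeping (the $\tfrac12$ factor matching $w_j = |I_j|/2$) checks out, so there is nothing more to add.
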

We first state the main technical lemma and finish the proof of the above lemma (we defer the proof of Lemma \ref{lem:derivative_poly} to Appendix~\ref{sec:proof_lem_derivative_poly}).
\define{lem:derivative_poly}{Lemma}{
For any degree $d$ polynomial $P(t) : \mathbb{R} \rightarrow \mathbb{C}$ with derivative $P'(t)$, we have,
\begin{equation}\label{ineq:norm_2_d2}
 \int_{-1}^1  (1-t^2) |P'(t)|^2 \mathrm{d}t \le 2d^2 \int_{-1}^1 |P(t)|^2 \mathrm{d}t.
 \end{equation}
}
\state{lem:derivative_poly}
\begin{proofof}{Lemma \ref{lmn:points_approx_polynomial}}
We set $m=10d/\epsilon$ and show a partition of $[-1,1]$ into $n \le 20 m$ intervals. We define $g(t)=\frac{\sqrt{1-t^2}}{m}$ and $y_0=0$. Then we choose $y_i=y_{i-1} + g(y_{i-1})$ for $i \in \mathbb{N^+}$. Let $l$ be the first index of $y$ such that $y_l \ge 1 - \frac{9}{m^2}$. We show $l \lesssim m$. 

Let $j_k$ be the first index in the sequence such that $y_{j_k} \ge 1-2^{-k}$. Notice that \[j_2 \le \frac{3/4}{\frac{\sqrt{1-(3/4)^2}}{m}} \le 1.5 m\] and \[y_{i}-y_{i-1}=g(y_{i-1})=\frac{\sqrt{1-y_{i-1}^2}}{m} \ge \frac{\sqrt{1-y_{i-1}}}{m}.\] Then for all $k>2$, we have \[j_{k}-j_{k-1} \le \frac{2^{-k}}{\frac{\sqrt{1-y_{(j_k-1)}}}{m}} \le 2^{-k/2} m.\] Therefore $j_k \le \left(1.5 + (2^{-3/2}+ \cdots 2^{-k/2})\right)m $ and $l \le 10 m$.

Because $y_{l-1} \le 1 - \frac{9}{m^2}$, for any $j \in [l]$ and any $x \in [y_{i-1},y_i]$, we have the following property:
\begin{equation}\label{eq:property_partition}
\frac{1-x^2}{m^2} \ge \frac{1}{2} \cdot \frac{(1-y_{i-1}^2)}{m^2} = (y_i - y_{i-1})^2/2.
\end{equation}

Now we set $n$ and partition $[-1,1]$ into $I_1,\cdots,I_n$ as follows:
\begin{enumerate}
\item $n=2(l+1)$.
\item For $j\in [l]$, $I_{2j-1}=[y_{j-1},y_j]$ and $I_{2j}=[-y_j,-y_{j-1}]$.
\item $I_{2l+1}=[y_l,1]$ and $I_{2l+2}=[-1,-y_l]$.
\end{enumerate}
For any $x_1,\cdots,x_n$ where $x_j \in I_j$ for each $j \in [n]$, we rewrite the $\mathrm{LHS}$ of \eqref{eq:points_approx_polynomial} as follows:
\begin{align}\label{eq:has_A_and_B}
\underbrace{\sum_{j=1}^{n-2} \int_{I_j} \left|P(x_j)-P(t)\right|^2 \mathrm{d} t}_A + \underbrace{\int_{I_{n-1}} \left|P(x_{n-1})-P(t)\right|^2 \mathrm{d} t + \int_{I_{n}} \left|P(x_{n})-P(t)\right|^2 \mathrm{d} t}_B.
\end{align}
For A in Equation (\ref{eq:has_A_and_B}), from the Cauchy-Schwarz inequality, we have
\begin{align*}
\sum_{j=1}^{n-2} \int_{I_j} \left|P(x_j)-P(t)\right|^2 \mathrm{d} t =\sum_{j=1}^{n-2} \int_{I_j} \left|\int_{x_j}^t P'(y) \mathrm{d} y\right|^2 \mathrm{d} t \le \sum_{j=1}^{n-2} \int_{I_j} |t-x_j| \int_{x_j}^t |P'(y)|^2 \mathrm{d} y\mathrm{d} t.\end{align*}
Then we swap $\mathrm{d}t$ with $\mathrm{d}y$ and use Equation ~\eqref{eq:property_partition}:
\[ \sum_{j=1}^{n-2} \int_{I_j} |P'(y)|^2 \int_{t \notin (x_j,y)} |t-x_j| \mathrm{d} t \mathrm{d} y
\le \sum_{j=1}^{n-2} \int_{I_j} |P'(t)|^2 \cdot |I_j|^2 \mathrm{d} t
\le \sum_{j=1}^{n-2} \int_{I_j} |P'(t)|^2 \frac{2(1-t^2)}{m^2} \mathrm{d} t.\]
We use Lemma \ref{lem:derivative_poly} to simplify it by
\[
\sum_{j=1}^{n-2} \int_{I_j} \left|P(x_j)-P(t)\right|^2 \mathrm{d} t \le \int_{-1}^1 |P'(t)|^2 \frac{2(1-t^2)}{m^2} \mathrm{d} t \le \frac{2d^2}{m^2}  \int_{-1}^1 |P(t)|^2 \mathrm{d} t.\]

For B in Equation (\ref{eq:has_A_and_B}), notice that $|I_{n-1}|=|I_n|=1-y_{l} \le 9 m^{-2}$ and for $j\in \{n-1,n\}$
\[|P(t)-P(x_j)|^2 \le 4 \max_{t \in [-1,1]} |P(t)|^2 \le 4 (d+1)^2 \|P\|^2_{[-1,1]}\]
from the properties of degree-$d$ polynomials, i.e., Lemma \ref{lem:max_average_square}. Therefore B in Equation (\ref{eq:has_A_and_B}) is upper bounded by $2 \cdot 4 (d+1)^2 (9 m^{-2}) \|P(t)\|^2_{[-1,1]}$.

From all discussion above, $\|Q(t)-P(t)\|_{[-1,1]}^2 \le \frac{99 d^2}{m^2} \le \epsilon^2$.
\end{proofof}

Now we use the above lemma to provide a faster learning algorithm for polynomials on interval $[-1,1]$ with noise  instead of using the $\epsilon$-nets argument. Algorithm \textsc{RobustPolynomialLearningFixedInterval} works as follows:
\begin{enumerate}
\item Let $\epsilon=1/20$ and $I_1,\cdots,I_n$ be the intervals for $d$ and $\epsilon$ in Lemma \ref{lmn:points_approx_polynomial}.
\item Random choose $x_j \in I_j$ for every $j \in [n]$ and define $S=\{x_1,\cdots,x_n\}$ with weight $w_1=\frac{|I_1|}{2}, \cdots, w_n= \frac{|I_n|}{2}$.
\item Find the degree $d$ polynomial $Q(t)$ that minimizes $\|P(t)-Q(t)\|_{S,w}$ using Fact~\ref{fac:basic_l2_regression}.
\end{enumerate}

\begin{lemma}\label{lem:fixed_interval_learning}
  For any degree $d$ polynomial $P(t)$ and an arbitrary function $g(t)$,
  Algorithm \textsc{RobustPolynomialLearningFixedInterval} takes $O(d)$ samples from $x(t)=P(t) + g(t)$
  over $[-1,1]$ and reports a degree $d$ polynomial $Q(t)$ in time
  $O(d^\omega)$ such that, with probability at least $99/100$,
\begin{equation*}
\|P(t)-Q(t)\|_{[-1,1]}^2 \lesssim \|g(t)\|_{[-1,1]}^2.
\end{equation*}
\end{lemma}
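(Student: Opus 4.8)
The plan is to combine the deterministic ``$\|\cdot\|_{S,w}$ is a faithful proxy for $\|\cdot\|_{[-1,1]}$ on degree-$d$ polynomials'' guarantee of Corollary~\ref{cor:good_approximation} with a one-line Markov bound on the empirical mass of the noise. First I would note that Corollary~\ref{cor:good_approximation} (via Lemma~\ref{lmn:points_approx_polynomial}) holds \emph{for every} choice of sample points $x_j\in I_j$, hence in particular for the points drawn by Algorithm~\textsc{RobustPolynomialLearningFixedInterval}: with $\epsilon=1/20$, for any degree-$d$ polynomial $R$ we have $\|R\|_{S,w}\in[(1-\epsilon)\|R\|_{[-1,1]},(1+\epsilon)\|R\|_{[-1,1]}]$. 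We will apply this with $R=P-Q$, which is legitimate since $P-Q$ has degree at most $d$. This is also where the sample count enters: $n=O(d/\epsilon)=O(d)$ since $\epsilon$ is a constant.

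Second, I would control $\|g\|_{S,w}$ in expectation. Since each $x_j$ is uniform on $I_j$ and $w_j=|I_j|/2$, linearity of expectation gives $\E\big[\|g\|_{S,w}^2\big]=\sum_{j=1}^n w_j\cdot\frac{1}{|I_j|}\int_{I_j}|g(t)|^2\,\mathrm dt=\frac12\int_{-1}^1|g(t)|^2\,\mathrm dt=\|g\|_{[-1,1]}^2$, so by Markov's inequality $\|g\|_{S,w}^2\le 100\,\|g\|_{[-1,1]}^2$ with probability at least $99/100$. Condition on this event for the rest of the argument.

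Third, optimality of $Q$ closes the loop. Because $Q$ minimizes $\|x-Q\|_{S,w}=\|P+g-Q\|_{S,w}$ over degree-$d$ polynomials and $P$ is itself a candidate, $\|P+g-Q\|_{S,w}\le\|g\|_{S,w}$. Then the triangle inequality for the seminorm $\|\cdot\|_{S,w}$ gives $\|P-Q\|_{S,w}\le\|P+g-Q\|_{S,w}+\|g\|_{S,w}\le 2\|g\|_{S,w}$, and the embedding from the first step yields $(1-\epsilon)\|P-Q\|_{[-1,1]}\le\|P-Q\|_{S,w}\le 2\|g\|_{S,w}\le 20\,\|g\|_{[-1,1]}$, i.e. $\|P-Q\|_{[-1,1]}^2\lesssim\|g\|_{[-1,1]}^2$. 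Finally, the minimization in the algorithm is a weighted $\ell_2$ regression of $n=O(d)$ observations onto the $(d+1)$-dimensional space of degree-$d$ polynomials, solvable in $O(d^\omega)$ time by Fact~\ref{fac:basic_l2_regression}, giving the claimed running time.

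I do not expect a genuine obstacle here; the argument is essentially a deterministic subspace-embedding bound plus Markov on the noise. The only points requiring care are (i) applying Corollary~\ref{cor:good_approximation} to the polynomial difference $P-Q$ rather than to the arbitrary, non-polynomial noise $g$, and (ii) passing the noise entirely through its expected empirical mass rather than trying to bound it pointwise. If sharper constants or a higher success probability were desired, the Markov step could be upgraded to a concentration bound, but this is unnecessary for the stated $99/100$.
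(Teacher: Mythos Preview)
Your proposal is correct and matches the paper's proof essentially step for step: both use Corollary~\ref{cor:good_approximation} on the polynomial $P-Q$, the optimality of $Q$ together with the triangle inequality to get $\|P-Q\|_{S,w}\le 2\|g\|_{S,w}$, and then Markov on $\E[\|g\|_{S,w}^2]=\|g\|_{[-1,1]}^2$. The only cosmetic difference is ordering (the paper chains the inequalities first and applies Markov at the end), and your explicit verification that $\E[\|g\|_{S,w}^2]=\|g\|_{[-1,1]}^2$ is a nice addition.
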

\begin{proof}
Notice that $n=O(d/\epsilon)=O(d)$ and the running time depends on solving a linear regression problem( Fact~\ref{fac:basic_l2_regression} ), which takes $O(d^\omega)$ time. It is enough to bound the distance between $P$ and $Q$:
\begin{align*}
&\| P - Q \|_{[-1,1]} \\
\le \quad & 1.09 \|P - Q\|_{S,w} & \text{by~Corollary \ref{cor:good_approximation}}\\
= \quad & 1.09\| x - g - Q\|_{S,w} &\text{~by~$x=P+g$} \\
\le \quad & 1.09\|g\|_{S,w} + 1.09\| x - Q \|_{S,w} & \text{~by~triangle~inequality}\\
\le \quad & 1.09 \| g \|_{S,w} + 1.09 \|x - P\|_{S,w} & Q = \underset{ \text{degree-}d ~R }{\arg\min} \|R-x\|_{S,w}\\ 
\le \quad & 2.2\|g\|_{S,w}
\end{align*}
Because $\underset{S}{\E} [\|g\|^2_{S,w}]=\|g\|^2_{[-1,1]}$, we know that $\| P - Q \|_{[-1,1]} \le 2200 \|g\|_{[-1,1]}$ with probability $\ge .999$ by using Markov's inequality.
\end{proof}

For any function $f:[0,T] \rightarrow \mathbb{C}$, let $\widetilde{f}(t)=f(\frac{2t-T}{T})$. Then $\|\widetilde{f}\|_{[-1,1]}=\|f\|_T$ from the definition. Hence we can switch any interval $[0,T]$ to $[-1,1]$ and use Lemma \ref{lem:fixed_interval_learning}.
\restate{thm:faster_poly_learning}

\subsection{Boosting success probability}\label{sec:boosting_success_probability}
Notice that the success probability of Theorem \ref{thm:faster_poly_learning} is only constant, and the proof technique of obtaining that result cannot be modified to $1-1/\poly(d)$ or $1-2^{-\Omega(d)}$ success probability due to using Markov's inequality. However, we can use that algorithm as a black box, and rerun it $O(\log(1/p))$ (for any $p>0$) times on fresh samples. Using the careful median analysis from \cite{MP14} gives

\define{thm:accurate_poly_learning}{Theorem}{%
  For any degree $d$ polynomial $P(t)$, an arbitrary function $g(t)$, and any $p>0$, Procedure  \textsc{RobustPolynomialLearning}$^+$ in Algorithm~\ref{alg:main_1}
  takes $O(d \log(1/p))$ samples from $x(t)=P(t) + g(t)$
  over $[0,T]$ and reports a degree $d$ polynomial $Q(t)$ in time
  $O( d^{\omega} \log(1/p))$ such that, with probability at least $1-p$,
\begin{equation*}
\|P(t)-Q(t)\|_T^2 \lesssim \|g(t)\|_T^2.
\end{equation*}
where $\omega< 2.373$ is matrix multiplication exponent.
}

\state{thm:accurate_poly_learning}
\begin{proof}
We run algorithm \textsc{RobustPolynomialLearning} $R$ rounds with $O(d)$ independent and fresh samples per round. We will obtain $R$ degree-$d$ polynomials $Q_1(t), Q_2(t), \cdots, Q_R(t)$.  We say a polynomial $Q_i(t)$ is good if $\| Q_i(t) - P(t) \|_T^2 \lesssim \| g(t) \|_T^2$. Using the Chernoff bound, with probability at least $1-2^{-\Omega(R)}$, at least a $3/4$ fraction of the polynomials are ``good''. We output polynomial $Q(t) = Q_{j^*}(t)$ such that
\begin{equation}\label{eq:median_of_R_polynomials}
j^* = \underset{j \in [R] }{\arg\min}(\median \{ \| Q_j(t) -Q_1(t) \|_T^2,  \| Q_j(t) -Q_2(t) \|_T^2, \cdots, \| Q_j(t) -Q_R(t) \|_T^2 \})
\end{equation}
The Equation (\ref{eq:median_of_R_polynomials}) can be solved in following straightforward way. For $i\neq j$, it takes $O(d)$ time to compute $\| Q_j(t) - Q_i(t) \|_T^2$. Because of the number of pairs is $O(R^2)$, thus it takes $O(R^2 d)$ time write down a $R\times R$ matrix. For each column, we run linear time $1$-median algorithm. This step takes $O(R^2)$ time. At the end, $j^*$ is index of the column that has the smallest median value. Thus, polynomial $Q(t) = Q_{j^*}(t)$ 1 the 0 with probability at least $1-p$ by choosing $R = O(\log(1/p))$. The running time is not optimized yet.

To improve the dependence on $R$ for running time, we replace the step of solving Equation (\ref{eq:median_of_R_polynomials}) by an approach that is similar to \cite{MP14}. We choose a new set of samples $S$, say $S= \{t_1, t_2,\cdots, t_n\}$ and $n=O(d)$. Using Fact \ref{fac:multipoint_evaluation_of_polynomial}, we can compute $Q_{i}(t_j)$ for all $i,j\in [R] \times [n]$ in $O(R d \poly(\log(d)) )$ time. Define
\begin{equation}\label{eq:def_of_wt_Q_j}
\wt{Q}_j = \underset{i\in [R]}{\median}~ Q_i(t_j), \forall j \in [n] .
\end{equation}
Our algorithm will output a degree-$d$ polynomial $Q$ which is the optimal solution of this problem, $\underset{ \text{degree-}d~Q' }{\min} \| Q' - \wt{Q} \|_{S,w}$.\footnote{Outputting $Q = \underset{ \text{degree-}d~Q' }{\arg \min} \| Q' - x \|_{S,w}$ is not good enough, because it only gives constant success probability.} In the rest of the proof, we will show that $\| Q - P \|_T \lesssim \| g\|_T$ with probability at least $1-2^{-\Omega(R)}$.

Notice that Equation (\ref{eq:def_of_wt_Q_j}) implies that $\wt{Q}_j - P(t_j) = \underset{ i\in [R] }{\median} (Q_i(t_j) - P(t_j) )$. Fix a coordinate $j$ and applying the proof argument of Lemma 6.1 in \cite{MP14}, we have
\begin{equation*}
(\wt{Q}_j - P(t_j) )^2 \lesssim \underset{\text{good} ~i}{\mean} ( Q_i(t_j) - P(t_j) )^2
\end{equation*}
Taking the weighted summation over all the coordinates $j$, we have
\begin{equation*}
\| \wt{Q} - P \|_{S,w}^2 \lesssim \underset{\text{good} ~i}{\mean} \| Q_i - P \|_{S,w}^2
\end{equation*}

Using Corollary \ref{cor:good_approximation}, for each good $i$,
\begin{equation*}
\| Q_i - P \|_{S,w}^2 \lesssim \| Q_i - P \|_T^2
\end{equation*}
Combining the above two inequalities gives
\begin{equation}\label{eq:wt_Q_minus_P_bound_by_g}
\| \wt{Q} - P \|_{S,w}^2 \lesssim  \underset{\text{good} ~i}{\mean}  \| Q_i - P \|_T^2 \lesssim \| g\|_T^2
\end{equation}
Because $Q$ is the optimal solution for $\wt{Q}$, then
\begin{equation}\label{eq:wt_Q_minus_Q_bound_by_g}
\| \wt{Q} - Q\|_{S,w}^2 \leq \| \wt{Q} - P\|_{S,w}^2 \lesssim \| g\|_T^2
\end{equation}
Using Corollary  \ref{cor:good_approximation} and for any good $i,i'$, $\| Q_i - Q_{i'}\|_T \lesssim \| g\|_T$, we can replace $P$ by $Q_{i'}$ in the Equation~(\ref{eq:wt_Q_minus_P_bound_by_g}). Thus, for any $Q_{i'}$ where $i'$ is good,
\begin{equation}\label{eq:wt_Q_minus_Q_i_bound_by_g}
\| \wt{Q} - Q_{i'} \|_{S,w}^2 \lesssim \| g\|_T^2
\end{equation}

For any good $i'$,
\begin{align*}
 ~&\| Q_{i'} - Q \|_T  \\
\lesssim ~ &\| Q_{i'} - Q \|_{S,w} &\text{~by~Corollary~\ref{cor:good_approximation} } \\
\leq ~ &\| Q_{i'} - \wt{Q} \|_{S,w} + \| \wt{Q} - Q \|_{S,w} & \text{~by~triangle~inequality} \\
\lesssim ~ & \| g \|_T & \text{~by~Equation~(\ref{eq:wt_Q_minus_Q_bound_by_g}) and (\ref{eq:wt_Q_minus_Q_i_bound_by_g})}
\end{align*}
Thus, our algorithm takes $O(dR)$ samples from $x(t) = P(t) +g(t)$ over $[0,T]$ and reports a polynomial $Q(t)$
in time $O(Rd^\omega)$ such that, with probability at least $1-2^{-\Omega(R)}$, $\| P(t) - Q(t) \|_T^2 \lesssim \| g(t) \|_T^2$. Choosing $R=O(\log(1/p))$ completes the proof.
\end{proof}

\section{Bounding the Magnitude of a Fourier-sparse Signal in Terms of Its Average Norm}\label{sec:technicalsparse}
The main results in this section are two upper bounds, Lemma \ref{lem:max_is_at_most_polyk_times_l2} on $\underset{t\in [0,T] }{\max} |x(t)|^2$ and Lemma \ref{lem:x_dot_H_is_small_outside_T} on $|x(t)|^2$ for $t>T$, in terms of the typical signal value $\|x\|^2_T=\frac{1}{T}\int_0^T |x(t)|^2 \mathrm{d} t$. We prove Lemma \ref{lem:max_is_at_most_polyk_times_l2} in Section \ref{sec:inside_fixed_region} and Lemma \ref{lem:x_dot_H_is_small_outside_T} in Section \ref{sec:outside_fixed_region}

\subsection{Bounding the maximum inside the interval}\label{sec:inside_fixed_region}
The goal of this section is to prove Lemma \ref{lem:max_is_at_most_polyk_times_l2}.
\define{lem:max_is_at_most_polyk_times_l2}{Lemma}{
For any $k$-Fourier-sparse signal $x(t) : \mathbb{R} \rightarrow \mathbb{C}$ and any duration $T$, we have
\begin{equation*}
\underset{t\in [0,T] }{\max} |x(t)|^2 \lesssim k^4 \log^3 k \cdot \|x\|_T^2
\end{equation*}}
\state{lem:max_is_at_most_polyk_times_l2}
\begin{proof}
Without loss of generality, we fix $T=1$. Then $\|x\|_T^2=\int_0^1 |x(t)|^2 \mathrm{d} t$. Because $\|x\|^2_T$ is the average over the interval $[0,T]$, if $t^*= \underset{ t \in [0,T] }{ \arg\max } |x(t)|^2$ is not $0$ or $T=1$, we can rescale the two intervals $[0,t^*]$ and $[t^*,T]$ to $[0,1]$ and prove the desired property separately. Hence we assume $|x(0)|^2 = \underset{t\in [0,T] }{\max} |x(t)|^2$ in this proof.
\begin{claim}\label{cla:linear_relationships}
For any $k$, there exists $m = O(k^2 \log k)$ such that for any $k$-Fourier-sparse signal $x(t)$, any $t_0 \ge 0$ and $\tau>0$, there always exist $C_1,\cdots,C_m \in \mathbb{C}$ such that the following properties hold,
\begin{eqnarray*}
\mathrm{Property~\RN{1}} &&|C_j| \le 11 \text{ ~for ~all~} j \in [m], \\
\mathrm{Property~\RN{2}} &&x(t_0)=\sum_{j \in [m]} C_j \cdot x(t_0 + j \cdot \tau).
\end{eqnarray*}
\end{claim}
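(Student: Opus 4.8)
The plan is to strip away the signal and reduce the claim to a statement about a single univariate polynomial, then prove that statement by Chebyshev-type extrapolation. Write $a_j:=e^{2\pi\i f_j\tau}$; since $f_j\in\R$ and $\tau>0$, each $a_j$ lies on the unit circle. For any coefficients $C_1,\dots,C_m$, setting $c(z):=\sum_{n=1}^m C_nz^n$ gives
\[
\sum_{n=1}^m C_n\,x(t_0+n\tau)=\sum_{j=1}^k v_j e^{2\pi\i f_j t_0}\,c(a_j),
\]
so Property~\RN{2} holds as soon as $c(a_j)=1$ for all $j\in[k]$, and then Property~\RN{1} is exactly the requirement that $c$ have no constant term and all coefficients of modulus at most $11$. (The starting point $t_0$ enters only as a fixed modulation, so the case $t_0=0$ used in Lemma~\ref{lem:max_is_at_most_polyk_times_l2} loses no generality.) Thus the claim reduces to: for any $k$ points $a_1,\dots,a_k$ on the unit circle there is a polynomial $c$ of degree at most $m=O(k^2\log k)$ with $c(0)=0$, $c(a_j)=1$ for every $j$, and all coefficients bounded by $11$.

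Next I would dualize this feasibility question. The vectors $C$ with $c(a_j)=1$ for all $j$ form an affine subspace $\{C:AC=\mathbf 1\}$ with $A_{j,n}=a_j^n$, and by Hahn--Banach (the image of the $\ell_\infty$-ball under $A$ is finite-dimensional, hence closed) the minimal $\ell_\infty$-norm of a point on it equals
\[
\sup\Bigl\{\,|q(0)|\ :\ q(n)=\textstyle\sum_{j=1}^k\beta_j a_j^n,\ \sum_{n=1}^m|q(n)|\le 1\,\Bigr\},
\]
using that $q(0)=\sum_j\beta_j$. So it suffices to prove the growth estimate $|q(0)|\lesssim\sum_{n=1}^m|q(n)|$ for every $k$-term exponential sum $q$ with unit-modulus bases, once $m=O(k^2\log k)$.

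To prove the growth estimate I would first treat the clustered case, where all the frequencies $\theta_j$ (with $a_j=e^{\i\theta_j}$) lie within $O(1/m)$ of a common $\theta_\star$. Multiplying $q$ by $e^{-\i\theta_\star n}$ changes neither $|q(0)|$ nor any $|q(n)|$, and afterwards $|\theta_j-\theta_\star|\,n=O(1)$ for all $n\le m$, so each $e^{\i(\theta_j-\theta_\star)n}$ agrees with its low-degree Taylor polynomial to within a super-polynomially small error on $\{0,1,\dots,m\}$; hence, up to a negligible perturbation, the vector $(q(0),\dots,q(m))$ lies in the space of polynomials of degree $<k$, on which the classical bound $|P(0)|\le|T_{k-1}(\tfrac{-(m+1)}{m-1})|\cdot\max_{1\le n\le m}|P(n)|$ applies, and $|T_{k-1}(1+\tfrac{2}{m-1})|\le\cosh\!\bigl(\tfrac{2(k-1)}{\sqrt{m-1}}\bigr)=O(1)$ once $m\gtrsim k^2$; since $\max_n|P(n)|\le\sum_n|P(n)|$ this yields $|q(0)|\lesssim\sum_n|q(n)|$. (One also has to check that a degree-$<k$ polynomial on an interval of length $m\gg k$ varies slowly between consecutive integers, so the continuous Chebyshev estimate may be invoked with integer sample points, and control the Taylor perturbation against $\sum_n|q(n)|$ rather than against $\sum_j|\beta_j|$, which is why one takes the Taylor degree somewhat above $k$.)

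The general case is the main obstacle. Here I would partition the circle into $O(m)$ arcs of length $\Theta(1/m)$, split $q=\sum_c q^{(c)}$ by which arc each frequency falls into, and apply the clustered bound to each $q^{(c)}$. The difficulty is that distinct clusters are nowhere near orthogonal on $\{1,\dots,m\}$, so $\sum_n|q(n)|$ need not dominate $\sum_c\sum_n|q^{(c)}(n)|$. I expect this is handled by an iterative peeling argument---repeatedly removing the cluster carrying the most energy and re-solving on the residual---or equivalently by a Tur\'an/Nazarov-type inequality for imaginary-exponent exponential sums applied to the nested intervals $[1,m]\subset[0,m]$; either route is what raises $m$ from $O(k^2)$ to $O(k^2\log k)$. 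Feeding $m=O(k^2\log k)$ back through the reduction produces $C_1,\dots,C_m$ with $|C_j|\le 11$ and $x(t_0)=\sum_{j\in[m]}C_j\,x(t_0+j\tau)$, which is Properties~\RN{1} and~\RN{2}.
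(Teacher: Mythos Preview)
Your reduction is correct and is exactly the paper's first step: writing $x(t_0+j\tau)=\sum_i b_i z_i^j$ with $z_i=e^{2\pi\i f_i\tau}$ on the unit circle, the claim becomes the existence of a polynomial $P(z)=\sum_{j=0}^m c_j z^j$ with $c_0=1$, $|c_j|\le 11$, and $P(z_i)=0$ for all $i$ (your $c(z)=1-P(z)$).  From that point on, however, the paper proceeds by an entirely different and much shorter route: a pigeonhole argument.  One considers all $11^m$ polynomials of degree $\le m$ with integer coefficients in $\{-5,\dots,5\}$; Lemma~\ref{lem:bound_residual_polynomial_coefficients} bounds each coefficient of $z^n\bmod Q(z)$ (with $Q(z)=\prod_i(z-z_i)$) by $2^k n^{k-1}$, so all residues live in a box of volume $(2^{k+1}m^k)^k$ in $\C^k$.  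For $m=\Theta(k^2\log k)$ one has $11^m\gg (2^{k+1}m^k/2^{-m})^{2k}$, so two polynomials have residues within $2^{-m}$ of each other; their difference has coefficients in $\{-10,\dots,10\}$ and nearly vanishes at every $z_i$, and a short normalization finishes.  No duality, no extremal problems, no cluster decomposition.

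Your dual reformulation is correct, but the proof of the extremal inequality $|q(0)|\le 11\sum_{n=1}^m|q(n)|$ is where the proposal breaks down.  In the clustered case you approximate each $e^{\i\theta_j n}$ by its Taylor polynomial, but the resulting error in $q$ is $\bigl(\sum_j|\beta_j|\bigr)\cdot O(1)^{D}/D!$, and when the $\theta_j$ are $\eta$-close, $\sum_j|\beta_j|$ can exceed $\|q\|_{\ell^1[1,m]}$ by a factor like $(\eta m)^{-(k-1)}$; you note this difficulty (``control the Taylor perturbation against $\sum_n|q(n)|$ rather than against $\sum_j|\beta_j|$'') but do not resolve it, and simply taking $D$ ``somewhat above $k$'' does not help because $\eta$ can be arbitrarily small.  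More seriously, for the general case you only offer a sketch: splitting into clusters and invoking an unspecified ``iterative peeling'' or Tur\'an/Nazarov inequality.  The standard Tur\'an bounds for $k$-term exponential sums carry an $e^{\Theta(k)}$ factor, not $O(1)$, and you give no argument that peeling avoids this; the non-orthogonality you flag is exactly the obstruction.  So the heart of the argument is missing, whereas the paper's pigeonhole proof sidesteps all of these analytic difficulties.
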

We first use this claim to finish the proof of Lemma \ref{lem:max_is_at_most_polyk_times_l2}. We choose $t_0=0$ such that $\forall \tau>0$, there always exist $C_1,\cdots,C_m \in \mathbb{C},$ and
\[x(0)=\sum_{j \in [m]} C_j \cdot x(j \cdot \tau).\]
By the Cauchy-Schwarz inequality, it implies that for any $\tau$,
\begin{eqnarray}\label{eq:x_0_is_bounded_by_sum_of_m_terms}
|x(0)|^2 &\leq& m \sum_{j \in [m]} |C_j|^2 |x(j \cdot \tau)|^2 \notag \\
&\lesssim& m \sum_{j \in [m]} |x(j \cdot \tau)|^2.
\end{eqnarray}
At last, we obtain
\begin{align*}
 \quad |x(0)|^2 =\quad & m \int_{0}^{1/m} |x(0)|^2 \mathrm{d} \tau \\
\lesssim \quad & m \cdot \int_{0}^{1/m} ( m \sum_{j=1}^m |x(j \cdot \tau)|^2 )\mathrm{d} \tau\\ 
 =\quad & m^2 \cdot \sum_{j=1}^m \int_{0}^{1/m}  |x(j \cdot \tau)|^2 \mathrm{d} \tau  \\
 = \quad& m^2 \cdot \sum_{j=1}^m \frac{1}{j} \int_{0}^{j/m}  |x(\tau)|^2 \mathrm{d} \tau  \\
 \leq \quad & m^2 \cdot \sum_{j=1}^m \frac{1}{j} \cdot \int_{0}^{1}  |x(\tau)|^2 \mathrm{d} \tau \\ 
\lesssim \quad & m^2  \log m \cdot \|x\|_T^2 
\end{align*}
where the first inequality follows by Equation~(\ref{eq:x_0_is_bounded_by_sum_of_m_terms}), the second inequality follows by $j/m\leq 1$ and the last step follows by $\sum_{i=1}^m \frac{1}{i} = O(\log m)$.
From $m= O(k^2 \log k)$, we obtain $|x(0)|^2 = O(k^4 \log^3 k \|x\|_T^2) $.
\end{proof}

To prove Claim \ref{cla:linear_relationships}, we use the following lemmas about polynomials. We defer their proofs  to Appendix~\ref{sec:supplement_proof_poly}.
\define{lem:bound_residual_polynomial_coefficients}{Lemma}{
Let $Q(z)$ be a degree $k$ polynomial, all of whose roots are complex numbers with absolute value $1$. For any integer $n$, let $r_{n,k}(z)=\sum_{l=0}^{k-1} r_{n,k}^{(l)} \cdot z^l$ denote the residual polynomial of
\begin{equation*}
r_{n,k}(z) \equiv z^n  \pmod {Q(z)}.
\end{equation*}
Then, each coefficient of $r_{n,k}$ is bounded: $|r_{n,k}^{(l)}| \le 2^k n^{k-1}$ for any $l$.}
\state{lem:bound_residual_polynomial_coefficients}
\define{lem:existence_poly_k_roots}{Lemma}{
For any $k \in \mathbb{Z}$ and any $z_1,\cdots,z_k$ on the unit circle of $\mathbb{C}$, there always exists a degree $m = O(k^2 \log k)$ polynomial $P(z)= \overset{m}{\underset{j=0}{\sum}} c_j z^j$ with the following properties:
\begin{eqnarray*}
\mathrm{Property~\RN{1}} && P(z_i) =0, \forall i \in \{1, \cdots, k\},\\
\mathrm{Property~\RN{2}} &&  c_0=1,\\
\mathrm{Property~\RN{3}} &&  |c_j| \le 11, \forall j \in \{1, \cdots, m \}.
\end{eqnarray*}
}
\state{lem:existence_poly_k_roots}
\begin{proofof}{Claim \ref{cla:linear_relationships}}
For $x(t)= \overset{k}{\underset{i=1}{\sum}} v_i e^{2\pi \i f_i t}$, we fix $t_0$ and $\tau$ then rewrite $x(t_0+j \cdot \tau)$ as a polynomial of $b_i=v_i \cdot e^{2\pi\i f_i t_0}$ and $z_i = e^{2\pi \i f_i \tau}$ for each $i \in [k]$.
\begin{eqnarray*}
x(t_0 + j \cdot \tau) &=& \sum_{i=1}^k v_i e^{2\pi \i f_i \cdot (t_0+j \cdot \tau)} \\
&=& \sum_{i=1}^k v_i e^{2\pi \i f_i t_0} \cdot e^{2\pi\i f_i \cdot j \tau} \\
&=&\sum_{i=1}^k b_i \cdot z_i^j.
\end{eqnarray*}
Given $k$ and $z_1,\cdots,z_k$, let $P(z) = \sum_{j=0}^m c_j z^j$ be the degree $m$ polynomial in Lemma \ref{lem:existence_poly_k_roots}.
\begin{eqnarray}\label{eq:sum_of_f_equal_sum_of_P}
\sum_{j=0}^m c_j x(t_0 + j \tau)  &=&  \sum_{j=0}^m c_j \sum_{i=1}^k b_i \cdot z_i^j  \notag \\
&=&  \sum_{i=1}^k b_i \sum_{j=0}^m c_j \cdot z_i^j \notag \\
&=& \sum_{i=1}^k b_i P(z_i)\notag \\
&=& 0,
\end{eqnarray}
where the last step follows by Property~\RN{1}~of~$P(z)$ in Lemma \ref{lem:existence_poly_k_roots}.
From the Property \RN{2} and \RN{3} of $P(z)$, we obtain $x(t_0) = - \sum_{j=1}^m c_j x(t_0 + j\tau).$
\end{proofof}

\subsection{Bounding growth outside the interval}\label{sec:outside_fixed_region}
Here we show signals with sparse Fourier transform cannot grow too
quickly outside the interval.
\begin{lemma}\label{lem:x_dot_H_is_small_outside_T}
Let $x(t)$ be a $k$-Fourier-sparse signal. For any $T>0$ and any $t>T$,
\begin{equation*}
| x(t) |^2 \leq k^7 \cdot (2kt/T)^{ 2.5k } \cdot \|x\|^2_T.
\end{equation*}
\end{lemma}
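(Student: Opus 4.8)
The plan is to write $x(t)$ for $t>T$ as a short linear combination of values of $x$ inside $[0,T]$ with explicitly bounded coefficients --- exactly the idea behind Claim~\ref{cla:linear_relationships} --- and then average the resulting inequality over a free parameter. Write $x(t)=\sum_{i=1}^k v_i e^{2\pi\i f_i t}$ and fix a step size $\tau>0$. Put $z_i=e^{2\pi\i f_i\tau}$ and let $Q(z)=\prod_{i=1}^k(z-z_i)$, a monic degree-$k$ polynomial all of whose roots lie on the unit circle (we allow the $z_i$ to coincide). Let $r_{m,k}(z)=\sum_{l=0}^{k-1}r_{m,k}^{(l)}z^l$ be the residue of $z^m$ modulo $Q(z)$; by Lemma~\ref{lem:bound_residual_polynomial_coefficients}, $|r_{m,k}^{(l)}|\le 2^k m^{k-1}$ for all $l$. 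Since $Q(z_i)=0$ gives $z_i^m=r_{m,k}(z_i)$, for every real $t_0$ we obtain
\[
x(t_0+m\tau)=\sum_{i=1}^k v_i e^{2\pi\i f_i t_0} z_i^m=\sum_{l=0}^{k-1} r_{m,k}^{(l)}\,x(t_0+l\tau),
\]
and choosing $t_0=t-m\tau$ yields $x(t)=\sum_{l=0}^{k-1}r_{m,k}^{(l)}\,x(t-(m-l)\tau)$.

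I would then take $m=\lceil (k-1)t/T\rceil+1$ and restrict $\tau$ to $I:=\left[\frac{t-T}{m-k+1},\ \frac{t}{m}\right]$. Elementary computations show $m\ge k$, $m-k+1\ge 1$, $|I|=\frac{mT-t(k-1)}{m(m-k+1)}\ge T/m^2$, and --- the key point --- $t-(m-l)\tau\in[0,T]$ for every $\tau\in I$ and every $l\in\{0,\dots,k-1\}$. For each such $\tau$, Cauchy--Schwarz gives $|x(t)|^2\le k\,4^k m^{2k-2}\sum_{l=0}^{k-1}|x(t-(m-l)\tau)|^2$. Averaging this over $\tau\in I$ and, in each of the $k$ summands, changing variables $s=t-(m-l)\tau$ (the image lies in $[0,T]$), we bound $\frac{1}{|I|}\int_I |x(t-(m-l)\tau)|^2\,\mathrm{d}\tau$ by $\frac{T}{(m-k+1)\,|I|}\|x\|_T^2$; combining with $|I|\ge T/m^2$ and $m-k+1\ge 1$ gives $|x(t)|^2\le k^2\,4^k m^{2k}\,\|x\|_T^2$.

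To conclude, I would use $m\le 2kt/T$ (valid for all $k\ge 1$ since $t>T$) to get $|x(t)|^2\le k^2\,4^k(2kt/T)^{2k}\|x\|_T^2$, and then absorb the factor $k^2\,4^k$ into $k^7(2kt/T)^{k/2}$. For $k=1$ the signal is constant and the bound is immediate; for $2\le k\le 22$ we have $2^k\le k^5$ and also $2^k\le(2kt/T)^{k/2}$ since $2kt/T\ge 4$; for $k\ge 8$ we have $4^k\le(2k)^{k/2}\le(2kt/T)^{k/2}$. As $\{2,\dots,22\}\cup\{8,9,\dots\}$ exhausts all $k\ge 2$, in every case $k^2\,4^k(2kt/T)^{2k}\le k^7(2kt/T)^{2.5k}$, giving the claimed bound. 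The main obstacle is calibrating $m$: it must be large enough that the $k$ sample points can be kept inside $[0,T]$ over an interval of $\tau$ of length $\Omega(T/m^2)$, yet small enough (namely $O(kt/T)$) that $m^{2k}$ does not overshoot the target exponent $2.5k$; given the right $m$, the residue-polynomial identity, Cauchy--Schwarz, the substitution, and the final elementary case analysis are all routine.
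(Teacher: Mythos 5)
Your proof is correct, and it takes a genuinely different route from the paper's. Both proofs rest on Lemma~\ref{lem:bound_residual_polynomial_coefficients} to write $x(t)$ as a linear combination of $k$ in-interval samples $x(t-(m-l)\tau)$ with coefficients bounded by $2^k m^{k-1}$. The paper then picks a \emph{single} decomposition $t=t_0+n\tau$ with $t_0,\tau\in[0,T/k]$ and $n\le 2kt/T$, and controls each $|x(t_0+l\tau)|^2$ by invoking the already-proved in-interval max bound, Lemma~\ref{lem:max_is_at_most_polyk_times_l2} ($\max_{[0,T]}|x|^2\lesssim k^4\log^3 k\,\|x\|_T^2$). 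You instead leave $\tau$ free over an explicit interval $I$ of length $\gtrsim T/m^2$ on which all $k$ sample points remain in $[0,T]$, and then \emph{average} the Cauchy--Schwarz inequality over $\tau\in I$, converting each term $\frac1{|I|}\int_I|x(t-(m-l)\tau)|^2\,\mathrm d\tau$ into an integral of $|x|^2$ over a subinterval of $[0,T]$ by substitution. This bypasses Lemma~\ref{lem:max_is_at_most_polyk_times_l2} entirely, making the argument self-contained; the averaging trick is precisely the one used in the proof of Lemma~\ref{lem:max_is_at_most_polyk_times_l2} itself, so in effect you have merged the two steps into one. The bookkeeping on $m$ (needing simultaneously $m\ge k$, $|I|\gtrsim T/m^2$, and $m\lesssim kt/T$) is the delicate part, and you have calibrated it correctly: $m=\lceil(k-1)t/T\rceil+1$ satisfies all three. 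The paper's approach is more modular and gives a marginally sharper exponent ($2.2k$ in its displayed bound), while yours avoids the $k^4\log^3 k$ factor; both comfortably establish the stated $k^7(2kt/T)^{2.5k}$ after the same kind of elementary absorption of $4^k$ into $k^5(2kt/T)^{0.5k}$.
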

\begin{proof}
For any $t>T$, let $t=t_0+n \cdot \tau$ such that $t_0 \in [0,T/k], \tau \in [0,T/k]$ and $n \le \frac{2kt}{T}$. We define $b_i=v_i e^{2 \pi \i f_i t_0}$, and $z_i= e^{2 \pi \i f_i \tau}$ such that $x(t_0+ n \cdot \tau)=\sum_{j \in [k]}b_j z_j^n$.

By Lemma \ref{lem:bound_residual_polynomial_coefficients}, we have for any $z_1, z_2, \cdots, z_k$ and any $n$,
\begin{equation*}
z^n \equiv \sum_{i=0}^{k-1} a_i z^i \pmod {\prod_{i=1}^{k} (z-z_i) },
\end{equation*}
where $|a_i| \leq 2^k \cdot n^{k}, \forall i \in \{0,1,\cdots, k-1\}$. Thus, we obtain
\begin{eqnarray*}
x(t_0 + n \tau) = \sum_{j=1}^k b_j z_j^n =\sum_{j=1}^k b_j ( \sum_{i=0}^{k-1} a_i z_j^i).
\end{eqnarray*}
From the fact that $x(t_0 + i \cdot \tau)=\sum_{j \in [k]} b_j z_j^i$, we simplify it to be
\[
x(t_0 + n \tau)=\sum_{i=0}^{k-1} a_i \sum_{j=1}^k b_j z_j^i =\sum_{i=0}^{k-1} a_i x(t_0 + i \cdot \tau).
\]
Because $(t_0 + i \cdot \tau) \in [0,T]$ for any $i=0,\cdots,k-1$, we have $|x(t_0 + i \tau)|^2 \le \underset{t\in[0,T] }{\max} |x(t)|^2 \lesssim k^4 \log^3 k \|x\|^2_T$ from Lemma \ref{lem:max_is_at_most_polyk_times_l2}. Hence
\begin{align*}
|x(t_0+n \cdot \tau)|^2 \leq \quad & k \sum_{i=0}^{k-1} |a_i|^2 \cdot |x(t_0 + i \cdot \tau)|^2 \\
\leq \quad & k \sum_{i=0}^{k-1} n^{2.2 k} \cdot \underset{t\in[0,T] }{\max} |x(t)|^2\\
\leq \quad & k^7 \cdot (2kt/T)^{2.2 k} \|x\|^2_T.
\end{align*}
Thus, we complete the proof.
\end{proof}




\section{Hash Functions and Filter Functions}\label{sec:hashfilter}

\subsection{Permutation function and hash function}
We first review the permutation function $P_{\sigma,a,b}$ and the hash function $h_{\sigma,b}$ in \cite{PS15}, which translates discrete settings to the continuous setting.
\begin{definition}\label{def:permutation}
For any signal $x(t) : \mathbb{R} \rightarrow \mathbb{C}$ and $a,b,\sigma \in \mathbb{R}$, let $(P_{\sigma,a,b} x)(t) = x \big( \sigma(t-a) \big) e^{-2\pi \i \sigma b t}$.
\end{definition}
\define{lem:permutation}{Lemma}{
$\widehat{P_{\sigma,a,b} x} (\sigma(f-b)) =\frac{1}{\sigma} e^{-2\pi \i \sigma a f } \widehat{x}(f)$ and $\widehat{P_{\sigma,a,b} x} ( f ) = \frac{1}{\sigma }e^{-2\pi \i \sigma a (f/\sigma+b) } \widehat{x}(f/\sigma+b)$
}
\state{lem:permutation}
For completeness, we provide a proof of Lemma \ref{lem:permutation} in Appendix \ref{sec:proof_permutation}.
\begin{definition} \cite{PS15}
Let $\pi_{\sigma,b}(f) = 2\pi \sigma(f-b) \pmod {2\pi}$ and $h_{\sigma,b}(f) = \mathrm{round} (\pi_{\sigma,b}(f) \cdot \frac{B}{2\pi})$ be the hash function that maps frequency $f \in [-F,F]$ into bins $\{0,\cdots,B-1\}$.
\end{definition}
\begin{claim}\label{cla:PS15_hash_claims} \cite{PS15}
For any $\Delta>0$, let $\sigma$ be a sample uniformly at random from $[\frac{1}{B\Delta}, \frac{2}{B\Delta}]$.

(\RN{1}) If $\Delta \leq|f^+ - f^-| < \frac{(B-1)\Delta}{2} $, then $\mathsf{Pr}[h_{\sigma,b}(f^+) = h_{\sigma,b}(f^-)]=0$

(\RN{2}) If $\frac{(B-1)\Delta}{2} \leq |f^+ - f^-|$, then
$\mathsf{Pr}[h_{\sigma,b}(f^+) = h_{\sigma,b}(f^-)] \lesssim \frac{1}{B}$
\end{claim}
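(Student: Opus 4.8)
The plan is to reduce both parts to a one-dimensional statement about the single quantity
\[
w := \sigma\,|f^+-f^-|\,B \bmod B \in [0,B),
\]
exploiting the fact that $b$ does not affect the collision event at all. First I would note that $\pi_{\sigma,b}(f^+)-\pi_{\sigma,b}(f^-)\equiv 2\pi\sigma(f^+-f^-)\pmod{2\pi}$ is independent of $b$, so after rescaling by $B/2\pi$ the two images $p^{\pm}:=\pi_{\sigma,b}(f^{\pm})\cdot\frac{B}{2\pi}\in[0,B)$ satisfy $p^+-p^-\equiv \pm w\pmod B$. A collision $h_{\sigma,b}(f^+)=h_{\sigma,b}(f^-)$ means $p^+$ and $p^-$ round into the same bin $[j-\tfrac12,j+\tfrac12)\bmod B$; since each bin has width exactly $1$, this forces the circular distance between $p^+$ and $p^-$ to be strictly less than $1$, and because the bad set $[0,1)\cup(B-1,B)$ is symmetric under $w\mapsto B-w$ the sign of $f^+-f^-$ is irrelevant. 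Thus for \emph{every} fixed $b$,
\[
\Pr_{\sigma}\!\big[h_{\sigma,b}(f^+)=h_{\sigma,b}(f^-)\big]\ \le\ \Pr_{\sigma}\!\big[\,w\in[0,1)\cup(B-1,B)\,\big],
\]
and from here on $b$ never reappears.

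For part (\RN{1}) I would simply substitute the endpoints of the range $\sigma\in[\tfrac1{B\Delta},\tfrac2{B\Delta}]$: the hypothesis $\Delta\le|f^+-f^-|<\tfrac{(B-1)\Delta}{2}$ gives $\sigma|f^+-f^-|B\in[1,B-1)\subseteq[0,B)$ with probability $1$, so $w=\sigma|f^+-f^-|B$ needs no reduction and lands in $[1,B-1)$, which is disjoint from the bad arc $[0,1)\cup(B-1,B)$. Hence the collision probability is exactly $0$.

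For part (\RN{2}) the hypothesis $|f^+-f^-|\ge\tfrac{(B-1)\Delta}{2}$ means that as $\sigma$ ranges uniformly over an interval of length $\tfrac1{B\Delta}$, the quantity $\sigma|f^+-f^-|B$ ranges uniformly over an interval $J$ of length $L:=|f^+-f^-|/\Delta\ge\tfrac{B-1}{2}$. The only ingredient I need is an elementary covering bound: for any measurable $A\subseteq\mathbb{R}/B\mathbb{Z}$, an interval of length $L$ meets the preimage of $A$ in total Lebesgue measure at most $\lceil L/B\rceil\,|A|$ — proved by covering $J$ with $\lceil L/B\rceil$ consecutive length-$B$ blocks, on each of which the preimage of $A$ has measure exactly $|A|$. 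Applying this to $A=[0,1)\cup(B-1,B)$ (so $|A|=2$) and using $L\ge\tfrac{B-1}{2}$ to bound $\tfrac{\lceil L/B\rceil}{L}\le\tfrac1B+\tfrac1L=O(1/B)$, I conclude $\Pr_{\sigma}[\,w\in[0,1)\cup(B-1,B)\,]\le\tfrac{2\lceil L/B\rceil}{L}\lesssim\tfrac1B$.

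The main obstacle — really the only step needing care — is the wraparound in part (\RN{2}): $L$ is only guaranteed to be $\Omega(B)$, not much larger, so $w$ is \emph{not} close to uniform on $\mathbb{R}/B\mathbb{Z}$, and one must genuinely argue through the $\lceil L/B\rceil$ covering estimate rather than pretending $w$ is uniform. A second, minor point is fixing the bin-boundary convention so that the implication ``$w\in[1,B-1)\Rightarrow$ no collision'' is exact rather than merely generic; using half-open bins $[j-\tfrac12,j+\tfrac12)$ achieves this, which is precisely what part (\RN{1}) requires. Everything else is substituting the two endpoints of the range of $\sigma$ into $\sigma|f^+-f^-|B$.
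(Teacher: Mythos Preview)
The paper does not prove this claim at all --- it is quoted directly from \cite{PS15}, so there is no ``paper's own proof'' to compare against. Your argument is correct and self-contained; part~(\RN{2}) is essentially an ad hoc instance of the wrapping estimate the paper also imports as Lemma~\ref{lem:wrapping}.

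One minor wording issue: your opening sentence that ``$b$ does not affect the collision event at all'' is not literally true --- $b$ shifts the absolute positions $p^{\pm}$ and can change whether two points at circular distance $<1$ share a bin. What \emph{is} independent of $b$ is the necessary condition you actually use, namely ``collision $\Rightarrow$ circular distance $<1$ $\Rightarrow w\in[0,1)\cup(B-1,B)$''. Since both parts only need this upper bound (in part~(\RN{1}) the upper bound is already $0$), the argument goes through unchanged; just tighten the phrasing.
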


From previous work \cite{HIKP12, HIKP,PS15}, uniformly sampling from $[A,2A]$ for some large $A \ge \wt{T}$ provides an almost uniform sample on $[0,\wt{T}]$ when taken modulo over $\wt{T}$.
\begin{lemma}\label{lem:wrapping}
For any $\widetilde{T}$, and $ 0 \leq \widetilde{\epsilon}, \widetilde{\delta} \leq \widetilde{T}  $, if we sample $\widetilde{\sigma}$ uniformly at random from $[A,2A]$, then
\begin{equation}
\frac{2\widetilde{\epsilon} }{ \widetilde{T} } - \frac{2\widetilde{\epsilon} }{ A } \leq \mathsf{Pr} \left[ \widetilde{\sigma}  {\pmod {\widetilde{T}} } \in [ \widetilde{\delta} - \widetilde{\epsilon}, \widetilde{\delta} + \widetilde{\epsilon} ~] \right] \leq \frac{2\widetilde{\epsilon} }{ \widetilde{T} } + \frac{4\widetilde{\epsilon} }{ A }.
\end{equation} 
\end{lemma}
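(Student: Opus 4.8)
The plan is to read the probability off the density of $\widetilde\sigma$ reduced modulo $\widetilde T$. Since $\widetilde\sigma$ is uniform on $[A,2A]$, the variable $\widetilde\sigma\bmod\widetilde T$ (valued in $[0,\widetilde T)$) has density $\rho(s)=\frac1A\,\#\{k\in\mathbb{Z}:\ s+k\widetilde T\in[A,2A]\}$, and hence
\[
\Pr\big[\widetilde\sigma\bmod\widetilde T\in[\widetilde\delta-\widetilde\epsilon,\widetilde\delta+\widetilde\epsilon]\big]=\int_0^{\widetilde T}\mathbf{1}[s\in W]\,\rho(s)\,\mathrm ds,
\]
where $W\subseteq[0,\widetilde T)$ is the image of $[\widetilde\delta-\widetilde\epsilon,\widetilde\delta+\widetilde\epsilon]$ modulo $\widetilde T$. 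When $2\widetilde\epsilon\le\widetilde T$ — the case of interest — $W$ has Lebesgue measure exactly $2\widetilde\epsilon$, whether or not it wraps around the point $0\equiv\widetilde T$; otherwise $W=[0,\widetilde T)$, the probability equals $1$, and the stated bounds follow from $A\ge\widetilde T$.

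The heart of the argument is an elementary period-counting estimate: for each fixed $s$, the quantity $\#\{k:\ s+k\widetilde T\in[A,2A]\}$ is the number of integers in the interval $[(A-s)/\widetilde T,\ (2A-s)/\widetilde T]$, which has length $A/\widetilde T$, so it lies between $A/\widetilde T-1$ and $A/\widetilde T+1$ for \emph{every} $s$. Consequently $\frac1{\widetilde T}-\frac1A\le\rho(s)\le\frac1{\widetilde T}+\frac1A$ uniformly in $s$. I would then integrate these pointwise bounds over $W$: using $\mathrm{Leb}(W)=2\widetilde\epsilon$,
\[
\Big(\tfrac1{\widetilde T}-\tfrac1A\Big)2\widetilde\epsilon\ \le\ \Pr\big[\widetilde\sigma\bmod\widetilde T\in[\widetilde\delta-\widetilde\epsilon,\widetilde\delta+\widetilde\epsilon]\big]\ \le\ \Big(\tfrac1{\widetilde T}+\tfrac1A\Big)2\widetilde\epsilon ,
\]
which already implies the asserted inequalities (indeed with the sharper upper term $\tfrac{2\widetilde\epsilon}{A}$ in place of $\tfrac{4\widetilde\epsilon}{A}$).

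There is no genuine obstacle here. The only mild subtleties are verifying that the count $\#\{k:\ s+k\widetilde T\in[A,2A]\}$ stays within $[A/\widetilde T-1,\ A/\widetilde T+1]$ no matter how the arithmetic progression $\{s+k\widetilde T\}_k$ is aligned with the endpoints $A$ and $2A$, and checking that $W$ indeed has measure $2\widetilde\epsilon$ even when it is split into two pieces by the wrap-around at $0\equiv\widetilde T$.
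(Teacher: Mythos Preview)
Your argument is correct; the density-counting approach is the standard one, and as you observe it actually yields the sharper upper bound $\frac{2\widetilde\epsilon}{\widetilde T}+\frac{2\widetilde\epsilon}{A}$. The paper does not give its own proof of this lemma: it is stated as a known fact from prior work \cite{HIKP12,HIKP,PS15}, so there is nothing to compare your route against.
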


\subsection{Filter function}
We state the properties of filter function $(H(t), \wh{H}(f))$ and $(G(t),\wh{G}(f))$, the details of proofs are presented in Appendix \ref{sec:properties_of_H} and \ref{sec:properties_of_G}.

\define{lem:property_of_filter_H}{Lemma}{
Given $s_0, s_1, 0 < s_3 < 1,\ell > 1, 0< \delta <1$, where $\ell = \Theta( k\log( k/\delta )) $.The filter function $(H(t),\widehat{H}(f))$ has the following properties,
\begin{eqnarray*}
&\mathrm{Property~\RN{1}} : &H(t) \in [ 1 - \delta, 1], \text{~when~} |t| \leq  ( \frac{1}{2} - \frac{2}{s_1} ) s_3.\\
&\mathrm{Property~\RN{2}} : &H(t) \in [0,1], \text{~when~}  (\frac{1}{2} - \frac{2}{s_1}) s_3 \leq |t| \leq \frac{1}{2} s_3.\\
&\mathrm{Property~\RN{3}} : &H(t) \leq s_0 \cdot (s_1( \frac{|t|}{s_3}-\frac{1}{2})+2)^{-\ell},\forall |t| > \frac{1}{2} s_3.\\
&\mathrm{Property~\RN{4}} : &\supp(\widehat{H}(f) ) \subseteq [-\frac{s_1  \ell}{2 s_3}, \frac{s_1 \ell }{2 s_3}].
\end{eqnarray*}

For any exact $k$-Fourier-sparse signal $x^*(t)$, we shift the interval from $[0,T]$ to $[-1/2,1/2]$ and consider $x^{*}(t)$ for $t \in [-1/2,1/2]$ to be our observation, which is also $x^*(t) \cdot \rect_1(t)$.
\begin{eqnarray*}
&\mathrm{Property~\RN{5}} : &\int_{-\infty}^{+\infty} \bigl|x^*(t) \cdot H(t) \cdot (1- \rect_1(t) ) \bigr|^2 \mathrm{d} t < \delta \int_{-\infty}^{+\infty} | x^*(t) \cdot \rect_1(t) |^2 \mathrm{d} t.\\
&\mathrm{Property~\RN{6}} : &\int_{-\infty}^{+\infty} |x^*(t) \cdot H(t) \cdot \rect_1(t) |^2 \mathrm{d} t \in  [1-\epsilon, 1]\cdot \int_{-\infty}^{+\infty} |x^*(t) \cdot \rect_1(t) |^2 \mathrm{d} t.
\end{eqnarray*}
for arbitrarily small constant $\epsilon$. 
}
\state{lem:property_of_filter_H}

\begin{figure}[t]
  \centering
    \includegraphics[width=1.0\textwidth]{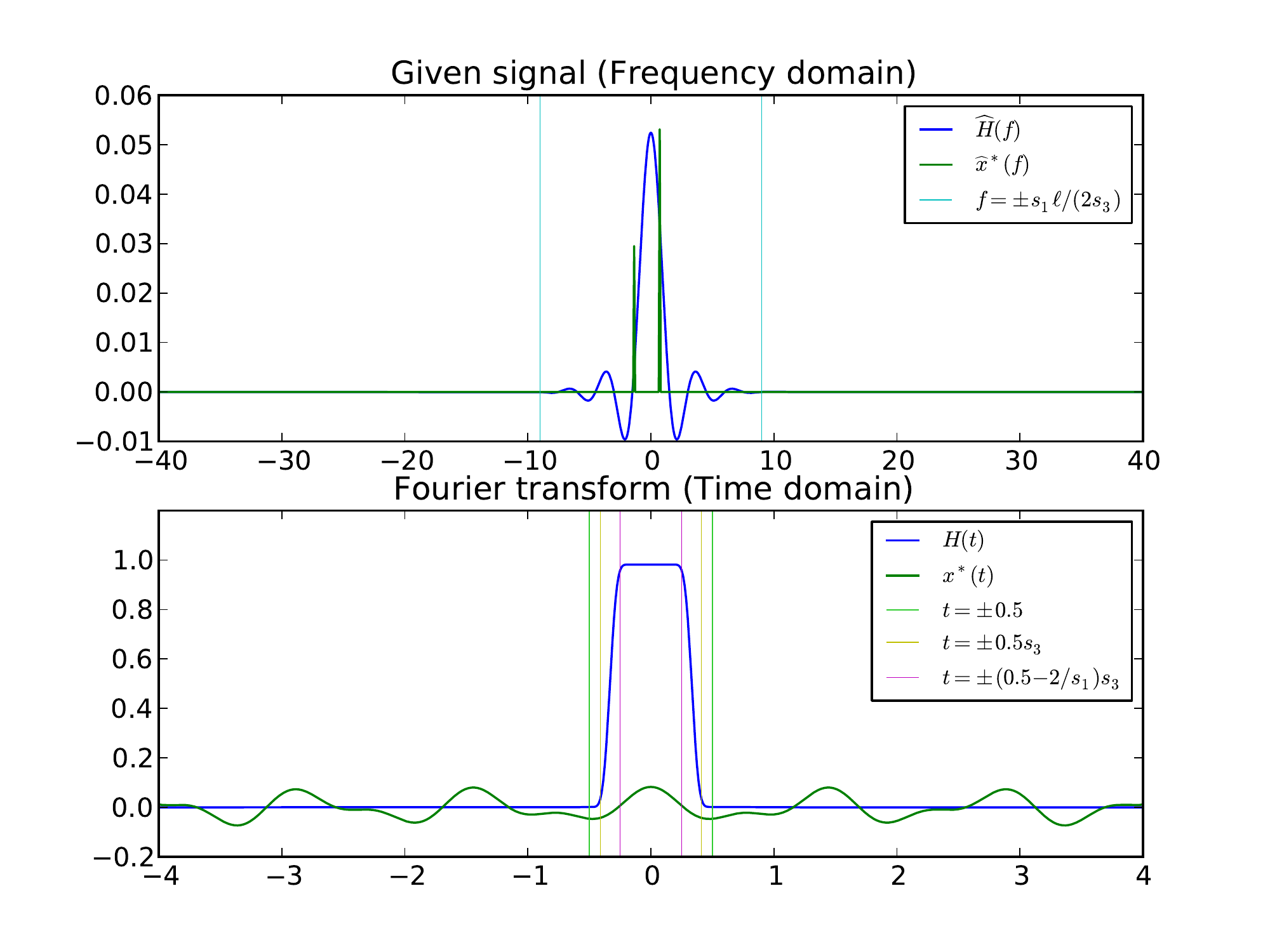}
    \caption{The filter function $(H(t), \widehat{H}(f))$ with a $k$-Fourier-sparse signal. The property \RN{1}, \RN{2} and \RN{3} are presented in the bottom one, the property \RN{4} is presented in the top one.}\label{fig:filter_H_with_sparse_signal}
\end{figure}

\define{lem:property_of_filter_G}{Lemma}{
Given $B >1$, $\delta >0$, $\alpha>0$, we set $l=\Omega(\log(\delta/k))$. The filter function $(G(t), \widehat{G}(f) )[B,\delta,\alpha,l]$ satisfies the following properties,
\begin{eqnarray*}
&\mathrm{Property~\RN{1}} : &\widehat{G}(f) \in [1 - \delta/k, 1] , \text{~if~} |f|\leq (1-\alpha)\frac{2\pi}{2B}.\\
&\mathrm{Property~\RN{2}}: &\widehat{G}(f) \in [0,1], \text{~if~}  (1-\alpha)\frac{2\pi}{2B} \leq |f| \leq \frac{2\pi}{2B}.\\
&\mathrm{Property~\RN{3}} : &\widehat{G}(f) \in [-\delta /k, \delta/k], \text{~if~}  |f| >  \frac{2\pi}{2B}
.\\
&\mathrm{Property~\RN{4}} : &\supp(G(t) ) \subset [\frac{l}{2} \cdot \frac{-B}{\pi\alpha}, \frac{l}{2} \cdot \frac{B}{\pi\alpha}].\\
&\mathrm{Property~\RN{5}} : & \underset{t}{\max} |G(t)| \lesssim  \poly(B,l).
\end{eqnarray*}
} 

\state{lem:property_of_filter_G}

\begin{figure}[t]
  \centering
    \includegraphics[width=0.6\textwidth]{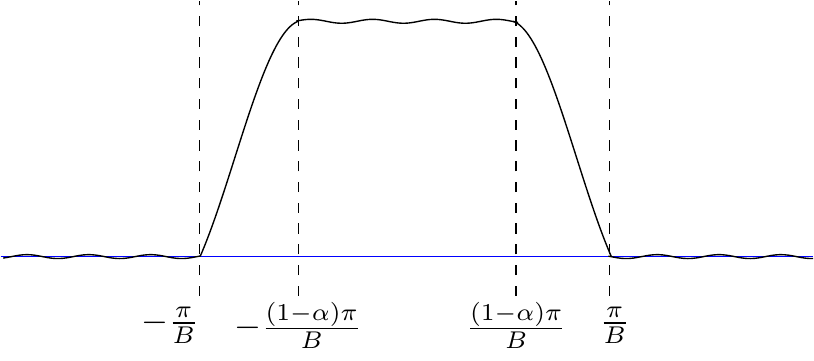}
    \caption{ $G$ and $\widehat{G}$. \cite{PS15} }\label{fig:filter_G_and_hat_G}
\end{figure}




\subsection{{\normalfont\texorpdfstring{\textsc{HashToBins}} \quad} }

We first define two functions $G^{(j)}_{\sigma,b}(t)$ and $\wh{G}^{(j)}_{\sigma,b}(f)$, then show the result returned by Procedure \textsc{HashToBins} in Algorithm \ref{alg:locateksignal_locatekinner_hashtobins} satisfying some nice properties.  The details of proofs are presented in Appendix \ref{sec:proof_hashtobins}.

\define{def:G_j_sigma_b}{Definition}{
$\forall \sigma>0, b$ and $j\in [B]$. Define,
\begin{eqnarray*}
G^{(j)}_{\sigma,b}(t) & = & \frac{1}{\sigma} G(t/\sigma) e^{2\pi\i t(j/B-\sigma b)/\sigma}\\
\widehat{G}^{(j)}_{\sigma,b}(f) &=& \widehat{G}^{\dis}(\frac{j}{B}-\sigma f -\sigma b) = \sum_{i\in \Z}\widehat{G}(i+ \frac{j}{B} -\sigma f -\sigma b) 
\end{eqnarray*}
}

\state{def:G_j_sigma_b}

\define{lem:hashtobins}{Lemma}{
  Let $u \in \C^B$ be the result of \textsc{HashToBins} under
  permutation $P_{\sigma, a, b}$, and let $j \in [B]$.  Define
  \[
  \wh{z} = \wh{x\cdot H} \cdot \wh{G}^{ (j)}_{\sigma,b},
  \]
  so
  \[
  z = (x\cdot H) * G^{ (j)}_{\sigma,b}.
  \]
  Let vector $\wh{u}\in \C^B$ denote the $B$-dimensional DFT of $u$, then $\forall j\in [B]$,
  \[
  \wh{u}[j] = z_{\sigma a}.
  \]
}
\state{lem:hashtobins}

\section{Frequency Recovery}\label{sec:freq_recover}
The goal of this section is to prove Theorem \ref{thm:frequency_recovery_k_cluster}, which is able to recover the frequencies of a signal $x^*$ has $k$-sparse Fourier transform under noise.
\restate{thm:frequency_recovery_k_cluster}
\subsection{Overview}
We give an overview of proving Theorem \ref{thm:frequency_recovery_k_cluster}.
Instead of starting with $k$-cluster recovery, we first show how to achieve one-cluster recovery.
\paragraph{One-cluster recovery.} we start with $x^*(t)=\sum_{j=1}^k v_j e^{2\pi\i f_j t}$ where there exists $f_0$ and $\Delta$ such that $f_j$ is in $[f_0-\Delta,f_0+\Delta]$ for each $j \in [k]$ and consider its properties for frequency recovery.
\begin{definition}[$(\epsilon,\Delta)$-one-cluster signal]\label{def:one_cluster}
We say that a signal $z(t)$ is an $(\epsilon,\Delta)$-one-cluster signal around $f_0$ iff $z(t)$ and $\wh{z}(f)$ satisfy the following two properties:
\begin{eqnarray*}
\mathrm{Property ~\RN{1}} &:& \int_{f_0-\Delta}^{f_0+\Delta} | \widehat{z}(f) |^2 \mathrm{d} f \geq (1-\epsilon) \int_{-\infty}^{+\infty} | \widehat{z}(f) |^2 \mathrm{d} f \\
\mathrm{Property ~\RN{2}} &:& \int_0^T | z(t) |^2 \mathrm{d} t \geq (1-\epsilon) \int_{-\infty}^{+\infty} |z(t) |^2 \mathrm{d} t.
\end{eqnarray*}
\end{definition}
The main result of one-cluster recovery is to prove that the two properties in Definition \ref{def:one_cluster} with a sufficiently small constant $\epsilon$ are sufficient to return $\wt{f_0}$ close to $f_0$ with high probability, which provides a black-box for $k$-cluster recovery algorithm.

We first prove that the pair of conditions, Property \RN{1} and Property \RN{2} in Definition \ref{def:one_cluster}, are sufficient to obtain an estimation of $e^{2 \pi \i f_0}$ in Section \ref{sec:proof_good_samples}. We also provide the proof of the correctness of Procedures \textsc{GetLegal1Sample} and \textsc{GetEmpirical1Engergy} in Section \ref{sec:proof_good_samples}.


\begin{lemma}\label{lem:get_legal_1_sample}
For a sufficiently small constant $\epsilon>0$, any $f_0 \in [-F,F]$, and $\Delta>0$, given  $\wh{\beta} \eqsim \frac{1}{\Delta \sqrt{\Delta T}}$  and an $(\epsilon,\Delta)$-one-cluster signal $z(t)$ around $f_0$, Procedure \textsc{GetLegal1Sample} in Algorithm \ref{alg:getempirical1enery_getlegal1sample} with any $\beta \le 2\wh{\beta}$ takes $O((T \Delta)^3)$ samples to output $\alpha \in \mathbb{R}$ satisfying 
\begin{equation*} 
|z(\alpha + \beta) - z(\alpha)e^{2 \pi \i f_0 \beta}| \le 0.08 (|z(\alpha)|+|z(\alpha+\beta)|),
\end{equation*}
 with probability at least 0.6.
\end{lemma}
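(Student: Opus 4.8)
The plan is to reduce the claim about the one-cluster signal $z$ to a statement about the pure exponential $e^{2\pi\i f_0 t}$. Write $w(t) = z(t)e^{-2\pi\i f_0 t}$; then $\wh{w}$ is $\wh{z}$ shifted to be centered at $0$, so Property~\RN{1} of Definition~\ref{def:one_cluster} says that $w$ has all but an $\epsilon$-fraction of its Fourier mass in $[-\Delta,\Delta]$. For such a signal, $w(\alpha+\beta)$ and $w(\alpha)$ should be close whenever $\beta\ll 1/\Delta$, because $w$ is ``slowly varying'' — the derivative $w'$ has Fourier support multiplied by $2\pi f$ with $|f|\le\Delta$ on the bulk. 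Quantitatively, $|z(\alpha+\beta) - z(\alpha)e^{2\pi\i f_0\beta}| = |w(\alpha+\beta) - w(\alpha)|$, so it suffices to show that for a random $\alpha$ drawn appropriately (uniformly over the sampling window, say $[0,T(1-\text{something})]$ so that $\alpha,\alpha+\beta\in[0,T]$), with probability at least $0.6$ we have $|w(\alpha+\beta)-w(\alpha)| \le 0.08(|w(\alpha)| + |w(\alpha+\beta)|)$.

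The key steps, in order: (1) Use Property~\RN{2} together with the growth bounds of Section~\ref{sec:technicalsparse} (Lemma~\ref{lem:max_is_at_most_polyk_times_l2} and Lemma~\ref{lem:x_dot_H_is_small_outside_T}) to argue that $\|z\|_T^2 = \frac1T\int_0^T|z|^2 \eqsim \frac1T\int_{\R}|z|^2 = \frac1T\int_{\R}|\wh z|^2$, i.e., a constant fraction of the total energy sits in the window, and conversely the window energy is not much larger than the total; this lets us move freely between the $L^2(\R)$ norm (on which Fourier/Parseval statements are clean) and $\|\cdot\|_T$. (2) Bound $\int_0^{T}|w'(t)|^2\mathrm{d}t$: split $\wh w = \wh w\cdot\mathbf{1}_{[-\Delta,\Delta]} + \wh w\cdot\mathbf{1}_{|f|>\Delta}$; on the bulk part the derivative contributes at most $(2\pi\Delta)^2$ times the energy, and the tail part carries only an $\epsilon$-fraction of energy but could have large frequencies — here is where the precise choice $\wh\beta \eqsim \frac{1}{\Delta\sqrt{\Delta T}}$ enters, since we must control how the small-but-high-frequency tail can affect a pointwise difference over a step of length $\beta$. (3) For the bulk, write $w(\alpha+\beta)-w(\alpha) = \int_\alpha^{\alpha+\beta} w'(t)\,\mathrm{d}t$ and apply Cauchy--Schwarz to get $|w(\alpha+\beta)-w(\alpha)|^2 \le \beta\int_\alpha^{\alpha+\beta}|w'(t)|^2\mathrm{d}t$; integrating over $\alpha$ gives $\E_\alpha|w(\alpha+\beta)-w(\alpha)|^2 \lesssim \frac{\beta^2}{T}\int_0^T|w'|^2 \lesssim \beta^2\Delta^2\cdot\|w\|_T^2$ (plus a tail term), so with $\beta \lesssim \frac{1}{\Delta\sqrt{\Delta T}}$ we can make this $\ll \|w\|_T^2$, in fact smaller than a tiny constant times $\|w\|_T^2$. (4) Meanwhile, a Markov/averaging argument shows that for a random $\alpha$, $|w(\alpha)|^2$ is at least a constant times $\|w\|_T^2$ with constant probability; combining (3) and (4) by a union bound yields that with probability $\ge 0.6$ both $|w(\alpha+\beta)-w(\alpha)|$ is small and $|w(\alpha)|$ (hence also $|w(\alpha)|+|w(\alpha+\beta)|$) is comparable to $\|w\|_T$, giving the ratio bound $0.08$. (5) Finally, since $w,w',$ and pointwise values of $w$ are not directly observable, we replace expectations by empirical averages over $O((T\Delta)^3)$ samples; the sample complexity $(T\Delta)^3$ is exactly what is needed so that Lemma~\ref{lem:max_is_bounded_imply_sample_is_small} (with $d \eqsim \poly(T\Delta)$ coming from the growth bounds applied to the relevant signals) certifies the empirical quantities concentrate — this is where Procedures \textsc{GetLegal1Sample} and \textsc{GetEmpirical1Engergy} and their guarantees come in.

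The main obstacle I expect is step (2): controlling the contribution of the $\epsilon$-fraction of Fourier mass that lies \emph{outside} $[-\Delta,\Delta]$. This tail is small in $L^2$ but its frequencies are a priori unbounded, so naively it could make $w(\alpha+\beta)-w(\alpha)$ large even for small $\beta$. The resolution must use that $z$ is genuinely $k$-Fourier-sparse (not merely Fourier-concentrated), so the ``tail'' is not arbitrary noise — we know $z = \sum v_j e^{2\pi\i f_j t}$ with the $f_j$ honestly in $[f_0-\Delta,f_0+\Delta]$, so in fact $w' = \sum (2\pi\i (f_j-f_0)) v_j e^{2\pi\i(f_j-f_0)t}$ has \emph{all} its frequencies bounded by $\Delta$ in absolute value, and $\|w'\|_T \le 2\pi\Delta\|w\|_T$ directly, with no tail term at all, once we pass through the coefficient bounds (Lemma~\ref{lem:relation_energy_coef}) to justify manipulating the exponentials termwise. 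Reconciling the ``Fourier-concentrated'' language of Definition~\ref{def:one_cluster} with the ``honestly sparse'' structure — i.e.\ deciding whether to prove the derivative bound via Parseval on the concentrated spectrum or via the sparse representation plus coefficient control — is the delicate modeling choice, and I would push the argument through the sparse representation since the growth lemmas of Section~\ref{sec:technicalsparse} are already phrased for exactly that setting.
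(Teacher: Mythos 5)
Your reduction to $w(t)=z(t)e^{-2\pi\i f_0 t}$ and the Cauchy--Schwarz bound on $w(\alpha+\beta)-w(\alpha)$ via $\int|w'|^2$ is fine for the ``bulk'' piece $z^I$ with Fourier support in $[f_0-\Delta,f_0+\Delta]$ (the paper does essentially this in Claim~\ref{cla:close_samples}, directly in frequency domain rather than via $w'$). But two of your steps have genuine gaps, and the second is the heart of the matter.

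First, step~(4) is not correct for uniform $\alpha$. The only bound available on $\max_{t\in[0,T]}|z^I(t)|$ is $O(\sqrt{T\Delta})\,\|z^I\|_T$ (Claim~\ref{cla:z_max}); Paley--Zygmund then gives only $\Pr_{\alpha\sim[0,T]}\bigl[|z^I(\alpha)|^2 \ge \tfrac12\|z^I\|_T^2\bigr] \gtrsim 1/(T\Delta)$, not a constant. This is exactly why the procedure draws $R_{\repeats}=(T\Delta)^3$ uniform samples and filters for the heavy set $S_{\heavy}$ on which $|z(\alpha)|\ge 0.5\,z_{\emp}$; the constant success probability only emerges \emph{after} that filtering.

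Second, and more fundamentally, your proposed fix for the tail does not work. You suggest using that ``$z$ is genuinely $k$-Fourier-sparse'' with all $f_j$ in the cluster, so that $w'$ has no tail at all. But the lemma is stated for an arbitrary $(\epsilon,\Delta)$-one-cluster signal per Definition~\ref{def:one_cluster}, and in the $k$-cluster application $z=(x\cdot H)*G^{(j)}_{\sigma,b}$ is \emph{not} sparse with frequencies confined to $[f_0-\Delta,f_0+\Delta]$: it carries leakage from other clusters and from the adversarial noise $g$. So the tail $z^{\ov I}$ is real, it is only $\epsilon$-small in $L^2$, and its frequencies are unbounded. Worse, even after restricting to $\alpha$'s where $|z^I(\alpha)|$ is large, nothing prevents the tail from being concentrated on exactly those $\alpha$'s, since $z^{\ov I}$ is not independent of $z^I$. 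The paper's key move, which your plan is missing entirely, is to resample $\alpha$ from the collected heavy set with probability proportional to $|z(\alpha)|^2+|z(\alpha+\beta)|^2$. Under this (approximate) importance-weighted distribution one has
\[
\E\Bigl[\frac{|z^{\ov I}(\alpha)|^2+|z^{\ov I}(\alpha+\beta)|^2}{|z(\alpha)|^2+|z(\alpha+\beta)|^2}\Bigr]
\;\le\; \frac{\int_0^T |z^{\ov I}|^2}{\int_0^T |z|^2} \;\lesssim\; \epsilon,
\]
i.e.\ the ratio of tail to total is small \emph{in expectation over the sampling rule}, which is exactly what lets one convert the additive bound from Claim~\ref{cla:close_samples} into the multiplicative $0.08(|z(\alpha)|+|z(\alpha+\beta)|)$ guarantee. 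Without weighting by $|z|^2$ this ratio can be large with constant probability. You would need to add this idea (and the two-stage sample-then-reweight scheme that realizes it with $(T\Delta)^3$ samples) to make the argument go through.
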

The following lemma shows that for any $(\epsilon,\Delta)$-one-cluster signal $z(t)$ around $f_0$, we could use the above procedure to find a frequency $\wt{f_0}$ approximating $f_0$ with high probability.
\begin{lemma}\label{lem:findfrequency}
For a sufficiently small constant $\epsilon>0$, any $f_0 \in [-F,F]$, and $\Delta>0$, given an $(\epsilon,\Delta)$-one-cluster signal $z(t)$ around $f_0$ , Procedure \textsc{FrequencyRecovery1Cluster} in Algorithm \ref{alg:locate1signal_locate1inner_frequencyrecovery1cluster} returns $\wt{f_0}$ with $|\widetilde{f}_0 -f_0| \lesssim \Delta \cdot \sqrt{\Delta T} $ with probability at least $1-2^{-\Omega(k)}$.
\end{lemma}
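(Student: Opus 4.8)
The plan is to use Lemma~\ref{lem:get_legal_1_sample} as a black box that, for any step size $\beta\le 2\wh\beta$ with $\wh\beta\eqsim\frac{1}{\Delta\sqrt{\Delta T}}$, delivers a pair $(\alpha,\alpha+\beta)$ on which $z$ behaves like a single exponential of frequency $f_0$, and to build a phase‑unwrapping loop on top of it. First I would convert the additive guarantee of Lemma~\ref{lem:get_legal_1_sample} into an angular one: from $|z(\alpha+\beta)-z(\alpha)e^{2\pi\i f_0\beta}|\le 0.08(|z(\alpha)|+|z(\alpha+\beta)|)$ and the triangle inequality, the ratio $|z(\alpha+\beta)|/|z(\alpha)|$ is bounded above and below by absolute constants, so $\bigl|\tfrac{z(\alpha+\beta)}{z(\alpha)}-e^{2\pi\i f_0\beta}\bigr|\le\rho$ for an absolute $\rho<1/5$. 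Hence $\theta(\alpha,\beta):=\tfrac{1}{2\pi}\arg\tfrac{z(\alpha+\beta)}{z(\alpha)}$ satisfies $\|\theta(\alpha,\beta)-f_0\beta\|_{\mathbb T}\le \arcsin(\rho)/(2\pi)<1/20$, where $\|\cdot\|_{\mathbb T}$ is distance on $\mathbb R/\mathbb Z$. Thus one legal sample with step $\beta$ measures $f_0\beta\bmod 1$ to within a small absolute constant; discarding samples where $|z(\alpha)|$ is atypically small (using \textsc{GetEmpirical1Energy}) keeps the constants under control.

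Next I would analyze the narrowing loop — one round of which is \textsc{Locate1Inner} and the iteration of which is \textsc{Locate1Signal}. Maintain an interval $[a_t,b_t]\ni f_0$ of length $L_t$ with $L_0=2F$. In round $t$ pick $\beta_t\asymp 1/L_t$, small enough that $\beta_t L_t\le 1/4$ (so the ``$\bmod\,1$'' ambiguity is resolvable within the current interval) and large enough to make progress, and — crucially — only while $\beta_t\le 2\wh\beta$; once the schedule would demand a larger step we stop. Draw $R$ independent legal samples with step $\beta_t$ via \textsc{GetLegal1Sample}; each is ``good'' (its $\theta$ within $1/20$ of $f_0\beta_t\bmod 1$) independently with probability $\ge 0.6$, so by a Chernoff bound (Lemma~\ref{lem:chernoff_bound}) more than half are good with probability $1-2^{-\Omega(R)}$. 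Since $f_0\beta_t$ lies in the \emph{known} real interval $[a_t\beta_t,b_t\beta_t]$ of length $\le 1/4<1$, each phase measurement lifts uniquely to an estimate of the real number $f_0\beta_t$; the median of the $R$ lifted estimates is within $1/20$ of $f_0\beta_t$ whenever more than half are good (using the $0.6>1/2$ slack), which localizes $f_0$ to an interval of length $L_{t+1}\le cL_t$ for an absolute constant $c<1$. Iterating, $L_t$ decreases geometrically from $2F$ down to $L_N\eqsim 1/\wh\beta\eqsim\Delta\sqrt{\Delta T}$ after $N=O\bigl(\log(F/(\Delta\sqrt{\Delta T}))\bigr)=O(\log(FT))$ rounds; outputting $\wt{f_0}$ as the midpoint of the final interval gives $|\wt{f_0}-f_0|\lesssim\Delta\sqrt{\Delta T}$.

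For the probability and sample bounds, take $R=\Theta(k)$ (more precisely $\Theta(k+\log\log(FT))$, still $\poly(k)\cdot$polylog): each of the $N=O(\log(FT))$ rounds fails with probability $2^{-\Omega(R)}$, so a union bound over rounds gives overall failure probability $2^{-\Omega(k)}$, and the sample count is $R$ times the $O((T\Delta)^3)$ cost of \textsc{GetLegal1Sample} per round, i.e. $\poly(k,\log(1/\delta))\log(FT)$, matching Theorem~\ref{thm:frequency_recovery_k_cluster}.

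The part I expect to be the main obstacle is the interplay between the step‑size cap $\beta\le 2\wh\beta$ and the geometry of phase unwrapping: the schedule $\{\beta_t\}$ must simultaneously keep $\beta_tL_t$ below a fixed constant so the wrap‑around is resolvable from the current interval, make the implied precision $\asymp 1/\beta_t$ a definite constant fraction of $L_t$ so the interval actually shrinks, and terminate exactly when $\beta_t$ reaches $\Theta(\wh\beta)$ — which is what pins the final resolution at $\Theta(\Delta\sqrt{\Delta T})$ and not anything smaller, since for $\beta\gg\wh\beta$ the cluster no longer behaves like a single frequency and Lemma~\ref{lem:get_legal_1_sample} breaks down. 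The remaining ingredients (Chernoff amplification, and the median being robust against the adversarial minority of illegal samples) are routine.
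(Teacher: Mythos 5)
Your proof is essentially correct as a self-contained argument, but it analyzes a different algorithm than the one actually listed in Algorithm~\ref{alg:locate1signal_locate1inner_frequencyrecovery1cluster}, and in three places it deviates substantively from the paper's route.

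The paper performs a $t$-ary search with $t\eqsim\log(FT)$: in each round the search interval shrinks by a factor of $\Theta(t)$, which forces $\beta_t\cdot L_t\gg 1$, so the phase $f_0\beta_t$ wraps around many times within the candidate interval and a unique lift of $\arg\!\bigl(z(\gamma+\beta)/z(\gamma)\bigr)$ does not exist. To handle this, the paper partitions the interval into $t$ regions, has each legal sample cast a \emph{vote} for the region it implicates, and draws $\beta$ uniformly at random from $[\wh\beta,2\wh\beta]$ so that (via Lemma~\ref{lem:wrapping} and the case analysis in Lemma~\ref{lem:angle_between_xgamma_and_xgammabeta_is_betaf}) any region far from $f_0$ collects a vote with probability only $O(s)$. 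After $R_{\loc}=O(\log t)$ votes the correct region wins with constant margin, and after $D_{\max}=\log_t(F/\Delta)$ rounds \textsc{Locate1Signal} succeeds with a constant probability; \textsc{FrequencyRecovery1Cluster} then repeats \textsc{Locate1Signal} $O(k)$ times and takes a median (Lemma~\ref{lem:merge_1_stage}) to boost the success probability to $1-2^{-\Omega(k)}$. You instead shrink by a fixed constant factor, keep $\beta_tL_t\le 1/4$ so the phase lifts uniquely into the known interval, take a plain median of $R=\Theta(k+\log\log FT)$ lifted estimates per round, and union-bound over the $O(\log FT)$ rounds. This is a cleaner argument and sidesteps the wrapping lemma and the region-voting bookkeeping entirely; your Chernoff-plus-median step and the computation showing a legal sample's phase error is at most $1/20$ (via $\rho\le 0.17$ and $\arcsin$) are both correct, and both approaches reach $O(k\log FT)$ legal samples. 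What the paper's approach buys is that it matches the procedure actually written in Algorithm~\ref{alg:locate1signal_locate1inner_frequencyrecovery1cluster}: that pseudocode does a $t$-ary search with region voting, randomized $\beta$, and global median-boosting, not a binary search with per-round medians. Since the lemma statement refers to that specific procedure, your argument, while mathematically sound, proves the guarantee for a modified algorithm rather than the one in the paper, and in particular never engages with Lemma~\ref{lem:wrapping}, the region-collision probability bound of Lemma~\ref{lem:angle_between_xgamma_and_xgammabeta_is_betaf}, or the final median-over-$O(k)$-runs structure of \textsc{FrequencyRecovery1Cluster}.
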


We provide a proof of Lemma \ref{lem:findfrequency} in Section \ref{sec:proof_one_frequency_rec}. We show $z(t)=(x^*(t)+g(t)) \cdot H(t)$ satisfy Properties \RN{1} and \RN{2} (Definition \ref{def:one_cluster}) when all frequencies in $\wh{x}^*$ are in a small range in Section \ref{sec:proof_properties}.
\begin{lemma}\label{lem:three_properties}
For any $f_0 \in [-F,F]$, $\Delta'>0$, and $x^*(t) = \sum_{j=1}^k v_j e^{2\pi\i f_t}$ with $|f_j -f_0|\leq \Delta'$ for all $j \in [k]$, let $x(t)=x^{*}(t)+g(t)$ be our observable signal whose noise $\|g\|_T^2 \le c \|x^*\|^2_T$ for a sufficiently small constant $c$ and $H(t)$ be the filter function defined in Section \ref{sec:hashfilter} with $|\supp(\wh{H})|=\Delta_h$. Then $z=H \cdot x$ is an $(O(\sqrt{c}),\Delta_h+\Delta')$-one-cluster signal around $f_0$.
\end{lemma}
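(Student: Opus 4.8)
The plan is to verify the two defining properties of an $(O(\sqrt{c}),\Delta_h+\Delta')$-one-cluster signal for $z = H\cdot x$, treating the two properties quite differently. For Property \RN{2} (time-domain concentration), the key is that $H$ decays polynomially outside $[-1/2,1/2]$ (after rescaling $[0,T]$ to $[-1/2,1/2]$), so the only way $z$ could fail to be concentrated on the sampling interval is if $x=x^*+g$ has large mass just outside it. Since $\supp(\wh H)$ has width $\Delta_h$ and $x^*$ is $k$-Fourier-sparse, $x^*\cdot H$ is still (essentially) Fourier-sparse with bandwidth controlled by $\Delta'+\Delta_h$, so I would invoke the growth bound of Lemma \ref{lem:x_dot_H_is_small_outside_T} together with Property \RN{5} of the filter $H$ (Lemma \ref{lem:property_of_filter_H}), which is exactly the statement that $\int |x^*\cdot H\cdot(1-\rect_1)|^2 < \delta\int|x^*\cdot\rect_1|^2$. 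Combined with $\|g\|_T^2\le c\|x^*\|_T^2$ and the fact (Property \RN{6}) that $\|x^*\cdot H\cdot\rect_1\|_2^2 \in [1-\epsilon,1]\cdot\|x^*\cdot\rect_1\|_2^2$, this gives that $z$ restricted to $[-1/2,1/2]$ retains a $1-O(\sqrt c)$ fraction of $\|z\|_2^2$. One has to be a little careful that the noise $g$ is arbitrary and not band-limited, but since we only ever compare $\|g\cdot H\|_2$ on $\R$ to $\|g\|_T$ via the decay of $H$, and $g$ is only assumed small on $[0,T]$, I would actually bound $g\cdot H$ using only the values of $g$ on $[0,T]$ — this is where the precise polynomial decay rate of $H$ enters.

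For Property \RN{1} (frequency-domain concentration), the plan is to pass to the Fourier side: $\wh z = \wh{x\cdot H} = \wh x * \wh H = (\wh{x^*}+\wh g)*\wh H$. Since $\wh{x^*}$ is supported on $\{f_1,\dots,f_k\}\subset[f_0-\Delta',f_0+\Delta']$ and $\wh H$ is supported on $[-\Delta_h/2,\Delta_h/2]$, the convolution $\wh{x^*}*\wh H$ is supported entirely in $[f_0-\Delta'-\Delta_h/2, f_0+\Delta'+\Delta_h/2]\subseteq[f_0-\Delta,f_0+\Delta]$ with $\Delta = \Delta'+\Delta_h$. So the $x^*$ part contributes nothing outside the cluster. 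The only mass of $\wh z$ outside $[f_0-\Delta,f_0+\Delta]$ comes from $\wh g * \wh H$, and by Parseval $\|\wh g*\wh H\|_2 = \|g\cdot H\|_2 \lesssim \|g\|_T\sqrt T \le \sqrt c\,\|x^*\|_T\sqrt T \lesssim \sqrt c\,\|z\|_2$, where the last step again uses Property \RN{6} to compare $\|z\|_2$ with $\|x^*\|_T\sqrt T$. Hence the energy of $\wh z$ outside the cluster is $O(c)\cdot\|\wh z\|_2^2$, giving Property \RN{1} with parameter $O(\sqrt c)$ after taking square roots in the right place.

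The main obstacle I anticipate is bookkeeping around the normalization and the fact that $z$ is defined on all of $\R$ while the hypotheses on $g$ live only on $[0,T]$: I need $\|z\|_2^2$ (the $L^2(\R)$ norm appearing in Definition \ref{def:one_cluster}) to be comparable to $\|x^*\|_T^2\cdot T$, and for that I must control both $\|x^*\cdot H\|_2^2$ (done via Properties \RN{5}, \RN{6} of $H$ plus the growth Lemma \ref{lem:x_dot_H_is_small_outside_T} and the max-bound Lemma \ref{lem:max_is_at_most_polyk_times_l2}) and $\|g\cdot H\|_2^2$ (done via the polynomial decay of $H$ outside the interval, integrating $(t)^{-\Omega(\ell)}$ which converges since $\ell=\Theta(k\log(k/\delta))$ is large). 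Once these two comparisons are in hand, both properties follow by combining the triangle inequality with $\|g\|_T^2\le c\|x^*\|_T^2$, and the $O(\sqrt c)$ loss is exactly the price of moving from an energy ratio of $O(c)$ to a statement of the form ``a $1-O(\sqrt c)$ fraction of the energy.'' I would also need to double-check that $\delta$ (the filter parameter) can be taken at most $c$ so that the $\delta$-terms from Properties \RN{5}–\RN{6} are absorbed into the $O(\sqrt c)$.
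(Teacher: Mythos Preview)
Your plan is correct and essentially matches the paper's proof. Two simplifications the paper makes that remove the obstacles you flag: it assumes without loss of generality that $g(t)=0$ for $t\notin[0,T]$ (valid since we only observe $x$ on $[0,T]$), so $\int_{\R}|g\cdot H|^2\le\int_0^T|g|^2$ follows immediately from $|H|\le 1$ with no tail integration needed; and Property~\RN{5} of $H$ already encapsulates the growth bound of Lemma~\ref{lem:x_dot_H_is_small_outside_T}, so you need not invoke it separately.
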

From all discussion above, we summarize the result of frequency recovery when $\wh{x}^*$ is in one cluster.
\define{thm:frequency_recovery_1_cluster}{Theorem}{
For any $f_0 \in [-F,F]$, $\Delta'>0$, and $x^*(t) = \sum_{j=1}^k v_j e^{2\pi\i f_j t}$ with $|f_j -f_0|\leq \Delta'$ for all $j \in [k]$, let $x(t)=x^{*}(t)+g(t)$ be our observable signal whose noise $\|g\|_T^2 \le c \|x^*\|^2_T$ for a sufficiently small constant $c$ and $H(t)$ be the filter function defined in Section \ref{sec:hashfilter} with $|\supp(\wh{H})|=\Delta_h$. 
Then Procedure $\textsc{FrequencyRecovery1Cluster}$ in Algorithm \ref{alg:locate1signal_locate1inner_frequencyrecovery1cluster} with $\Delta=\Delta'+\Delta_h$ takes $\poly(k,\log(1/\delta)) \cdot \log(FT)$ samples, runs in $\poly(k,\log(1/\delta)) \cdot \log^2(FT)$ time, returns a frequency $\widetilde{f_0}$ satisfying $| \widetilde{f}_0 -f_0 |\lesssim \Delta \sqrt{\Delta T}$ with probability at least $1-2^{-\Omega(k)}$. 
}
\state{thm:frequency_recovery_1_cluster}

\paragraph{$k$-cluster recovery.}
Given any $x^*(t)=\sum_{j=1}^k v_j e^{2\pi\i f_j t}$, we plan to convolve the filter function $G(t)$ on $x(t)\cdot H(t)$ and use Lemma \ref{lem:findfrequency} as a black box to find a list of frequencies that covers $\{f_1,\cdots,f_k\}$. 

We fix $\Delta=\polydelta$, $B = \Theta(k)$ and sample $\sigma$ uniformly at random from $[\frac{1}{B \Delta},\frac{2}{B \Delta}]$ for $k$-cluster recovery. We will cover all $f^* \in [-F,F]$ with the following property :
  \begin{equation}\label{eq:heavyfrequency}
    \int_{f^*-\Delta}^{f^*+\Delta} | \widehat{x\cdot H}(f) |^2 \mathrm{d} f \geq  T\N^2/k,
  \end{equation}
We consider one frequency $f^* \in [-F,F]$ satisfying \eqref{eq:heavyfrequency} and use $j=h_{\sigma,b}(f^*)$ to denote its index in $[B]$ after hashing $(\sigma,b)$. Recall that for $j \in [B]$, any $\sigma>0$ and any $b$, 
\[G^{(j)}_{\sigma,b}(t) = \frac{1}{\sigma} G(t/\sigma) e^{2\pi\i t(j/B-\sigma b)/\sigma} \text{ such that }
\widehat{G}^{(j)}_{\sigma,b}(f) = \sum_{i\in \Z}\widehat{G}(i+ \frac{j}{B} -\sigma f -\sigma b).\]
We set $\wh{z}=\wh{x \cdot H} \cdot \wh{G}^{(j)}_{\sigma,b}$ and $z=(x \cdot H)*G^{(j)}_{\sigma,b}$ for $f^*$ and $j=h_{\sigma,b}(f^*)$. In Section \ref{sec:proof_z_satisfies_two_properties}, we show that with high probability over the hashing $(\sigma,b)$, $(z,\wh{z})$ satisfies Property \RN{1} with $[f^*-\Delta,f^*+\Delta]$ and Property \RN{2} in Definition \ref{def:one_cluster} such that we could use Lemma~\ref{lem:findfrequency} on $z$ to recover $f^*$. 
\begin{lemma}\label{lem:z_satisfies_two_properties}
Let $f^*\in [-F,F]$ satisfy~\eqref{eq:heavyfrequency}. For a random hashing $(\sigma,b)$, let $j=h_{\sigma,b}(f^*)$ be the bucket that $f^*$ maps to under the hash such that $z=(x \cdot H)*G^{(j)}_{\sigma,b}$ and $\wh{z}=\wh{x \cdot H} \cdot \wh{G}^{(j)}_{\sigma,b}$. With probability at least $0.9$, $z(t)$ is an $(\epsilon,\Delta)$-one-cluster signal around $f^*$ .
\end{lemma}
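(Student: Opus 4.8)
The plan is to show that a random hash $(\sigma,b)$ \emph{isolates} the bin $j=h_{\sigma,b}(f^*)$ in the standard sparse-Fourier sense, and that isolation forces the two properties of Definition~\ref{def:one_cluster}. By Lemma~\ref{lem:hashtobins} and Definition~\ref{def:G_j_sigma_b} we have $\wh{z}=\wh{x\cdot H}\cdot\wh{G}^{(j)}_{\sigma,b}$, and by Lemma~\ref{lem:property_of_filter_G} the function $\wh{G}^{(j)}_{\sigma,b}$ is a bandpass: it lies in $[1-\delta_H/k,1]$ on a window of half-width $\Theta(1/(\sigma B))=\Theta(\Delta)$ around $f^*$ and around each $1/\sigma$-periodic translate $f^*+m/\sigma$, and in $[-\delta_H/k,\delta_H/k]$ everywhere else; here $\delta_H$ denotes the internal accuracy of the filters $H,G$, which we are free to take as small as $\delta$. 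I would split $\wh{x\cdot H}=\wh{x^{*}_{\mathrm{on}}\cdot H}+\wh{x^{*}_{\mathrm{off}}\cdot H}+\wh{g\cdot H}$, where $x^{*}_{\mathrm{on}}$ collects the frequencies of $x^{*}$ lying in the clusters of Definition~\ref{def:heavy_clusters} that meet the passband window at $f^*$ and $x^{*}_{\mathrm{off}}$ collects the remaining frequencies; note both are again $k$-Fourier-sparse. (A routine adjustment of the window constants, using $\Delta\gg k\Delta_h$ and $\Delta=\polydelta$, lets the passband contain $[f^*-\Delta,f^*+\Delta]$ with every cluster meeting $[f^*-\Delta,f^*+\Delta]$ lying fully inside it.)

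\emph{Isolation.} Apply Claim~\ref{cla:PS15_hash_claims} to $\sigma$ drawn uniformly from $[\tfrac1{B\Delta},\tfrac2{B\Delta}]$: a heavy cluster whose representative frequency is at distance in $[\Delta,\tfrac{(B-1)\Delta}{2})$ from $f^*$ never hashes to $j$, and one at greater distance does so with probability $O(1/B)$. Since there are at most $k$ heavy clusters and $B=\Theta(k)$ with a large enough constant, a union bound puts us, with probability $\ge 0.95$, in the event that no other heavy cluster collides with $f^*$, so $\wh{x^{*}_{\mathrm{off}}\cdot H}$ lies in the stopband of $\wh{G}^{(j)}_{\sigma,b}$ except possibly through light clusters. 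For the light clusters and the noise: there are at most $k$ light clusters, each of mass $<T\N^2/k$ by Definition~\ref{def:heavy_clusters}, so their total mass is $<T\N^2$, and $\|\wh{g\cdot H}\|_2^2\le T\|g\|_T^2\le T\N^2$; averaging over $(\sigma,b)$, a fixed frequency survives the bandpass with weight $\E[\,|\wh{G}^{(j)}_{\sigma,b}(f)|^2\,]=\Theta(1/B)$, so the expected surviving mass of light clusters plus noise is $O(T\N^2/B)=O(\epsilon T\N^2/k)$, hence $\le\epsilon T\N^2/k$ with probability $\ge 0.95$ by Markov. Let $\mathcal{E}$ be the intersection of these two events; then $\Pr[\mathcal{E}]\ge 0.9$, and the rest of the argument is carried out on $\mathcal{E}$.

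\emph{Property \RN{1}.} Since the passband at $f^*$ contains $[f^*-\Delta,f^*+\Delta]$, we get $\int_{f^*-\Delta}^{f^*+\Delta}|\wh{z}|^2\ge(1-\delta_H/k)^2\int_{f^*-\Delta}^{f^*+\Delta}|\wh{x\cdot H}|^2\ge(1-o(1))\,T\N^2/k$ by~\eqref{eq:heavyfrequency}. On the complement, the remaining passband translates carry no mass of $\wh{x^{*}_{\mathrm{off}}\cdot H}$ (no heavy collision) and at most $\epsilon T\N^2/k$ from light clusters and noise (event $\mathcal{E}$), while on the stopband the integrand is $\le(\delta_H/k)^2|\wh{x\cdot H}|^2$, whose integral is $\le(\delta_H/k)^2\|\wh{x\cdot H}\|_2^2\lesssim(\delta_H/k)^2\,T\|x^*\|_T^2$, and this is $\le\epsilon T\N^2/k$ once $\delta_H\le\sqrt{\epsilon\delta k}$ (using $\N^2\ge\delta\|x^*\|_T^2$; $\delta_H=\delta$ suffices). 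Thus $\int_{\R\setminus[f^*-\Delta,f^*+\Delta]}|\wh{z}|^2=O(\epsilon)\,T\N^2/k=O(\epsilon)\int_{f^*-\Delta}^{f^*+\Delta}|\wh{z}|^2$, which is Property~\RN{1} after rescaling $\epsilon$.

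\emph{Property \RN{2}.} Write $z=z_{\mathrm{on}}+z_{\mathrm{off}}+z_{\mathrm{noise}}$, the convolutions of $G^{(j)}_{\sigma,b}$ with $x^{*}_{\mathrm{on}}\cdot H$, $x^{*}_{\mathrm{off}}\cdot H$, $g\cdot H$ respectively. Since $\|\wh{G}^{(j)}_{\sigma,b}\|_\infty\le 1$, the isolation step already makes $\|z_{\mathrm{off}}\|_{L^2(\R)}^2$ and $\|z_{\mathrm{noise}}\|_{L^2(\R)}^2$ each $O(\epsilon T\N^2/k)$, so their mass outside $[0,T]$ is negligible. For $z_{\mathrm{on}}$: the portion of $x^{*}_{\mathrm{on}}\cdot H$ outside $[0,T]$ has $L^2$-mass $<\delta_H\,T\|x^{*}_{\mathrm{on}}\|_T^2$ by Property~\RN{5} of Lemma~\ref{lem:property_of_filter_H} (valid since $x^{*}_{\mathrm{on}}$ is $k$-Fourier-sparse), and convolving by $G^{(j)}_{\sigma,b}$ cannot increase it; the part of $x^{*}_{\mathrm{on}}\cdot H$ on $[0,T]$, convolved with $G^{(j)}_{\sigma,b}$ (which has time support of width $R=\sigma|\supp(G)|=\Theta(l/\Delta)$, $l=\Theta(\log(k/\delta))$, and $\|G^{(j)}_{\sigma,b}\|_2^2=\Theta(\Delta)$ by Parseval), has $L^2$-mass in each of the two length-$R$ fringes just outside $[0,T]$ at most $R\cdot\|(x^{*}_{\mathrm{on}}\cdot H)\,1_{[0,R]}\|_2^2\cdot\|G^{(j)}_{\sigma,b}\|_2^2\lesssim l\,R\,k^4\log^3 k\cdot\|x^{*}_{\mathrm{on}}\|_T^2$ by Lemma~\ref{lem:max_is_at_most_polyk_times_l2}. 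The decisive point is that $T\|x^{*}_{\mathrm{on}}\|_T^2\approx\|x^{*}_{\mathrm{on}}\cdot H\|_{L^2(\R)}^2$ is the $\wh{x^*\cdot H}$-energy of the clusters meeting $f^*$'s window, hence at most the good mass plus $O(T\N^2)$, i.e.\ $O(k)\cdot\int_{\R}|z|^2$ since $\int_{\R}|z|^2\ge\int_{f^*-\Delta}^{f^*+\Delta}|\wh{z}|^2\gtrsim T\N^2/k$; so the mass of $z$ outside $[0,T]$ is $O\!\big(\delta_H+l^2k^5\log^3 k/(\Delta T)\big)\int_{\R}|z|^2\le\epsilon\int_{\R}|z|^2$ as soon as the polynomial in $\Delta=\polydelta$ is large enough that $R=\Theta(lT/\poly(k,\log(1/\delta)))\le\epsilon T/\poly(k)$, which gives Property~\RN{2}. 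The main obstacle throughout is the bookkeeping: one must set $B=\Theta(k)$, $\delta_H$, and the polynomial hidden in $\Delta$ simultaneously so that \emph{every} error term is an $\epsilon$-fraction of $T\N^2/k$ with no $\poly(1/\delta)$ factor anywhere, and the observation that makes this possible is exactly that the time-domain leakage of $z$ is governed by the on-cluster energy $T\|x^{*}_{\mathrm{on}}\|_T^2$ (controlled by the good mass plus noise), not by the possibly far larger $\|x^*\|_T^2$, because the bandpass $\wh{G}^{(j)}_{\sigma,b}$ annihilates the off-cluster part of $x^*$ with high probability over the hash.
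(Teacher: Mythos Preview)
Your proof is correct. For Property~\RN{1} it essentially reproduces the paper's argument (Lemma~\ref{lem:often-well-isolated} together with Lemma~\ref{lem:noise_hashing} and the filter bounds). For Property~\RN{2}, however, you take a genuinely different route. The paper (Lemma~\ref{lem:full_proof_of_3_properties_true_for_z}) never decomposes $x^*$ into on/off pieces for the time-domain analysis: instead it arranges the parameter coupling $T(1-s_3)>\sigma B D$ (Appendix~\ref{sec:parameters_setting_for_filters}) so that the support of $G^{(j)}_{\sigma,b}$ fits inside the transition region of $H$; then for any $t\notin[0,T]$ every value $(x^*\cdot H)(t-s)$ entering the convolution already carries the factor $H\le 2^{-\Theta(\ell)}$, giving leakage $\lesssim 2^{-\Theta(\ell)}T\|x^*\|_T^2$ for the \emph{full} $x^*$, and since $\int|z|^2\gtrsim(\delta/k)\,T\|x^*\|_T^2$ with $\ell\gtrsim k\log(k/\delta)$, the exponential suppression wins with room to spare. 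Your approach replaces this by a Cauchy--Schwarz estimate on the fringe plus the $\max$-to-average bound of Lemma~\ref{lem:max_is_at_most_polyk_times_l2}, which only yields the polynomial suppression $\poly(k,l)/(\Delta T)$; that is precisely why you must invoke the extra ``decisive point'' bounding $T\|x^*_{\mathrm{on}}\|_T^2$ by $O(k)\int|z|^2$, a step the paper never needs. What you gain is freedom from the delicate filter-coupling constraint $T(1-s_3)>\sigma B D$; what you lose is that the time-concentration step now also conditions on the isolation event, and the polynomial hidden in $\Delta=\polydelta$ must absorb a few more factors of~$k$.
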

Combining Lemma~\ref{lem:z_satisfies_two_properties} and Lemma~\ref{lem:findfrequency}, we could recover any heavy frequency $f^*$ satisfying \eqref{eq:heavyfrequency} with probability at least $0.8$. Then we repeat this procedure to guarantee that we cover all heavy frequencies and finish the proof of the main frequency recovery Theorem \ref{thm:frequency_recovery_k_cluster} in Section \ref{sec:k_mountain_frequency_recovery}.


\subsection{Analysis of {\normalfont \texorpdfstring{\textsc{GetLegal1Sample}}~ } and {\normalfont \texorpdfstring{\textsc{GetEmpirical1Energy}} ~ } }\label{sec:proof_good_samples}
Let $I = [f_0-\Delta, f_0+\Delta]$ and $\overline{I} = (-\infty, +\infty) \setminus I$ in this proof. We define $\big(z^I(t),\wh{z}^I(f)\big)$ and $\big(z^{\overline{I}}(t),\wh{z}^{\overline{I}}(f)\big)$ as follows:
\begin{equation*}
\wh{z}^I(f) =
\begin{cases}
    \wh{z}(f)    & \quad \text{if } f\in I\\
    0  & \quad \text{if } f \in \overline{I}\\
\end{cases}, \quad \wh{z}^{\overline{I}}(f) =
\begin{cases}
    0    & \quad \text{if } f\in I\\
    \wh{z}(f)  & \quad \text{if } f \in \overline{I}\\
\end{cases}
\end{equation*} 

We consider $z^I(t)$ as the ``signal'' to recover $f_0$ and treat $z^{\ov{I}}(t)$ as the ``noise''. We first show some basic properties of $z^I(t)$.

\begin{claim}\label{cla:energy_concentration}
For $z^{\ov{I}}(t)$, we have $\int_{0}^{T} |z^{\overline{I}}(t) |^2 \mathrm{d} t \le \epsilon \int_{-\infty}^{+\infty} |z(t) |^2 \mathrm{d} t$.
For $z^I(t)$, we have
\[\int_{0}^{T} |z^I(t)|^2 \mathrm{d} t \ge (1 - 5 \sqrt{\epsilon}) \int_{-\infty}^{+\infty} |z(t) |^2 \mathrm{d} t \textit{ and }\int_{0}^{T} |z^I(t)|^2 \mathrm{d} t \ge (1-6 \sqrt{\epsilon}) \int_{-\infty}^{+\infty} |z^I(t) |^2 \mathrm{d} t.\]
\end{claim}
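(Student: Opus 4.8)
The plan is to reduce everything to the orthogonal Fourier-domain splitting $z = z^I + z^{\ov{I}}$ and transfer to the time domain with Plancherel. Since $\wh{z}^I$ and $\wh{z}^{\ov{I}}$ have disjoint supports, $z^I \perp z^{\ov{I}}$ as functions on $\R$, so $\int_{-\infty}^{+\infty}|z(t)|^2\mathrm{d}t = \int_{-\infty}^{+\infty}|z^I(t)|^2\mathrm{d}t + \int_{-\infty}^{+\infty}|z^{\ov{I}}(t)|^2\mathrm{d}t$, and by Plancherel $\int_{-\infty}^{+\infty}|z^{\ov{I}}(t)|^2\mathrm{d}t = \int_{\ov{I}}|\wh{z}(f)|^2\mathrm{d}f$, with the analogous identities for $z^I$ and $z$. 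These are the only identities I need.

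For the first inequality, just restrict the integral and apply Property~\RN{1} of Definition~\ref{def:one_cluster}: $\int_0^T |z^{\ov{I}}(t)|^2\mathrm{d}t \le \int_{-\infty}^{+\infty}|z^{\ov{I}}(t)|^2\mathrm{d}t = \int_{\ov{I}}|\wh{z}(f)|^2\mathrm{d}f \le \epsilon \int_{-\infty}^{+\infty}|\wh{z}(f)|^2\mathrm{d}f = \epsilon\int_{-\infty}^{+\infty}|z(t)|^2\mathrm{d}t$. For the second inequality I work with the $L^2[0,T]$ norm and the triangle inequality applied to $z^I = z - z^{\ov{I}}$:
\[
\Big(\int_0^T|z^I|^2\Big)^{1/2} \ \ge\ \Big(\int_0^T|z|^2\Big)^{1/2} - \Big(\int_0^T|z^{\ov{I}}|^2\Big)^{1/2}.
\]
Property~\RN{2} gives $\big(\int_0^T|z|^2\big)^{1/2} \ge \sqrt{1-\epsilon}\,\big(\int_{-\infty}^{+\infty}|z|^2\big)^{1/2} \ge (1-\epsilon)\big(\int_{-\infty}^{+\infty}|z|^2\big)^{1/2}$, and the first inequality gives $\big(\int_0^T|z^{\ov{I}}|^2\big)^{1/2} \le \sqrt{\epsilon}\,\big(\int_{-\infty}^{+\infty}|z|^2\big)^{1/2}$; hence $\big(\int_0^T|z^I|^2\big)^{1/2} \ge (1-\epsilon-\sqrt{\epsilon})\big(\int_{-\infty}^{+\infty}|z|^2\big)^{1/2}$. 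Squaring (valid since $\epsilon$ is a sufficiently small constant, so $1-\epsilon-\sqrt{\epsilon}>0$) and using $(1-\epsilon-\sqrt{\epsilon})^2 \ge 1-2\epsilon-2\sqrt{\epsilon} \ge 1-5\sqrt{\epsilon}$ yields the second claim. The third inequality is then immediate: $\int_{-\infty}^{+\infty}|z^I|^2 = \int_I|\wh{z}|^2 \le \int_{-\infty}^{+\infty}|\wh{z}|^2 = \int_{-\infty}^{+\infty}|z|^2$, so the second inequality already gives $\int_0^T|z^I|^2 \ge (1-5\sqrt{\epsilon})\int_{-\infty}^{+\infty}|z^I|^2 \ge (1-6\sqrt{\epsilon})\int_{-\infty}^{+\infty}|z^I|^2$.

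There is no substantive obstacle here; the proof is a few lines of Plancherel plus the triangle inequality. The only thing requiring a little care is the square-root bookkeeping when moving between $L^2$ norms and their squares, and pinning down that the absolute constants $5$ and $6$ work once $\epsilon$ is below a fixed threshold (any $\epsilon \le 1/4$ suffices for the positivity used in squaring).
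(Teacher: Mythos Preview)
Your proof is correct and follows essentially the same route as the paper: Plancherel plus Property~\RN{1} for the first bound, then a triangle-inequality argument in $L^2[0,T]$ combined with Property~\RN{2} for the second, and the orthogonal splitting $\int|z|^2=\int|z^I|^2+\int|z^{\ov I}|^2$ for the third. The only cosmetic difference is that the paper expands $|z^I+z^{\ov I}|^2$ and bounds the cross term with Cauchy--Schwarz, whereas you invoke the reverse triangle inequality on the $L^2[0,T]$ norm directly; your version is slightly cleaner and in fact yields $(1-5\sqrt{\epsilon})$ already for the third inequality.
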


\begin{proof}
From the definition and Property \RN{1} in Definition \ref{def:one_cluster}, we know
\begin{equation*}
z(t)=z^I(t) + z^{\ov{I}}(t) \quad \text{ and } \quad \int_{-\infty}^{+\infty} |\wh{z}^{\overline{I}}(f) |^2 \mathrm{d} f \le \epsilon \int_{-\infty}^{+\infty} |\wh{z}(f) |^2 \mathrm{d} f.
\end{equation*}
Notice that Property \RN{1}(in Definition \ref{def:one_cluster}) indicates that  \[\int_{0}^{T} |z^{\overline{I}}(t) |^2 \mathrm{d} t \le \int_{-\infty}^{+\infty} |z^{\overline{I}}(t) |^2 \mathrm{d} t=\int_{-\infty}^{+\infty} |\wh{z}^{\overline{I}}(f) |^2 \mathrm{d} f \le \epsilon \int_{-\infty}^{+\infty} |\wh{z}(f) |^2 \mathrm{d} f.\] 
On the other hand, from Property \RN{2}(in Definition \ref{def:one_cluster}), we know
\[(1-\epsilon) \int_{-\infty}^{+\infty} |z(t) |^2 \mathrm{d} t \le \int_{0}^{T} |z^I(t) + z^{\overline{I}}(t) |^2 \mathrm{d} t \le \int_{0}^{T} |z^I(t)|^2 \mathrm{d} t + 2 \int_{0}^{T}|z^I(t)| \cdot |z^{\overline{I}}(t) | \mathrm{d} t +\int_{0}^{T}|z^{\overline{I}}(t) |^2 \mathrm{d} t.\]
We have $\int_{0}^{T} |z^I(t)|^2 \mathrm{d} t \le 2 \int_{-\infty}^{+\infty} |z(t) |^2 \mathrm{d} t$ from the above inequality. From $\int_{0}^{T} |z^{\overline{I}}(t) |^2 \mathrm{d} t \le \epsilon \int_{-\infty}^{+\infty} |\wh{z}(f) |^2 \mathrm{d} f$, we bound 
\[\int_{0}^{T}|z^I(t)| \cdot |z^{\overline{I}}(t) | \mathrm{d} t \le \sqrt{2 \epsilon} \int_{-\infty}^{+\infty} |z(t) |^2 \mathrm{d} t\] by the Cauchy-Schwartz inequality and have 
\begin{equation}\label{eq:key_I_ovI}
 \int_{0}^{T} |z^I(t)|^2 \mathrm{d} t \ge (1-5 \sqrt{\epsilon}) \int_{-\infty}^{+\infty} |z(t) |^2 \mathrm{d} t.
 \end{equation}
Because $\int_{-\infty}^{+\infty} |z^{\overline{I}}(t) |^2 \mathrm{d} t \le \epsilon \int_{-\infty}^{+\infty} |z(t) |^2 \mathrm{d} t$, inequality \eqref{eq:key_I_ovI} also indicates that \[ \int_{0}^{T} |z^I(t)|^2 \mathrm{d} t \ge (1-6 \sqrt{\epsilon}) \int_{-\infty}^{+\infty} |z^I(t) |^2 \mathrm{d} t.\]
\end{proof}

One useful property of $z^I(t)$ is that its maximum can be bounded by its average on $[0,T]$.

\begin{claim}\label{cla:z_max}
$\forall t \in [0,T], |z^I(t)| \le 2 \sqrt{\Delta T} \cdot \|z^I\|_T$.
\end{claim}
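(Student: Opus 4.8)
\textbf{Proof proposal for Claim~\ref{cla:z_max}.}
The plan is to bound $|z^I(t)|$ pointwise by the Fourier-domain $L^1$ mass of $\wh{z}^I$, convert that to an $L^2$ mass using that $\wh{z}^I$ is supported on the length-$2\Delta$ interval $I=[f_0-\Delta,f_0+\Delta]$, pass to time domain via Parseval, and finally replace the full-line energy of $z^I$ with its energy on $[0,T]$ using Claim~\ref{cla:energy_concentration}.

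Concretely, first I would write $z^I(t)=\int_{I}\wh{z}^I(f)e^{2\pi\i ft}\,\mathrm{d}f$ and apply Cauchy--Schwarz over the interval $I$ of length $2\Delta$, giving
\[
|z^I(t)|\le\int_I|\wh{z}^I(f)|\,\mathrm{d}f\le\sqrt{2\Delta}\left(\int_I|\wh{z}^I(f)|^2\,\mathrm{d}f\right)^{1/2}
=\sqrt{2\Delta}\left(\int_{-\infty}^{+\infty}|z^I(t)|^2\,\mathrm{d}t\right)^{1/2},
\]
where the last equality is Parseval applied to $z^I$.

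Next I would invoke the second inequality of Claim~\ref{cla:energy_concentration}, namely $\int_0^T|z^I(t)|^2\,\mathrm{d}t\ge(1-6\sqrt{\epsilon})\int_{-\infty}^{+\infty}|z^I(t)|^2\,\mathrm{d}t$, to conclude
\[
\int_{-\infty}^{+\infty}|z^I(t)|^2\,\mathrm{d}t\le\frac{1}{1-6\sqrt{\epsilon}}\int_0^T|z^I(t)|^2\,\mathrm{d}t=\frac{T}{1-6\sqrt{\epsilon}}\,\|z^I\|_T^2.
\]
Combining the two displays yields $|z^I(t)|\le\sqrt{2\Delta T/(1-6\sqrt{\epsilon})}\,\|z^I\|_T$, and for $\epsilon$ a sufficiently small constant the factor $2/(1-6\sqrt{\epsilon})$ is at most $4$, giving $|z^I(t)|\le 2\sqrt{\Delta T}\,\|z^I\|_T$ as claimed.

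This argument is essentially routine; the only real content is that it leans on Claim~\ref{cla:energy_concentration} (which uses Properties~\RN{1} and~\RN{2} of the one-cluster signal) to justify that almost all of the $L^2$ mass of $z^I$ already lives on $[0,T]$. The one point to be careful about is tracking the $\epsilon$-slack so the stated constant $2$ is actually achieved; this is harmless as long as the lemma hypotheses guarantee $\epsilon$ is small enough, which they do.
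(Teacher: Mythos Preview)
Your proposal is correct and follows essentially the same route as the paper: bound $|z^I(t)|$ by the $L^1$ mass of $\wh{z}^I$ on $I$, apply Cauchy--Schwarz over the length-$2\Delta$ interval, use Parseval, and then invoke the energy concentration of $z^I$ on $[0,T]$ (the paper does this last step implicitly, you cite Claim~\ref{cla:energy_concentration} explicitly). The constant tracking you do is exactly what the paper leaves unspoken when it passes from $\sqrt{2\Delta}$ to $2\sqrt{\Delta}$.
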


\begin{proof}
From the definition $|z^I(t)|$, it is upper bounded by $\int_{f_0 - \Delta}^{f_0 + \Delta} |\wh{z^I}(f)| \mathrm{d} f$ for any $t \in [0,T]$. On the other hand,
\begin{align*}
\int_{f_0 - \Delta}^{f_0 + \Delta} |\wh{z}^I(f)| \mathrm{d} f & \le \sqrt{2 \Delta} (\int_{f_0 - \Delta}^{f_0 + \Delta} |\wh{z}^I(f)|^2 \mathrm{d} f)^{1/2}\\
& = \sqrt{2 \Delta} (\int_{ - \infty}^{+\infty} |z^I(t)|^2 \mathrm{d} t)^{1/2}\\
&\le 2 \sqrt{\Delta} (\int_{0}^{T} |z^I(t)|^2 \mathrm{d} t)^{1/2}\\
& = 2 \sqrt{\Delta T} \|z^I\|_T.
\end{align*}
\end{proof}

\begin{claim}\label{cla:close_samples}
Given $\wh{\beta}= \frac{C_{\beta}}{\Delta \cdot \sqrt{\Delta T}}$ with a sufficiently small constant $C_{\beta}$, for any two $\wh{\beta}$-close samples in $z^I(t)$, we have that
\[ \forall \alpha \in [0,T], ~\forall \beta \in [\wh{\beta},2\wh{\beta}], \quad |z^I(\alpha) e^{2\pi \i f_0 \beta}-z^{I}(\alpha+\beta)| \le 0.01 \cdot \|z^I\|_T.\]
\end{claim}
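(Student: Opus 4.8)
The plan is to do everything in the frequency domain. By definition $z^I(t)=\int_{f_0-\Delta}^{f_0+\Delta}\wh{z}(f)e^{2\pi\i f t}\,\mathrm{d}f$, so
\[
z^I(\alpha)e^{2\pi\i f_0\beta}-z^I(\alpha+\beta)=\int_{f_0-\Delta}^{f_0+\Delta}\wh{z}(f)\,e^{2\pi\i f\alpha}\bigl(e^{2\pi\i f_0\beta}-e^{2\pi\i f\beta}\bigr)\,\mathrm{d}f.
\]
First I would bound the oscillation factor pointwise: for $f\in I$ and $\beta\le 2\wh{\beta}$ we have $|e^{2\pi\i f_0\beta}-e^{2\pi\i f\beta}|=|e^{2\pi\i(f-f_0)\beta}-1|\le 2\pi|f-f_0|\beta\le 4\pi\Delta\wh{\beta}=\tfrac{4\pi C_\beta}{\sqrt{\Delta T}}$, using $|f-f_0|\le\Delta$ on $I$ and the definition of $\wh{\beta}$. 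Since $|e^{2\pi\i f\alpha}|=1$ this is uniform in $\alpha$, and the left-hand quantity is at most $\tfrac{4\pi C_\beta}{\sqrt{\Delta T}}\int_{f_0-\Delta}^{f_0+\Delta}|\wh{z}(f)|\,\mathrm{d}f$.

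Next I would control the $L^1$ mass of $\wh{z}$ over $I$ in terms of $\|z^I\|_T$. By Cauchy--Schwarz and Parseval, $\int_{f_0-\Delta}^{f_0+\Delta}|\wh{z}(f)|\,\mathrm{d}f\le\sqrt{2\Delta}\bigl(\int_{-\infty}^{+\infty}|\wh{z}^I(f)|^2\,\mathrm{d}f\bigr)^{1/2}=\sqrt{2\Delta}\bigl(\int_{-\infty}^{+\infty}|z^I(t)|^2\,\mathrm{d}t\bigr)^{1/2}$. Then Claim~\ref{cla:energy_concentration} gives $\int_{-\infty}^{+\infty}|z^I(t)|^2\,\mathrm{d}t\le\frac{1}{1-6\sqrt{\epsilon}}\int_0^T|z^I(t)|^2\,\mathrm{d}t\le 2T\|z^I\|_T^2$ once $\epsilon$ is a small enough constant (so that $1/(1-6\sqrt{\epsilon})\le 2$). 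Hence $\int_{f_0-\Delta}^{f_0+\Delta}|\wh{z}(f)|\,\mathrm{d}f\le 2\sqrt{\Delta T}\,\|z^I\|_T$, which is essentially the bound already recorded in Claim~\ref{cla:z_max}.

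Combining the two estimates yields $|z^I(\alpha)e^{2\pi\i f_0\beta}-z^I(\alpha+\beta)|\le 8\pi C_\beta\,\|z^I\|_T$ for all $\alpha\in[0,T]$ and all $\beta\in[\wh{\beta},2\wh{\beta}]$, and choosing $C_\beta$ small enough that $8\pi C_\beta\le 0.01$ completes the proof. There is no substantial obstacle here; the only points requiring a little care are the uniformity in $\alpha$ (handled by $|e^{2\pi\i f\alpha}|=1$) and invoking Claim~\ref{cla:energy_concentration} with a sufficiently small absolute constant $\epsilon$ so that the $\int_{\mathbb{R}}|z^I|^2$ term is comparable to $T\|z^I\|_T^2$.
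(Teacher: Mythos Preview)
Your proof is correct and follows essentially the same approach as the paper: write the difference as a frequency-domain integral, bound $|e^{2\pi\i(f-f_0)\beta}-1|$ pointwise by $2\pi\Delta\beta$, apply Cauchy--Schwarz on the interval $I$ of length $2\Delta$, and use Parseval together with the energy concentration of $z^I$ on $[0,T]$ (the paper cites its equation~\eqref{eq:key_I_ovI}, which is the content of Claim~\ref{cla:energy_concentration} that you invoke) to reach a bound of the form $(\text{constant})\cdot C_\beta\,\|z^I\|_T$.
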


\begin{proof}
From the definition of the Fourier transform, we have
\begin{align*}
  |z^I(a+\beta) - z^I(a) e^{2\pi \i f_0 \beta} | = &~ \left|  \int_{f_0-\Delta}^{f_0+\Delta} \widehat{z^I}(f)e^{2\pi \i (f a + f_0 \beta ) } (e^{2\pi \i (f-f_0) \beta } -1) \mathrm{d} f \right| \\
 \leq &~ 2 \cdot (2 \pi \Delta \beta) \cdot \int_{f_0-\Delta}^{f_0+\Delta} | \widehat{z^I}(f) |  \mathrm{d} f & \text{~by~Taylor~expansion} \\
 \leq  &~ 4 \pi \beta \Delta \cdot \sqrt{2\Delta} \left( \int_{f_0-\Delta}^{f_0+\Delta} | \widehat{z^I}(f) |^2  \mathrm{d} f \right)^{\frac{1}{2}} & \text{~by~H\"older inequality} \\
 \leq &~ 10 \pi \wh{\beta} \Delta \cdot \sqrt{2\Delta} \left( \int_{0}^{T} | {z^I}(t) |^2  \mathrm{d} t \right)^{\frac{1}{2}} & \text{~by~inequality~\eqref{eq:key_I_ovI}} \\
 \leq &~ 10^{-2} \|z^I\|_T.
\end{align*}
\end{proof}

We consider how to output an $\alpha$ such that $e^{2 \pi \i f_0 \beta} \approx z(\alpha+\beta)/z(\alpha)$ with high probability in the rest of this section.

If we can sample from $z^I(t)$, we already know $|z^I(\alpha) e^{2\pi \i f_0 \beta}-z^{I}(\alpha+\beta)| \le 0.01 \|z^I\|_T$ from Claim~\ref{cla:close_samples}. Then it is enough to find any $\alpha$ such that $|z^I(\alpha)| \ge 0.5 \|z^I\|_T$. From Claim~\ref{cla:z_max}, we can take $O(\sqrt{\Delta T})$ samples $\big(z^I(\alpha),z^I(\alpha+\beta)\big)$ where each $\alpha$ is uniformly sampled from $[0,T]$ such that with high probability, the sample $z^I(\alpha)$ with the largest norm $|z^I(\alpha)|$ satisfies $|z^I(\alpha)| \ge 0.5 \|z^I\|_T$. Then we have $e^{2 \pi \i f_0 \beta} \approx z^I(\alpha+\beta)/z^I(\alpha)$.

Next, we move to $z(t)=z^I(t)+z^{\ov{I}}(t)$ and plan to output $\alpha \in [0,T]$ with probability at least $0.5$ such that $|z^{\ov{I}}(\alpha)| \le 0.1 |z^I(\alpha)|$ and $|z^{\ov{I}}(\alpha+\beta)| \le 0.1 |z^I(\alpha+\beta)|$. Because the ``noise'' $z^{\ov{I}}(t)$ has $\|z^{\ov{I}}(t)\|^2_T \ge \epsilon \|z^I(t)\|^2_T$ for a constant $\epsilon$ and the bound $\sqrt{\Delta T}$ in Claim~\ref{cla:z_max} is a polynomial in $k$, the approach for $z^I(t)$ cannot guarantee that $z(\alpha+\beta)/z(\alpha) \approx e^{2 \pi \i f_0 \beta}$ with probability more than $1/2$. 

The key observation is as follows:
\begin{observation}
For a sufficiently small $\epsilon$ and $\|z^{\ov{I}}\|^2_T \le \epsilon \|z\|^2_T$, let $D_T$ be the weighted distribution on $[0,T]$ according to $|z(t)|^2$, i.e., $D_T(t)=\frac{|z(t)|^2}{T \|z\|^2_T}$. If we sample $\alpha \in [0,T]$ from the distribution $D_T$ instead of the uniform distribution on $[0,T]$, $|z^{\ov{I}}(\alpha)| \le 0.01 |z^I(\alpha)|$ with probability $0.9$.
\end{observation}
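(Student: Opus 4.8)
The plan is a one-line weighted Markov argument. Set the ``bad'' event $B = \{\alpha \in [0,T] : |z^{\ov{I}}(\alpha)| > 0.01\,|z^I(\alpha)|\}$; the goal is to show $D_T(B) \le 0.1$. The key pointwise observation is that on $B$ the ``noise'' part $z^{\ov{I}}$ already accounts for a constant fraction of $z$: since $z = z^I + z^{\ov{I}}$ and $|z^I(\alpha)| < 100\,|z^{\ov{I}}(\alpha)|$ for $\alpha \in B$, the triangle inequality gives $|z(\alpha)| \le |z^I(\alpha)| + |z^{\ov{I}}(\alpha)| < 101\,|z^{\ov{I}}(\alpha)|$ on all of $B$.

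Now I would just integrate against the density $D_T(t) = |z(t)|^2 / (T\|z\|_T^2)$. By definition of $D_T$,
\[
\mathsf{Pr}_{\alpha \sim D_T}[\alpha \in B] = \frac{1}{T\|z\|_T^2}\int_B |z(\alpha)|^2\,\mathrm{d}\alpha \le \frac{101^2}{T\|z\|_T^2}\int_B |z^{\ov{I}}(\alpha)|^2\,\mathrm{d}\alpha \le \frac{101^2\,\|z^{\ov{I}}\|_T^2}{\|z\|_T^2} \le 101^2\,\epsilon,
\]
where the last step is the hypothesis $\|z^{\ov{I}}\|_T^2 \le \epsilon\|z\|_T^2$. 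For $\epsilon$ a sufficiently small absolute constant (e.g.\ $\epsilon \le 10^{-5}$) the right-hand side is at most $0.1$, which is exactly the claim: with probability at least $0.9$ over $\alpha \sim D_T$ we have $|z^{\ov{I}}(\alpha)| \le 0.01\,|z^I(\alpha)|$. To match the normalization used upstream, I would note that Claim~\ref{cla:energy_concentration} together with Property~\RN{2} of Definition~\ref{def:one_cluster} yields $\|z^{\ov{I}}\|_T^2 \lesssim \epsilon\|z\|_T^2$ (the $\R$-integral bounds and the $\|\cdot\|_T$ bounds differ only by the harmless factor $(1-\epsilon)^{-1}$ from Property~\RN{2}), so the hypothesis holds provided the one-cluster parameter $\epsilon$—ultimately a small multiple of $\sqrt{c}$ through Lemma~\ref{lem:three_properties}—is taken small enough.

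There is essentially no real obstacle here: the proof is a single weighted Markov inequality, and the only care needed is in the constant bookkeeping so that the target thresholds $0.01$ and $0.9$ hold with slack, and in confirming that the $\|\cdot\|_T$-scaled version of the energy-concentration bound is exactly the hypothesis $\|z^{\ov{I}}\|_T^2 \le \epsilon\|z\|_T^2$ being invoked.
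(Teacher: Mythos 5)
Your argument is correct and is essentially the same as the paper's, which evaluates $\E_{\alpha\sim D_T}\bigl[|z^{\ov{I}}(\alpha)|^2/|z(\alpha)|^2\bigr] \le \epsilon$ and then applies Markov's inequality --- you simply unwind the Markov step into a direct bound on the $D_T$-measure of the bad event $B$ using the pointwise inequality $|z|<101\,|z^{\ov{I}}|$ on $B$. (A minor arithmetic nit: $101^2\cdot 10^{-5}\approx 0.102>0.1$, so your illustrative choice $\epsilon\le 10^{-5}$ should be slightly tightened, but this is immaterial since the statement only asks for $\epsilon$ sufficiently small.)
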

It follows from the fact that 
\begin{align*}
\E_{\alpha \sim D_T} \frac{|z^{\ov{I}}(\alpha)|^2}{|z(\alpha)|^2}  =\int_{0}^T \frac{|z^{\ov{I}}(\alpha)|^2}{|z(\alpha)|^2} \cdot \frac{|z(\alpha)|^2}{T \|z\|^2_T} \mathrm{d} \alpha= \frac{\int_{0}^T |z^{\ov{I}}(\alpha)|^2 \mathrm{d} \alpha}{T \|z\|^2_T}   \le \epsilon.
\end{align*}

In Procedure \textsc{GetLegal1Sample}, we collect $(\Delta T)^2$ samples (in expectation) $\big(z(\alpha),z(\alpha+\beta)\big)$ in $S_{\heavy}$ with $|z(\alpha)| \ge 0.49 \|z\|_T$ and resample one $\alpha$ from these samples according to their norm $|z(\alpha)|^2 + |z(\alpha+\beta)|^2$. We show its correctness as follows.

Because we do not know $0.5 \|z\|_T$, we use $z_{\emp}$ to approximate it.
\begin{claim}\label{cla:get_empirical_1_energy}
Procedure \textsc{GetEmpirical1Energy} in Algorithm \ref{alg:getempirical1enery_getlegal1sample} takes $O( (T \Delta)^2)$ samples to output $z_{\emp}$ such that $z_{\emp} \in [0.8\|z\|_T, 1.2\|z\|_T]$ with prob. 0.9.
\end{claim}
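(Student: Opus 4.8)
The procedure \textsc{GetEmpirical1Energy} will simply draw $m=\Theta((T\Delta)^2)$ points $\alpha_1,\dots,\alpha_m$ independently and uniformly at random from $[0,T]$ and output $z_{\emp}=\big(\frac1m\sum_{i=1}^m |z(\alpha_i)|^2\big)^{1/2}$. Since $\E\big[\frac1m\sum_i |z(\alpha_i)|^2\big]=\|z\|_T^2$ exactly, the entire content of the claim is a concentration statement. The obstacle is that $z$ has \emph{no} useful pointwise bound on $[0,T]$: the ``noise part'' $z^{\ov I}$ can have tall narrow spikes, so Lemma~\ref{lem:max_is_bounded_imply_sample_is_small} cannot be applied to $z$ directly. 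I will get around this by splitting $z=z^I+z^{\ov I}$ and handling the two pieces by different mechanisms.

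For the signal part $z^I$, Claim~\ref{cla:z_max} gives the pointwise bound $|z^I(t)|^2\le 4\Delta T\,\|z^I\|_T^2$ on $[0,T]$, and Claim~\ref{cla:energy_concentration} together with Property~\RN{2} of Definition~\ref{def:one_cluster} gives $\|z^{\ov I}\|_T\le O(\sqrt\epsilon)\|z\|_T$, hence $\|z^I\|_T=(1\pm O(\sqrt\epsilon))\|z\|_T$ by the triangle inequality. Therefore Lemma~\ref{lem:max_is_bounded_imply_sample_is_small}, applied to the deterministic function $z^I$ with $d=O(\Delta T)$ and a constant accuracy parameter, shows that $\frac1m\sum_i |z^I(\alpha_i)|^2=(1\pm O(\sqrt\epsilon))\|z\|_T^2$ with probability at least $0.95$ as soon as $m=\Omega(\Delta T)$; $m=\Theta((T\Delta)^2)$ is far more than enough. (Note we never need to evaluate $z^I$; this is purely an analysis of the random points.) For the noise part, since $\E\big[\frac1m\sum_i |z^{\ov I}(\alpha_i)|^2\big]=\|z^{\ov I}\|_T^2\le O(\epsilon)\|z\|_T^2$, Markov's inequality alone gives $\frac1m\sum_i |z^{\ov I}(\alpha_i)|^2\le O(\epsilon)\|z\|_T^2$ with probability at least $0.95$ — this is exactly the step where we dodge the lack of any tail bound for the spiky part. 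The cross term is then controlled by Cauchy--Schwarz on the same good event: $\frac2m\sum_i |z^I(\alpha_i)|\,|z^{\ov I}(\alpha_i)| \le 2\big(\frac1m\sum_i|z^I(\alpha_i)|^2\big)^{1/2}\big(\frac1m\sum_i|z^{\ov I}(\alpha_i)|^2\big)^{1/2}\le O(\sqrt\epsilon)\|z\|_T^2$.

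Combining these with $|z|^2=|z^I+z^{\ov I}|^2\le |z^I|^2+2|z^I||z^{\ov I}|+|z^{\ov I}|^2$ for the upper bound and $|z|^2\ge |z^I|^2-2|z^I||z^{\ov I}|$ for the lower bound, and taking a union bound over the two failure events, I get $\frac1m\sum_i |z(\alpha_i)|^2\in\big[(1-O(\sqrt\epsilon))\|z\|_T^2,\,(1+O(\sqrt\epsilon))\|z\|_T^2\big]$ with probability at least $0.9$. Taking square roots and using that $\epsilon$ is a sufficiently small constant gives $z_{\emp}\in[0.8\|z\|_T,1.2\|z\|_T]$, and the sample count is $m=O((T\Delta)^2)$ as claimed. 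The main obstacle throughout is precisely the absence of a pointwise bound on $z^{\ov I}$; the resolution is that although $z^{\ov I}$ may spike, its total $L^2$ mass on $[0,T]$ is an $O(\epsilon)$ fraction of $\|z\|_T^2$, so its contribution to the empirical average is bounded in expectation and hence, by Markov, with constant probability — no concentration is needed for that term.
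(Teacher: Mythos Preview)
Your proof is correct and follows essentially the same approach as the paper: split $z=z^I+z^{\ov I}$, use the pointwise bound from Claim~\ref{cla:z_max} together with a Chernoff-type inequality (Lemma~\ref{lem:max_is_bounded_imply_sample_is_small}) for the $z^I$ part, Markov's inequality for the $z^{\ov I}$ part, and Cauchy--Schwarz for the cross term. Your observation that $m=\Omega(\Delta T)$ already suffices for the concentration on $z^I$ is right; the paper simply takes $R_{\est}=(T\Delta)^2$ without optimizing.
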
 
\begin{proof}
We know $z_{\emp}^2= \E_{i \in [R_{\est}]}  [|z(\alpha_i)|^2]=  \E_{i \in [R_{\est}]}  [|z^I(\alpha_i)+ z^{\ov{I}}(\alpha_i)|^2 ]$.

Notice that $\E_{i \in [R_{\est}]}  [|z^I(\alpha_i)|^2]$ is in $[0.99 \|z^I\|_T, 1.01 \|z^I\|_T]$ with prob. $0.99$ from the Chernoff bound and Claim \ref{cla:z_max}.

At the same time, $\E_{\alpha_i}[|z^{\ov{I}}(\alpha_i)|^2]=\|z^{\ov{I}}\|^2_T$. With prob. 0.92, $\E_{i \in [R_{\est}]}[|z^{\ov{I}}(\alpha_i)|^2] \le 13 \|z^{\ov{I}}\|^2_T$. For a sufficiently small $\epsilon$ and $\|z^{\ov{I}}\|^2_T \le \epsilon \|z^I\|^2_T$, $\E_{i \in [R_{\est}]}[|z^{\ov{I}}(\alpha_i)|^2] \le 13\epsilon \|z^{I}\|^2_T$.

At last, we bound the cross terms of $|z^I(\alpha_i)+ z^{\ov{I}}(\alpha_i)|^2$ by the Cauchy-Schwartz inequality,
\begin{align*}
~ &\E_{i \in R_{\est}} [ |\overline{z}^I(\alpha_i) z^{\overline{I}}(\alpha_i)| + |z^I(\alpha_i) \overline{z}^{\overline{I}}(\alpha_i)|  ] \\
\le ~ &  2 \E_{i \in R_{\est}} [ |z^I(\alpha_i)| \cdot |z^{\overline{I}}(\alpha_i)|] \\
\le ~ &  2 \left(\E_{i \in [R_{\est}]}  [|z^I(\alpha_i)|^2] \cdot \E_{i \in [R_{\est}]}[|z^{\ov{I}}(\alpha_i)|^2]\right)^{1/2} \\
 \le ~ & 10 \sqrt{\epsilon}  \|z^I\|^2_T.
\end{align*}
 For a sufficiently small $\epsilon$, we have $\E_{i \in [R_{\est}]}[|z(\alpha_i)|^2]^{1/2}$ is in $[0.9 \|z^I\|_T, 1.1 \|z^I\|_T]$, which is also in $[0.8\|z\|_T, 1.2\|z\|_T]$ because of Property \RN{2}.
\end{proof}
We assume $z_{\emp} \in [0.8\|z\|_T, 1.2\|z\|_T]$ and focus on $U=\{t \in [0,T]\big{|}|z(t)| \ge 0.5 z_{\emp}\}$. Notice that 
\[\int_U |z(t)|^2 \mathrm{d} t= \int_{0}^T |z(t)|^2 \mathrm{d} t - \int_{[0,T]\setminus U} |z(t)|^2 \mathrm{d} t \ge (1-0.6^2)\int_{0}^T |z(t)|^2 \mathrm{d} t.\] 

Let $R_{\heavy}=|S_{\heavy}|$. From Claim \ref{cla:z_max} and $\epsilon$, $\E[R_{\heavy}] \ge R_{\repeats}/(T\Delta)$. So we assume $R_{\heavy} \ge 0.01 R_{\repeats}/(T\Delta) = 0.01 (T \Delta)^2$ in the rest of this section and think each $\alpha_i \in S_{\heavy}$ is a uniform sample from $U$ over the randomness on $S_{\heavy}$. 

\begin{claim}
With probability 0.95, $\sum_{i \in S_{\heavy}}(|z^{\ov{I}}(\alpha_i)|^2+|z^{\ov{I}}(\alpha_i+\beta)|^2) \le 10^{-4} \sum_{i \in S_{\heavy}}(|z(\alpha_i)|^2+|z(\alpha_i+\beta)|^2)$ for a sufficiently small $\epsilon$ and $\|z^{\ov{I}}\|^2_T \le \epsilon \|z\|^2_T$. 
\end{claim}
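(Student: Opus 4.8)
The plan is to reduce the whole estimate to the band-limited part $z^I$, which (unlike $z$ or $z^{\ov I}$) obeys the pointwise bound of Claim~\ref{cla:z_max} on $[0,T]$ and hence concentrates well under the samples $\alpha_i\in S_{\heavy}$, which over the randomness of $S_{\heavy}$ we treat as i.i.d.\ uniform on $U$. I will use three preliminary estimates: (i) by Definition~\ref{def:one_cluster} (Property~\RN{1}), Parseval, and Property~\RN{2}, $\int_{-\infty}^{+\infty}|z^{\ov I}(t)|^2\,\mathrm{d}t\le\epsilon\int_{-\infty}^{+\infty}|z(t)|^2\,\mathrm{d}t\le\frac{\epsilon T}{1-\epsilon}\|z\|_T^2$; (ii) combining $\int_U|z(t)|^2\,\mathrm{d}t\ge 0.64\,T\|z\|_T^2$ (established above) with $|z^I|^2\ge\frac12|z|^2-|z^{\ov I}|^2$ and Claim~\ref{cla:energy_concentration}, $\int_U|z^I(t)|^2\,\mathrm{d}t\ge 0.3\,T\|z\|_T^2$ for small $\epsilon$; (iii) squaring Claim~\ref{cla:z_max} and bounding $\|z^I\|_T^2\le\frac1T\int_{-\infty}^{+\infty}|\wh z|^2=\frac1T\int_{-\infty}^{+\infty}|z|^2\le\frac{1}{1-\epsilon}\|z\|_T^2$ gives $M_I:=\max_{t\in[0,T]}|z^I(t)|^2\le\frac{4\Delta T}{1-\epsilon}\|z\|_T^2$. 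I also use the standing assumptions $z_{\emp}\ge 0.8\|z\|_T$ and $R_{\heavy}\ge 0.01(T\Delta)^2$.

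First, for the numerator: since $\alpha_i$ ranges over $U$ and $\alpha_i+\beta$ over the translate $U+\beta$, each of $\E\big[\sum_{i\in S_{\heavy}}|z^{\ov I}(\alpha_i)|^2\big]$ and $\E\big[\sum_{i\in S_{\heavy}}|z^{\ov I}(\alpha_i+\beta)|^2\big]$ is at most $\frac{R_{\heavy}}{|U|}\int_{-\infty}^{+\infty}|z^{\ov I}|^2\le\frac{\epsilon}{1-\epsilon}\cdot\frac{T}{|U|}\,R_{\heavy}\|z\|_T^2$. By Markov's inequality with factor $60$ and a union bound, with probability at least $1-\frac{1}{30}$ both sums lie below $\frac{60\epsilon}{1-\epsilon}\cdot\frac{T}{|U|}\,R_{\heavy}\|z\|_T^2$, hence $\sum_{i\in S_{\heavy}}(|z^{\ov I}(\alpha_i)|^2+|z^{\ov I}(\alpha_i+\beta)|^2)<\frac{120\epsilon}{1-\epsilon}\cdot\frac{T}{|U|}\,R_{\heavy}\|z\|_T^2$.

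Next, for the denominator it suffices to lower-bound $\sum_{i\in S_{\heavy}}|z(\alpha_i)|^2$. Put $\mu_I:=\frac{1}{|U|}\int_U|z^I|^2\ge\frac{0.3\,T}{|U|}\|z\|_T^2$; each summand $|z^I(\alpha_i)|^2$ lies in $[0,M_I]$ and $R_{\heavy}\mu_I/M_I\gtrsim R_{\heavy}/(\Delta T)\gtrsim\Delta T$, so the Chernoff bound (Lemma~\ref{lem:chernoff_bound}) gives $\sum_{i\in S_{\heavy}}|z^I(\alpha_i)|^2\ge\frac12 R_{\heavy}\mu_I\ge 0.15\cdot\frac{T}{|U|}\,R_{\heavy}\|z\|_T^2$ with probability $1-2^{-\Omega(\Delta T)}$. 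On the intersection with the event of the previous paragraph, $\sum_{i\in S_{\heavy}}|z^{\ov I}(\alpha_i)|^2\le\frac{400\epsilon}{1-\epsilon}\sum_{i\in S_{\heavy}}|z^I(\alpha_i)|^2\le 10^{-3}\sum_{i\in S_{\heavy}}|z^I(\alpha_i)|^2$ for small $\epsilon$, so $|z|^2\ge\frac12|z^I|^2-|z^{\ov I}|^2$ yields $\sum_{i\in S_{\heavy}}|z(\alpha_i)|^2\ge 0.49\sum_{i\in S_{\heavy}}|z^I(\alpha_i)|^2\ge 0.07\cdot\frac{T}{|U|}\,R_{\heavy}\|z\|_T^2$.

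Finally, combining the two bounds, $\sum_{i\in S_{\heavy}}(|z(\alpha_i)|^2+|z(\alpha_i+\beta)|^2)\ge\sum_{i\in S_{\heavy}}|z(\alpha_i)|^2$ exceeds the numerator by a factor $\frac{0.07(1-\epsilon)}{120\,\epsilon}$; the common factor $T/|U|$ has cancelled, so once $\epsilon$ is a small enough constant this factor is above $10^4$, which is the claim, and the total failure probability is $\frac{1}{30}+2^{-\Omega(\Delta T)}<0.05$. I expect the only genuine obstacle to be the denominator lower bound: the deterministic bound $\sum_{i\in S_{\heavy}}|z(\alpha_i)|^2\ge R_{\heavy}(0.5z_{\emp})^2=\Theta(R_{\heavy}\|z\|_T^2)$ is a factor $\Theta(T/|U|)=\Theta(\Delta T)$ too weak to absorb the numerator, which forces a concentration argument; and since $|z(\alpha)|^2$ has no usable pointwise bound on $[0,T]$, one must detour through $z^I$---where Claim~\ref{cla:z_max} supplies $M_I=O(\Delta T)\|z\|_T^2$---and transfer back using $|z|^2\ge\frac12|z^I|^2-|z^{\ov I}|^2$.
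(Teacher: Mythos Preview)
Your argument is correct. The paper's own proof of this claim is much terser: it only shows that the \emph{expected} numerator is at most $O(\epsilon)$ times the \emph{expected} denominator (both carry the common factor $R_{\heavy}/|U|$, which cancels) and then declares the conclusion, without indicating how to pass from a bound on the ratio of expectations to a high-probability bound on the ratio itself. Your diagnosis of the missing piece is exactly right: Markov on the numerator combined with the deterministic lower bound $\sum_{i\in S_{\heavy}}|z(\alpha_i)|^2\ge R_{\heavy}(0.5\,z_{\emp})^2$ loses a factor $T/|U|$, which can be as large as $\Theta(\Delta T)$, so concentration on the denominator is genuinely needed. Your route---apply Chernoff to $\sum_i|z^I(\alpha_i)|^2$ (legitimate because Claim~\ref{cla:z_max} supplies the pointwise bound $M_I=O(\Delta T)\|z\|_T^2$), then transfer back via $|z|^2\ge\tfrac12|z^I|^2-|z^{\ov I}|^2$---is the natural completion, and your write-up should be read as filling in the paper's sketch rather than as a different approach.
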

\begin{proof}
At first, 
\begin{equation*}
\E_{S_{\heavy}} \left[ \sum_{i \in S_{\heavy}}(|z(\alpha_i)|^2+|z(\alpha_i+\beta)|^2) \right] \ge R_{\heavy} \cdot \E_{t \sim U}[|z(t)|^2] = R_{\heavy} \cdot \frac{\int_U |z(t)|^2 \mathrm{d} t}{|U|}.
\end{equation*} 

At the same time, 
\begin{equation*}
\E_{S_{\heavy}}\left[\sum_{i \in S_{\heavy}}[|z^{\ov{I}}(\alpha_i)|^2 + |z^{\ov{I}}(\alpha_i+\beta)|^2] \right]= R_{\heavy} \cdot \E_{t \sim U}[|z^{\ov{I}}(t)|^2 + |z^{\ov{I}}(t+\beta)|^2]  \le \frac{2 \int^0_T |z^{\ov{I}}(t)|^2 \mathrm{d} t}{|U|}.
\end{equation*}

From $\int_U |z(t)|^2 \mathrm{d} t \ge 0.64 \int_{0}^T |z(t)|^2 \mathrm{d} t$ and $\int^0_T |z^{\ov{I}}(t)|^2 \mathrm{d} t \le \epsilon \int_{0}^T |z(t)|^2 \mathrm{d} t$, we get the conclusion. 
\end{proof}
We assume all results in the above claims hold and prove that the sample from $S_{\heavy}$ is a good sample such that $z^{\ov{I}}(\alpha)$ is small.
\begin{claim}
If we sample $i \in S_{\heavy}$ according to the weight $|z(\alpha_i)|^2+|z(\alpha_i+\beta)|^2$, with prob. at least $0.9$, $|z^{\ov{I}}(\alpha_i)|+|z^{\ov{I}}(\alpha_i+\beta)| \le 0.05 (|z(\alpha_i)|+|z(\alpha_i + \beta)|)$.
\end{claim}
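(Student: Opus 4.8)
The plan is to combine the previous claim with a Markov bound over the resampling step, and then convert the resulting $\ell_2$-control into the $\ell_1$-form of the statement. First I would set, for each $i \in S_{\heavy}$, the resampling weight $w_i := |z(\alpha_i)|^2 + |z(\alpha_i+\beta)|^2$ and the ``bad'' quantity $m_i := |z^{\ov{I}}(\alpha_i)|^2 + |z^{\ov{I}}(\alpha_i+\beta)|^2$. Under the convention of this section (``we assume all results in the above claims hold''), the previous claim gives
\[
\sum_{i \in S_{\heavy}} m_i \;\le\; 10^{-4} \sum_{i \in S_{\heavy}} w_i .
\]

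Next, since $i$ is drawn from $S_{\heavy}$ with probability $w_i / \sum_j w_j$, Markov's inequality yields, for every $t>0$,
\[
\Pr_i\!\left[ m_i \ge t\, w_i \right] \;=\; \sum_{i\,:\, m_i \ge t w_i} \frac{w_i}{\sum_j w_j} \;\le\; \frac{1}{t}\cdot\frac{\sum_i m_i}{\sum_j w_j} \;\le\; \frac{10^{-4}}{t}.
\]
Choosing $t = 10^{-3}$ shows that with probability at least $0.9$ over the resampling we have $m_i < 10^{-3} w_i$.

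Finally I would translate this into the claimed bound. By the Cauchy--Schwarz inequality, $|z^{\ov{I}}(\alpha_i)| + |z^{\ov{I}}(\alpha_i+\beta)| \le \sqrt{2 m_i}$ while $|z(\alpha_i)| + |z(\alpha_i+\beta)| \ge \sqrt{w_i}$, so on the good event
\[
|z^{\ov{I}}(\alpha_i)| + |z^{\ov{I}}(\alpha_i+\beta)| \;\le\; \sqrt{2 m_i} \;<\; \sqrt{2\cdot 10^{-3}}\;\sqrt{w_i} \;\le\; 0.05\,\bigl(|z(\alpha_i)| + |z(\alpha_i+\beta)|\bigr),
\]
using $\sqrt{2\cdot 10^{-3}} < 0.05$. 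This is exactly the statement. The argument is entirely elementary; there is no real obstacle beyond bookkeeping of constants. The one point to be careful about is the union bound with the $0.95$-probability event of the preceding claim: either one invokes the ``assume all results above hold'' convention already in force here, or one slightly shrinks the $10^{-4}$ threshold in the preceding claim (and correspondingly the $10^{-3}$ here) so that the combined failure probability still leaves success probability at least $0.9$.
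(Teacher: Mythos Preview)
Your proposal is correct and follows essentially the same approach as the paper: compute the expectation of $m_i/w_i$ under the weighted resampling as $\sum_i m_i/\sum_j w_j \le 10^{-4}$, apply Markov at threshold $10^{-3}$ to get probability at least $0.9$, and then pass from the $\ell_2$ ratio to the $\ell_1$ statement. The paper leaves the last step implicit, while you spell out the $\sqrt{2m_i}$ and $\sqrt{w_i}$ inequalities; otherwise the arguments coincide.
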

\begin{proof}
Similar to the proof of the key observation, we compute the expectation of $\frac{|z^{\ov{I}}(\alpha_i)|^2+|z^{\ov{I}}(\alpha_i+\beta)|^2}{|z(\alpha_i)|^2+|z(\alpha_i+\beta)|^2}$ over the sampling in $S_{\heavy}$: 
\begin{eqnarray*}
&&\sum_{i \in S_{\heavy}}\frac{|z(\alpha_i)|^2+|z(\alpha_i+\beta)|^2}{\underset{j \in S_{\heavy}}{\sum}|z(\alpha_j)|^2+|z(\alpha_j+\beta)|^2} \cdot \frac{|z^{\ov{I}}(\alpha_i)|^2+|z^{\ov{I}}(\alpha_i+\beta)|^2}{|z(\alpha_i)|^2+|z(\alpha_i+\beta)|^2} \\
&=& \frac{ \underset{i \in S_{\heavy}}{\sum} |z^{\ov{I}}(\alpha_i)|^2+|z^{\ov{I}}(\alpha_i+\beta)|^2}{ \underset{i \in S_{\heavy}}{\sum}|z(\alpha_i)|^2+|z(\alpha_i+\beta)|^2} \\
&\le& 10^{-4}.
\end{eqnarray*}
By Markov's inequality, when we sample $i \in S_{\heavy}$ according to the weight $|z(\alpha_i)|^2+|z(\alpha_i+\beta)|^2$, $\frac{|z^{\ov{I}}(\alpha_i)|^2+|z^{\ov{I}}(\alpha_i+\beta)|^2}{|z(\alpha_i)|^2+|z(\alpha_i+\beta)|^2} \le 10^{-3}$ with probability $0.9$. We have that with prob. at least $0.9$, $|z^{\ov{I}}(\alpha_i)|+|z^{\ov{I}}(\alpha_i+\beta)| \le 0.05 (|z(\alpha_i)|+|z(\alpha_i + \beta)|)$.
\end{proof}

We assume all above claims hold and finish the proof by setting $\alpha=\alpha_i$. From Claim \ref{cla:close_samples}, we know that 
\[|z^I(\alpha) e^{2\pi \i f_0 \beta}-z^{I}(\alpha+\beta)| \le 0.01 \cdot \E_{t\in [0,T]}[|z^I(\alpha)|^2]^{1/2} \le 0.03 |z^I(\alpha)|.\]

Now we add back the noise $z^{\ov{I}}(\alpha)$ and $z^{\ov{I}}(\alpha+\beta)$ to get \[|z(\alpha)e^{2\pi \i f_0 \beta}-z(\alpha+\beta)| \le |z^I(\alpha)e^{2\pi \i f_0 \beta}-z^I(\alpha+\beta)| + |z^{\ov{I}}(\alpha)| + |z^{\ov{I}}(\alpha+\beta)| \le 0.08 (|z(\alpha)| + |z(\alpha+\beta)|).\]

\subsection{A cluster of frequencies, times $H$, is a one-cluster signal per Definition~\ref{def:one_cluster}}\label{sec:proof_properties}
The goal of this section is to prove Lemma \ref{lem:three_properties}.
Without loss of generality, we assume $g(t)=0$ for any $t \notin [0,T]$ and notice that $\supp(\wh{H} * \wh{x}^*) \subseteq f_0 + [-\Delta, \Delta]$ for $\Delta=\Delta'+\Delta_h$ from the definition of $\wh{H}$. From the Property \RN{6} (presented in Lemma \ref{lem:property_of_filter_H}) of $(H,\wh{H})$, 
\[\int_0^T | x^*(t) |^2 \mathrm{d} t = (1\pm c) \int_{-\infty}^{+\infty} |H(t) \cdot x^*(t) |^2 \mathrm{d} t.\]

From the first two properties of $(H,\wh{H})$, we bound the energy of $g \cdot H$: 
\[ \int_{-\infty}^{+\infty} | H(t) \cdot g(t) |^2 \mathrm{d} t \le (1+c) \int_{0}^{T} |g(t) |^2 \mathrm{d} t.\]
Let $z(t)=(x^*(t) + g(t))H(t)$. We use the triangle inequality on the above two inequalities: 
\begin{align*}
 & \int_{0}^{T} |z(t) |^2 \mathrm{d} t \\
  \ge &~ \int_{0}^{T} | H(t) \cdot x^*(t) |^2 \mathrm{d} t - \int_{0}^{T} | H(t) \cdot g(t) |^2 \mathrm{d} t - 2 \int_{0}^{T} | H(t) \cdot x^*(t) | \cdot | H(t) \cdot g(t) | \mathrm{d} t\\
  \ge &~ (1-c) \int_0^T | x^*(t) |^2 \mathrm{d} t - (1+c) \int_{0}^{T} |g(t) |^2 \mathrm{d} t - 2 \sqrt{(1+c)^2 \int_{0}^{T} |g(t) |^2 \mathrm{d} t \int_{0}^{T} |x^*(t) |^2 \mathrm{d} t\cdot }\\
  \ge &~ \left(1 - 5 \sqrt{c} \right) \int_0^T | x^*(t) |^2 \mathrm{d} t,
\end{align*}
where we use the Cauchy-Schwarz inequality and $\int_{0}^{T} |g(t) |^2 \mathrm{d} t  \le c \int_0^T | x^*(t) |^2 \mathrm{d} t$ in the last step. Similarly, 
\begin{align*}
 &~ \int_{-\infty}^{+\infty} |z(t) |^2 \mathrm{d} t \\
\le  &~ (1+c) \int_0^T | x^*(t) |^2 \mathrm{d} t + (1+c) \int_0^T | g(t) |^2 \mathrm{d} t + 2 \sqrt{(1+c)^2 \int_0^T | x^*(t) |^2 \mathrm{d} t \int_0^T | g(t) |^2 \mathrm{d} t}\\
 \le  &~ (1 + 5\sqrt{c}) \int_0^T | x^*(t) |^2 \mathrm{d} t.
\end{align*}
Hence we obtain Property \RN{2}(in Definition \ref{def:one_cluster}) when $c$ is sufficiently small.

Then we observe that 
\begin{align*}
 & ~\int_{f_0-\Delta_h}^{f_0+\Delta_h} | \widehat{z}(f) |^2 \mathrm{d} f \\
 \ge &~ \int_{f_0-\Delta_h}^{f_0+\Delta_h} | \wh{H\cdot(x^*+g)} |^2 \mathrm{d} f \\
 \ge &~ \int_{f_0-\Delta_h}^{f_0+\Delta_h} |\wh{H\cdot x^*}|^2- | \wh{H \cdot g} |^2 - 2|\wh{H\cdot x^*}|\cdot | \wh{H \cdot g} |\mathrm{d} f \\ 
 \ge &~ \int_{f_0-\Delta_h}^{f_0+\Delta_h} |\wh{H\cdot x^*}|^2 \mathrm{d} f - \int_{-\infty}^{+\infty} | \wh{H \cdot g} |^2 \mathrm{d} f - 2 \sqrt{\int_{f_0-\Delta_h}^{f_0+\Delta_h} |\wh{H\cdot x^*}|^2 \mathrm{d} f \int_{-\infty}^{+\infty} | \wh{H \cdot g} |^2 \mathrm{d} f}\\
 = &~ \int_{-\infty}^{+\infty} |H\cdot x^*|^2 \mathrm{d} t - \int_{-\infty}^{+\infty} | H \cdot g |^2 \mathrm{d} t - 2 \sqrt{\int_{f_0-\Delta_h}^{f_0+\Delta_h} |H\cdot x^*|^2 \mathrm{d} t \int_{-\infty}^{+\infty} | H \cdot g |^2 \mathrm{d} t}\\
 \ge &~ \frac{(1-c) - c(1+c) - 3 \sqrt{c}}{1+5\sqrt{c}} \int_{-\infty}^{+\infty} |z(t) |^2 \mathrm{d} t.
 \end{align*}
Thus we have Property \RN{1}(in Definition \ref{def:one_cluster}) for $z$.



\subsection{Frequency recovery of one-cluster signals}\label{sec:proof_one_frequency_rec}
The goal of this section is prove Theorem \ref{thm:frequency_recovery_1_cluster}. We first show the correctness of Procedure \textsc{Locate1Inner}. Second, we analyze the Procedure \textsc{Locate1Signal}. At end, we rerun Procedure \textsc{Locate1Signal} and use median analysis to boost the constant success probability.\footnote{The proofs in this section are identical to \cite{HIKP12} and \cite{PS15}.}

\begin{lemma}\label{lem:angle_between_xgamma_and_xgammabeta_is_betaf}
  Let $f_0 \in \mathrm{region}(q')$. 
  Let $\beta$ is sampled from $[\frac{st}{4\Delta}, \frac{st}{2\Delta l}]$ and let $\gamma$ denote the output of Procedure \textsc{GetLegal1Sample} in Algorithm \ref{alg:locate1signal_locate1inner_frequencyrecovery1cluster}. Then using the pair of samples $z(\gamma +\beta)$ and $z(\gamma)$, we have
 
 \RN{1}. for the $q'$ with probability at least $1-s$, $v_{q'}$ will increase by one. 
 
 \RN{2}. for any $q$ such that $|q-q'|> 3$, with probability at least $1-15s$, $v_q$ will not increase.

\end{lemma}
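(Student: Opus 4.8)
The plan is to reduce both parts to a single clean statement about the accuracy of the phase estimate obtained from the pair $\big(z(\gamma),z(\gamma+\beta)\big)$, and then to run the region–voting analysis of \textsc{LocateInner} in~\cite{HIKP12} (equivalently \textsc{Locate1Inner} in~\cite{PS15}) essentially verbatim. First I would apply Lemma~\ref{lem:get_legal_1_sample}: $\beta$ lies in the interval it is sampled from, which by construction is contained in $(0,2\wh\beta\,]$, so the call to \textsc{GetLegal1Sample} is legal and, for the $(\epsilon,\Delta)$-one-cluster signal $z$ around $f_0$, its output $\gamma$ satisfies $|z(\gamma+\beta)-z(\gamma)e^{2\pi\i f_0\beta}|\le 0.08\big(|z(\gamma)|+|z(\gamma+\beta)|\big)$ with probability at least $0.6$; by the standard amplification used inside \textsc{Locate1Inner} (run $O(\log(1/s))$ independent copies and take the circular median of the resulting phase estimates, as in~\cite{HIKP12}) we may raise this probability to $1-s$. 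Call the corresponding good event $\mathcal E$. On $\mathcal E$, set $A=z(\gamma)e^{2\pi\i f_0\beta}$ and $B=z(\gamma+\beta)$, so $|A|=|z(\gamma)|$; the inequality $|B-A|\le 0.08(|A|+|B|)$ forces $|A|$ and $|B|$ to be within a factor $1.18$, hence $|B-A|\le 0.16\max(|A|,|B|)$, and elementary planar geometry (the origin sees the disk of radius $0.16|A|$ about $A$ under angular radius $\arcsin(0.16)<0.2$) shows the angle between $A$ and $B$ is below $0.2$. Equivalently, the measured phase $\theta:=\arg z(\gamma+\beta)-\arg z(\gamma)\pmod{2\pi}$ is within distance $0.2$ of $2\pi f_0\beta$ on the circle $\mathbb R/2\pi\mathbb Z$.

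Now I would feed this into the voting rule of \textsc{Locate1Inner}, which increments $v_q$ precisely when $\theta$ lies, on the circle, in the image $2\pi\beta\cdot\mathrm{region}(q)\pmod{2\pi}$ enlarged by a fixed slack that is at least $0.2$. For \textbf{Part I}: since $f_0\in\mathrm{region}(q')$, the point $2\pi f_0\beta$ lies in $2\pi\beta\cdot\mathrm{region}(q')$, so on $\mathcal E$ the point $\theta$ lies in its enlargement and $v_{q'}$ increases by one; this happens with probability at least $1-s$. For \textbf{Part II}: fix $q$ with $|q-q'|>3$. As in~\cite{HIKP12}, the region width and the interval from which $\beta$ is drawn are calibrated so that (i) $2\pi\beta$ times the span of three consecutive regions exceeds the slack plus $0.2$, so that as long as the relevant arcs do not alias modulo $2\pi$ the enlarged image of $\mathrm{region}(q)$ cannot contain $\theta$, and (ii) over the random choice of $\beta$ in its interval, the probability that the enlarged image of $\mathrm{region}(q)$, summed over all of its aliasing copies modulo $2\pi$, wraps around to contain $\theta$ is $O(s)$. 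Adding the at most $s$ probability that $\mathcal E$ fails and carrying the same constants as in~\cite{HIKP12,PS15} gives $\Pr[v_q\text{ increases}]\le 15s$.

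The main obstacle is exactly the calibration underlying (i) and (ii) in Part II: one must pick the number of regions and the $\beta$-interval so that the $0.2$ radians of phase noise stays comfortably below the phase separation of three regions while keeping the aliasing probability of far regions down to $O(s)$. This is the same parameter balancing that appears in the location step of~\cite{HIKP12}, and once the first paragraph has reduced our measurement to the clean guarantee ``$\theta$ is within $0.2$ of $2\pi f_0\beta$ with probability $1-s$'', that analysis transfers without change; the remainder is bookkeeping of constants.
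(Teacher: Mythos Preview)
Your reduction to a phase-accuracy statement and the subsequent voting analysis is exactly the paper's route, and your geometric bound that the measured phase lies within $0.2$ of $2\pi f_0\beta$ on the good event is the same computation the paper does (written there as $\|o-2\pi\beta\theta\|_\bigcirc<s\pi/2$ with $\theta=f_0$). The one structural discrepancy is the amplification you insert after invoking Lemma~\ref{lem:get_legal_1_sample}. The present lemma is about a \emph{single} vote: one draw of $\beta$ and one call to \textsc{GetLegal1Sample}. The paper does not boost the $0.6$ to $1-s$ here; rather, $s$ and $p=\Theta(s)$ are fixed constants baked into the algorithm --- $s$ simultaneously sets the range for $\beta$ and the phase-slack $s\pi$ in the voting rule --- and the constants are arranged so that the good event from Lemma~\ref{lem:get_legal_1_sample} already yields phase error below $s\pi/2$, with the residual constant failure probability simply carried forward (the paper is a bit loose about whether this constant is literally $s$ or merely $\Theta(s)$). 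The amplification you have in mind, many independent trials plus majority, is precisely what Lemma~\ref{lem:locate_1_signal} does across the $R_{\loc}$ votes; it does not belong inside the present lemma. So drop the median step and keep the constant failure probability.

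For Part~\RN{2} you defer to~\cite{HIKP12,PS15}. The paper carries this out explicitly, and since the constant $15$ in the statement comes from exactly this calculation you should too. Writing $\theta=f_0$ and $\theta_q$ for the center of $\mathrm{region}(q)$, first $|q-q'|>3$ together with $\beta\ge st/(4\Delta l)$ gives $2\pi\beta|\theta-\theta_q|\ge 7s\pi/4>3s\pi/2$. Then split: if $|\theta-\theta_q|\le \Delta l/(st)$, the upper bound $\beta\le st/(2\Delta l)$ gives $2\pi\beta|\theta-\theta_q|\le\pi$, so there is no wraparound and $v_q$ deterministically does not receive a vote; if $|\theta-\theta_q|>\Delta l/(st)$, apply the wrapping Lemma~\ref{lem:wrapping} (with $\wt T=2\pi$, $\wt\epsilon=\tfrac{3s}{4}\cdot 2\pi$, and the random variable $2\pi\beta|\theta-\theta_q|$ uniform on $[A,2A]$ with $A=2\pi\wh\beta|\theta-\theta_q|$) to bound the aliasing probability by $\tfrac{3s}{2}+\tfrac{3s}{\wh\beta|\theta-\theta_q|}<\tfrac{3s}{2}+12s<15s$.
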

\begin{proof}
We replace $f_0$ by $\theta$ in the rest of the proof.
By Lemma \ref{lem:get_legal_1_sample}, we have that for any $\wh{\beta} \leq \beta \leq 2\wh{\beta}$,  Procedure \textsc{GetLegal1Sample} outputs a $\gamma \in [0,T]$ satisfying 
\begin{equation*}
|z(\gamma + \beta) - z(\gamma)e^{2 \pi \i f_0 \beta}| \le 0.1 (|z(\gamma)|+|z(\gamma+\beta)|)
\end{equation*}
with probability at least 0.6. 

 Furthermore, there exists such some constant $g \in (0,1)$ such that with probability $1-g$,
\begin{equation*}
\| \phi( z(\gamma+\beta) ) - ( \phi(z(\gamma)) - 2\pi\beta \theta ) \|_{\bigcirc} \lesssim \sin^{-1} ( \frac{1}{ g} ),
\end{equation*}
where $\| x-y\|_{\bigcirc} = \underset{z\in \mathbb{Z} }{\min}| x-y + 2\pi z|$ denote the ``circular distance'' between $x$ and $y$. 
We can set $s = \Theta(g^{-1})$. 
There exists some constant $p=\Theta(s)$, with probability at least
$1 - p$,
\begin{equation*}
\| o - 2\pi \beta \theta\|_{\bigcirc} < s\pi/2
\end{equation*}
where $ o : =\phi(z(\gamma+\beta)/z(\gamma))$. The above equation shows that $o$ is a good estimate for $2\pi \beta \theta$ with good probability. We will now show that this means the true region $Q_{q'}$ gets a vote with large probability.

For each $q'$ with $\theta \in [l-\frac{\Delta l}{2} + \frac{q'-1}{t} \Delta l, l - \frac{\Delta l}{2} + \frac{q'}{t}\Delta l] \subset [-F,F]$, we have that $\theta_{q'} = l- \frac{\Delta l}{2} + \frac{q'-0.5}{t} \Delta l$ satisfies that
\begin{equation*}
\theta - \theta_{q'} \leq \frac{\Delta l}{2t}.
\end{equation*}
Note that we sample $\beta$ uniformly at random from $[\widehat{\beta}, 2\widehat{\beta}]$, then $ 2\widehat{\beta} = \frac{st}{2\Delta l} \leq \frac{cT}{10 A^\frac{3}{2}}$ (Note that $A$ is some constant $ > 1$), which implies that $2\pi \beta \frac{\Delta l}{2t} \leq \frac{s\pi}{2}$. Thus, we can show the observation $o$ is close to the true region in the following sense,
\begin{eqnarray*}
&&\| o - 2\pi \beta \theta_{q'} \|_{\bigcirc} \\
&\leq& \| o - 2\pi \beta \theta \|_{\bigcirc} + \| 2\pi \beta \theta -  2\pi \beta \theta_{q'} \|_{\bigcirc} \text{~by~triangle~inequality} \\
&\leq & \frac{s\pi}{2} + 2\pi \| \beta \theta - \beta \theta_{q'}\|_{\bigcirc} \\
& \leq & s \pi.
\end{eqnarray*}
Thus, $v_{q'}$ will increase in each round with probability at least $1-s$.

On the other side, consider $q$ with $| q -q'| > 3 $. Then $|\theta - \theta_q| \geq \frac{7\Delta l}{2t}$, and (assuming $\beta \geq \frac{st}{4 \Delta l}$) we have 
\begin{equation*}
2 \pi \beta |\theta - \theta_q| \geq 2\pi \frac{st}{4\Delta l} | \theta - \theta_q | = \frac{s\pi t}{2\Delta l} |\theta - \theta_q |\geq \frac{7s\pi}{4} > \frac{3s\pi}{2}.
\end{equation*}
There are two cases: $| \theta - \theta_q| \leq \frac{\Delta l}{st}$ and $|\theta - \theta_q| > \frac{\Delta l}{st}$.

First, if $|\theta - \theta_q| \leq \frac{\Delta l}{st}$. In this case, from the definition of $\beta$ it follows that
\begin{equation*}
2\pi \beta |\theta - \theta_q | \leq \frac{s \pi t}{\Delta l} |\theta - \theta_q| \leq \pi
\end{equation*}
Combining the above equations implies that
\begin{equation*}
\mathsf{Pr} \bigl[ 2\pi \beta (\theta - \theta_q) \pmod {2\pi } \in  [ -\frac{3s}{4} 2\pi, \frac{3s}{4} 2\pi ] \bigr] =0
\end{equation*}
Second, if $|\theta - \theta_q | > \frac{\Delta l}{st}$. We show this claim is true :
$\mathsf{Pr}[ 2\pi\beta (\theta - \theta_q ) \pmod {2\pi} \in [-\frac{3s}{4} 2\pi, \frac{3s}{4} 2\pi ] ] \lesssim s$. To prove it, we apply Lemma \ref{lem:wrapping}  by setting $\widetilde{T} = 2\pi$, $\widetilde{\sigma} = 2\pi \beta$, $\widetilde{\delta} = 0$, $\epsilon = \frac{3s}{4}2\pi$, $A = 2\pi \widehat{\beta}$, $\Delta f  = | \theta  - \theta_q|$. By upper bound of Lemma \ref{lem:wrapping}, the probability is at most
\begin{equation*}
\frac{ 2\widetilde{\epsilon} }{ \widetilde{T} } + \frac{4 \widetilde{\epsilon} }{A \Delta f} = \frac{3s}{2} + \frac{3s}{\widehat{\beta} \Delta f} \leq \frac{3s}{2} + \frac{3s}{\frac{st}{4 \Delta l} \frac{\Delta  l}{st} } < 15s
\end{equation*}
Then in either case, with probability at least $1-15s$, we have
\begin{equation*}
\| 2\pi \beta \theta_q - 2\pi \beta \theta \|_{\bigcirc} > \frac{3s}{4} 2\pi
\end{equation*}
which implies that $v_q$ will not increase.
\end{proof}

\begin{lemma}\label{lem:locate_1_signal}
Procedure \textsc{Locate1Inner} in Algorithm \ref{alg:locate1signal_locate1inner_frequencyrecovery1cluster} uses $R_{\loc}$ ``legal'' samples, and then after Procedure \textsc{Locate1Signal} in Algorithm \ref{alg:locate1signal_locate1inner_frequencyrecovery1cluster} running Procedure \textsc{Locate1Inner} $D_{\max}$ times, it outputs a frequency $\widetilde{f}_0$ such that
\begin{equation*}
| \widetilde{f}_0 - f_0 | \lesssim \Delta \cdot \sqrt{T\Delta}
\end{equation*}
with arbitrarily large constant probability.
\end{lemma}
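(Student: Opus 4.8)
The plan is a standard multi-scale, binary-search-style location argument in the spirit of \cite{HIKP12,PS15}, with Lemma~\ref{lem:angle_between_xgamma_and_xgammabeta_is_betaf} serving as the single-round primitive. Procedure \textsc{Locate1Signal} maintains a candidate interval $J_r$ that is known to contain $f_0$; it starts with $J_0=[-F,F]$ of width $w_0=2F$, and at round $r$ it calls \textsc{Locate1Inner} to shrink $J_r$ to $J_{r+1}$ with $w_{r+1}\le O(1)\cdot w_r/t$ for a constant branching factor $t$. One round works as follows: partition $J_r$ into $t$ equal sub-intervals $Q_1,\dots,Q_t$ with midpoints $\theta_1,\dots,\theta_t$, let $q'$ be the index with $f_0\in Q_{q'}$, draw a fresh $\beta$ at the scale dictated by $w_r$ (so that $2\pi\beta w_r=\Theta(1)$, which also forces $\beta\le 2\widehat\beta$ once $w_r$ has shrunk to $\Theta(\Delta\sqrt{\Delta T})$, matching the hypothesis of Lemma~\ref{lem:get_legal_1_sample}), and for each of $R_{\loc}$ legal samples $\gamma$ returned by \textsc{GetLegal1Sample} compute $o=\phi\big(z(\gamma+\beta)/z(\gamma)\big)$ and cast a vote for every $q$ with $\|o-2\pi\beta\theta_q\|_{\bigcirc}$ below the threshold of \textsc{Locate1Inner}.

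Next I would apply Lemma~\ref{lem:angle_between_xgamma_and_xgammabeta_is_betaf}: each legal sample increases $v_{q'}$ with probability $\ge 1-s$ and increases any $v_q$ with $|q-q'|>3$ with probability $\le 15s$. Fixing $s$ to be a small absolute constant and using the Chernoff bound (Lemma~\ref{lem:chernoff_bound}) over the $R_{\loc}$ independent samples, with probability $1-t\cdot 2^{-\Omega(R_{\loc})}$ we simultaneously have $v_{q'}>R_{\loc}/2$ and $v_q<R_{\loc}/2$ for every $q$ with $|q-q'|>3$. Hence the region $\widehat q$ picked by \textsc{Locate1Inner} (the one with the most votes, or any exceeding the threshold) satisfies $|\widehat q-q'|\le 3$, so the interval obtained by padding $Q_{\widehat q}$ by three sub-interval widths on each side still contains $f_0$ and has width at most $7w_r/t$. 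Choosing $t$ to be a sufficiently large constant (say $t\ge 100$) gives $w_{r+1}\le 7w_r/t\le w_r/10$, so the candidate width contracts by a constant factor each round.

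Then I would analyze the recursion depth and collect probabilities. After $D_{\max}$ rounds the width is at most $w_0(7/t)^{D_{\max}}=2F(7/t)^{D_{\max}}$, so $D_{\max}=O\big(\log(F/(\Delta\sqrt{\Delta T}))\big)=O(\log(FT))$ rounds suffice to reach $w_{D_{\max}}\lesssim \Delta\sqrt{\Delta T}$; outputting $\widetilde f_0$ as the midpoint of the final interval gives $|\widetilde f_0-f_0|\le w_{D_{\max}}\lesssim \Delta\sqrt{T\Delta}$. Taking $R_{\loc}=\Theta(\log(D_{\max}/\rho))=O(\log\log(FT)+\log(1/\rho))$ makes each round fail with probability at most $\rho/D_{\max}$ (absorbing the factor $t=O(1)$ into the constant), and a union bound over the $D_{\max}$ rounds bounds the total failure probability by $\rho$, an arbitrarily small constant; this proves the ``arbitrarily large constant probability'' claim. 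One also conditions on each \textsc{GetLegal1Sample} call returning a legal $\gamma$, which by Lemma~\ref{lem:get_legal_1_sample} happens with probability $\ge 0.6$, so we simply keep resampling until $R_{\loc}$ legal samples are obtained, changing only constants.

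The main obstacle is the bookkeeping around the ``distance $3$'' slack together with the scale of $\beta$: I must check that at \emph{every} level $r$ the hypotheses of Lemma~\ref{lem:angle_between_xgamma_and_xgammabeta_is_betaf} hold — in particular that the sampled $\beta$ stays in the prescribed range and satisfies $\beta\le 2\widehat\beta$ with $\widehat\beta\eqsim 1/(\Delta\sqrt{\Delta T})$ as $w_r$ shrinks — and that padding by three sub-intervals never lets $f_0$ escape $J_{r+1}$ while still guaranteeing $w_{r+1}<w_r$; this is exactly what pins the final resolution at $\Theta(\Delta\sqrt{T\Delta})$ rather than $\Theta(\Delta)$. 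Everything else is a routine Chernoff-plus-union-bound computation, and the subsequent amplification of this constant success probability to $1-2^{-\Omega(k)}$ (as needed for Lemma~\ref{lem:findfrequency}) is handled separately by repetition and a median argument.
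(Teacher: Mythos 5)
Your proposal follows essentially the same route as the paper: a multi-scale $t$-ary search that, at each level, applies Lemma~\ref{lem:angle_between_xgamma_and_xgammabeta_is_betaf} and the Chernoff bound to guarantee the voted region is within $3$ sub-intervals of the true one, then takes a union bound over the regions and over the $D_{\max}$ levels, and stops when the constraint $\beta\le 2\widehat\beta\eqsim 1/(\Delta\sqrt{\Delta T})$ prevents further refinement, pinning the resolution at $\Theta(\Delta\sqrt{T\Delta})$.

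The one substantive deviation is your parameter choice for the branching factor. You take $t$ to be an absolute constant (say $t\ge 100$) and set $R_{\loc}=\Theta(\log(D_{\max}/\rho))=O(\log\log(FT)+\log(1/\rho))$, so that each level fails with probability at most $\rho/D_{\max}$. The paper instead takes $t\eqsim\log(FT)$ (so that $t>D_{\max}$), and $R_{\loc}\eqsim\log_{1/c}(t/c)=O(\log\log(FT))$; it then bounds the overall failure probability by $D_{\max}\,t\,c^{\Omega(R_{\loc})}\le t^2 c^{\Omega(R_{\loc})}$, which it makes a small constant without ever paying an extra $\log D_{\max}$ in $R_{\loc}$ beyond what it already pays for the union bound over the $t$ regions. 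Both give the lemma's stated constant-probability guarantee, but yours yields $D_{\max}R_{\loc}=O(\log(FT)\cdot\log\log(FT))$ legal samples rather than the paper's $O(\log(FT))$; since the subsequent Lemma on the sample/time complexity of \textsc{Locate1Signal} relies on the latter bound, and since the algorithm as written in the pseudocode already fixes $t\eqsim\log(FT)$, you should analyze that parameter regime rather than substitute a constant branching factor. The rest of your argument (conditioning on legal samples succeeding, padding by $3$ sub-intervals, noting the $\beta$-range break condition) matches the paper.
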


\begin{proof}
For each observation, $v_{q'}$ incremented with probability at least $1-p$ and $v_{q}$ is incremented with probability at most $15s+p$ for $|q-q'| > 3$. The probabilities corresponding to different observations are independent. Then after $R_{\loc}$ observations, there exists some constant $c< \frac{1}{2}$, for any $q$ such that $|q-q'|> 3$,
\begin{eqnarray*}
 && \mathsf{Pr}[\text{False~region~gets~more~than~half~votes}] \\
&=&\mathsf{Pr} [ v_{j,q} > R_{\loc} / 2 ]  \\
&\leq & \binom{ R_{\loc}}{R_{\loc}/2 } (15s + p)^{R_{\loc}/2} \\
& \leq & c^{\Omega(R_{\loc})}
\end{eqnarray*}
Similarly, on the other side,
\begin{eqnarray*}
&& \mathsf{Pr} [\text{True~region~gets~less~than~half~votes}] \\
&=&\mathsf{Pr} [ v_{j,q'} < R_{\loc}/2] \\
&\leq & \binom{ R_{\loc}}{R_{\loc}/2 } ( p)^{R_{\loc}/2} \\
&\leq & c^{\Omega(R_{\loc})}
\end{eqnarray*}
Taking the union bound over all the $t$ regions, it gives  with probability at least $1-t f^{\Omega(R_{\loc})}$ we can find some region $q$ such that $|q - q' |< 3$.

If we repeat the above procedure $D_{\max}$ rounds, each round we choose the ``False'' region with probability at most $1-t c^{\Omega(R_{\loc})}$. Thus, taking the union bound over all the $D_{\max}$ rounds, we will report a region has size $\eqsim \Delta \sqrt{\Delta T}$ and contains $f_0$ with probability at least $1- D_{\max} t c^{\Omega(R_{\loc})}$.

The reason for not ending up with region that has size $\eqsim \Delta$ is, the upper bound of the sample range of $\beta$ force us to choose $\beta$ is at most $\lesssim \frac{T}{(\Delta T)^\frac{3}{2}}$ by Claim \ref{cla:close_samples} 

It remains to explain how to set $D_{\max}, t, $ and $R_{\loc}$. At the beginning of the first round, we start with frequency interval of length $2F$, at the beginning of the last round, we start with frequency interval of length $t\cdot \Delta \sqrt{T\Delta}$. Each round we do a $t$-ary search, thus 
\begin{equation*}
D_{\max} = \log_t (\frac{2F}{ t \Delta \sqrt{T\Delta}} ) \leq \log_t (F/\Delta).
\end{equation*}
We can set $R_{\loc} \eqsim \log_{1/c} (t/c)$ and $t > D_{\max}$, e.g. $t = \log (F/\Delta)$. Thus, the probability becomes,
\begin{equation*}
1- D_{\max} t c^{\Omega(R_{\loc})} \geq 1 -t^2 c^{\Omega(R_{\loc})} \geq 1- \poly(1/t,c)
\end{equation*}
which is larger than any constant probability.
\end{proof}

Using the same parameters setting in the proof of Lemma \ref{lem:locate_1_signal}, we show the running time and sample complexity of Procedure \textsc{Locate1Signal},
\begin{lemma}
Procedure \textsc{Locate1Signal}  in Algorithm \ref{alg:locate1signal_locate1inner_frequencyrecovery1cluster} uses \\
$O(\poly(k,\log(1/\delta))) \cdot \log(FT)$ samples and runs in $O(\poly(k,\log(1/\delta))) \cdot \log^2(FT) $ time.
\end{lemma}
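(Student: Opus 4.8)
The plan is to assemble the per-component costs already established for \textsc{GetEmpirical1Energy}, \textsc{GetLegal1Sample}, \textsc{Locate1Inner}, and the $t$-ary search inside \textsc{Locate1Signal}, substitute the choice $\Delta = \polydelta$ (so that $T\Delta = \poly(k,\log(1/\delta))$), and multiply through. First I would fix the cost of producing a single value of the working signal $z$: a query to $z$ reduces to one query to $x$ together with evaluations of the filter $H$ (and, in the hashed setting, $G$ via \textsc{HashToBins}), which by the explicit forms in Appendix~\ref{sec:properties_of_H} and Appendix~\ref{sec:properties_of_G} costs $\poly(k,\log(1/\delta))$ samples of $x$ and $\poly(k,\log(1/\delta))$ time. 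By Claim~\ref{cla:get_empirical_1_energy}, \textsc{GetEmpirical1Energy} uses $O((T\Delta)^2) = \poly(k,\log(1/\delta))$ samples, and by Lemma~\ref{lem:get_legal_1_sample} each call to \textsc{GetLegal1Sample} uses $O((T\Delta)^3) = \poly(k,\log(1/\delta))$ samples; neither depends on $F$ or $T$ except through $T\Delta$.

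Next I would count the number of such calls. Procedure \textsc{Locate1Inner} consumes $R_{\loc}$ legal samples per invocation, and \textsc{Locate1Signal} invokes \textsc{Locate1Inner} over $D_{\max}$ rounds of a $t$-ary search, with the parameters from the proof of Lemma~\ref{lem:locate_1_signal}: $t = \Theta(\log(F/\Delta))$, $R_{\loc} = O(\log_{1/c}(t/c)) = O(\log t)$, and $D_{\max} = \log_t(2F/(t\Delta\sqrt{T\Delta})) = O(\log(F/\Delta)/\log t)$. The key point is that the $\log t$ factors cancel, $D_{\max}\cdot R_{\loc} = O(\log(F/\Delta)) = O(\log(FT))$, using $1/\Delta = T/\poly(k,\log(1/\delta))$. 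Hence the total number of samples of $x$ is $O(D_{\max}\cdot R_{\loc}\cdot (T\Delta)^3\cdot \poly(k,\log(1/\delta))) + O((T\Delta)^2) = \poly(k,\log(1/\delta))\cdot\log(FT)$, the first term dominating.

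For the running time, on top of the $\poly(k,\log(1/\delta))$ cost of realizing each sampled value of $z$, \textsc{Locate1Inner} spends $O(t)$ arithmetic per sample updating the vote counts $v_q$ over the $t$ candidate regions; summing over $R_{\loc}$ samples and $D_{\max}$ rounds gives $O\big(D_{\max}\cdot R_{\loc}\cdot(t + (T\Delta)^3\poly(k,\log(1/\delta)))\big) = O(\log(FT))\cdot\big(O(\log(FT)) + \poly(k,\log(1/\delta))\big) = \poly(k,\log(1/\delta))\cdot\log^2(FT)$. I do not expect a genuine obstacle here: the argument is essentially the bookkeeping inherited from \cite{HIKP12,PS15}, and the only point that needs to be stated with care is the cancellation $D_{\max}\cdot R_{\loc} = O(\log(FT))$, which is exactly why the $t$-ary search is run with $t$ polylogarithmic in $F/\Delta$ rather than with $t = O(1)$ — the latter would cost an extra $\log\log(FT)$ factor in the sample complexity.
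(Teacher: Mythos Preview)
Your proposal is correct and follows essentially the same bookkeeping as the paper: both hinge on the cancellation $D_{\max}\cdot R_{\loc}=O(\log(F/\Delta))=O(\log(FT))$, then multiply by $R_{\repeats}=(T\Delta)^3$ and the per-sample cost of evaluating $H$ (Lemma~\ref{lem:approximation_H_in_poly_time}) for the sample bound, and add the $D_{\max}R_{\loc}\cdot t$ voting cost for the time bound. One small imprecision: in the one-cluster setting $z=x\cdot H$, so each evaluation of $z$ costs exactly one sample of $x$ (not $\poly(k,\log(1/\delta))$ samples --- that would be the \textsc{HashToBins} cost in the $k$-cluster version); but this overcount is harmless and the final bounds are unchanged.
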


\begin{proof}
The number of ``legal'' observations is
\begin{equation*}
D_{\max} R_{\loc} = O( \log_t(F/\Delta) \log_{1/c} (t/c))  = O(\log (F/\Delta))
\end{equation*}
The total number of samples is 
\begin{equation*}
R_{\est}+ R_{\repeats} D_{\max} R_{\loc} = O(T \Delta_h)^2 + (T \Delta_h)^3 \cdot \log (FT)  = \poly(k,\log(1/\delta)) \cdot \log(FT)
\end{equation*}
where the first step follows by Claim \ref{cla:get_empirical_1_energy} and Lemma \ref{lem:get_legal_1_sample} and the last step follows by the setting of $\Delta_h$ in Appendix \ref{sec:parameters_setting_for_filters}.

 The running time includes two parts, one is approximately computing $H(t)$ for all the samples, each sample takes $\poly(k,\log(1/\delta))$ time according to Lemma \ref{lem:approximation_H_in_poly_time}; the other is for each legal sample we need to assign vote to some regions.
 \begin{equation*}
\poly(k,\log(1/\delta)) \cdot (R_{\est} + R_{\repeats} D_{\max} R_{\loc}) + D_{\max}R_{\loc} t = \poly(k,\log(1/\delta)) \log^2(FT)
 \end{equation*}
\end{proof}

Lemma \ref{lem:merge_1_stage} only achieves constant success probability, using median analysis we can boost the success probability,
\begin{lemma}\label{lem:merge_1_stage}
Let $\wt{f_0}$ denote the frequency output by Procedure \textsc{FrequencyRecovery1Cluster} in Algorithm \ref{alg:main_1}, then with probability at least $1-2^{-\Omega(k)}$,
\begin{equation*}
| \wt{f}_0 -f_0 | \lesssim \Delta \sqrt{T\Delta}
\end{equation*}
\end{lemma}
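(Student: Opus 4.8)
The plan is to boost the constant success probability of Procedure \textsc{Locate1Signal} (Lemma~\ref{lem:locate_1_signal}) up to $1-2^{-\Omega(k)}$ by the standard ``run many times and take the median'' trick, exactly as in~\cite{HIKP12,PS15}. Recall from Lemma~\ref{lem:locate_1_signal} that a single invocation of \textsc{Locate1Signal} outputs a frequency within $C\Delta\sqrt{T\Delta}$ of $f_0$ with some absolute constant probability; by prepending $O(1)$ internal repetitions we may assume this probability is at least, say, $2/3$. Procedure \textsc{FrequencyRecovery1Cluster} then runs \textsc{Locate1Signal} $R=\Theta(k)$ times on fresh, independent samples, obtaining candidate frequencies $\wt f_0^{(1)},\dots,\wt f_0^{(R)}$, and outputs their median $\wt f_0$.

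For the analysis, call an index $i$ \emph{good} if $|\wt f_0^{(i)}-f_0|\le C\Delta\sqrt{T\Delta}$. Since the $R$ runs are independent and each is good with probability at least $2/3$, a Chernoff bound (Lemma~\ref{lem:chernoff_bound}) shows that with probability at least $1-2^{-\Omega(R)}=1-2^{-\Omega(k)}$ strictly more than $R/2$ of the indices are good. Condition on this event. Sorting the $\wt f_0^{(i)}$ in increasing order, more than $R/2$ of them are $\le f_0+C\Delta\sqrt{T\Delta}$, hence the median is $\le f_0+C\Delta\sqrt{T\Delta}$; symmetrically more than $R/2$ of them are $\ge f_0-C\Delta\sqrt{T\Delta}$, hence the median is $\ge f_0-C\Delta\sqrt{T\Delta}$. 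Therefore $|\wt f_0-f_0|\le C\Delta\sqrt{T\Delta}\lesssim\Delta\sqrt{T\Delta}$, which is exactly the claimed bound. The sample and time complexity are those of \textsc{Locate1Signal} multiplied by $R=O(k)$, which is absorbed into the $\poly(k,\log(1/\delta))$ factors, so the complexity bounds in Theorem~\ref{thm:frequency_recovery_1_cluster} are preserved.

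There is no serious obstacle here; the lemma is a routine amplification step. The only points needing care are (a) that the repetitions draw \emph{independent} fresh samples, so that the Chernoff bound over the $R$ trials is legitimate, and (b) that the single-run failure probability is first driven to a small enough constant that $R=\Theta(k)$ repetitions simultaneously push the overall failure probability below $2^{-\Omega(k)}$ \emph{and} guarantee a strict majority of good runs (so that the one-dimensional median inherits the good-run guarantee); both are handled by the $O(1)$ internal repetitions mentioned above.
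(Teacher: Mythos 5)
Your proof is correct and takes exactly the route the paper does: run \textsc{Locate1Signal} $O(k)$ times on independent fresh samples, take the median, and apply the Chernoff bound to show a strict majority of the runs land within $O(\Delta\sqrt{T\Delta})$ of $f_0$, so the median does too. Your write-up is actually more careful than the paper's one-line argument, in particular by spelling out why the median inherits the good-run guarantee.
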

\begin{proof}
Because of Procedure \textsc{FrequencyRecovery1Cluster} taking the median of $O(  k)$ independent results by repeating algorithm \textsc{Locate1Signal} $O(k)$ times. Each sample $L_r$ is close to $\wt{f}_0$ with sufficiently large probability. Thus, using the Chernoff bound will output $\wt{f}_0$ with probability $1-2^{-\Omega(k)}$ such that 
\begin{equation*}
| \wt{f}_0 -f_0 | \lesssim \Delta \sqrt{T\Delta}.
\end{equation*}
\end{proof}
Combining Lemma \ref{lem:merge_1_stage} with the sample complexity and running time in Lemma \ref{lem:locate_1_signal}, we are able to finish the proof of Theorem \ref{thm:frequency_recovery_1_cluster}.

\subsection{The full signal, after multiplying by $H$ and convolving with $G$,  is one-clustered.}
\label{sec:proof_z_satisfies_two_properties}
The goal of this section is to prove Lemma \ref{lem:z_satisfies_two_properties}.
We fix $f^* \in [-F,F]$ satisfying \eqref{eq:heavyfrequency} in this section. We first define a good hashing $(\sigma,b)$ of $f^*$ as follows.
\begin{definition}\label{def:k_signal_recovery_z}
  We say that a frequency $f^*$ is \emph{well-isolated} under the
  hashing $(\sigma, b)$ if, for $j = h_{\sigma, b}(f^*)$, we have
  that the signal
  \[
  \wh{z}^{(j)} = \widehat{x\cdot H} \cdot \widehat{G}^{ (j)}_{\sigma,b}
  \]
  satisfies, over the interval $\overline{I_{f^*}} = (-\infty, \infty) \setminus (f^* -
  \Delta, f^* + \Delta)$,
  \[
  \int_{\overline{I_{f^*}}} \abs{\wh{z}^{(j)}(f)}^2df \lesssim \epsilon \cdot T\N^2/k.
  \]
\end{definition}
For convenience, we simplify $z^{(j)}$ by using $z$ in the rest of this section.
\begin{lemma}\label{lem:often-well-isolated}
  Let $f^*$ be any frequency.  Then $f^*$ is well-isolated by a
  hashing $(\sigma, b)$ with probability $\ge 0.9$ given $B= \Theta(k)$ and $\sigma \in [\frac{1}{B \Delta},\frac{2}{B \Delta}]$ chosen uniformly at random.
\end{lemma}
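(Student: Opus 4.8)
The plan is to fix the frequency $f^*$ and analyze the randomness in $\sigma$ (and $b$, though the key randomness is $\sigma$). The signal $\wh{z}^{(j)} = \wh{x\cdot H}\cdot \wh{G}^{(j)}_{\sigma,b}$ where $j = h_{\sigma,b}(f^*)$. The point of the hash is that the bin containing $f^*$, after the permutation, corresponds to an interval in frequency space of width roughly $\frac{1}{\sigma B}$ around $f^*$ (this is Property I–III of $\wh{G}$ composed with the permutation of Definition~\ref{def:G_j_sigma_b}), with the filter $\wh{G}^{(j)}_{\sigma,b}$ being $\approx 1$ on that interval and decaying like the leakage of $\wh{G}$ outside it. Since $\sigma \in [\frac{1}{B\Delta}, \frac{2}{B\Delta}]$, the width $\frac{1}{\sigma B} \in [\Delta/2, \Delta]$, so the ``pass region'' of the bin is contained in $(f^*-\Delta, f^*+\Delta)$ up to the small $\alpha$-transition region, after accounting for the rounding in $h_{\sigma,b}$. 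The energy of $\wh{z}^{(j)}$ outside $\overline{I_{f^*}}$ comes from two sources: (a) leakage of $\wh{G}$ — which is at most $\delta/k$ pointwise by Property III of Lemma~\ref{lem:property_of_filter_G}, hence contributes at most $(\delta/k)^2 \cdot \|\wh{x\cdot H}\|_2^2$, negligible — and (b) mass of $\wh{x\cdot H}$ that lands in \emph{other} bins but still within $\overline{I_{f^*}}$, which is automatically zero, OR mass from the periodic copies $\wh{G}^{\dis}$ that alias into the pass region; this last is the real content.

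**The main estimate.** The quantity to control is $\int_{\overline{I_{f^*}}} |\wh{z}^{(j)}(f)|^2 df$. Decompose $\wh{x\cdot H}$ over the heavy clusters $C_1,\dots,C_l$ of $\wh{x^*\cdot H}$ (plus noise $\wh{g\cdot H}$ and the light-cluster remainder, both of which have total energy $\lesssim T\N^2$ by Claim~\ref{cla:guarantee_removing_x**_x*} and the definition of $\N$). For the heavy clusters other than the one(s) near $f^*$: such a cluster $C_i$ contributes to $\int_{\overline{I_{f^*}}}|\wh{z}^{(j)}|^2$ only if it collides with $f^*$ under the hash, i.e. $h_{\sigma,b}(f') = j$ for some $f'\in C_i$, OR it is close enough to $f^*$ that $\wh{G}^{(j)}_{\sigma,b}$ has not yet decayed. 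By Claim~\ref{cla:PS15_hash_claims}, for two frequencies at distance $\ge \Delta$ the collision probability is $O(1/B) = O(1/k)$; summing over the $\le k$ clusters and the $\le k$ frequencies, the expected number of colliding heavy clusters is $O(1)$, and we want it to be small with probability $0.9$, which a Markov/union-bound argument over the $O(k)$ heavy frequencies gives after choosing the constant in $B = \Theta(k)$ large enough. When a far cluster does \emph{not} collide, $\wh{G}^{(j)}_{\sigma,b}$ is at most $\delta/k$ on it (Property III again, since it's outside the pass interval and outside all aliased copies — here one uses that $\supp(\wh{G})$ within one period is the single interval of width $\frac{2\pi}{2B}$ and the period structure of $\wh{G}^{\dis}$), so its contribution is $\le (\delta/k)^2 \cdot T\N^2 \cdot (\text{something polynomial})$, negligible. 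The upshot: with probability $\ge 0.9$ over $\sigma$ (and $b$), no heavy cluster other than possibly ones within $O(\Delta)$ of $f^*$ passes through bin $j$ with non-negligible weight, and those within $O(\Delta)$ already lie in a small neighborhood — we need to be slightly careful that they lie in $I_{f^*}$, which holds because a colliding cluster of width $\le k\Delta_h < \Delta$ that is within the pass region is within $(f^*-\Delta, f^*+\Delta)$ by the triangle inequality and our choice $\Delta > k\Delta_h$.

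**Assembling.** Putting the pieces together: $\int_{\overline{I_{f^*}}} |\wh{z}^{(j)}(f)|^2 df \le (\text{leakage term}) + (\text{colliding far-cluster term}) + (\text{noise-and-light-remainder passing through bin }j)$. The leakage term is $\le \poly(B,k,1/\delta)\cdot(\delta/k)^2 \cdot T\N^2 \lesssim \epsilon T\N^2/k$ after choosing $\delta$ small relative to $\epsilon$ (recall $\N^2 \ge \delta\|x^*\|_T^2$ and the filter parameter $\delta$ can be taken polynomially smaller). The colliding far-cluster term is zero on the event of probability $\ge 0.9$ we conditioned on. The noise-plus-remainder term: $\wh{G}^{(j)}_{\sigma,b}$ is bounded by $1$ everywhere and the total energy of noise plus light remainder is $O(T\N^2)$; but we need the energy that passes through \emph{bin} $j$ specifically to be $\lesssim \epsilon T\N^2/k$. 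This follows by an averaging argument over the $B = \Theta(k)$ bins: in expectation over the random hash each bin receives a $1/B = \Theta(1/k)$ fraction of this energy, and Markov's inequality over the choice of hash gives that bin $j$ receives $\lesssim \epsilon T\N^2/k$ with constant probability — combine with the earlier $0.9$ event by adjusting constants and taking a union bound to still get $\ge 0.9$ overall.

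**Main obstacle.** The delicate step is the interplay between the rounding in $h_{\sigma,b}$ and the exact shape of $\wh{G}^{(j)}_{\sigma,b}$ as a periodized filter: I need the pass region of bin $j$ to genuinely sit inside $(f^*-\Delta, f^*+\Delta)$ while the decay outside it is governed by $\wh{G}$'s $\delta/k$ tail, and simultaneously the \emph{aliased} copies (the $\sum_{i\in\Z}$ in Definition~\ref{def:G_j_sigma_b}) must not create a second pass region landing inside $\overline{I_{f^*}}$ where a heavy cluster could live — this is exactly what the ``$\Delta \le |f^+-f^-| < \frac{(B-1)\Delta}{2}$ implies zero collision probability'' clause of Claim~\ref{cla:PS15_hash_claims} is for, so the real work is verifying that our $\Delta$ and $B = \Theta(k)$ are in the right range for that clause to apply to all pairs of heavy frequencies, and handling the remaining far pairs (distance $\ge \frac{(B-1)\Delta}{2}$) via the $O(1/B)$ bound. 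I expect this bookkeeping — tracking which frequencies can alias into bin $j$ and with what filter weight — to be the bulk of the formal proof, with the probabilistic part (Markov + union bound over $O(k)$ heavy frequencies) being routine once the deterministic geometry is pinned down.
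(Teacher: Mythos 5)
Your proposal follows essentially the same route as the paper's proof: a case analysis on the distance from $f^*$ to each other heavy frequency (within $\Delta$: same bucket and already inside $I_{f^*}$; between $\Delta$ and roughly $1/\sigma$: never collides by part~(I) of Claim~\ref{cla:PS15_hash_claims}, so the filter contributes only the $\delta/k$ tail of Property~III of $\wh{G}$; farther: collides with probability $O(1/B)$, union bound over $O(k)$ frequencies gives a total failure probability $\le 3k/B \le 0.01$). The one organizational difference is that you also fold the noise contribution into this lemma via an averaging-plus-Markov argument over bins, whereas the paper defers that to the separate Lemma~\ref{lem:noise_hashing} and combines the two events only in the proof of Lemma~\ref{lem:z_satisfies_two_properties}; this is a valid repackaging of the same ideas, not a different method.
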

\begin{proof}
  For any other frequency $f'$ in $x^*$, its contribution in $\wh{z}$ depends on how far
  it is from $f^*$.  Either it is:
  \begin{itemize}
  \item Within $\Delta$ of $f^*$, $f'$ and $f^*$ will be mapped into the same bucket with probability at least $0.99$. 
  \item Between $\Delta$ and $1/\sigma$ far, from Claim \ref{cla:PS15_hash_claims}, $f'$ and $f^*$ will always mapped into different buckets. Hence $f'$ always
    contributes in the $\frac{\epsilon \delta}{ k}$ region of Property \RN{3} in Lemma \ref{lem:property_of_filter_G} about filter function $(G(t), \wh{G}(f))$, i.e., it contributes at most $\frac{\epsilon \delta}{ k} \cdot \int_{f'-\Delta}^{f'+\Delta} |\widehat{x \cdot H}|^2 \mathrm{d} f$. Overall it will contribute \[\frac{\epsilon \delta}{ k} \cdot \int |\widehat{x \cdot H}|^2 \mathrm{d} f=\frac{\epsilon \delta}{ k} \int |x \cdot H|^2 \mathrm{d} t.\]
  \item More than $1/\sigma$ far, in which case they contribute in the
    same region with probability at most $3/B$. By a union bound, it is at most $3k/B \le 0.01$
  \end{itemize}
\end{proof}
Without loss of generality, we assume $\supp(\wh{g \cdot H}) \cap \supp(\wh{x^* \cdot H})=\emptyset$, otherwise we treat it as a part of $x^* \cdot H$. We first consider frequency $f^* \in \wh{x^* \cdot H}$ under $G^{(j)}_{\sigma,b}$.

\begin{lemma}\label{lem:full_proof_of_3_properties_true_for_z} 
Let $f^*$ satisfying $\int_{f^*-\Delta}^{f^*+\Delta} | \widehat{x^*\cdot H}(f) |^2 \mathrm{d} f \geq  T\N^2/k$
 and $\wh{z} = \widehat{x^* \cdot H} \cdot \widehat{G}^{ (j)}_{\sigma,b}$ where $j=h_{\sigma,b}(f^*)$. If $f^*$ is well-isolated, then $z$ and $\wh{z}$ satisfying Property I(in Definition \ref{def:one_cluster}), i.e.,
\[ \int_0^T | z(t) |^2 \mathrm{d} t \geq (1-\epsilon) \int_{-\infty}^{+\infty} |z(t) |^2 \mathrm{d} t.\]
\end{lemma}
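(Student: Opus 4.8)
The plan is to exhibit a genuinely $k$-Fourier-sparse signal $\psi$ for which $z$ and $\psi\cdot H$ agree up to a negligible $L^2$ error, and then to read off the time-concentration of $z$ on $[0,T]$ from the fact that $H$ forces such products to live essentially on $[0,T]$ (Properties~\RN{5} and \RN{6} of Lemma~\ref{lem:property_of_filter_H}). First I would isolate the cluster responsible for $f^*$. Expanding $\wh{x^*\cdot H}=\sum_{i=1}^k v_i\,\wh H(\cdot-f_i)$ gives $\wh z=\sum_{i=1}^k v_i\,\wh H(\cdot-f_i)\,\wh G^{(j)}_{\sigma,b}$ with $j=h_{\sigma,b}(f^*)$ fixed throughout; let $C$ be the equivalence class of Definition~\ref{def:heavy_clusters} containing $f^*$, set $P=\{i\in[k]:f_i\in C\}$, $\psi(t)=\sum_{i\in P}v_i e^{2\pi\i f_i t}$, and $y:=\psi\cdot H$. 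Then $\psi$ is $k$-Fourier-sparse, and since $\abs{C}\le k\Delta_h<\Delta$ while $\supp\wh H$ has width $\Delta_h$, the support of $\wh y$ lies inside $(f^*-\Delta,f^*+\Delta)$.

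Next I would bound $\norm{z-y}_2$. Split $\wh z-\wh y$ into its restriction to $(-\infty,\infty)\setminus(f^*-\Delta,f^*+\Delta)$, which coincides there with $\wh z$ and therefore has squared mass $\lesssim\eps\,T\N^2/k$ by the well-isolation hypothesis (Definition~\ref{def:k_signal_recovery_z}); and its restriction to $(f^*-\Delta,f^*+\Delta)$, which equals $\sum_{i\in P}v_i\wh H(\cdot-f_i)\bigl(\wh G^{(j)}_{\sigma,b}-1\bigr)+\sum_{i\notin P}v_i\wh H(\cdot-f_i)\wh G^{(j)}_{\sigma,b}$. For the first sum, well-isolation forces the whole cluster $C$ into the central pass band of bucket $j$, so $\bigl|\wh G^{(j)}_{\sigma,b}-1\bigr|\le\delta/k$ on $\supp\wh y$ by Property~\RN{1} of Lemma~\ref{lem:property_of_filter_G}, giving an $L^2$ contribution at most $(\delta/k)\norm{y}_2$; for the second, every off-cluster $f_i$ with $\Delta\le\abs{f_i-f^*}<(B-1)\Delta/2$ lands in a different bucket by Claim~\ref{cla:PS15_hash_claims}(\RN{1}) and so sees only the stop band $\bigl|\wh G^{(j)}_{\sigma,b}\bigr|\le\delta/k$ of Property~\RN{3}, while frequencies farther than $1/\sigma$ are absorbed into the well-isolation error for a typical hashing, exactly as in the proof of Lemma~\ref{lem:often-well-isolated}. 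Since heaviness gives $\norm{z}_2^2\gtrsim T\N^2/k$ (hence $\norm{y}_2^2\gtrsim T\N^2/k$ as well), taking $\eps$ and $\delta$ small makes $\norm{z-y}_2\le c'\norm{z}_2$ for an arbitrarily small constant $c'$.

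Finally, $y=\psi\cdot H$ with $\psi$ $k$-Fourier-sparse, so Properties~\RN{5} and \RN{6} of Lemma~\ref{lem:property_of_filter_H}, applied to $\psi$ in the coordinates sending $[0,T]$ to $[-\tfrac12,\tfrac12]$, give $\int_{\R\setminus[0,T]}\abs{y(t)}^2\,\mathrm{d}t\le O(\delta)\int_0^T\abs{y(t)}^2\,\mathrm{d}t$, i.e.\ $\int_0^T\abs{y}^2\ge(1-O(\delta))\int_\R\abs{y}^2$. Feeding in the $\ell_2$ closeness of $z$ and $y$ and expanding with the triangle and Cauchy--Schwarz inequalities---the same bookkeeping as in Claim~\ref{cla:energy_concentration}---yields $\int_0^T\abs{z(t)}^2\,\mathrm{d}t\ge(1-\eps)\int_{-\infty}^{+\infty}\abs{z(t)}^2\,\mathrm{d}t$, as required.

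The step I expect to be the real obstacle is the $L^2$ comparison of $z$ with $y$: the out-of-band response of $\wh G^{(j)}_{\sigma,b}$ is only $\delta/k$ pointwise, but it is weighted by the coefficients $v_i$, which can be exponentially large when the $f_i$ nearly coincide. One must therefore route each off-cluster frequency either through Claim~\ref{cla:PS15_hash_claims} into a different bucket---so that its contribution is charged against the already-controlled mass outside $(f^*-\Delta,f^*+\Delta)$---or, when that fails, first replace $x^*$ by the well-separated, bounded-coefficient surrogate $x'$ of Lemma~\ref{lem:existence_gap} (with the energy lower bound of Lemma~\ref{lem:relation_energy_coef}) before estimating the transition-region terms.
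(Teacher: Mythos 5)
Your proposal takes a genuinely different route from the paper. The paper's proof of Property~\RN{2} is entirely a \emph{time-domain} argument: writing $z(t) = \bigl((x^*\cdot H) * G^{(j)}_{\sigma,b}\bigr)(t)$, it exploits the fact that $G$ has compact support of width $DB\sigma \ll T$, so for $t\notin[0,T]$ the convolution only probes $x^*\cdot H$ near or beyond the boundary of $[0,T]$, where $H$ decays like $(s_1(|t|/s_3-1/2)+2)^{-\ell}$ (Property~\RN{3}) while $|x^*|$ grows at most like $(kt/T)^{O(k)}$ (Lemma~\ref{lem:x_dot_H_is_small_outside_T}). Combining these gives a pointwise $2^{-\Theta(\ell)}\,\poly(k)\,\|x^*\|_T^2$ bound on $|z(t)|^2$ outside $[0,T]$; integrating and comparing against the lower bound from heaviness (Equation~\eqref{eq:xHG_infty_is_at_least_xH_inside_region}) finishes the proof. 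Crucially this never decomposes $x^*$ into clusters, never confronts the possibly-huge coefficients $v_i$, and in fact never uses the well-isolation hypothesis — only the heaviness hypothesis and the filter/growth lemmas.

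Your frequency-domain route does not close as written, and the place it breaks is not quite the place you flag. You compare $z$ against $y=\psi\cdot H$ with $\psi$ the cluster restriction, and control $\wh z - \wh y$ separately outside and inside $I_{f^*}=(f^*-\Delta,f^*+\Delta)$. Outside $I_{f^*}$, $\wh y$ vanishes and well-isolation bounds $\wh z$ — fine. But inside $I_{f^*}$, on the set $I_{f^*}\setminus C$ (which is nonempty since $|C|\le k\Delta_h<\Delta$), we have $\wh y=0$ while $\wh z$ is exactly the filtered contribution from off-cluster frequencies $f_i$ with $|f_i-f^*|<\Delta$; these typically hash to bucket $j$ together with $f^*$ and therefore see the \emph{pass band} of $\wh G^{(j)}_{\sigma,b}$, not the stop band. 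Neither Claim~\ref{cla:PS15_hash_claims}(\RN{1}) nor well-isolation controls $\int_{I_{f^*}\setminus C}|\wh z|^2$: well-isolation is a statement about $\overline{I_{f^*}}$, and the hashing claim only separates frequencies at distance $\ge\Delta$. So $\|z-y\|_2$ need not be $O(c')\|z\|_2$, and the transfer of time-concentration from $y$ to $z$ has no foundation. Passing to the bounded-coefficient surrogate $x'$ of Lemma~\ref{lem:existence_gap} does not obviously help here either, since the problem is not coefficient size per se but the uncontrolled near-band, off-cluster Fourier mass of $\wh z$ itself. The time-domain argument the paper actually uses avoids all of this by never isolating the cluster.
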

\begin{proof}

We first notice that $z(t)=x^*(t) \cdot H(t) * G_{\sigma,b}^{(j)}(t)$ and lower bound $\int_{-\infty}^{+\infty} |z(t) |^2 \mathrm{d} t$ as follows :
\begin{align}\label{eq:xHG_infty_is_at_least_xH_inside_region}
~&\int_{-\infty}^{+\infty} | x^*(t) \cdot H(t) * G_{\sigma,b}^{(j)}(t) |^2 \mathrm{d} t \notag  \\
= \quad &\int_{-\infty}^{+\infty} | \widehat{ x^*\cdot H } (f) \cdot \widehat{G}_{\sigma,b}^{(j)}(f) |^2 \mathrm{d} f &\text{~by~FT} \notag \\
\geq \quad & \int_{f_0-\Delta}^{f_0+\Delta} | \widehat{ x^*\cdot H } (f) \cdot \widehat{G}_{\sigma,b}^{(j)}(f) |^2 \mathrm{d} f \notag \\
\geq \quad&   (1-\delta)^2 \int_{f_0-\Delta}^{f_0+\Delta} | \widehat{ x^*\cdot H } (f) \ |^2 \mathrm{d} f  \notag \\
\geq \quad & (1-\delta)^2 T \N^2/k \notag\\
\geq \quad & 0.9 \frac{\delta }{k}\int_0^T |x^*(t)|^2 \mathrm{d} t
\end{align}
We give an upper bound $\int_{-\infty}^{0} |z(t)|^2 \mathrm{d} t + \int_{T}^{+\infty} |z(t)|^2 \mathrm{d} t \lesssim \epsilon \frac{\delta}{k} \int_0^T |x^*(t) H(t)|^2 \mathrm{d} t$ in the rest of this proof.


Consider the case $t<0$, by definition of Convolution,
\begin{eqnarray*}
z^{(j)}(t) &=& x^*(t) \cdot H(t) * G_{\sigma,b}^{(j)}(t) = \int_{-\infty}^{+\infty} G_{\sigma,b}^{(j)}(t-\tau) \cdot (x^* \cdot H)(\tau) \mathrm{d} \tau
\end{eqnarray*}

Without loss of generality, we can shift the original signal and $H(t)$ from $[0,T]$ to $[-T/2,T/2]$, by Property of $H(t)$, we know that if $s_3T/2\leq  |t| \leq T/2 $, then $H(t)\leq 2^{-O\Theta(\ell)}$.  Note that $G(t)$ is compact and has support $D B$, we also assume its compact region is $[-DB/2, DB/2]$ (Recall that $D=\frac{l}{\alpha \pi}$). 

Thus, by definition of convolution,
\begin{align*}
& ~z(t) \\
=  &~ \int_{-DB\sigma/2}^{DB\sigma/2} {G}^{(j)}_{\sigma,b}(s) \cdot (x \cdot H)(t-\tau) \mathrm{d} \tau \\
=  &~ \frac{1}{\sigma} \int_{-DB\sigma/2}^{DB\sigma/2} {G}(s/\sigma) e^{2\pi\i s(j/B-\sigma b)/\sigma} \cdot (x \cdot H)(t-\tau) \mathrm{d} \tau \\
\leq &~ \frac{1}{\sigma} \int_{-DB\sigma/2}^{DB\sigma/2} | {G}(\tau/\sigma)| \cdot |  (x \cdot H)(t-\tau)| \mathrm{d} \tau \\
\leq &~ \left(\frac{1}{\sigma} \int_{-DB\sigma/2}^{DB\sigma/2} | {G}(\tau/\sigma)| \mathrm{d} \tau \right) \cdot \left( \underset{|\tau|\leq DB\sigma/2}{\max} | (x\cdot H) (t-\tau)|  \right)
\end{align*}
So, if $t\notin [-T/2,T/2]$, then $t-s \notin [-T/2+DB\sigma/2, T/2-DB\sigma/2]$. By Property \RN{5} of $G(t)$, $|G(t)|\leq \poly(k,\log(1/\delta))$. Because of the parameter setting\footnote{We will set $B$ to be $O(k)$, $D$ to be $\poly(k)$ and $\sigma$ to be $T/\poly(k)$.}, we have the fact $[-T s_3/2, T s_3 / 2] \subseteq [-T/2+DB\sigma/2, T/2-DB\sigma/2] \subseteq [-T/2,T/2]$. Thus, we know $ T (1-s_3)/2> DB\sigma/2$, then for any $t-\tau \in [ -T/2, -T/2+DB\sigma/2] \cup [T/2-DB\sigma/2,T/2] =S$, then
 \begin{equation*}
|z(t)|^2 \lesssim \bigl( DB\sigma \cdot \frac{1}{\sigma} \cdot \poly(k,\log(1/\delta)) \bigr)^2 \cdot 2^{-\Theta(\ell)} \cdot k^4  \cdot \| x^*(t) \|_T^2 \lesssim \poly(k,\log(1/\delta) ) \cdot 2^{-\Theta(\ell)} \cdot \| x^*(t) \|_T^2.
 \end{equation*} 

 Thus, taking the integral over $S$,
 \begin{equation*}
\int_S |z(t)|^2 \mathrm{d} t \lesssim  |S| \cdot 2^{-\Theta(\ell)} \poly(k,\log(1/\delta) ) \cdot \| x^*(t) \|^2 \lesssim 2^{-\Theta(\ell)} T \| x^*(t) \cdot H(t)\|_T^2
 \end{equation*}
By property of filter function $H(t),\widehat{H}(f)$, we have
\begin{equation*}
|(x\cdot H)(t) |^2 \leq ( \frac{t}{T} )^{-\ell} \| x^*(t) \cdot H(t) \|_T^2 \text{~if~}t\geq 3T
\end{equation*}
Thus for any constant $\epsilon$,
\begin{equation}\label{eq:xHG_outside_T_is_at_most_xH_inside_region}
\int_{-\infty}^{-T/2} |z(t)|^2 \mathrm{d} t + \int_{T/2}^{+\infty} |z(t)|^2 \mathrm{d} t \lesssim 2^{-\ell} T\| x^*(t) \cdot H(t) \|_T^2 \leq 0.9 \epsilon \cdot \frac{\delta}{k} \int_{-T/2}^{T/2} |x^*(t)|^2 \mathrm{d} t
\end{equation}
where the last inequality follows by $\ell \gtrsim k\log (k/\delta)$.
Shifting the interval from $[-T/2,T/2]$ to $[0,T]$, the same result is still holding. Combining Equation (\ref{eq:xHG_infty_is_at_least_xH_inside_region}) and (\ref{eq:xHG_outside_T_is_at_most_xH_inside_region}) completes the proof of Property \RN{2}.

\end{proof}

We consider frequency $f^* \in \wh{g \cdot H}$ under $G^{(j)}_{\sigma,b}$ and show the energy of noise $g(t)$ is evenly distributed over $B$ bins on expectation.
\begin{lemma}\label{lem:noise_hashing}
Given any noise $g(t) : [0,T]\rightarrow \mathbb{C}$ and $g(t)=0, \forall t \notin [0,T]$. We have, $\forall j\in [B],$
\begin{equation*}
\underset{\sigma,b}{\mathbb{E}} \left[ \int_{-\infty}^{+\infty} | g(t) H(t) * G_{\sigma,b}^{(j)}(t) |^2 \mathrm{d} t \right] \lesssim \frac{1}{B} \int_{-\infty}^{+\infty} |g(t) H(t)|^2 \mathrm{d} t 
\end{equation*}
\end{lemma}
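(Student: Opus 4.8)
The plan is to expand the convolution, apply Parseval, and then take expectations over the hashing $(\sigma,b)$, reducing everything to a pointwise bound on $\E_{\sigma,b}\bigl[\,|\wh{G}^{(j)}_{\sigma,b}(f)|^2\,\bigr]$. Concretely, write $y = g\cdot H$, which is supported on (essentially) a bounded interval, so $\wh{y}$ is a well-defined $L^2$ function. By Definition~\ref{def:G_j_sigma_b} and the convolution theorem, $\widehat{y * G^{(j)}_{\sigma,b}} = \wh{y}\cdot \wh{G}^{(j)}_{\sigma,b}$, so by Parseval,
\[
\int_{-\infty}^{+\infty} \bigl| (g\cdot H) * G^{(j)}_{\sigma,b}(t)\bigr|^2 \mathrm{d}t
= \int_{-\infty}^{+\infty} \bigl|\wh{y}(f)\bigr|^2 \cdot \bigl|\wh{G}^{(j)}_{\sigma,b}(f)\bigr|^2 \mathrm{d}f.
\]
Taking the expectation over $(\sigma,b)$ and swapping the (nonnegative) integrand with the expectation by Tonelli, it suffices to show that for every fixed $f$,
\[
\E_{\sigma,b}\Bigl[\bigl|\wh{G}^{(j)}_{\sigma,b}(f)\bigr|^2\Bigr] \lesssim \frac{1}{B},
\]
since then the right-hand side is $\lesssim \frac1B \int |\wh{y}(f)|^2 \mathrm{d}f = \frac1B\int |g(t)H(t)|^2\mathrm{d}t$ by Parseval again.

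The remaining step is the pointwise bound, which is where the real work is. Recall $\wh{G}^{(j)}_{\sigma,b}(f) = \wh{G}^{\dis}\bigl(\tfrac{j}{B} - \sigma f - \sigma b\bigr) = \sum_{i\in\Z}\wh{G}(i + \tfrac{j}{B} - \sigma f - \sigma b)$. Here $j = h_{\sigma,b}(f^*)$ is a fixed bucket index and $b$ is drawn uniformly from $[0,1/\sigma]$ (so $\sigma b$ is uniform on $[0,1]$), independently of the relevant part of $\sigma$. Fixing $\sigma$ and conditioning on the value of $j$, the quantity $\tfrac{j}{B} - \sigma f - \sigma b \bmod 1$ is within statistical distance $O(1/A)$ of a uniform random variable on an interval of length $1$ (using Lemma~\ref{lem:wrapping}-type wrapping, since $b$ ranges over a full period $1/\sigma$ up to the endpoint effects; one may also just note $b$ is exactly uniform on a length-$1/\sigma$ interval so $\sigma b$ is exactly uniform on $[0,1]$, and the sum over $i\in\Z$ makes $\wh{G}^{\dis}$ genuinely $1$-periodic, so no wrapping error is incurred). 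Thus $\E_{b}\bigl[|\wh{G}^{\dis}(\,\cdot\,)|^2\bigr] = \int_0^1 |\wh{G}^{\dis}(u)|^2 \mathrm{d}u = \int_{-\infty}^{+\infty} |\wh{G}(u)|^2 \mathrm{d}u$. Now invoke the properties of $(G,\wh{G})$ from Lemma~\ref{lem:property_of_filter_G}: $\wh{G}$ is supported (up to the $\delta/k$-small tail) on $|u|\le \tfrac{2\pi}{2B} = \tfrac{\pi}{B}$, takes values in $[0,1]$ there, and is bounded by $\delta/k$ outside. Hence $\int |\wh{G}(u)|^2\mathrm{d}u \le \tfrac{2\pi}{2B}\cdot 1 + (\text{negligible tail}) = O(1/B)$. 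Averaging this over $\sigma \in [\tfrac{1}{B\Delta},\tfrac{2}{B\Delta}]$ (the bound is uniform in $\sigma$) and over the value of $j$ gives the claimed $\E_{\sigma,b}[|\wh{G}^{(j)}_{\sigma,b}(f)|^2]\lesssim 1/B$, completing the proof.

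The main obstacle I anticipate is handling the correlation between $j = h_{\sigma,b}(f^*)$ and the randomness in $(\sigma,b)$ carefully: $j$ is itself a function of $\sigma$ and $b$, so one cannot naively treat the shift $\tfrac{j}{B} - \sigma f - \sigma b$ as uniform without a little care. The clean way around this is to observe that $\wh{G}^{\dis}$ is exactly $1$-periodic in its argument (that is the point of summing over all $i\in\Z$), and $h_{\sigma,b}(f^*) = \mathrm{round}\bigl(\tfrac{B}{2\pi}\cdot 2\pi\sigma(f^*-b) \bmod 2\pi\bigr) = \mathrm{round}\bigl(B\,(\sigma(f^*-b)\bmod 1)\bigr)$, so $\tfrac{j}{B}$ and $\sigma b$ have a deterministic relationship modulo $1/B$; writing $\tfrac{j}{B} - \sigma b = -\sigma f^* + (\text{rounding error of size} \le \tfrac{1}{2B})$ reduces the shift to $-\sigma(f - f^*) + O(1/B)$. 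Then for fixed $\sigma$, as a function of the remaining randomness the shift concentrates on a window of width $O(1/B)$ around $-\sigma(f-f^*)\bmod 1$, and one integrates $|\wh{G}|^2$ over that window; since $\int|\wh{G}|^2 = O(1/B)$ regardless, the bound $\E[|\wh{G}^{(j)}_{\sigma,b}(f)|^2]\lesssim 1/B$ follows. Everything else — Parseval, Tonelli, the support/magnitude facts for $\wh{G}$ — is routine.
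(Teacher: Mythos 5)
Your proof is correct and follows essentially the same route as the paper's: Parseval in frequency domain, swap integral and expectation, then reduce to the pointwise bound $\E_{\sigma,b}\bigl[|\wh{G}^{(j)}_{\sigma,b}(f)|^2\bigr]\lesssim 1/B$ obtained from the $1$-periodicity of $\wh{G}^{\dis}$ and the fact that $\wh{G}$ is bounded by $1$ on an interval of width $O(1/B)$ and negligible elsewhere. Your closing worry about $j$ being correlated with $(\sigma,b)$ is a non-issue for this lemma as stated: $j$ is a fixed, universally quantified bucket index, not $h_{\sigma,b}(f^*)$, so there is no correlation to untangle and the digression can simply be dropped.
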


\begin{proof}
Because of Fourier Transform preserves $\ell_2$ norm, it suffices to prove
\begin{equation*}
\underset{\sigma,b}{\mathbb{E}} \left[ \int_{-\infty}^{+\infty} | \widehat{ g \cdot H}(f) \cdot \widehat{G}^{(j)}_{\sigma,b}(f)  |^2 \mathrm{d} f \right] \lesssim \frac{1}{B} \int_{-\infty}^{+\infty} | \widehat{ g\cdot H} (f)|^2 \mathrm{d} f
\end{equation*}
Since $\widehat{G}_{\sigma,b}^{(j)}(f)$ is a periodic function  and outputs at most $1$ on $O(1/B)$ fraction of the period, and outputs $\le \delta$ on other part. Thus, for any frequency $f$, we have
\begin{equation*}
\underset{\sigma,b}{ \mathbb{E}} \left[  | \widehat{G}_{\sigma,b}^{(j)} (f) |^2 \right] \lesssim \frac{1}{B}
\end{equation*}
Thus, we have
\begin{eqnarray*}
&&\underset{\sigma,b}{\mathbb{E}} \left[ \int_{-\infty}^{+\infty} | \widehat{ g \cdot H}(f) \cdot \widehat{G}_{\sigma,b}^{(j)}(f)  |^2 \mathrm{d} f \right] \\
& \leq &\underset{\sigma,b}{\mathbb{E}} \left[ \int_{-\infty}^{+\infty} | \widehat{ g \cdot H}(f) |^2 \cdot |\widehat{G}^{(j)}_{\sigma,b}(f)  |^2 \mathrm{d} f \right] \\
& = &  \int_{-\infty}^{+\infty} | \widehat{ g \cdot H}(f) |^2 \cdot  \underset{\sigma,b}{\mathbb{E}} [|\widehat{G}_{\sigma,b}^{(j)}(f)  |^2] \mathrm{d} f  \\
&\leq &\int_{-\infty}^{+\infty} | \widehat{ g \cdot H}(f) |^2  \mathrm{d} f \cdot \underset{f}{\max }\left[\underset{\sigma,b}{\mathbb{E}} \left|\widehat{G}_{\sigma,b}^{(j)}(f) \right|^2 \right] \\
& \lesssim & \frac{1}{B} \int_{-\infty}^{+\infty} |  \widehat{ g \cdot H}(f) |^2 \mathrm{d} f,
\end{eqnarray*}
which completes the proof.
\end{proof}

\begin{proofof}{Lemma \ref{lem:z_satisfies_two_properties}}
  Let $j = h_{\sigma, b}(f^*)$, signal 
  \begin{equation}\label{eq:def_z}
  \wh{z} = \widehat{x\cdot H} \cdot \widehat{G}^{ (j)}_{\sigma,b},
  \end{equation}
  and region $I_{f^*} = (f^* - \Delta, f^* + \Delta)$ with complement
  $\overline{I_{f^*}} = (-\infty, \infty)\setminus I_{f^*}$.  
  From Property I of $G$ in Lemma \ref{lem:property_of_filter_G}, we have that
  \[
  \widehat{G}^{ (l)}_{\sigma,b}(f) \gtrsim 1
  \]
  for all $f \in I_{f^*}$, so by~\eqref{eq:heavyfrequency}
  \begin{equation*}
    \int_{I_{f^*}} | \widehat{z}(f) |^2 \mathrm{d} f \geq  T\N^2/k.
  \end{equation*}
  On the other hand, $f^*$ is will-isolated with probability $0.9$:
  \begin{equation*}
    \int_{\overline{I_{f^*}}} | \widehat{z}(f) |^2 \mathrm{d} f \lesssim  \epsilon T\N^2/k.
  \end{equation*}
  Hence, $\wh{z}$ satisfies the
  Property I(in Definition \ref{def:one_cluster}) of one-mountain recovery. Combining Lemma~\ref{lem:full_proof_of_3_properties_true_for_z} and Lemma~\ref{lem:noise_hashing}, we know that $(x^*\cdot H)*G^{(j)}_{\sigma,b}$ always satisfies Property \RN{2}(in Definition \ref{def:one_cluster}) and $\int_{-\infty}^{+\infty} | g(t) H(t) * G_{\sigma,b}^{(j)}(t) |^2 \mathrm{d} t$ is less than $20 T \N^2/B \le \epsilon T \N^2/k$ with probability at least $0.95$, which indicates that $z=(x^*+g)\cdot H * G^{(j)}_{\sigma,b}$ satisfies Property \RN{2}(in Definition \ref{def:one_cluster}) with probability $0.95$. \end{proofof}

\subsection{Frequency recovery of \texorpdfstring{$k$}{k}-clustered signals}\label{sec:k_mountain_frequency_recovery}
The goal of this section is to prove that the frequencies found by Procedure \textsc{FrequencyRecoveryKCluster} in Algorithm \ref{alg:main_k} have some reasonable guarantee.

We first notice that Lemma~\ref{lem:z_satisfies_two_properties} and Lemma~\ref{lem:findfrequency} imply the following 
lemma by a union bound.
\begin{lemma}\label{lem:one-round-full}
  Let $x^*(t) = \overset{k}{ \underset{j=1}{\sum} } v_j e^{2\pi\i f_j
    t}$.  We observe $x(t)= x^*(t) +g(t)$, where $\|g(t) \|_T^2 \le
  c\|x^*(t)\|_T^2$ for a sufficiently small constant $c$ and define $\N^2 :
  = \| g(t) \|_T^2 + \delta \| x^*(t) \|_T^2$. Then
  Procedure \textsc{OneStage}
  returns a set $L$ of $O(k)$
  frequencies that covers the heavy frequencies of $x^*$.  In
  particular, for any $f^*$ with
  \begin{equation}\label{eq:heavyband}
    \int_{f^*-\Delta}^{f^*+\Delta} | \widehat{x\cdot H}(f) |^2 \mathrm{d} f \geq  T\N^2/k,
  \end{equation}
  there will exist an $\wt{f} \in L$ satisfying $|f^*-\widetilde{f} |
  \lesssim \sqrt{T \Delta} \cdot \Delta T$ with probability $0.99$.
\end{lemma}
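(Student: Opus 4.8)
The plan is to assemble this lemma directly from the two building blocks already in hand: Lemma~\ref{lem:z_satisfies_two_properties} (after a random hashing, each heavy frequency lands in a bucket whose signal is, with constant probability, a one-cluster signal around it) and Lemma~\ref{lem:findfrequency} (the center of an $(\epsilon,\Delta)$-one-cluster signal is recovered to within $\Delta\sqrt{\Delta T}$ with probability $1-2^{-\Omega(k)}$). So this is essentially a union bound, and the bulk of the write-up is specifying \textsc{OneStage} and checking that the hypotheses line up.

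First I would spell out \textsc{OneStage}: it draws a hashing $(\sigma,b)$ with $\sigma$ uniform on $[\frac{1}{B\Delta},\frac{2}{B\Delta}]$ and $B=\Theta(k)$; for each bucket $j\in[B]$ it forms $z^{(j)}=(x\cdot H)*G^{(j)}_{\sigma,b}$ (computable via \textsc{HashToBins}, Lemma~\ref{lem:hashtobins}), runs \textsc{FrequencyRecovery1Cluster} on $z^{(j)}$ to obtain a candidate $\wt f^{(j)}$, and returns $L=\{\wt f^{(j)}:j\in[B]\}$, so $|L|\le B=O(k)$. If one hashing gives only constant success probability below $0.99$, repeat with $O(1)$ fresh independent hashings and take the union of the candidate lists; this keeps $|L|=O(k)$.

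Next, fix any $f^*$ satisfying~\eqref{eq:heavyband}. By Lemma~\ref{lem:z_satisfies_two_properties} (which in turn uses Property~\RN{1} of $G$ in Lemma~\ref{lem:property_of_filter_G}, giving $\widehat G^{(j)}_{\sigma,b}\gtrsim 1$ on $I_{f^*}$, together with well-isolation), with probability $\ge 0.9$ over $(\sigma,b)$ the signal $z^{(j)}$ with $j=h_{\sigma,b}(f^*)$ is an $(\epsilon,\Delta)$-one-cluster signal around $f^*$. Conditioned on that event, Lemma~\ref{lem:findfrequency} applied to $z^{(j)}$ — its internal randomness is independent of the hashing — guarantees that \textsc{FrequencyRecovery1Cluster} outputs $\wt f^{(j)}$ with $|\wt f^{(j)}-f^*|\lesssim\Delta\sqrt{\Delta T}$, with probability $\ge 1-2^{-\Omega(k)}$. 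A union bound over the two failure modes gives per-hashing success $\ge 0.9-2^{-\Omega(k)}\ge 0.89$, and running $O(1)$ independent hashings drives the failure probability below $0.01$; hence some $\wt f\in L$ satisfies $|f^*-\wt f|\lesssim\Delta\sqrt{\Delta T}$ with probability $\ge 0.99$. (I would also flag that the bound in the statement should read $|f^*-\wt f|\lesssim\Delta\sqrt{\Delta T}$, consistent with Lemma~\ref{lem:findfrequency} and Theorem~\ref{thm:frequency_recovery_k_cluster}.) The phrase ``covers the heavy frequencies'' is then exactly this statement applied to each of the $\le k$ heavy $f^*$ individually; boosting to a simultaneous guarantee with probability $1-2^{-\Omega(k)}$ is left to Theorem~\ref{thm:frequency_recovery_k_cluster} via $O(k)$ repetitions of \textsc{OneStage}.

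There is no deep obstacle here — it is a union bound — but care is needed on three points: (a) confirming that ``$f^*$ is heavy for $\widehat{x\cdot H}$'' transfers to ``$z^{(j)}$ satisfies Property~\RN{1} of Definition~\ref{def:one_cluster}'', which is the content of Lemma~\ref{lem:z_satisfies_two_properties}; (b) verifying that the hashing randomness and the internal randomness of \textsc{FrequencyRecovery1Cluster} are independent so the union bound is legitimate; and (c) the constant-probability bookkeeping, so the conclusion is genuinely $\ge 0.99$ and not merely ``some constant.''
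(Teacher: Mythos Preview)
Your proposal is correct and matches the paper's approach exactly: the paper's entire proof is the single sentence ``Lemma~\ref{lem:z_satisfies_two_properties} and Lemma~\ref{lem:findfrequency} imply the following lemma by a union bound,'' and you have faithfully unpacked that union bound with appropriate attention to the probability constants. Your remark that the bound should read $\Delta\sqrt{\Delta T}$ (as in Lemma~\ref{lem:findfrequency} and Theorem~\ref{thm:frequency_recovery_k_cluster}) rather than $\sqrt{T\Delta}\cdot\Delta T$ is also correct; and note that in the paper's \textsc{OneStage} the boost to $0.99$ comes from choosing the hidden constants (e.g.\ $B=\Theta(k)$ large enough in Lemma~\ref{lem:often-well-isolated}) rather than from repeating the hashing, though your $O(1)$-repetition variant is equally valid.
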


\begin{lemma}
  Let $x^*(t) = \overset{k}{ \underset{j=1}{\sum} } v_j e^{2\pi\i f_j
    t}$ and $R=O(  k)$.  We observe $x(t)= x^*(t) +g(t)$, where $\|g(t) \|_T^2 \le
  c\|x^*(t)\|_T^2$ for a sufficiently small constant $c$ and choose $\N^2 :
  = \| g(t) \|_T^2 + \delta \| x^*(t) \|_T^2$. Then
  Algorithm $\textsc{MultipleStages}$
  returns a set $L$ of $O(k)$
  frequencies that approximates the heavy frequencies of $x^*$.  In
  particular, with probability $1-2^{-\Omega(k)}$, for any $f^*$ such that
  \begin{equation}
    \int_{f^*-\Delta}^{f^*+\Delta} | \widehat{x\cdot H}(f) |^2 \mathrm{d} f \geq  T\N^2/k,
  \end{equation}
  there will exist an $\wt{f} \in L$ satisfying $|f^*-\widetilde{f} |
  \lesssim \sqrt{T\Delta} \Delta$.
\end{lemma}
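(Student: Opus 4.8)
The plan is to realize \textsc{MultipleStages} by running \textsc{OneStage} (the single-stage procedure of Lemma~\ref{lem:one-round-full}) independently $R=\Theta(k)$ times, producing lists $L_1,\dots,L_R$ each of size $O(k)$, and then combining them by a vote-and-prune step. Since each stage costs $\poly(k,\log(1/\delta))\log^2(FT)$ time and $\poly(k,\log(1/\delta))\log(FT)$ samples and the $R=O(k)$ stages use fresh randomness, the total resource usage stays within the bounds of Theorem~\ref{thm:frequency_recovery_k_cluster}; the combining step only inspects the $O(Rk)=O(k^2)$ output frequencies and adds $\poly(k)$ time. So the content of the proof is the probability amplification: show that the pruned list has size $O(k)$ and still covers every heavy frequency to within $\rho:=\Delta\sqrt{\Delta T}$, except with probability $2^{-\Omega(k)}$.

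First I would make the ``for all heavy $f^*$'' quantifier finite. Since $\widehat{x^*\cdot H}$ is supported in $O(k)$ intervals of length $O(k\Delta_h)\le\Delta$, and since any width-$2\Delta$ window carrying $\gtrsim T\N^2/k$ of $\|\widehat{g\cdot H}\|_2^2\lesssim T\N^2$ is one of only $O(k)$ essentially disjoint windows, splitting $|a+b|^2\le 2|a|^2+2|b|^2$ shows that the set of $f^*$ satisfying~\eqref{eq:heavyband} lies in $O(k)$ intervals of radius $O(\Delta)\ll\rho$; hence it suffices to prove coverage for one representative $f_i^*$ in each, at the cost of an extra additive $\rho$. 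For a fixed $f_i^*$, Lemma~\ref{lem:one-round-full} gives that each independent stage produces a frequency within $\rho$ of $f_i^*$ with probability at least an absolute constant $p>\tfrac12$; so by a Chernoff bound the number of stages doing so is at least $\tfrac23 R$ except with probability $2^{-\Omega(R)}$. A union bound over the $O(k)$ representatives and the choice $R=Ck$ for a large enough $C$ makes the total failure probability $2^{-\Omega(k)}$, and from here I condition on this good event.

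Now define, for a frequency $\phi$, its vote count $w(\phi)$ to be the number of stages $r$ such that $L_r$ contains some frequency within $2\rho$ of $\phi$, and let $L$ be a maximal $4\rho$-separated subset of $\{\phi\in L_1\cup\cdots\cup L_R : w(\phi)\ge\tfrac12 R\}$; this is the output of \textsc{MultipleStages}. For the size bound: because $L$ is $4\rho$-separated, each output frequency of each stage lies within $2\rho$ of at most one element of $L$, hence $\sum_{\phi\in L}w(\phi)\le\sum_r|L_r|=O(Rk)$, which together with $w(\phi)\ge\tfrac12 R$ for all $\phi\in L$ gives $|L|=O(k)$. For coverage: at least $\tfrac23 R$ stages $r$ have a frequency $\psi_r\in L_r$ with $|\psi_r-f_i^*|\le\rho$, and any such $\psi_r$ has $w(\psi_r)\ge\tfrac23 R>\tfrac12 R$ (each stage contributing within $\rho$ of $f_i^*$ also contributes within $2\rho$ of $\psi_r$), so $\psi_r$ survives the threshold and, by maximality, lies within $4\rho$ of some $\wt f\in L$; thus $|f_i^*-\wt f|\le 5\rho$, and every heavy $f^*$ is within $6\rho\lesssim\Delta\sqrt{\Delta T}$ of an element of $L$.

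The only genuinely non-routine point is the $|L|=O(k)$ bound: the individual stages will in general emit ``false-positive'' frequencies not near any heavy cluster, and a priori these could accumulate to $\omega(k)$ over $R=\Theta(k)$ stages. The fix is exactly the disjoint-packing count above, which is tight because each stage emits only $O(k)$ frequencies total and $L$ is taken $4\rho$-separated, so the total vote mass $\sum_r|L_r|=O(Rk)$ must be shared among $|L|$ disjoint $2\rho$-balls each collecting $\ge R/2$ votes. The remaining steps — reducing the universal quantifier over heavy $f^*$ to $O(k)$ representatives and tracking the constant-factor blowup of the resolution through the triangle inequalities — are routine, as is the Chernoff amplification.
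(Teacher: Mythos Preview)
Your proof is correct and follows the same skeleton as the paper's: reduce the ``for all heavy $f^*$'' quantifier to $O(k)$ representatives via a net argument, amplify each representative's per-round constant success probability to $1-2^{-\Omega(k)}$ by Chernoff over $R=\Theta(k)$ independent rounds, and union-bound over the representatives.

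The one genuine difference is the combining step. You build $L$ by a vote-and-prune scheme (threshold on $w(\phi)$, then take a maximal $4\rho$-separated subset) and obtain $|L|=O(k)$ via the packing count $\sum_{\phi\in L}w(\phi)\le\sum_r|L_r|$. The paper instead isolates this step as a separate deterministic lemma (\textsc{MergedStages}, Lemma~\ref{lem:merging}): sort $\bigcup_r L_r$ and output every $(R/2)$-th element. The size bound $|L|=2\sum_r|L_r|/R=O(k)$ is then immediate, and coverage holds because the $\ge R/2$ frequencies within $\rho$ of any fixed $f^*$ form a contiguous block in the sorted list, so subsampling at stride $R/2$ is guaranteed to hit one of them. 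The paper's merging is a bit slicker (no packing analysis, no threshold parameter) and gives the tighter radius $\rho$ rather than your $6\rho$; your approach is equally valid and is perhaps the more natural thing to try first.
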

\begin{proof}
  Let $A \subset [-F, F]$ denote the set of frequencies $f^*$
  satisfying Equation~\eqref{eq:heavyband}.  Let $A'\subset [-F, F]$ denote a
  net of $A$ of distance $2\Delta$, so the intervals used
  in Equation~\eqref{eq:heavyband} for each $f^* \in A'$ are disjoint.  Then
  \[
  \abs{A'} \leq 2k + k =3k
  \]
  because each frequency in $x^*$ contributes to at most two of the
  intervals, and the total mass of $\wh{g}$ is at most $k$ times the
  threshold $T\N^2$.

  Let $L_1, \dotsc, L_R$ be the results of $R$ rounds of
  Algorithm \textsc{OneStage}.
  We say that a frequency $f
  \in A'$ is \emph{successfully recovered in round $r$} if there
  exists an $\wt{f} \in L_r$ such that $\abs{f - \wt{f}} \leq
  \Delta_a$, where 
  \[ 
  \Delta_a = \Delta \sqrt{T \Delta} \lesssim \sqrt{T\Delta} \Delta.
  \]  By
  Lemma~\ref{lem:one-round-full}, each frequency is successfully
  recovered with $0.8$ probability in each round.  Then by the Chernoff
  bound, with $1 - 2^{-\Omega(k)}$ probability, every $f \in A'$ will be
  successfully recovered in at least $0.6 R$ rounds.

  Then, by Lemma~\ref{lem:merging}, we output a set $L$ of $O(B)$
  frequencies such that every $f \in A'$ is within $\Delta_a$ of some
  $\wt{f} \in L$.  Hence every $f \in A$ is within $2\Delta_a$ of some
  $\wt{f} \in L$.
\end{proof}

\begin{lemma}\label{lem:merging}
  Let $L_1, \dotsc, L_R$ by sets of frequencies and $f^*$ be any
  frequency.  Then $L = \textsc{MergedStages}(L_1,$ $\dotsc,$ $L_R)$ is a set of
  $2\frac{\sum \abs{L_r}}{R}$ frequencies satisfying
  \[
  \min_{\wt{f} \in L} \abs{f^* - \wt{f}} \leq \median_{r \in [R]} \min_{f \in L_r} \abs{f^* - f}.
  \]
\end{lemma}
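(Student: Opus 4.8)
The plan is to read off \textsc{MergedStages} as a greedy interval‑piercing procedure and to reduce the displayed metric inequality to a one‑dimensional covering statement. For a frequency $f^*$ put $d_r(f^*)=\min_{f\in L_r}|f^*-f|$ (with $d_r(f^*)=\infty$ if $L_r=\emptyset$) and $m(f^*)=\median_{r\in[R]}d_r(f^*)$. Call a closed interval \emph{heavy} if it meets at least $\lceil R/2\rceil$ of the lists $L_1,\dots,L_R$. The two elementary observations driving the proof are: (i) at least $\lceil R/2\rceil$ of the values $d_r(f^*)$ are $\le m(f^*)$ under any reasonable convention for $\median$, so $[f^*-m(f^*),\,f^*+m(f^*)]$ is heavy; and (ii) heaviness is monotone under inclusion. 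Hence it suffices to show \textsc{MergedStages} outputs a set $L$ with $|L|\le 2\sum_r|L_r|/R$ that pierces every heavy interval: piercing $[f^*-m(f^*),f^*+m(f^*)]$ immediately yields $\min_{\wt f\in L}|f^*-\wt f|\le m(f^*)$, and if some $L_r$ is empty the median is $\infty$ and the bound is vacuous.

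First I would restrict attention to inclusion‑minimal heavy intervals. Every heavy interval contains a heavy subinterval whose endpoints lie in $S:=\bigcup_r L_r$ (shrink it to the extreme points of $S$ it contains; the set of lists met is unchanged), and since $S$ is finite there are only finitely many such intervals, so every heavy interval contains a minimal heavy interval; thus a set piercing all minimal heavy intervals pierces all heavy intervals. Minimal heavy intervals are pairwise incomparable, so sorting them by left endpoint also sorts them by right endpoint — the standard interval‑graph structure.

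Next, \textsc{MergedStages} runs the classical greedy point cover on these intervals: repeatedly take the as‑yet‑unpierced minimal heavy interval with the smallest right endpoint, put that right endpoint into $L$, and delete every minimal heavy interval it pierces; iterate until none remain. A one‑line exchange argument shows the resulting $L$ pierces every minimal heavy interval. For the size bound, let $I_1,\dots,I_N$ (in processing order) be the intervals whose right endpoints were added; by construction the left endpoint of $I_{j+1}$ lies strictly to the right of the right endpoint of $I_j$, so $I_1,\dots,I_N$ are pairwise disjoint. Each $I_j$ is heavy, hence met by at least $\lceil R/2\rceil$ lists; and since the $I_j$ are disjoint, each list $L_r$ meets at most $|L_r|$ of them. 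Therefore $N\lceil R/2\rceil \le \sum_{j=1}^N |\{r:L_r\cap I_j\ne\emptyset\}| = \sum_r |\{j:L_r\cap I_j\ne\emptyset\}| \le \sum_r|L_r|$, giving $|L|=N\le 2\sum_r|L_r|/R$.

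I expect the only step needing real care to be this last counting argument: one must \emph{charge} the output frequencies against a \emph{disjoint} family of heavy intervals, since otherwise a single list could be double‑counted and the $2/R$ factor would be lost — this is exactly why the greedy ``earliest right endpoint'' rule (rather than an arbitrary piercing set) is used. The remaining ingredients (monotonicity of heaviness, reduction to minimal intervals, incomparability, correctness of greedy piercing, and the median‑versus‑majority translation) are all standard and I would dispatch them quickly.
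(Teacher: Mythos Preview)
Your argument is correct for the procedure you describe, but your reading of \textsc{MergedStages} as greedy interval piercing is not the paper's algorithm. The paper's \textsc{MergedStages} is much cruder: concatenate $L_1,\dots,L_R$ into a single multiset of size $\sum_r|L_r|$, sort it, and output every $(R/2)$th entry. The proof is then two lines: at least $R/2$ of the lists contribute a frequency in $[f^*-\Delta,f^*+\Delta]$ (where $\Delta$ is the median), so the sorted multiset contains a contiguous run of length $\ge R/2$ inside this interval, and sampling every $(R/2)$th entry must hit that run.

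Your route---reducing to piercing all ``heavy'' intervals, passing to minimal heavy intervals, and running the earliest-deadline greedy---is sound and yields the same size bound via the disjoint-charging argument you outline. It is more conceptual (it isolates exactly the combinatorial property being exploited: any interval hit by a majority of lists must be pierced) and would survive in settings where the ``sort and subsample'' trick is less natural. But here it is substantially more machinery than needed: the paper's sorted-subsampling avoids defining heavy intervals, minimality, or greedy covering altogether, and the contiguous-run observation replaces your entire disjointness/counting step.
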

\begin{proof}
  The algorithm is to take the union, sort, and take every
  $\frac{R}{2}$th entry of the sorted list.

  Let $\Delta = \median_{r \in [R]} \min_{f \in L_r} \abs{f^* - f}$.
  We have that at least $R/2$ different $f \in \bigcup_r L_r$ lie
  within $\Delta$ of $f^*$.  This set forms a sequential subsequence
  of the sorted list of frequencies, so our output will include one.
\end{proof}

\subsection{Time and sample complexity of frequency recovery of $k$-clustered signals}
The goal of this section is to show that Procedure \textsc{FrequencyRecoveryKCluster} takes\\ $\poly(k,\log(1/\delta)) \log(FT)$ samples, and runs in $\poly(k,\log(1/\delta)) \log^2(FT)$ time.

In order to analyze the running time and sample complexity. We need to extend the one-cluster version Procedure \textsc{GetLegal1Sample} and \textsc{GetEmpirical1Energy} (in Algorithm \ref{alg:getempirical1enery_getlegal1sample}) to $k$-cluster version \textsc{GetLegalKSample} and \textsc{GetEmpiricalKEnergy}(in Algorithm \ref{alg:getempiricalkenergy_getlegalksample_onestage})\footnote{We omitted the proofs here, because the proofs are identical to the one-cluster situation.},
\begin{lemma}\label{lem:get_legal_k_sample}
Procedure \textsc{GetLegalKSample} in Algorithm \ref{alg:getempiricalkenergy_getlegalksample_onestage} runs Procedure \textsc{HashToBins} $R_{\repeats} = O((T \Delta)^3)$ times  to output two vectors $\wh{v}, \wh{v}'\in \mathbb{C}^B$ such that, for each $j\in [B]$,
\begin{equation*} 
|\wh{v}_j - \wh{v}'_j e^{2 \pi \i f_j \beta}| \le 0.08 (|\wh{v}_j |+|\wh{v}'_j|),
\end{equation*}
holds with probability at least 0.6.
\end{lemma}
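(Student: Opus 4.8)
The plan is to lift the single‑signal analysis of Section~\ref{sec:proof_good_samples} to all $B$ buckets at once, using Lemma~\ref{lem:hashtobins} to identify each bucket with an evaluation of a one‑cluster signal. Fix the permutation $P_{\sigma,a,b}$ used by \textsc{HashToBins}; by Lemma~\ref{lem:hashtobins}, if $\wh u\in\mathbb{C}^B$ is the DFT of the \textsc{HashToBins} output then $\wh u[j]=z^{(j)}(\sigma a)$, where $z^{(j)}=(x\cdot H)*G^{(j)}_{\sigma,b}$. Hence one call to \textsc{HashToBins} at offset $a$ together with one at offset $a+\beta/\sigma$ produces, for \emph{every} $j\in[B]$ simultaneously, the pair $\big(z^{(j)}(\sigma a),\,z^{(j)}(\sigma a+\beta)\big)$, i.e.\ exactly the type of sample pair that \textsc{GetLegal1Sample} consumes for a single signal. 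So the $R_{\repeats}=O((T\Delta)^3)$ calls made by \textsc{GetLegalKSample} give, for each fixed $j$, a pool of $\Theta((T\Delta)^3)$ such pairs of $z^{(j)}$, matching the sample budget of the one‑cluster procedure, and the $k$‑bucket analogue of Claim~\ref{cla:get_empirical_1_energy} (using $O((T\Delta)^2)$ further \textsc{HashToBins} calls) yields a per‑bucket estimate $z_{\emp}^{(j)}\in[0.8\|z^{(j)}\|_T,\,1.2\|z^{(j)}\|_T]$.

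Next, for a bucket $j=h_{\sigma,b}(f_j)$ into which a heavy frequency $f_j$ hashes, Lemma~\ref{lem:z_satisfies_two_properties} shows that $z^{(j)}$ is an $(\epsilon,\Delta)$‑one‑cluster signal around $f_j$ (with probability $\ge 0.9$ over the hashing; for buckets with no associated heavy $f_j$ the statement is vacuous). Conditioned on this, Claims~\ref{cla:energy_concentration}, \ref{cla:z_max} and~\ref{cla:close_samples} hold for $z^{(j)}$ verbatim, so running the remaining \textsc{GetLegal1Sample} logic inside bucket $j$ — restricting to the heavy indices $\{i:|z^{(j)}(\sigma a_i)|\ge 0.49\,z_{\emp}^{(j)}\}$ and resampling one index $i_j$ with weight $|z^{(j)}(\sigma a_{i_j})|^2+|z^{(j)}(\sigma a_{i_j}+\beta)|^2$ — outputs, with probability $\ge 0.6$, an index $i_j$ with
\[
| z^{(j)}(\sigma a_{i_j}+\beta) - z^{(j)}(\sigma a_{i_j})\,e^{2\pi\i f_j\beta}| \le 0.08\left(|z^{(j)}(\sigma a_{i_j})| + |z^{(j)}(\sigma a_{i_j}+\beta)|\right),
\]
by the exact argument proving Lemma~\ref{lem:get_legal_1_sample}. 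Setting $\wh v'_j=z^{(j)}(\sigma a_{i_j})$ and $\wh v_j=z^{(j)}(\sigma a_{i_j}+\beta)$, both read off the precomputed \textsc{HashToBins} outputs, gives the claimed bound coordinatewise. Since the argument is run with the hashing fixed and one bucket at a time, no union bound over $j$ is needed here — that union bound is deferred to Lemma~\ref{lem:one-round-full}.

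There is no real obstacle beyond bookkeeping, and the footnote's claim that the argument is identical to the one‑cluster case is accurate; the one place to be careful is that \textsc{HashToBins} does not sample $\sigma a$ uniformly on $[0,T]$: $a$ is drawn from a long interval and the effective sample point is $\sigma a$ taken modulo the period, so by Lemma~\ref{lem:wrapping} it is only $O(1/A)$‑close to uniform on $[0,T]$, and the shifted offset $a+\beta/\sigma$ must respect $\beta\le 2\wh{\beta}$ with $\wh{\beta}\eqsim \frac{1}{\Delta\sqrt{\Delta T}}$. One checks, exactly as in~\cite{HIKP12,PS15}, that these additive $O(1/A)$ errors are dominated by the constant slack already present in the one‑cluster claims (the $0.01$ versus $0.08$ gaps in Claim~\ref{cla:close_samples} and the final claims of Section~\ref{sec:proof_good_samples}), so every estimate in the single‑signal proof survives the substitution and the lemma follows.
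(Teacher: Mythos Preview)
Your proposal is correct and matches the paper's approach: the paper omits the proof entirely, saying only that it is identical to the one-cluster case (Lemma~\ref{lem:get_legal_1_sample}), and you correctly carry out that reduction via Lemma~\ref{lem:hashtobins}, treating each bucket $j$ as a one-cluster signal $z^{(j)}$ and rerunning the analysis of Section~\ref{sec:proof_good_samples} per bucket with the per-$j$ probability guarantee. Two minor inaccuracies that do not affect the substance: in the algorithm as written \textsc{HashToBins} is called at offsets $\alpha$ and $\alpha+\beta$, so the time gap between the two samples of $z^{(j)}$ is $\sigma\beta$, and it is $\sigma\beta$ that plays the role of the one-cluster ``$\beta$'' (consistent with the extra $1/\sigma$ in the definition of $\wh{\beta}$ in \textsc{LocateKSignal}); and your uniformity concern is not resolved by Lemma~\ref{lem:wrapping}, since no modular reduction is involved in the choice of $\alpha$ here---but the paper does not address this point either.
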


Using the definition of $z$ in Definition \ref{def:k_signal_recovery_z}.
\begin{claim}\label{cla:get_empirical_k_energy}
Procedure \textsc{GetEmpiricalKEnergy} in Algorithm \ref{alg:getempiricalkenergy_getlegalksample_onestage} runs Procedure \textsc{HashTobins} $R_{\est}O( (T \Delta)^2 )$ times to output a vector $z_{\emp}\in \mathbb{R}^B$ such that, for each $j\in [B]$,
\begin{equation*}
z_{\emp}^j  \in [0.8\|z^{(j)} \|_T, 1.2\|z^{(j)}\|_T],
\end{equation*}
holds with probability at least 0.9.
\end{claim}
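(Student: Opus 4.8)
The plan is to reduce, bin by bin, to the one-cluster energy estimate of Claim~\ref{cla:get_empirical_1_energy}. The only genuinely new point is that one call to \textsc{HashToBins} harvests a sample of every $z^{(j)}$ at once: by Lemma~\ref{lem:hashtobins}, a call under the permutation $P_{\sigma,a,b}$ returns a vector $\wh u\in\C^B$ with $\wh u[j]=z^{(j)}(\sigma a)$ for all $j\in[B]$, and by the wrapping lemma (Lemma~\ref{lem:wrapping}) drawing $a$ uniformly at random makes $\sigma a\bmod T$ nearly uniform on $[0,T]$. So I would have \textsc{GetEmpiricalKEnergy} make $R_{\est}=O((T\Delta)^2)$ such calls with fresh, independent $a_1,\dots,a_{R_{\est}}$, and set $(z_{\emp}^j)^2=\frac{1}{R_{\est}}\sum_{i}|z^{(j)}(\sigma a_i)|^2$ for each $j$.

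Fix a bin $j$ and let $f^*$ be the heavy frequency hashing to it; put $I=I_{f^*}=(f^*-\Delta,f^*+\Delta)$ and split $z^{(j)}=z^{(j),I}+z^{(j),\ov I}$ exactly as in Section~\ref{sec:proof_good_samples}. Conditioned on $f^*$ being well-isolated under $(\sigma,b)$---probability $\ge 0.9$ by Lemma~\ref{lem:often-well-isolated}, cf.\ Lemma~\ref{lem:z_satisfies_two_properties}---the ``signal part'' $z^{(j),I}$ has Fourier support in a single interval of width $2\Delta$ and is time-concentrated on $[0,T]$, so Claim~\ref{cla:z_max} applies verbatim and gives $\max_{t\in[0,T]}|z^{(j),I}(t)|\le 2\sqrt{\Delta T}\,\|z^{(j),I}\|_T$, while well-isolation (Definition~\ref{def:k_signal_recovery_z}) together with the computation in Claim~\ref{cla:energy_concentration} gives $\|z^{(j),\ov I}\|_T^2\lesssim\epsilon\,\|z^{(j),I}\|_T^2$. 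From that point the bookkeeping is identical to the proof of Claim~\ref{cla:get_empirical_1_energy}: expand $|z^{(j),I}(\sigma a_i)+z^{(j),\ov I}(\sigma a_i)|^2$; the signal term concentrates within $1\pm 0.01$ of $\|z^{(j),I}\|_T^2$ by Lemma~\ref{lem:max_is_bounded_imply_sample_is_small} (with max/average ratio $4\Delta T$, for which $R_{\est}=O((T\Delta)^2)$ is more than enough); the noise term is $\lesssim\epsilon\,\|z^{(j),I}\|_T^2$ with probability $\ge 0.92$ by Markov; and the cross term is $\lesssim\sqrt{\epsilon}\,\|z^{(j),I}\|_T^2$ by Cauchy--Schwarz. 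Hence $z_{\emp}^j\in[0.9,1.1]\,\|z^{(j),I}\|_T\subseteq[0.8,1.2]\,\|z^{(j)}\|_T$, using Property~\RN{2} of Definition~\ref{def:one_cluster} to pass between $\|z^{(j),I}\|_T$ and $\|z^{(j)}\|_T$.

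To make the estimate hold for every $j\in[B]$ simultaneously with probability $\ge 0.9$, I would either union-bound over the $B=\Theta(k)$ bins---driving the per-bin failure to $O(1/B)$ costs only an $O(\log B)$ factor in $R_{\est}$, absorbed since $B\le T\Delta$---or observe that it suffices to control the $O(k)$ bins into which a heavy cluster can land, each handled at constant probability. The time and sample accounting is then immediate, since each of the $R_{\est}=O((T\Delta)^2)$ rounds is a single \textsc{HashToBins} call, so the stated bound follows. The one place where care is needed---and the reason the authors can call this ``identical to the one-cluster situation''---is checking that Claims~\ref{cla:z_max} and~\ref{cla:energy_concentration} transfer to $z^{(j)}$; this is precisely where well-isolation takes over the role that Property~\RN{1} of Definition~\ref{def:one_cluster} plays in the one-cluster argument, and once that substitution is made there is nothing further to do.
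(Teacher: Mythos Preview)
Your reduction to the one-cluster estimate is exactly what the paper intends; it omits the proof precisely because, once Lemma~\ref{lem:hashtobins} hands you $\wh u[j]=z^{(j)}(\sigma a)$ for every bin in one shot, the bookkeeping is literally that of Claim~\ref{cla:get_empirical_1_energy} with $z^{(j)}$ in place of $z$. Your last paragraph also pinpoints the only substantive check the paper skips: that well-isolation is what makes Claims~\ref{cla:z_max} and~\ref{cla:energy_concentration} transfer to $z^{(j)}$.

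Two small corrections. First, the claim is per bin (``for each $j$''), not simultaneous over $[B]$, so your union-bound paragraph is unneeded; and note that your first option there would not actually work, since the $0.92$ in the noise step of Claim~\ref{cla:get_empirical_1_energy} comes from Markov's inequality and does not shrink by taking more samples---only the Chernoff step for $z^{(j),I}$ improves with $R_{\est}$. Second, well-isolation is an event in the randomness of $(\sigma,b)$, which are \emph{fixed} inputs to \textsc{GetEmpiricalKEnergy}; the clean reading (consistent with the preamble ``Using the definition of $z$ in Definition~\ref{def:k_signal_recovery_z}'') is that the claim is implicitly conditional on $z^{(j)}$ being an $(\epsilon,\Delta)$-one-cluster signal, and the $0.9$ refers only to the randomness of the $\alpha_i$. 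Your invocation of Lemma~\ref{lem:wrapping} for the uniformity of the sample points is also a slight mismatch of statement to use, but the underlying point is fine and the paper is equally casual about the range of $a$.
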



\begin{claim}\label{cla:locate_k_signal}
Algorithm \textsc{LocateKSignal} in Algorithm \ref{alg:locateksignal_locatekinner_hashtobins} uses $O(\poly(k,\log(1/\delta)) \cdot \log(FT) )$, and runs in $O(\poly(k,\log(1/\delta)) \cdot \log^2(FT) )$. 
\end{claim}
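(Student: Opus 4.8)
The plan is to mirror the one-cluster running-time analysis from Section~\ref{sec:proof_one_frequency_rec}, using that \textsc{LocateKSignal} is nothing more than the $B=\Theta(k)$-fold ``hashed'' version of \textsc{Locate1Signal}, with each individual sample query replaced by a call to \textsc{HashToBins}. First I would count the \textsc{HashToBins} invocations. As in the one-cluster case, \textsc{LocateKSignal} calls \textsc{GetEmpiricalKEnergy} once, which by Claim~\ref{cla:get_empirical_k_energy} uses $R_{\est}=O((T\Delta)^2)$ calls, and then runs \textsc{LocateKInner} through a $t$-ary search of depth $D_{\max}$, each round drawing $R_{\loc}$ legal samples, each produced by \textsc{GetLegalKSample} with $R_{\repeats}=O((T\Delta)^3)$ calls (Lemma~\ref{lem:get_legal_k_sample}). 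So the total number of \textsc{HashToBins} calls is $R_{\est}+R_{\repeats}\cdot D_{\max}\cdot R_{\loc}$, and each such call reads $B=\Theta(k)$ time-domain samples of $x$.

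Next I would substitute the parameter settings. Since $\Delta=\polydelta$, we have $T\Delta=\poly(k,\log(1/\delta))$, hence $R_{\est},R_{\repeats}=\poly(k,\log(1/\delta))$; and $\log(F/\Delta)=O(\log(FT))$, so with $t=\Theta(\log(F/\Delta))$ and $R_{\loc}=\Theta(\log_{1/c}(t/c))$ exactly as in the proof of the one-cluster locate lemma, $D_{\max}=O(\log_t(F/\Delta))$ and $D_{\max}R_{\loc}=O(\log(FT))$. Therefore the sample complexity is $B\cdot(R_{\est}+R_{\repeats}D_{\max}R_{\loc})=\poly(k,\log(1/\delta))\cdot\log(FT)$, as claimed.

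For the running time I would tally three contributions inside each \textsc{HashToBins} call: (i) approximately evaluating the filter $H$ at each of the $B$ sampled time points, at cost $\poly(k,\log(1/\delta))$ per point by Lemma~\ref{lem:approximation_H_in_poly_time}; (ii) the $B$-dimensional DFT of $u$, at cost $O(B\log B)$; and (iii) multiplying in the values of $G$, again $\poly(k,\log(1/\delta))$ per coordinate. So each call costs $\poly(k,\log(1/\delta))$ time, and all calls together cost $\poly(k,\log(1/\delta))\cdot(R_{\est}+R_{\repeats}D_{\max}R_{\loc})=\poly(k,\log(1/\delta))\log(FT)$. On top of this, in each of the $D_{\max}$ rounds \textsc{LocateKInner} assigns votes across the $t$ candidate regions in every bin, costing $O(R_{\loc}\cdot t\cdot B)$ per round and hence $O(D_{\max}R_{\loc}\cdot t\cdot B)=\poly(k,\log(1/\delta))\log^2(FT)$ overall, since $t=\Theta(\log(FT))$ while $D_{\max}R_{\loc}=O(\log(FT))$. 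Summing gives the stated $\poly(k,\log(1/\delta))\log^2(FT)$ bound. The one point to be careful about is the bookkeeping of the two logarithmic factors in the running time—one arising from the depth $D_{\max}$ of the $t$-ary frequency search and the other from the width $t$ of each step—and confirming that using the filter \emph{approximations} (not exact evaluation) keeps every per-sample cost at $\poly(k,\log(1/\delta))$; this is the main place where care is needed, but it follows directly from the filter lemmas already established.
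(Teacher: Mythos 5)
Your proposal is correct and follows essentially the same route as the paper: count $\textsc{HashToBins}$ invocations as $R_{\est}+R_{\repeats}D_{\max}R_{\loc}$, substitute $T\Delta=\poly(k,\log(1/\delta))$ and $D_{\max}R_{\loc}=\Theta(\log(FT))$, charge $\poly(k,\log(1/\delta))$ per sample for evaluating $H$, and add the voting cost $O(D_{\max}R_{\loc}\cdot Bt)$ with $t=\Theta(\log(FT))$ to account for the second logarithmic factor. The one small imprecision is that each $\textsc{HashToBins}$ call reads $BD$ samples of $x$ (where $D=\Theta(\log(k/\delta))$ comes from the support of $G$), not merely $B=\Theta(k)$; since $BD$ is still $\poly(k,\log(1/\delta))$, this does not affect the stated asymptotics.
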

\begin{proof}
We first calculate the number of samples. All the samples is basically all the Fourier samples, each time needs $B\log(k/\delta)$. In total it calls $\textsc{HashToBins}$ $O(R_{\est} + R_{\repeats} D_{\max} R_{\loc})$ times where $D_{\max} R_{\loc} = \Theta(\log(FT) )$ by similar analysis as one-cluster frequency recovery. Thus, the total number of samples is 
\begin{eqnarray*}
(R_{\est} + R_{\repeats} D_{\max} R_{\loc}) B \log(k/\delta)  = \poly(k,\log(1/\delta)) \cdot \log(FT).
\end{eqnarray*}

Then, we analyze the running time.

The expected running time includes the following parts: the first part is running Procedure \textsc{HashToBins} $O(R_{\est} + R_{\repeats} D_{\max} R_{loc})$ times, each run takes $O(B \log(k/\delta) + B\log B)$ samples. For each such sample we need $\poly(k,\log(1/\delta))$ time to compute $H(t)$ according to Lemma \ref{lem:approximation_H_in_poly_time} and there are $\poly(k,\log(1/\delta)) \log(FT))$ many samples; the second part is updating the counter $v$,which takes $O(D_{\max}R_{loc} Bt)$ time. Thus, in total
\begin{align*}
 &\poly(k,\log(1/\delta)) \cdot O(R_{\est} + R_{\repeats} D_{\max} R_{\loc}) \cdot O(B \log(k/\delta) + B\log B) + O(D_{\max}R_{loc} Bt) \\
= &\poly(k,\log(1/\delta)) \cdot \log^2(FT),
\end{align*}
where by similar analysis as one-cluster recovery, $t= \Theta(\log(FT))$ and $D_{\max} R_{\loc} = \Theta(\log(FT))$.
\end{proof}

To boost the success probability, Procedure \textsc{MultipleStages} reruns Procedure \textsc{LocateKSignal} $O(k)$ times. At the end, Procedure \textsc{FrequencyRecoveryKCluster} combining Procedure \textsc{MultipleStages} and \textsc{MergedStages} directly, and the running time and sample complexity of \textsc{MultipleStages} are dominating \textsc{MergedStages}. Thus we have
\begin{lemma}
Procedure \textsc{FrequencyRecoveryKCluster} in Algorithm \ref{alg:main_k} uses $O(\poly(k,\log(1/\delta)) \cdot \log(FT) )$, and runs in $O(\poly(k,\log(1/\delta)) \cdot \log^2(FT) )$.
\end{lemma}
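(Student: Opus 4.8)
The plan is to reduce the cost of \textsc{FrequencyRecoveryKCluster} entirely to the cost of \textsc{LocateKSignal}, whose sample and time complexity are already pinned down by Claim~\ref{cla:locate_k_signal}. Recall that \textsc{FrequencyRecoveryKCluster} runs \textsc{MultipleStages} and then \textsc{MergedStages}, and that \textsc{MultipleStages} simply calls \textsc{LocateKSignal} for $R=O(k)$ independent rounds, each on fresh Fourier samples. Since the two procedures are executed one after the other, the sample complexity and the running time of \textsc{FrequencyRecoveryKCluster} are each bounded by the sum of the corresponding quantities for \textsc{MultipleStages} and \textsc{MergedStages}.

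First I would bound \textsc{MultipleStages}. By Claim~\ref{cla:locate_k_signal}, each invocation of \textsc{LocateKSignal} uses $O(\poly(k,\log(1/\delta)) \cdot \log(FT))$ samples and $O(\poly(k,\log(1/\delta)) \cdot \log^2(FT))$ time. Repeating it $R=O(k)$ times multiplies both bounds by $O(k)$, which is absorbed into the $\poly(k,\log(1/\delta))$ factor; hence \textsc{MultipleStages} still uses $O(\poly(k,\log(1/\delta)) \cdot \log(FT))$ samples and $O(\poly(k,\log(1/\delta)) \cdot \log^2(FT))$ time.

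Next I would bound \textsc{MergedStages}. By Lemma~\ref{lem:merging}, this step only takes the union of the $R=O(k)$ frequency lists $L_1,\dots,L_R$ produced above, each of size $O(B)=O(k)$, sorts the resulting $O(k^2)$ frequencies, and outputs every $(R/2)$-th entry; crucially, it never queries $x$ again. So it contributes no samples and only $O(k^2\log k)$ additional time, both dominated by the bounds for \textsc{MultipleStages}. Adding the two stages yields the claimed $O(\poly(k,\log(1/\delta)) \cdot \log(FT))$ sample complexity and $O(\poly(k,\log(1/\delta)) \cdot \log^2(FT))$ running time for \textsc{FrequencyRecoveryKCluster}.

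This argument is essentially bookkeeping once Claim~\ref{cla:locate_k_signal} is available; the only points that need care are checking that \textsc{MergedStages} uses no fresh samples of $x$ (so the sample bound does not acquire an extra additive term) and that the $O(k)$ repetitions inside \textsc{MultipleStages} are the sole source of polynomial blow-up. There is no genuine obstacle beyond this accounting.
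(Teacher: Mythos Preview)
Your proposal is correct and follows the same approach as the paper: reduce to Claim~\ref{cla:locate_k_signal}, multiply by the $R=O(k)$ repetitions in \textsc{MultipleStages}, and observe that \textsc{MergedStages} draws no new samples and costs only $\poly(k)$ time. The paper's proof is in fact terser than yours, merely asserting that \textsc{MultipleStages} dominates \textsc{MergedStages} without spelling out the $O(k^2\log k)$ sorting bound.
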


\section{One-cluster Signal Recovery}\label{sec:one_cluster_recovery}

\subsection{Overview}
In this section, we consider $x^*$ whose frequencies in $\widehat{x}^{*}$ are in the range $[f_0 - \Delta', f_0 + \Delta']$ for some frequency $f_0$ and $\Delta'>0$ and provide an algorithm to approximate it by a polynomial. 

We fix $T$ in this section and recall that $\langle f(t),g(t) \rangle_T := \frac{1}{T} \int_0^T  f(t) \overline{g}(t) \mathrm{d}t $ such that $\|e^{2\pi \i f_i t}\|_T=\sqrt{\langle e^{2\pi \i f_i t},  e^{2\pi \i f_i t} \rangle_T} =1$. For convenience, given $\overset{k}{ \underset{j=1}{ \sum } } v_j e^{2\pi \i f_j t}$, we say the frequency gap of this signal is $ \underset{i\neq j}{\min}|f_i-f_j|$.

For simplicity, we first consider frequencies clustered around 0. The main technical lemma in this section is that any signal $x^*$ with bounded frequencies in $\widehat{x}^{*}$ can be approximated by a low-degree polynomial on $[0,T]$.

\restate{lem:low_degree_approximates_concentrated_freq}

One direct corollary is that when $\widehat{x}^{*}$ are in the range $[f_0+\Delta', f_0+\Delta']$, we can approximate $x^*$ by $P(t) \cdot e^{2 \pi \i f_0 t}$ for some low degree polynomial $P$. 

We give an overview of this section first.
We first show some technical tools in Section \ref{sec:bounding_det_of_gram}, \ref{sec:bounded_and_unboundedly}. In Section \ref{sec:existence_of_bounded_gap_signal}, using those tools, we can show for any $k$-Fourier-sparse signal, there exists another $k$-Fourier-sparse signal with bounded frequency gap close to the original signal.  In Section \ref{sec:transfer_bounded_gap_signal_to_polynomial}, we show that for any $k$-Fourier-sparse signal with bounded frequency gap, then there exists a low degree polynomial close to it. In Section \ref{sec:transfer_polynomial_to_fourier}, we show how to transfer low degree polynomial back to a Fourier-sparse signal. Combining all the above steps finishes the proof of Lemma \ref{lem:low_degree_approximates_concentrated_freq}.

We apply Theorem \ref{thm:frequency_recovery_1_cluster} of frequency estimation on $x^*$ to obtain an estimation $\wt{f_0}$ of $f_0$ and use Theorem~\ref{thm:accurate_poly_learning} on the approximation $Q(t)e^{2\pi \i \wt{f_0}t}$ of $x^*$ to recover the signal. We summarize this result as follows.
\begin{theorem}[One-cluster Signal Recovery]\label{thm:cft1cluster}
Let $x^*(t) = \overset{k}{ \underset{j=1}{\sum}} v_j e^{2\pi \i f_j t}$ where $\forall j \in [k], |f_j -f_0|\leq \Delta$ and $x(t) = x^*(t) + g(t)$ be our observable signal. For any $\delta>0$ and any $T>0$, let $\N^2 : = \|g\|^2_T + \delta \|x^*\|_T^2$.  Procedure \textsc{CFT1Culster} in Algorithm \ref{alg:main_1} finds a polynomial $P(t)$ of degree at most $d=O \left( (T \Delta_h + T \Delta)^{1.5}  + k^3 \log k + k \log 1/\delta \right)$ and a frequency $\widetilde{f}_0$ such that
\begin{equation}
\| P(t) \cdot e^{2\pi \i \widetilde{f}_0 t} - x^*(t)  \|^2_T \lesssim \N^2
\end{equation}
The algorithm uses $O(kd) + \poly(k,\log(1/\delta)) \log(FT) $ samples, run in $O(k d^\omega ) + \poly(k, \log(1/\delta))\log^2(FT)$ time, and succeeds with probability at least $1-2^{-\Omega(k)}$.
\end{theorem}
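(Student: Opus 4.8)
The plan is to stack three black boxes already established: single-cluster frequency estimation (Theorem~\ref{thm:frequency_recovery_1_cluster}), the approximation of a bandwidth-limited Fourier-sparse signal by a low-degree polynomial (Lemma~\ref{lem:low_degree_approximates_concentrated_freq}), and robust polynomial interpolation with boosted failure probability (Theorem~\ref{thm:accurate_poly_learning}). Before invoking the first I would dispose of the large-noise case, since Theorem~\ref{thm:frequency_recovery_1_cluster} needs $\|g\|_T^2 \le c\|x^*\|_T^2$: if instead $\|g\|_T^2 > c\|x^*\|_T^2$ then $\N^2 \ge \|g\|_T^2 \gtrsim \|x^*\|_T^2$, so outputting $P \equiv 0$ and $\widetilde{f}_0 = 0$ already gives $\|P(t)e^{2\pi\i\widetilde{f}_0 t} - x^*(t)\|_T^2 = \|x^*\|_T^2 \lesssim \N^2$. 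So from now on assume $\|g\|_T^2 \le c\|x^*\|_T^2$, and run Procedure \textsc{FrequencyRecovery1Cluster} with cluster width $\Delta + \Delta_h$; with probability $1-2^{-\Omega(k)}$ it returns $\widetilde{f}_0$ with $|\widetilde{f}_0 - f_0| \lesssim (\Delta + \Delta_h)\sqrt{T(\Delta+\Delta_h)}$, spending $\poly(k,\log(1/\delta))\log(FT)$ samples and $\poly(k,\log(1/\delta))\log^2(FT)$ time.

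Next I would re-center the cluster by modulation. Put $x'(t) = e^{-2\pi\i\widetilde{f}_0 t}x^*(t) = \sum_{j=1}^k v_j e^{2\pi\i(f_j-\widetilde{f}_0)t}$, which is $k$-Fourier-sparse with all frequencies in $[-\Delta'',\Delta'']$ for $\Delta'' := \Delta + |\widetilde{f}_0 - f_0| \lesssim (\Delta+\Delta_h)\sqrt{T(\Delta+\Delta_h)} + \Delta$, and note $\|x'\|_T = \|x^*\|_T$ since the modulating factor has unit modulus. Applying Lemma~\ref{lem:low_degree_approximates_concentrated_freq} to $x'$ with the same $\delta$ yields a polynomial $Q(t)$ of degree at most
\[ O\big(T\Delta'' + k^3\log k + k\log(1/\delta)\big) \;=\; O\big((T\Delta_h + T\Delta)^{1.5} + k^3\log k + k\log(1/\delta)\big) \;=\; d, \]
where I used $T\Delta'' \lesssim (T(\Delta+\Delta_h))^{1.5} + T\Delta$ and absorbed the lower-order $T\Delta$ term (when $T\Delta+T\Delta_h<1$ it is $O(1)$, otherwise $\le (T\Delta+T\Delta_h)^{1.5}$), together with $\|Q(t) - x'(t)\|_T^2 \le \delta\|x'\|_T^2 = \delta\|x^*\|_T^2 \le \N^2$.

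Finally I would recover $Q$ robustly. One can evaluate $y(t) := e^{-2\pi\i\widetilde{f}_0 t}x(t) = Q(t) + g'(t)$ at any point, where $g'(t) = e^{-2\pi\i\widetilde{f}_0 t}g(t) + (x'(t) - Q(t))$ obeys $\|g'\|_T \le \|g\|_T + \sqrt\delta\|x^*\|_T$, hence $\|g'\|_T^2 \lesssim \|g\|_T^2 + \delta\|x^*\|_T^2 = \N^2$. Run Procedure \textsc{RobustPolynomialLearning}$^+$ (Theorem~\ref{thm:accurate_poly_learning}) on $y$ with degree $d$ and failure probability $p = 2^{-\Omega(k)}$: it returns a degree-$d$ polynomial $P(t)$ with $\|P(t) - Q(t)\|_T^2 \lesssim \|g'\|_T^2 \lesssim \N^2$, using $O(d\log(1/p)) = O(dk)$ samples and $O(d^\omega\log(1/p)) = O(d^\omega k)$ time. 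Since $\|P(t)e^{2\pi\i\widetilde{f}_0 t} - x^*(t)\|_T = \|P(t) - x'(t)\|_T$, the triangle inequality gives
\[ \|P(t)e^{2\pi\i\widetilde{f}_0 t} - x^*(t)\|_T \;\le\; \|P(t) - Q(t)\|_T + \|Q(t) - x'(t)\|_T \;\lesssim\; \N + \sqrt\delta\|x^*\|_T \;\lesssim\; \N, \]
using $\N \ge \sqrt\delta\|x^*\|_T$; squaring gives the claimed bound. Summing the two stages yields $O(kd) + \poly(k,\log(1/\delta))\log(FT)$ samples and $O(kd^\omega) + \poly(k,\log(1/\delta))\log^2(FT)$ time, and a union bound over the two $2^{-\Omega(k)}$ failure events gives the stated success probability.

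The argument is essentially a composition of black boxes, so no single step is deep; the one place that needs care is the degree bookkeeping — tracking how the frequency-recovery error $|\widetilde{f}_0 - f_0| \lesssim (\Delta+\Delta_h)\sqrt{T(\Delta+\Delta_h)}$ becomes the post-modulation bandwidth $\Delta''$ of $x'$ and then, multiplied by $T$ inside Lemma~\ref{lem:low_degree_approximates_concentrated_freq}, produces exactly the $(T\Delta_h + T\Delta)^{1.5}$ term of $d$ — and the one conceptual point that must not be skipped is peeling off the $\|g\|_T^2 > c\|x^*\|_T^2$ case, since otherwise the hypothesis of Theorem~\ref{thm:frequency_recovery_1_cluster} need not hold.
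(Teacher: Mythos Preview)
Your proof is correct and follows essentially the same three-stage pipeline as the paper: invoke Theorem~\ref{thm:frequency_recovery_1_cluster} to get $\widetilde{f}_0$, modulate and apply Lemma~\ref{lem:low_degree_approximates_concentrated_freq} to exhibit a degree-$d$ polynomial approximant, then run \textsc{RobustPolynomialLearning}$^+$ (Theorem~\ref{thm:accurate_poly_learning}) with $R=O(k)$ repetitions to recover it. You are actually slightly more careful than the paper in one respect: the paper's proof silently invokes Theorem~\ref{thm:frequency_recovery_1_cluster} without checking its hypothesis $\|g\|_T^2 \le c\|x^*\|_T^2$, whereas you explicitly peel off the large-noise case with the trivial output $P\equiv 0$.
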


\begin{proof}
We apply the algorithm in Theorem \ref{thm:frequency_recovery_1_cluster} to obtain an estimation $\widetilde{f}_0$ with $\poly(k)\log(FT)$ samples and $\poly(k)\log^2(FT)$ running time such that $|\widetilde{f}_0 - f_0| \lesssim (\Delta_h+ \Delta) \sqrt{T (\Delta_h+ \Delta)}$ holds with probability at least $1-2^{-\Omega(k)}$. Notice that $|f_j-\widetilde{f}_0| \le |f_j-f_0| + |\widetilde{f}_0 - f_0|\lesssim (T (\Delta_h+ \Delta))^{1.5}$.


We consider $x'(t)=e^{-2\pi \i \widetilde{f}_0 t}x(t)=\overset{k}{ \underset{j=1}{\sum} } v_j e^{2\pi \i (f_j-\widetilde{f}_0) t}$ . By Lemma \ref{lem:low_degree_approximates_concentrated_freq}, there exists a polynomial $P(t)$ of degree at most \begin{equation*}
d=O\left((T \Delta_h + T \Delta)^{1.5} + k^3 \log k + k \log 1/\delta\right)
\end{equation*} such that it approximates $x'$ by
\begin{equation*}
\|P(t)-x'(t)\|_T \le \frac{\delta}{4} \|x'(t)\|_T = \frac{\delta}{4} \|x^*(t)\|_T.
\end{equation*}
which indicates $\|Q(t) -e^{-2 \pi \i \widetilde{f}_0 t} \cdot x^*(t) \|_T \le \frac{\delta}{4} \|x^*(t) \|_T$.

Because we can sample $x(t)$, we can also sample  $e^{-2 \pi \i \widetilde{f}_0 t} \cdot x(t)=Q(t) + g'(t)$ for $g'(t)=e^{-2 \pi \i \widetilde{f}_0 t} \cdot g(t) + (e^{-2 \pi \i \widetilde{f}_0 t} \cdot x^*(t)-Q(t))$. Hence we apply the algorithm in Theorem \ref{thm:accurate_poly_learning} and choose $R=O(k)$ in that proof. Then Procedure \textsc{RobustPolynomialLearning$^+$} takes $O(kd)$ samples and $O(kd^\omega)$ time to find a degree $d$ polynomial $P(t)$ approximating $Q(t)$  such that 
\begin{equation*}
\|P(t)-Q(t)\|_T \lesssim  \|g'(t)\|_T,
\end{equation*} 
holds with probability at least $1-2^{-\Omega(k)}$. It indicates \[\|P(t)-e^{-2 \pi \i \widetilde{f}_0 t} \cdot x^*(t)\|_T \lesssim \|P(t) - Q(t)\|_T + \|Q(t) - x^*(t)\| \lesssim \delta \|x^*(t)\|_T + \|g(t)\|_T \eqsim \N.\] Therefore we know $\|e^{2 \pi \i \widetilde{f}_0 t}\cdot P(t)-x^*(t)\|_T^2 \lesssim \N^2$.
\end{proof}


\subsection{Bounding the Gram matrix determinant}\label{sec:bounding_det_of_gram}
We define Gram matrix for $e^{2\pi \i f_1 t}, e^{2\pi \i f_2 t}, \cdots, e^{2\pi \i f_k t}$ and provide lower/upper bounds for its determinant. 
\begin{definition}[Gram matrix]\label{def:gram_matrix}
We define $\Gram_{f_1,\cdots,f_k}$ to be
\begin{equation*}
\begin{bmatrix}
\langle e^{2\pi \i f_1 t}, e^{2\pi \i f_1 t} \rangle_T & \langle e^{2\pi \i f_1 t}, e^{2\pi \i f_2 t} \rangle_T  & \cdots & \langle e^{2\pi \i f_1 t}, e^{2\pi \i f_k t} \rangle_T \\
\langle e^{2\pi \i f_2 t}, e^{2\pi \i f_1 t} \rangle_T & \langle e^{2\pi \i f_2 t}, e^{2\pi \i f_2 t} \rangle_T  & \cdots & \langle e^{2\pi \i f_2 t}, e^{2\pi \i f_k t} \rangle_T \\
\cdots & \cdots & \cdots & \cdots \\
\langle e^{2\pi \i f_k t}, e^{2\pi \i f_1 t} \rangle_T & \langle e^{2\pi \i f_k t}, e^{2\pi \i f_2 t} \rangle_T  & \cdots & \langle e^{2\pi \i f_k t}, e^{2\pi \i f_k t} \rangle_T \\
\end{bmatrix}
\end{equation*}
Note that the above matrix is a Hermitian matrix with complex entries, thus both its determinant and all eigenvalues are in $\mathbb{R}$.
\end{definition}
We defer the proof of the following Theorem to Appendix \ref{sec:bounds_gram_determinant}.

\define{thm:bounds_gram_determinant}{Theorem}{
For real numbers $\xi_1,\ldots,\xi_k$, let $G_{\xi_1,\ldots,\xi_k}$ be the matrix whose $(i,j)$-entry is
$$
\int_{-1}^1 e^{2\pi \i(\xi_i-\xi_j)t}dt.
$$
Then
$$
\det(G_{\xi_1,\ldots,\xi_k}) = 2^{\tilde O(k^2)}\prod_{i<j} \min(|\xi_i-\xi_j|^2,1).
$$}
\state{thm:bounds_gram_determinant}

We use the following corollary in this section.
\begin{corollary}\label{cor:determinant_corollary}
There exists a universal constant $\alpha>0$ such that, for any $T>0$ and real numbers $f_1,\cdots,f_k$, the $k \times k$ Gram matrix of $e^{2\pi \i f_1 t}, e^{2\pi \i f_2 t}, \cdots, e^{2\pi \i f_k t}$ whose $(i,j)$-entry is
\[\Gram_{f_1,\cdots,f_k}(i,j) = \langle e^{2\pi \i f_i t}, e^{2\pi \i f_j t} \rangle_T=\frac{1}{T}\int_{0}^T e^{2\pi \i(f_i-f_j)t}\mathrm{d} t.\] 
satisfies 
\[k^{-\alpha k^2} \prod_{i<j} \min((|f_i-f_j|T)^2,1) \le \det\left(\Gram_{f_1,\cdots,f_k} \right)\le k^{\alpha k^2} \prod_{i<j} \min((|f_i-f_j|T)^2,1).\]
\end{corollary}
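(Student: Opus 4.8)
The plan is to deduce this corollary from Theorem~\ref{thm:bounds_gram_determinant} by a straightforward rescaling of the integration interval together with unpacking the $2^{\tilde O(k^2)}$ factor into an explicit $k^{\pm\alpha k^2}$ form. First I would note that the $(i,j)$-entry of $\Gram_{f_1,\dots,f_k}$ is $\frac1T\int_0^T e^{2\pi\i(f_i-f_j)t}\,\mathrm dt$, and the substitution $t = \frac{T}{2}(s+1)$ maps $[0,T]$ to $[-1,1]$ and gives
\begin{equation*}
\frac1T\int_0^T e^{2\pi\i(f_i-f_j)t}\,\mathrm dt
= e^{\pi\i(f_i-f_j)T}\cdot\frac12\int_{-1}^1 e^{\pi\i(f_i-f_j)Ts}\,\mathrm ds.
\end{equation*}
Setting $\xi_\ell := \tfrac{f_\ell T}{2}$, the matrix $\Gram_{f_1,\dots,f_k}$ equals $D\, G_{\xi_1,\dots,\xi_k}\, D^*$ where $D=\Diag(e^{\pi\i f_\ell T})_{\ell\in[k]}$ is diagonal and unitary. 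Hence $\det(\Gram_{f_1,\dots,f_k}) = |\det D|^2 \det(G_{\xi_1,\dots,\xi_k}) = \det(G_{\xi_1,\dots,\xi_k})$, and by Theorem~\ref{thm:bounds_gram_determinant},
\begin{equation*}
\det(\Gram_{f_1,\dots,f_k}) = 2^{\tilde O(k^2)}\prod_{i<j}\min(|\xi_i-\xi_j|^2,1)
= 2^{\tilde O(k^2)}\prod_{i<j}\min\!\left(\tfrac{(|f_i-f_j|T)^2}{4},1\right).
\end{equation*}

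Next I would clean up the product term: since $\min(a/4,1)$ and $\min(a,1)$ differ by at most a factor of $4$ for each of the $\binom k2$ pairs, replacing $\min\bigl(\tfrac{(|f_i-f_j|T)^2}{4},1\bigr)$ by $\min((|f_i-f_j|T)^2,1)$ costs a multiplicative factor in $[4^{-k^2},4^{k^2}]$, which is absorbed into $2^{\tilde O(k^2)}$. Then I would spell out the meaning of $2^{\tilde O(k^2)}$: it denotes a quantity lying in $[2^{-Ck^2\log^{O(1)}k},\,2^{Ck^2\log^{O(1)}k}]$ for an absolute constant $C$, and since $2^{k^2\log^{O(1)}k} = k^{O(k^2)}$, there is a universal constant $\alpha$ with
\begin{equation*}
k^{-\alpha k^2}\prod_{i<j}\min((|f_i-f_j|T)^2,1)\ \le\ \det(\Gram_{f_1,\dots,f_k})\ \le\ k^{\alpha k^2}\prod_{i<j}\min((|f_i-f_j|T)^2,1),
\end{equation*}
which is exactly the claimed bound. (One should pick $\alpha$ large enough to also dominate the small additive constants from the $4^{\pm k^2}$ slack and from $\tilde O$ versus $O$.)

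The only genuine content here is Theorem~\ref{thm:bounds_gram_determinant} itself, whose proof is deferred to Appendix~\ref{sec:bounds_gram_determinant}; the corollary is a routine change of variables plus bookkeeping. The one point requiring a little care is the unitary conjugation argument showing that the diagonal phase matrix $D$ does not affect the determinant (so that the $e^{\pi\i(f_i-f_j)T}$ prefactors genuinely cancel) — but this is immediate since $|\det D|^2 = 1$. I do not anticipate any real obstacle in this corollary; all the difficulty is pushed into the determinant estimate of Theorem~\ref{thm:bounds_gram_determinant}.
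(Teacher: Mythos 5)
Your approach is the intended one — the paper states the corollary without an explicit proof, treating it as a routine consequence of Theorem~\ref{thm:bounds_gram_determinant} via the change of variables you describe — and the argument is essentially sound. Two small points are worth flagging.

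First, a bookkeeping slip: your change-of-variables formula correctly carries a factor of $\tfrac12$, but when you package this as a matrix identity you drop it. The relation is $\Gram_{f_1,\dots,f_k} = \tfrac12\, D\, G_{\xi_1,\dots,\xi_k}\, D^*$, so $\det(\Gram_{f_1,\dots,f_k}) = 2^{-k}\det(G_{\xi_1,\dots,\xi_k})$, not $\det(G_{\xi_1,\dots,\xi_k})$. (The normalizations differ: the corollary's diagonal entries are $1$, the theorem's are $2$.) This factor of $2^{-k}$ is harmlessly absorbed into the $k^{\pm\alpha k^2}$ slack, but it should be recorded.

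Second, and more substantively, the assertion ``$2^{k^2\log^{O(1)}k} = k^{O(k^2)}$'' is not true as written: $2^{k^2\log^{c}k} = k^{k^2\log^{c-1}k}$, which exceeds $k^{O(k^2)}$ whenever $c>1$. The stated corollary, with its clean $k^{\alpha k^2}$ for a universal constant $\alpha$, is therefore formally \emph{stronger} than a verbatim reading of $2^{\tilde O(k^2)}$. To close this gap one needs to observe — by inspecting the appendix proof of Theorem~\ref{thm:bounds_gram_determinant} — that the hidden constant is in fact $O(k^2\log k)$ (the dominant contributions come from products of factorials such as $1!\,2!\cdots k!$ and $\prod_{n=0}^{k-1}(2n)!$, all of which are $2^{O(k^2\log k)}$), so that $2^{\tilde O(k^2)}$ really means $k^{O(k^2)}$ here. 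Your proof would be airtight if you either made this observation explicit or weakened the target to $k^{\pm\tilde O(k^2)}$.
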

Based on Corollary \ref{cor:determinant_corollary}, we show the coefficients of a $k$-Fourier-sparse signal can be upper bounded by the energy $\|x\|^2_T$.
\restate{lem:relation_energy_coef}
\begin{proof}
Let $\vec{v_i}$ denote the vector $e^{2 \pi \i f_i t}$ and $V=\{\vec{v_1},\cdots,\vec{v_k}\}$. Notice that $\|\vec{v_i}\|_T^2=\langle \vec{v_i},\vec{v_i}\rangle=1$. For each $\vec{v_i}$, we define $\vec{v}_i^{\parallel}$ to be the projection of $\vec{v_i}$ into the linear subspace $\mathrm{span}\{V\setminus \vec{v_i}\}=\mathrm{span}\{\vec{v}_1,\cdots,\vec{v}_{i-1},\vec{v}_{i+1},\cdots,\vec{v}_k\}$ and $\vec{v}_i^{\perp}=\vec{v}_i-\vec{v}_i^{\parallel}$ which is orthogonal to $\mathrm{span}\{V\setminus \vec{v_i}\}$ by the definition.

Therefore from the orthogonality, 
\[\|x(t)\|_T^2 \ge \max_{j \in [k]} \{|v_j|^2 \cdot \|\vec{v}_j^{\perp}\|_T^2\} \ge \frac{1}{k}  \overset{k}{\underset{j=1}{\sum}} |v_j|^2 \cdot \|\vec{v}_j^{\perp}\|_T^2 .\] 
It is enough to estimate $\|\vec{v}_j^{\perp}\|_T^2$ from Claim \ref{cla:orthogonal_distance}:
\[ \|\vec{v}_j^{\perp}\|_T^2=\frac{\det(\Gram(V))}{\det(\Gram(V \setminus \vec{v}_i))} \ge k^{-2\alpha k^2} \prod_{j\neq i} \min\left((f_j-f_i)T,1\right)^2 \ge k^{-2\alpha k^2} (\eta T)^{2k-2},\] 
where we use Corollary \ref{cor:determinant_corollary} to lower bound it in the last step.
\end{proof}

\subsection{Perturbing the frequencies does not change the subspace much}\label{sec:bounded_and_unboundedly}
We show that for a $k$-Fourier-sparse signal with unboundedly close frequency gap, there always exists another $k$-Fourier-sparse signal with slightly separated gap.
\begin{lemma}[Slightly Shifting one Frequency]\label{lem:shifting_one_frequency_preserve_the_energy}
There is a universal constant $C_0>0$ such that for any $x(t) = \overset{k} { \underset{j=1}{ \sum } } v_j e^{2\pi \i f_j t}$ and any frequency $f_{k+1}$, there always exists 
\[x'(t) = \overset{ k-1}{ \underset{j=1 }{ \sum} } v'_j e^{2\pi \i f_j t} + v'_{k+1} e^{2\pi\i f_{k+1} t}\] 
with $k$ coefficients $v'_1, v'_2, \cdots, v'_{k-1}, v_{k+1}'$ satisfying 
\begin{equation*}
\|x'(t)-x(t)\|_T \leq k^{C_0 k^2} \cdot \left(|f_k-f_{k+1}|T\right) \cdot \|x(t)\|_T
\end{equation*}
\end{lemma}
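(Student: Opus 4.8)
The plan is to argue geometrically in the inner-product space of functions on $[0,T]$. Write $\vec{v}_j = e^{2\pi\i f_j t}$ for $j\in[k]$ and $\vec{v}_{k+1} = e^{2\pi\i f_{k+1} t}$, and let $W = \mathrm{span}\{\vec{v}_1,\dots,\vec{v}_{k-1}\}$ be the ``common'' subspace. The key observation is that $x(t)\in\mathrm{span}\{\vec{v}_1,\dots,\vec{v}_k\} = W + \mathbb{C}\vec{v}_k$, and we want to approximate it inside $W + \mathbb{C}\vec{v}_{k+1}$. Since the only troublesome direction is the one spanned by $\vec{v}_k$, it suffices to show that $\vec{v}_k$ is close (after the right scaling) to $W + \mathbb{C}\vec{v}_{k+1}$: concretely, I will show there is a scalar $c$ and a vector $w\in W$ with $\|\vec{v}_k - c\,\vec{v}_{k+1} - w\|_T$ small, and then substitute this into $x = \sum_{j<k} v_j\vec{v}_j + v_k \vec{v}_k$ to get $x'$ with the stated coefficients, where the error is $|v_k|$ times this distance. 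To control $|v_k|$ in terms of $\|x\|_T$ I will invoke Lemma~\ref{lem:relation_energy_coef} (or its proof via Corollary~\ref{cor:determinant_corollary}), which bounds each coefficient of a $k$-Fourier-sparse signal by $\exp(\poly(k))$ times $\|x\|_T$ divided by a power of the frequency gap — but note there is a subtlety, since the gap here could be tiny, so I will instead bound $|v_k|$ directly by $\|x\|_T / \|\vec{v}_k^{\perp}\|_T$ where $\vec{v}_k^\perp$ is the component of $\vec{v}_k$ orthogonal to $\mathrm{span}\{\vec{v}_j : j\neq k, j\le k\}$, and this ratio of Gram determinants is exactly where the $k^{O(k^2)}$ factor and a power of $|f_k-f_{k+1}|T$ will have to cancel.

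The heart of the estimate is bounding $\dist_T\big(\vec{v}_k,\ W + \mathbb{C}\vec{v}_{k+1}\big)$. I would do this by a finite-difference / interpolation argument: express $\vec{v}_k - \vec{v}_{k+1}$, or more precisely a suitable linear combination, as something of size $O(|f_k - f_{k+1}|T)$ using the bound $|e^{2\pi\i f_k t} - e^{2\pi\i f_{k+1}t}| \le 2\pi|f_k-f_{k+1}|\,|t| \le 2\pi |f_k-f_{k+1}|T$ pointwise on $[0,T]$. That immediately gives $\|\vec{v}_k - \vec{v}_{k+1}\|_T \le 2\pi|f_k-f_{k+1}|T$, so actually taking $c=1$, $w=0$ already yields distance $\le 2\pi|f_k-f_{k+1}|T$ — the point is then only to pay for $|v_k|$. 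So really the lemma reduces to: $|v_k| \le k^{O(k^2)}\|x\|_T$. This is false in general (coefficients can blow up with vanishing gap), so the actual argument must be more careful: when $|f_k - f_{k+1}|T$ is itself comparable to the gap $\eta = \min_{i\le k, i\neq k}|f_i - f_k|$, I should instead perturb $f_k$ to $f_{k+1}$ while simultaneously exploiting that $|f_k - f_{k+1}|$ is small. I would split into cases: if $|f_k - f_{k+1}| \ge \eta/2$ then the new configuration still has gap $\Omega(\eta)$ and the substitution error is controlled via Lemma~\ref{lem:relation_energy_coef}; if $|f_k - f_{k+1}| < \eta/2$ then I induct or use that the error $|v_k|\cdot 2\pi|f_k-f_{k+1}|T$ has a factor $|f_k-f_{k+1}|T$ which beats the $k^{O(k^2)}/(\eta T)^{O(k)}$ blow-up of $|v_k|$ precisely because $|f_k-f_{k+1}| < \eta$ — wait, that gives $(|f_k-f_{k+1}|T)^1 / (\eta T)^{k}$ which is \emph{not} bounded.

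So the correct approach, and the main obstacle, is this: one cannot simply bound $|v_k|$. Instead I will project more cleverly. Consider the orthogonal decomposition $\vec{v}_k = p + \vec{v}_k^\perp$ where $p$ is the projection of $\vec{v}_k$ onto $U := \mathrm{span}\{\vec{v}_1,\dots,\vec{v}_{k-1},\vec{v}_{k+1}\}$. Then $x = \sum_{j<k}v_j\vec{v}_j + v_k p + v_k\vec{v}_k^\perp$, and the first two terms already lie in $U$, so $x' :=$ (that projection) satisfies $\|x - x'\|_T = |v_k|\cdot\|\vec{v}_k^\perp\|_T$ where now $\vec{v}_k^\perp \perp U$, in particular $\vec{v}_k^\perp \perp \vec{v}_{k+1}$. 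By Claim~\ref{cla:orthogonal_distance}, $\|\vec{v}_k^\perp\|_T^2 = \det(\Gram(U\cup\{\vec{v}_k\}))/\det(\Gram(U))$, and the numerator is $\det(\Gram_{f_1,\dots,f_{k+1}})$ while the denominator is $\det(\Gram_{f_1,\dots,f_{k-1},f_{k+1}})$; by Theorem~\ref{thm:bounds_gram_determinant} the ratio is $k^{\tilde O(k^2)}\prod_{i\le k-1}\min(|f_i-f_k|^2 T^2,1)\cdot\min(|f_{k+1}-f_k|^2T^2,1)$, so $\|\vec{v}_k^\perp\|_T \le k^{\tilde O(k^2)}\,|f_k - f_{k+1}|T$. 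Meanwhile $|v_k| \le \|x\|_T/\|\vec{v}_k^{\perp\prime}\|_T$ where $\vec{v}_k^{\perp\prime}$ is the component orthogonal to $\mathrm{span}\{\vec{v}_1,\dots,\vec{v}_{k-1}\}$ only (not involving $\vec{v}_{k+1}$), giving $|v_k| \le \|x\|_T\cdot k^{\tilde O(k^2)}/\prod_{i\le k-1}\min(|f_i-f_k|T,1)$. Multiplying, the $\prod_{i\le k-1}$ factors cancel exactly, leaving $\|x-x'\|_T \le k^{O(k^2)}\,(|f_k-f_{k+1}|T)\,\|x\|_T$, as claimed. The main technical work — and the step I expect to be delicate — is making the two Gram-determinant estimates line up so that the $\prod_{i<k}\min(|f_i-f_k|T,1)$ factors cancel cleanly rather than leaving residual powers of a possibly-tiny gap; this is exactly what Theorem~\ref{thm:bounds_gram_determinant} is engineered to deliver, so I would lean on it heavily and just carefully track which Gram matrices appear.
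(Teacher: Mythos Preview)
Your final approach (after the false starts) is correct and is essentially the paper's proof: both take $x'$ to be the orthogonal projection of $x$ onto $\mathrm{span}\{\vec v_1,\dots,\vec v_{k-1},\vec v_{k+1}\}$, write the error as $|v_k|\cdot\|\vec v_k^{\perp}\|_T$ (the paper parameterizes the same quantity as $\delta\cdot|v_k|\cdot\|\vec u^{\perp}\|_T$, with $\delta$ the sine of the angle between $\vec u^{\perp}$ and $\vec w^{\perp}$), and bound it by the same product of four Gram determinants via Corollary~\ref{cor:determinant_corollary}. The cancellation of the $\prod_{i<k}\min(|f_i-f_k|T,1)$ factors you flagged as delicate is exactly what occurs in the paper's final display, where every pair-product except $|f_k-f_{k+1}|^2T^2$ telescopes away.
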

\begin{proof}
We abuse the notation $e^{2\pi \i f_j t}$ to denote a vector in the linear subspace. We plan to shift $f_k$ to $f_{k+1}$ and define
\begin{eqnarray*}
V & = & \{ e^{2\pi \i f_1 t}, \cdots,e^{2\pi \i f_{k-1} t}, e^{2\pi \i f_k t} \} \\
V'& = & \{ e^{2\pi \i f_1 t}, \cdots,e^{2\pi \i f_{k-1} t}, e^{2\pi \i f_{k+1} t} \} \\
U & = & \{ e^{2\pi \i f_1 t}, \cdots,e^{2\pi \i f_{k-1} t}\}\\
W & = & \{ e^{2\pi \i f_1 t}, \cdots,e^{2\pi \i f_{k-1} t}, e^{2\pi \i f_{k} t}, e^{2\pi \i f_{k+1} t} \}
\end{eqnarray*}
where $f_1, f_2, \cdots, f_k$ are original frequencies in $x$. The idea is to show that any vector in the linear subspace $\mathrm{span}\{V\}$ is close to some vector in the linear subspace $\mathrm{span}\{V'\}$. 

For convenience, we use $\vec{u}^{\parallel}$ to denote the projection of vector $e^{2 \pi \i f_k t}$ to the linear subspace $\mathrm{span} \{U\}=\mathrm{span}\{e^{2 \pi \i f_1 t},\cdots,e^{2 \pi \i f_{k-1} t}\}$ and $\vec{w}^{\parallel}$ denote the projection of vector $e^{2 \pi \i f_{k+1} t}$ to this linear subspace $ \mathrm{span}\{U\}$. Let $\vec{u}^{\perp}=e^{2 \pi \i f_k t} - \vec{u}^{\parallel}$ and $\vec{w}^{\perp}=e^{2 \pi \i f_{k+1} t} - \vec{w}^{\parallel}$ be their orthogonal part to $ \mathrm{span}\{U\}$. 

From the definition $e^{2 \pi \i f_k t}=\vec{u}^{\parallel}+\vec{u}^{\perp}$ and $\vec{u}^{\parallel} \in \mathrm{span}\{U\}=\mathrm{span}\{e^{2 \pi \i f_1 t},\cdots,e^{2 \pi \i f_{k-1} t}\}$, we rewrite the linear combination \[x(t)=\sum_{j=1}^k v_j e^{2\pi \i f_j t}=\sum_{j=1}^{k-1} \alpha_j e^{2\pi \i f_j t} + v_k \cdot \vec{u}^{\perp}\] for  some scalars $\alpha_1,\cdots,\alpha_{k-1}$. 

We will substitute $\vec{u}^{\perp}$ by $\vec{w}^{\perp}$ in the above linear combination and find a set of new coefficients. Let $\vec{w}^{\perp}=\vec{w}_1 + \vec{w}_2$ where $\vec{w}_1 = \frac{\langle \vec{u}^{\perp}, \vec{w}^{\perp}\rangle}{\|\vec{u}^{\perp}\|^2_T} \vec{u}^{\perp}$ is the projection of $\vec{w}^{\perp}$ to $\vec{u}^{\perp}$. Therefore $\vec{w}_2$ is the orthogonal part of the vector $e^{2 \pi \i f_{k+1}t}$ to $\mathrm{span}\{V\}=\mathrm{span}\{ e^{2\pi \i f_1 t}, \cdots,e^{2\pi \i f_{k-1} t}, e^{2\pi \i f_k t} \}$. We use $\delta=\frac{\|\vec{w}_2\|_T}{\|\vec{w}^{\perp}\|_T}$ for convenience.

Notice that the $\underset{\beta\in \C}{\min} \frac{\|\vec{u}^{\perp} - \beta \cdot \vec{w}^{\perp}\|_T}{\|\vec{u}^{\perp}\|_T}=\delta$ and $\beta^*=\frac{\langle \vec{u}^{\perp}, \vec{w}^{\perp}\rangle}{\|\vec{w}^{\perp}\|^2_T}$ is the optimal choice. Therefore we set 


\begin{equation*}
x'(t)=\sum_{j=1}^{k-1} \beta_j e^{2\pi \i f_j t} +  v_k \cdot \beta^* \cdot \vec{w}^{\perp} \in \mathrm{span}\{e^{2 \pi \i f_1 t}, \cdots, e^{2 \pi \i f_{k-1} t}, e^{2 \pi \i f_{k+1}t }\}
\end{equation*}
where the coefficients $\beta_1,\cdots, \beta_{k-1}$ guarantee that the projection of $x'$ onto $ \mathrm{span}\{U\}$ is as same as the projection of $x$ onto $\mathrm{span}\{U\}$. From the choice of $\beta^*$ and the definition of $x'$, 
\begin{equation*}
\|x(t)-x'(t)\|_T^2 = \delta^2 \cdot |v_k|^2 \cdot \|\vec{u}^{\perp}\|_T^2 \le \delta^2 \cdot \|x(t)\|_T^2.
\end{equation*}

Eventually, we show an upper bound for $\delta^2$ from Claim \ref{cla:orthogonal_distance}. 
\begin{eqnarray*}
\delta^2 & = & \frac{\|\vec{w}_2\|_T^2}{\|\vec{w}^{\perp}\|_T^2}\\
&=&\frac{ \det(\Gram_W )}{ \det(\Gram_V )} / \frac{\det(\Gram_{V'}) }{ \det(\Gram_U) } \text{~by~Claim~\ref{cla:orthogonal_distance}}\\
& = & \frac{\det(\Gram_W ) }{\det(\Gram_V ) } \cdot \frac{\det(\Gram_U ) }{\det(\Gram_{V'} ) } \text{~by~Corollary~\ref{cor:determinant_corollary} }  \\
& \leq & k^{4\alpha k^2} \cdot \frac{ \overset{k+1}{ \underset{i=1}{\prod} } \overset{k+1}{ \underset{ \substack{j=1 \\ j \neq i} }{\prod} }  \min(|f_i -f_j|T,1)}{ \overset{k}{ \underset{i=1}{\prod} } \overset{k}{ \underset{ \substack{j=1 \\ j \neq i} }{\prod} }   \min(|f_i -f_j|T,1)} \cdot \frac{ \overset{k-1}{ \underset{i=1}{\prod} } \overset{k-1}{ \underset{ \substack{j=1 \\ j \neq i} }{\prod} } \min(|f_i -f_j|T,1) }{   \overset{k-1}{ \underset{i=1}{\prod} } \overset{k-1}{ \underset{ \substack{j=1 \\ j \neq i} }{\prod} } \min(|f_i -f_j|T,1) \cdot \overset{k-1}{ \underset{i=1}{\prod}}   \min(|f_i - f_{k+1}|^2 T^2,1) } \\
& = & k^{4\alpha k^2} |f_k - f_{k+1}|^2 T^2 \\
\end{eqnarray*}
\end{proof}

\begin{lemma}\label{lem:exists_k_separated_frequency_in_small_region}
For any $k$ frequencies $f_1 < f_2 < \cdots < f_k$, there exists $k$ frequencies $f'_1,\cdots,f'_k$  such that $\underset{i \in [k-1]}{\min} f'_{i+1} - f'_i \geq \eta$ and for all $i \in [k]$, $|f'_i-f_i| \le k\eta$.
\end{lemma}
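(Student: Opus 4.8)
The plan is to construct the perturbed frequencies greedily from left to right, moving each frequency just far enough to the right to clear the $\eta$-gap from its predecessor, but no farther. Concretely, set $f'_1 = f_1$ and, for $i = 2,\dots,k$, define $f'_i = \max\{\, f_i,\ f'_{i-1} + \eta \,\}$. By construction $f'_{i+1} \ge f'_i + \eta$ for every $i \in [k-1]$, so the sequence is strictly increasing and $\min_{i\in[k-1]}(f'_{i+1} - f'_i) \ge \eta$, which is the first required property.

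It then remains to control $|f'_i - f_i|$. Since $f'_i \ge f_i$ trivially (one of the two arguments of the max is $f_i$), the lower bound $f'_i - f_i \ge 0 > -k\eta$ is immediate, so it suffices to bound $f'_i - f_i$ from above. I would prove by induction on $i$ that $f'_i - f_i \le (i-1)\eta$. The base case $i=1$ is an equality. For the inductive step: if the maximum defining $f'_{i+1}$ is attained by $f_{i+1}$, then $f'_{i+1} - f_{i+1} = 0$; otherwise $f'_{i+1} = f'_i + \eta$, and using the inductive hypothesis $f'_i \le f_i + (i-1)\eta$ together with the original ordering $f_i \le f_{i+1}$ we obtain $f'_{i+1} - f_{i+1} = f'_i + \eta - f_{i+1} \le (f_i - f_{i+1}) + i\eta \le i\eta$. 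Hence $0 \le f'_i - f_i \le (k-1)\eta \le k\eta$ for all $i$, which is the second required property, and the proof is complete.

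There is essentially no serious obstacle here: this is a soft combinatorial fact. The only point worth highlighting is why the accumulated displacement stays bounded by $k\eta$ rather than growing with the (possibly tiny) original gaps $f_{i+1}-f_i$: whenever the greedy rule is forced to push $f_{i+1}$ to $f'_i + \eta$, the inequality $f_i < f_{i+1}$ means part of the displacement inherited from step $i$ is absorbed, so each step adds at most one extra $\eta$ to the error. If a two-sided, symmetric bound were wanted one could instead recenter afterwards (e.g.\ shift every $f'_i$ by $-\tfrac{k}{2}\eta$), but the one-sided estimate $|f'_i - f_i| \le k\eta$ is all that is needed and all that is used downstream, e.g.\ in the proof of Lemma~\ref{lem:existence_gap} where one applies Lemma~\ref{lem:shifting_one_frequency_preserve_the_energy} $k$ times to move each $f_j$ to the corresponding $f'_j$.
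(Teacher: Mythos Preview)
Your construction $f'_1 = f_1$, $f'_i = \max\{f_i, f'_{i-1}+\eta\}$ is exactly the one given in the paper, and your inductive bound $f'_i - f_i \le (i-1)\eta$ is the straightforward verification the paper leaves implicit. The proposal is correct and matches the paper's approach.
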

\begin{proof}
We define the new frequencies $f_i'$ as follows: $f_1' =  f_1$ and $f_i'=\max \{f'_{i-1} + \eta, f_i \}$ for $i\in \{ 2, 3, \cdots, k \}$.
\end{proof}


\subsection{Existence of  nearby $k$-Fourier-sparse signal with frequency gap bounded away from zero}\label{sec:existence_of_bounded_gap_signal}
We combine the results in the above section to finish the proof of Lemma \ref{lem:low_degree_approximates_concentrated_freq}. We first prove that for any $x^*(t) = \overset{k}{ \underset{j=1}{\sum}} v_j e^{2\pi \i f_j t}$, there always exists another $k$-Fourier-sparse signal $x'$  close to $x^*(t)= \overset{k}{ \underset{j=1}{\sum }} v_j e^{2\pi \i f_j t}$ such that the frequency gap in $x'$ is at least $\eta \ge 2^{-\poly(k)}$. Then we show how to find a low degree polynomial $P(t)$ approximating $x'(t)$.

\restate{lem:existence_gap}
\begin{proof}
Using Lemma \ref{lem:exists_k_separated_frequency_in_small_region} on frequencies $f_1,\cdots,f_k$, we obtain $k$ new frequencies $f'_1,\cdots,f'_k$ such that their gap is at least $\eta$ and $\max_i |f_i-f'_i|\le k\eta$. Next we use the hybrid argument to find $x'$. 

Let $x^{(0)}(t)=x^*(t)$. For $i=1,\cdots,t,$ we apply Lemma \ref{lem:shifting_one_frequency_preserve_the_energy} to shift $f_i$ to $f'_i$ and obtain 
\begin{equation*}
x^{(i)}(t)=\overset{k}{ \underset{j=i+1}{\sum}} v^{(i)}_j e^{2\pi \i f_j t}+\overset{i}{ \underset{j=1}{\sum}} v^{(i)}_j e^{2\pi \i f'_j t}.
\end{equation*}
From Lemma \ref{lem:shifting_one_frequency_preserve_the_energy}, we know $\|x^{(i)}(t)-x^{(i-1)}(t)\|_T \le k^{C_0 k^2} (|f_i - f'_{i}|T) \|x^{(i-1)}\|_T$. 
Thus  we obtain 
\begin{equation*}
\left(1-k^{C_0 k^2} (k \eta T)\right)^i \|x^{(0)}(t)\|_T \le \|x^{(i)} (t)\|_T \le \left(1+k^{C_0 k^2} (k \eta T)\right)^i \|x^{(0)}(t)\|_T,
\end{equation*}
which is between $\left[\left(1 - i \cdot k^{C_0 k^2} (k \eta T) \right)\|x^{(0)}(t)\|_T,\left(1 + 2 i \cdot k^{C_0 k^2} (k \eta T)\right)\|x^{(0)}(t)\|_T\right]$ for $\eta \le \frac{1}{5 T} \cdot k^{-C_1 k^2}$ with some $C_1>C_0$.

At last, we set $x'(t)=x^{(k)}(t)$ and bound the distance between $x'(t)$ and $x^*(t)$ by 
\begin{align*}
 \|x^{(k)}(t)-x^{(0)}(t)\|_T \le &~ \sum_{i=1}^k \|x^{(i)}(t)-x^{(i-1)}(t)\|_T &\text{~by~triangle~inequality}\\
 \le &~ \sum_{i=1}^k  k^{C_0 k^2} (|f_i - f_i'| T) \|x^{(i-1)}(t)\|_T &\text{~by~Lemma~\ref{lem:shifting_one_frequency_preserve_the_energy}}\\
 \le &~ \sum_{i=1}^k  2k^{C_0 k^2} (k \eta T) \|x^{(i-1)}(t)\|_T & \text{~by~$\underset{i}{\max} |f_i -f_i'| \leq k\eta$}\\
 \le &~ k \cdot 2k^{C_0 k^2} (k \eta T) \|x^*(t)\|_T \\
 \le &~ \delta \|x^*(t)\|_T 
\end{align*}
where the last inequality follows by the sufficiently small $\eta$.
\end{proof}

\subsection{Approximating $k$-Fourier-sparse signals by polynomials}\label{sec:transfer_bounded_gap_signal_to_polynomial}
For any $k$-Fourier-sparse signal with frequency gap bounded away from
zero, we show that there exists a low degree polynomial which is close
to the original $k$-Fourier-sparse signal in $\| \cdot \|_T$
distance.

\begin{lemma}[Existence of low degree polynomial]\label{lem:low_degree_polynomial}
Let $x^*(t) = \overset{k}{ \underset{j=1}{\sum} }  v_j e^{2\pi \i f_j t}$, where $\forall j \in [k], |f_j  |\leq \Delta$ and $ \underset{i\neq j}{\min}  | f_{i} - f_j| \geq \eta$. There exists a polynomial $Q(t)$ of degree 
\begin{equation*}
d=O\left(T\Delta + k \log 1/(\eta T) + k^2 \log k + k \log (1/\delta)\right)
\end{equation*} such that,
\begin{equation}
\|Q(t)-x^*(t)\|_T^2 \le \delta \|x^*(t)\|_T^2
\end{equation}
\end{lemma}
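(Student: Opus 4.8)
The plan is to replace each complex exponential $e^{2\pi\i f_j t}$ by a degree-$d$ truncation of its Taylor series around $0$ and to control the resulting error on $[0,T]$ using an a priori bound on the coefficients $v_j$. The only reason the coefficients are controllable at all is the frequency gap $\eta$: applying Lemma~\ref{lem:relation_energy_coef} to $x^*$ (with $c$ the universal constant there) gives
\[
\norm{x^*}_T^2 \;\ge\; k^{-ck^2}\,\min\bigl((\eta T)^{2k},1\bigr)\sum_{j=1}^k \abs{v_j}^2 ,
\]
so with $V := k^{ck^2/2}\max\bigl((\eta T)^{-k},1\bigr)\,\norm{x^*}_T$ we have $\abs{v_j}\le V$ for every $j$, and $\log\!\bigl(kV/\norm{x^*}_T\bigr) = O\bigl(k^2\log k + k\log(1/(\eta T))\bigr)$, reading $\log(1/(\eta T))$ as $0$ when $\eta T\ge 1$.

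Next I would define $Q(t) := \sum_{j=1}^k v_j \sum_{l=0}^d \frac{(2\pi\i f_j t)^l}{l!}$, a polynomial of degree at most $d$. For $t\in[0,T]$ and $\abs{f_j}\le\Delta$ we have $\abs{2\pi f_j t}\le 2\pi\Delta T$, so using $l!\ge(l/e)^l$, provided $d+1\ge 4\pi e\Delta T$,
\[
\Bigl\lvert\, e^{2\pi\i f_j t} - \sum_{l=0}^d \frac{(2\pi\i f_j t)^l}{l!}\,\Bigr\rvert \;\le\; \sum_{l>d}\frac{(2\pi\Delta T)^l}{l!} \;\le\; 2\Bigl(\frac{2\pi e\Delta T}{d+1}\Bigr)^{d+1}.
\]
Summing over $j$ and using $\abs{v_j}\le V$, for every $t\in[0,T]$,
\[
\abs{Q(t)-x^*(t)} \;\le\; 2kV\Bigl(\frac{2\pi e\Delta T}{d+1}\Bigr)^{d+1}.
\]

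Finally I would pick $d+1 = \max\bigl(4\pi e\Delta T,\; C\bigl(k^2\log k + k\log(1/(\eta T)) + \log(1/\delta)\bigr)\bigr)$ for a large enough absolute constant $C$; then $\log\frac{d+1}{2\pi e\Delta T}\ge\frac12$, so $(d+1)\log\frac{d+1}{2\pi e\Delta T}$ exceeds $\log\bigl(2kV\,\delta^{-1/2}/\norm{x^*}_T\bigr)$, which rearranges to $2kV\bigl(2\pi e\Delta T/(d+1)\bigr)^{d+1}\le\sqrt{\delta}\,\norm{x^*}_T$. Hence $\max_{t\in[0,T]}\abs{Q(t)-x^*(t)}^2\le\delta\norm{x^*}_T^2$, and integrating over $[0,T]$ gives $\norm{Q-x^*}_T^2\le\delta\norm{x^*}_T^2$; this $d$ is $O\bigl(T\Delta + k\log(1/(\eta T)) + k^2\log k + k\log(1/\delta)\bigr)$, as claimed. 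The substantive step is the first one: the coefficients can be astronomically large, roughly $(\eta T)^{-k}\norm{x^*}_T$, and the Gram-determinant estimate of Lemma~\ref{lem:relation_energy_coef} is precisely what lets the (still modest) degree $d\sim k\log(1/(\eta T))$ absorb $\log V$; everything after that is the elementary exponential tail bound.
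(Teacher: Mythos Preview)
Your proposal is correct and follows essentially the same approach as the paper: truncate each $e^{2\pi\i f_j t}$ to its degree-$d$ Taylor polynomial, bound the remainder by $(2\pi e\Delta T/d)^d$, and invoke Lemma~\ref{lem:relation_energy_coef} to control $\sum_j|v_j|^2$ (equivalently your individual bound $|v_j|\le V$) so that $d=O(T\Delta+k\log(1/(\eta T))+k^2\log k+k\log(1/\delta))$ suffices. The only cosmetic difference is that the paper bounds $\|Q-x^*\|_T^2$ via $k\sum_j|v_j|^2(\tfrac{2\pi e T\Delta}{d})^{2d}$ directly, while you pass through a uniform pointwise bound; both arrive at the same conclusion.
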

\begin{proof}
For each frequency $f_j$, let $Q_j(t)= \overset{d-1}{ \underset{k=0}{\sum}}  \frac{(2\pi \i f_j t)^k}{k!}$ be the first $d$ terms in the Taylor Expansion of $e^{2 \pi \i f_j t}$. For any $t \in [0,T]$, we know the difference between $Q_j(t)$ and $e^{2 \pi \i f_j t}$ is at most $$|Q_j(t)-e^{2 \pi \i f_j t}| \le |\frac{(2\pi \i f_j T)^{d}}{d!}| \le (\frac{2\pi T \Delta \cdot e}{d})^d.$$

We define $$Q(t)=\sum_{j=1}^k v_j Q_j(t)$$ and bound the distance between $Q$ and $x^*$ from the above estimation:
\begin{align*}
 \|Q(t)-x^*(t)\|_T^2 =  & ~\frac{1}{T} \int_0^T |Q(t) - x^*(t)|^2 \mathrm{d} t \\
=  &~ \frac{1}{T} \int_0^T |\sum_{j=1}^k v_j(Q_j(t) - e^{2 \pi \i f_j t})|^2 \mathrm{d} t\\
\le  &~ 2k \sum_{j=1}^k \frac{1}{T} \int_0^T |v_j|^2 \cdot |Q_j(t) - e^{2 \pi \i f_j t}|^2 \mathrm{d} t & \text{~by~triangle~inequality}\\
\le  &~ k \sum_{j=1}^k |v_j|^2 \cdot (\frac{2\pi T \Delta \cdot e}{d})^{2d} &\text{~by~Taylor~expansion}
\end{align*}
On the other hand, from Lemma \ref{lem:relation_energy_coef}, we know 
\begin{equation*}
\|x^*(t)\|_T^2 \ge (\eta T)^{2k} \cdot k^{-c k^{2}} \sum_j |v_j|^2.
\end{equation*}
Because $d = 10 \cdot \pi e  ( T\Delta + k \log 1/(\eta T) + k^2 \log k + k \log (1/\delta))$ is large enough, we have $k(\frac{2\pi T \Delta \cdot e}{d})^{2d} \le \delta (\eta T)^{2k} \cdot k^{-c k^{2}}$, which indicates that $\|Q(t)-x^*(t)\|_T^2\le \delta \|x^*\|_T^2$ from all discussion above. 
\end{proof}

\subsection{Transferring degree-$d$ polynomial to ($d$+$1$)-Fourier-sparse signal}\label{sec:transfer_polynomial_to_fourier}
In this section, we show how to transfer a degree-$d$ polynomial to ($d$+$1$)-Fourier-sparse signal. 
\begin{lemma}\label{lem:polynomial_to_FT}
For any degree-$d$ polynomial $Q(t) = \overset{d}{\underset{j=0}{\sum}} c_j t^j$, any $T>0$ and any $\epsilon>0$, there always exist $\gamma>0$ and $$x^*(t)=\sum_{i=1}^{d+1} \alpha_i e^{2\pi \i (\gamma i) t}$$ with some coefficients $\alpha_0,\cdots,\alpha_d$ such that
\begin{equation*}
\forall t \in [0,T], |x^*(t) - Q(t)| \le \epsilon. 
\end{equation*}
\end{lemma}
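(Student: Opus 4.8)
The plan is to run the Taylor-expansion argument of Lemma~\ref{lem:low_degree_polynomial} in reverse. The frequencies $\gamma,2\gamma,\dots,(d+1)\gamma$ are distinct and nonzero, so the exponentials $e^{2\pi\i\gamma i t}$, $i\in[d+1]$, are linearly independent, and they span a $(d+1)$-dimensional space that collapses onto the polynomials of degree $\le d$ as $\gamma\to 0$. I would pick coefficients $\alpha_i=\alpha_i(\gamma)$ so that the degree-$\le d$ part of the Taylor series of $\sum_i\alpha_i e^{2\pi\i\gamma i t}$ is \emph{exactly} $Q(t)$, and then let $\gamma\to 0$ to kill the remaining tail uniformly on $[0,T]$.

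Concretely, first I would write $e^{2\pi\i\gamma i t}=\sum_{j=0}^{d}\tfrac{(2\pi\i\gamma i)^j}{j!}t^j+E_i(t)$, where, since $2\pi\i\gamma i t$ is purely imaginary, the remainder satisfies $|E_i(t)|\le (2\pi\gamma(d+1)T)^{d+1}/(d+1)!$ for all $t\in[0,T]$. Matching the coefficient of $t^j$ in $\sum_i\alpha_i e^{2\pi\i\gamma i t}$ with $c_j$ amounts to the linear system $\sum_{i=1}^{d+1}\alpha_i(\gamma i)^j=j!\,c_j/(2\pi\i)^j$ for $j=0,\dots,d$. Its coefficient matrix factors as $\Diag(1,\gamma,\dots,\gamma^d)\cdot V$, where $V_{j,i}=i^j$ is a transposed Vandermonde matrix on the nodes $1,\dots,d+1$, hence invertible; solving gives the $\alpha_i$ explicitly, and since we may take $\gamma\le 1$ we get the bound $|\alpha_i|\le A(d)\,B(Q)\,\gamma^{-d}$, where $A(d)=\max_{i,j}|(V^{-1})_{i,j}|$ depends only on $d$ and $B(Q)=\sum_j j!\,|c_j|/(2\pi)^j$ only on $Q$.

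With this choice, $x^*(t)-Q(t)=\sum_{i=1}^{d+1}\alpha_i E_i(t)$, so $|x^*(t)-Q(t)|\le (d+1)A(d)B(Q)\gamma^{-d}\cdot(2\pi\gamma(d+1)T)^{d+1}/(d+1)!=\gamma\cdot K(d,Q,T)$ for a constant $K$ independent of $\gamma$. Choosing $\gamma=\min\{1,\ 1/(2\pi(d+1)T),\ \epsilon/K\}$ then yields $|x^*(t)-Q(t)|\le\epsilon$ on $[0,T]$, which finishes the proof.

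The one thing to watch, and the crux of the argument, is that $\alpha_i$ blows up like $\gamma^{-d}$ as the frequencies collapse (the Vandermonde system becomes ill-conditioned), so it is not a priori clear the approximation error stays bounded. The point is that the Taylor remainder $E_i$ is of order $\gamma^{d+1}$ — one power of $\gamma$ better than the conditioning loss — so the product is $O(\gamma)$ and the collapse is benign. Making this cancellation explicit (degree-$(d+1)$ remainder against degree-$d$ Vandermonde conditioning) is the only substantive step; everything else is bookkeeping.
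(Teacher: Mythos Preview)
Your proposal is correct and follows essentially the same approach as the paper: both expand each exponential as a Taylor series, solve the Vandermonde system on nodes $1,\dots,d+1$ to make the degree-$\le d$ part match $Q$ exactly, and then show the tail is $O(\gamma)$ because the $\gamma^{-d}$ blow-up in the coefficients is beaten by the $\gamma^{d+1}$ in the remainder. Your write-up is in fact a bit cleaner than the paper's: you use the sharp Lagrange remainder $|E_i(t)|\le (2\pi\gamma(d+1)T)^{d+1}/(d+1)!$ directly and bound $\alpha$ via $\|V^{-1}\|_\infty$, whereas the paper bounds the full tail $\sum_{j\ge d+1}$ and goes through $\sigma_{\min}(A)$; but the idea and the key cancellation are identical.
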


\begin{proof}
We can rewrite $x^*(t)$,
\begin{align*}
x^*(t) = ~ &\sum_{i=1}^{d+1} \alpha_i e^{2\pi \i \gamma i t} \\
= ~& \sum_{i=1}^{d+1} \alpha_i \sum_{j=0}^{\infty} \frac{ (2\pi \i \gamma i t)^j }{j!} \\
=~ & \sum_{j=0}^{\infty} \frac{(2\pi \i \gamma t)^j}{j!}  \sum_{i=1}^{d+1} \alpha_i \cdot i^j \\
=~ & \sum_{j=0}^{d} \frac{(2\pi \i \gamma t)^j}{j!}  \sum_{i=1}^{d+1} \alpha_i \cdot i^j  +  \sum_{j=d+1}^{\infty} \frac{(2\pi \i \gamma t)^j}{j!}  \sum_{i=1}^{d+1} \alpha_i \cdot i^j \\
=~ & Q(t) + \underbrace{ \left( \sum_{j=0}^{d} \frac{(2\pi \i \gamma t)^j}{j!}  \sum_{i=1}^{d+1} \alpha_i \cdot i^j - Q(t)\right) }_{C_1} + \underbrace{ \left( \sum_{j=d+1}^{\infty} \frac{(2\pi \i \gamma t)^j}{j!}  \sum_{i=1}^{d+1} \alpha_i \cdot i^j \right) }_{C_2}.
\end{align*}
Our goal is to show there exists some parameter $\gamma$ and coefficients $\{\alpha_0, \alpha_1, \cdots, \alpha_d\}$ such that the term $C_1=0$ and $|C_2| \leq \epsilon$. Let's consider $C_1$,
\begin{align*}
C_1 = \sum_{j=0}^d (\frac{t}{T})^j\left( \frac{(2\pi \i \gamma T)^j}{j!}\sum_{i=1}^{d+1} \alpha_i i^j -c_j \right)
\end{align*}
To guarantee $C_1=0$, we need to solve a linear system with $d+1$ unknown variables and $d+1$ constraints, 
\begin{align*}
\text{Find} ~& ~\alpha_1, \alpha_2, \cdots \alpha_{d+1} \\
\text{s.t.} ~& ~\frac{(2\pi \i \gamma T)^j}{j!}\sum_{i=1}^{d+1} \alpha_i i^j -c_j  = 0, \forall j \in \{0,1,\cdots,d\}
\end{align*}
Define $c'_j = c_j j! /(2\pi \i \gamma)^j$, let $\alpha$ and $c'$ be the length-($d+1$) column vectors with $\alpha_i$ and $c'_j$. Let $A\in \mathbb{R}^{d+1\times d+1}$ denote the Vandermonde matrix where $A_{i,j} = i^j, \forall i,j\in [d+1]\times \{0,1,\cdots,d\}$. Then we need to guarantee $A \alpha = c'$. Using the definition of determinant, $\det(A) = \underset{i<j}{\prod} |i-j| \leq 2^{O(d^2\log d)}$. Thus $\sigma_{\max}(A) \leq 2^{O(d^2\log d)}$ and then 
\begin{align*}
\sigma_{\min}(A) = \frac{\det(A)}{\prod_{i=1}^{d-1} \sigma_i} \geq  2^{-O(d^3\log d)}  .
\end{align*}
We show how to upper bound $|\alpha_i|$,
\begin{align*}
~ \max_{i\in [d+1]} |\alpha_i|  \leq \| \alpha \|_2  = \| A^\dagger c'\|_2 \leq \| A^\dagger \|_2 \cdot \|c'\|_2 \leq \frac{1}{\sigma_{\min}(A)} \sqrt{d+1}\max_{0\leq j\leq d} \frac{ |c_j| j!}{ (2\pi \gamma T)^j}
\end{align*} 
Plugging the above equation into $C_2$, we have
\begin{align*}
|C_2| &~= \left| \sum_{j=d+1}^{\infty} \frac{(2\pi \i \gamma t)^j}{j!}  \sum_{i=1}^{d+1} \alpha_i \cdot i^j \right| \\
&~ \leq\sum_{j=d+1}^{\infty} \frac{(2\pi  \gamma t)^j}{j!}  \sum_{i=1}^{d+1} |\alpha_i| \cdot i^j \\
& ~\leq \sum_{j=d+1}^{\infty} \frac{(2\pi  \gamma t)^j}{j!}  (d+1)^{d+1}\max_{i\in[d+1]} |\alpha_i|\\
& ~\leq \sum_{j=d+1}^{\infty} \frac{(2\pi  \gamma t)^j}{j!}  (d+1)^{d+2} \frac{1}{\sigma_{\min}(A)} \frac{d!}{(2\pi \gamma T)^d} \max_{ 0\leq j\leq d} |c_j| \\
&~ \leq \epsilon
\end{align*}
where the last step follows by choosing sufficiently small 
\begin{align*}
\gamma \lesssim \epsilon / \left( T 2^{\Theta(d^3\log d)} \underset{0\leq j\leq d}{\max} |c_j| \right).
\end{align*}
\end{proof}

\section{$k$-cluster Signal Recovery}\label{sub:kmountains}
\subsection{Overview}
In this section, we prove Lemma \ref{lem:signal_recovery_k_cluster} as the main technical lemma to finish the proof of main Theorem~\ref{thm:main}, which shows how to learn $x^*(t) = \sum_{j=1}^k v_j e^{2\pi \i f_j t}$ with noise. 
\begin{lemma}\label{lem:signal_recovery_k_cluster}
Let $x^*(t) = \sum_{j=1}^k v_j e^{2\pi \i f_j t}$ and $x(t) = x^*(t) + g(t)$ be our observation. For any $\delta>0$ and $T>0$, let $\N^2 : = \frac{1}{T} \int_0^T |g(t) |^2 \mathrm{d} t + \delta \cdot \frac{1}{T} \int_0^T |x^*(t) |^2 \mathrm{d} t$. For $\Delta=\polydelta$,  Procedure \textsc{SignalRecoveryKCluster$^+$} in Algorithm \ref{alg:main_k} takes $l=O(k)$ frequencies $\widetilde{f}_1,\cdots,\widetilde{f}_l$ as input and finds $l$ polynomials $Q_1,\cdots,Q_l$ of degree $d=O( (T\Delta)^{1.5} + k^3 \log k + k \log 1/\delta)$ such that
\begin{equation}\label{eq:key_eq_signal_recovery_k_cluster}
\widetilde{x}(t)=\sum_{j \in [l]}Q_j(t) e^{2 \pi \i \widetilde{f}_j t} \mathrm {~satisfies~} \|\widetilde{x}(t)-x^*(t)\|_T^2 \lesssim \N^2.
\end{equation}
The procedure succeeds with probability at least $1-2^{-\Omega(k)}$, uses $\poly(k,\log(1/\delta)) \cdot \log(FT)$ samples, and runs in $\poly(k,\log(1/\delta)) \cdot \log^2(FT)$ time.
\end{lemma}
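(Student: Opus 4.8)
The plan is to turn the recovery into a least-squares fit over the $O(kd)$-dimensional function space
\[ V = \mathrm{span}\Big\{\, t^j e^{2\pi\i \widetilde f_i t} \;:\; 0\le j\le d,\ i\in[l]\,\Big\}, \]
in direct analogy with the polynomial case of Section~\ref{sec:robustpoly}, with Claim~\ref{cla:max_bounded_Q} playing the role that Lemma~\ref{lem:max_average_square} played there. Two ingredients are needed: \textbf{(a)} $V$ contains some $\widetilde x_0$ with $\|\widetilde x_0 - x^*\|_T\lesssim\N$; and \textbf{(b)} $\poly(kd)$ uniformly random samples from $[0,T]$ form an $\ell_2$ subspace embedding for $V$, so that the empirically optimal element of $V$ is $O(1)$-competitive with the best. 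Given these, the algorithm samples $x$ at those points and returns $\widetilde x=\arg\min_{u\in V}\|u-x\|_S$.

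For \textbf{(a)} I would first strip away the light clusters via Claim~\ref{cla:guarantee_removing_x**_x*}, so that $x^{(S)}=\sum_{j\in S}v_je^{2\pi\i f_j t}$ has $\|x^{(S)}-x^*\|_T\lesssim\N$; by the choice $\Delta>k\Delta_h$, each $f_j$ with $j\in S$ lies in a heavy cluster and hence, by Theorem~\ref{thm:frequency_recovery_k_cluster}, is within $\Delta\sqrt{\Delta T}$ of some $\widetilde f_{p_j}\in L$. Since the coefficients of $x^{(S)}$ may be unbounded, I would next apply Lemma~\ref{lem:existence_gap} to $x^{(S)}$ with accuracy $\delta$, getting a $\le k$-sparse $\widetilde x^{(S)}=\sum_{j\in S}\widetilde v_je^{2\pi\i \widetilde f_j't}$ with frequency gap $\eta\ge\delta k^{-O(k^2)}/T$, $|\widetilde f_j'-f_j|\le k\eta$, $\|\widetilde x^{(S)}-x^{(S)}\|_T\le\delta\|x^{(S)}\|_T$, and — combining with Lemma~\ref{lem:relation_energy_coef} — $\sum_j|\widetilde v_j|\le k^{O(k^3)}\delta^{-k}\|\widetilde x^{(S)}\|_T$. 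Grouping the terms of $\widetilde x^{(S)}$ by the assignment $p_j$ and Taylor-expanding each exponential to order $d$, let $P_i$ collect the resulting degree-$d$ polynomial, so $\widetilde x_0:=\sum_{i\in[l]}e^{2\pi\i \widetilde f_i t}P_i(t)\in V$; on $[0,T]$ the truncation error of one term is at most $(2\pi e\,T\Delta\sqrt{\Delta T}/d)^d$, so
\[ \max_{t\in[0,T]}\big|\widetilde x^{(S)}(t)-\widetilde x_0(t)\big|\ \le\ k^{O(k^3)}\delta^{-k}\,\|\widetilde x^{(S)}\|_T\cdot\Big(\tfrac{2\pi e\,T\Delta\sqrt{\Delta T}}{d}\Big)^{d}\ \le\ \delta\|\widetilde x^{(S)}\|_T \]
once $d=O\big((T\Delta)^{1.5}+k^3\log k+k\log(1/\delta)\big)$ (here $T\Delta=\poly(k,\log1/\delta)$ so $T\Delta\sqrt{\Delta T}=(T\Delta)^{1.5}$, and $\log(k^{O(k^3)}\delta^{-k-1})$ is absorbed into $k^3\log k+k\log(1/\delta)$). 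Chaining the three estimates with $\delta\|x^*\|_T\le\N$ gives $\|\widetilde x_0-x^*\|_T\lesssim\N$; writing $x=\widetilde x_0+g''$, then $\|g''\|_T\le\|g\|_T+\|\widetilde x_0-x^*\|_T\lesssim\N$.

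For \textbf{(b)}, Lemma~\ref{lem:polynomial_to_FT} shows every $u\in V$ is an $L^\infty[0,T]$-limit of $O(kd)$-Fourier-sparse signals, so Lemma~\ref{lem:max_is_at_most_polyk_times_l2} gives $\max_{t\in[0,T]}|u(t)|^2\lesssim(kd)^{O(1)}\|u\|_T^2$ — this is Claim~\ref{cla:max_bounded_Q}. I would then take an $\epsilon$-net (constant $\epsilon$) of the unit sphere of $V$, of size $e^{O(kd)}$; by Lemma~\ref{lem:max_is_bounded_imply_sample_is_small} with that $\poly(kd)$ bound and a union bound, $|S|=\poly(kd)$ uniform samples make $\|u\|_S=(1\pm0.1)\|u\|_T$ on the net with failure probability $2^{-\Omega(kd)}$, and the standard net-peeling argument upgrades this to all $u\in V$. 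Since $\E_S\|g''\|_S^2=\|g''\|_T^2$, also $\|g''\|_S\lesssim\|g''\|_T$ with probability $0.99$. On that event, because $\widetilde x$ minimizes $\|\cdot-x\|_S$ over $V$,
\[ \|\widetilde x-\widetilde x_0\|_T\le 1.2\,\|\widetilde x-\widetilde x_0\|_S\le 1.2\big(\|\widetilde x-x\|_S+\|g''\|_S\big)\le 1.2\big(\|\widetilde x_0-x\|_S+\|g''\|_S\big)=2.4\,\|g''\|_S\lesssim\N, \]
hence $\|\widetilde x-x^*\|_T\le\|\widetilde x-\widetilde x_0\|_T+\|\widetilde x_0-x^*\|_T\lesssim\N$, which is~\eqref{eq:key_eq_signal_recovery_k_cluster}. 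To boost the constant-probability event $\{\|g''\|_S\lesssim\|g''\|_T\}$ to $1-2^{-\Omega(k)}$, run the procedure $O(k)$ times on fresh samples and return the candidate whose median $\|\cdot\|_T$-distance to the others is smallest, exactly as in the proof of Theorem~\ref{thm:accurate_poly_learning} (this is the ``$+$''); the dimension-$O(kd)$ regression is accelerated with an oblivious subspace embedding. Since $d,l=\poly(k,\log1/\delta)$, the sample and time costs are $\poly(k,\log1/\delta)$, within the stated $\poly(k,\log(1/\delta))\log(FT)$ and $\poly(k,\log(1/\delta))\log^2(FT)$ budgets.

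The hard part is ingredient \textbf{(a)}, precisely because distinct $\widetilde f_i$ may be far closer than $1/T$: the pieces $e^{2\pi\i \widetilde f_i t}P_i(t)$ are nowhere near orthogonal, so $\|\widetilde x^{(S)}-\widetilde x_0\|_T$ cannot be bounded by summing per-cluster \emph{relative} Taylor errors — the genuine norms $\|\sum_{j:p_j=i}\widetilde v_je^{2\pi\i(\widetilde f_j'-\widetilde f_i)t}\|_T$ of the individual clusters can be astronomically larger than $\|\widetilde x^{(S)}\|_T$. The device that rescues this is passing first to the bounded, gap-separated representative $\widetilde x^{(S)}$ (Lemmas~\ref{lem:existence_gap}–\ref{lem:relation_energy_coef}), which lets one control the Taylor error in $L^\infty[0,T]$ through the $\ell_1$ norm of the coefficients; the resulting $k^{O(k^3)}\delta^{-k}$ blow-up costs only $O(k^3\log k+k\log(1/\delta))$ extra degree and is therefore harmless. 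Everything else — the heavy-cluster reduction, the net/Chernoff sampling bound, and the median-based boosting — is routine given the results already in hand.
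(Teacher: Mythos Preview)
Your proposal is correct and follows essentially the same route as the paper: reduce to the heavy clusters via Claim~\ref{cla:guarantee_removing_x**_x*}, pass to a gap-separated representative via Lemma~\ref{lem:existence_gap}, bound the coefficients with Lemma~\ref{lem:relation_energy_coef} and Taylor-expand into $V$, establish the subspace-embedding for $V$ from Claim~\ref{cla:max_bounded_Q} via an $\epsilon$-net plus Chernoff, run the regression triangle-inequality argument, and boost by $O(k)$ independent repetitions. The only cosmetic difference is in the boosting mechanics: you describe the ``pick the candidate with smallest median pairwise distance'' variant from the proof of Theorem~\ref{thm:accurate_poly_learning}, whereas the paper's \textsc{SignalRecoveryKCluster$^+$} uses the second variant there---take coordinatewise medians of the $R$ candidates at fresh sample points and regress against those---which gives the same guarantee.
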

For any set $W=\{t_1,\cdots,t_m\}$ where each $t_i \in [0,T]$, we use \[\|\vec{v}\|_W=\sqrt{\frac{\sum_{i \in W}|\vec{v}(t_i)|^2}{|W|}} \textit{ for any }\vec{v}:[0,T] \rightarrow \mathbb{C}\] in this section. 
We first show that Procedure \textsc{SignalRecoveryKCluster} succeeds with constant probability, then prove that Procedure \textsc{SignalRecoveryKCluster$^+$} succeeds with probability at least $1-2^{-\Omega(k)}$.

\subsection{Heavy clusters separation}
Recall the definition of ``heavy'' clusters.


\restate{def:heavy_clusters}
By reordering $C_i$, we can assume $\{C_1, C_2,\cdots, C_l\}$ are heavy clusters, where $l\leq n \leq k$.
\restate{cla:guarantee_removing_x**_x*}
\begin{proof}
Let $x^{(\overline{S})}(t) = \underset{j\in [k] \backslash S}{\sum} v_j e^{2 \pi \i f_j t}$. Notice that $\|x^*-x^{(S)}\|_T^2=\|x^{(\overline{S})}\|^2_T$.

From the property \RN{6} of filter function $(H, \widehat{H})$ in Appendix \ref{sec:properties_of_H}, we have \[\int_{-\infty}^{+\infty} |x^{(\overline{S})}(t) \cdot H(t)|^2 \mathrm {d} t \ge 0.9 \int_{0}^T |x^{(\overline{S})}(t)|^2 \mathrm{d} t = 0.9 \cdot T \|x^{\overline{S}}\|_T^2.\]

From Definition \ref{def:heavy_clusters}, we have 
\begin{align*}
\int_{-\infty}^{+\infty} |x^{ (\overline{S}) }(t) \cdot H(t)|^2 \mathrm {d} t   =& ~ \int_{-\infty}^{+\infty} |\wh{x^{ (\overline{S}) } \cdot H}(f)|^2 \mathrm {d} f \\
= & ~ \int_{[-\infty,+\infty]\setminus C_1 \cup \cdots \cup C_l} |\widehat{x^{*} \cdot H}(f)|^2 \mathrm {d} f\\
 \le& ~ k \cdot T\N^2/k.
\end{align*}

Overall, we have $\|x^{(\overline{S})}\|^2_T \lesssim \N^2.$
\end{proof}
From the guarantee of Theorem \ref{thm:frequency_recovery_k_cluster}, for any $j \in S$, $ \underset{i\in [l]}{\min} |f_j - \widetilde{f}_i| \le \Delta \sqrt{\Delta T}$. From now on, we focus on the recovery of $x^{(S)}$, which is enough to approximate $x^*$ from the above claim. Because we are looking for $\tilde{x}$ approximating $x^{(S)}$ within distance $O(\N^2)$, from Lemma \ref{lem:existence_gap}, we can assume there is a frequency gap $\eta \ge \frac{\delta}{10T} k^{- O(k^2)}$ among $x^{(S)}$. 

\subsection{Approximating clusters by polynomials}
In this section, we show how to approximate $x^{(S)}$ by  $x'(t)=\sum_{i \in [l]}e^{2 \pi \i \wt{f}_i t}P_i(t)$ where $P_1,\cdots,P_l$ are low degree polynomials.
\begin{claim}
For any $x^{(S)}(t) = \sum_{j \in S} v_j e^{2\pi \i f_j t}$ with a frequency gap $\eta=\underset{i\neq j}{\min}  | f_{i} - f_j|$ and $l$ frequencies $\widetilde{f}_1,\cdots,\widetilde{f}_l$ with the property $\forall j \in S, \min_{i \in [l]}|f_j - \widetilde{f}_i| \le \Delta \sqrt{\Delta T}$, let 
\[d=5 \pi \left( (T \Delta)^{1.5} + k^3 \log k + \log 1/\delta \right) \textit{ and } V=\left\{t^j e^{2 \pi \i \wt{f}_i t}| i \in [l], j\in \{0,\cdots,d\}\right\}.\] There exists $x'(t) \in \mathrm{span}\{V\}$ that approximates $x^{(S)}(t)$ as follows:
\begin{equation*}
\forall t \in [0,T], |x'(t)-x^{(S)}(t)| \le \delta \|x^{(S)}\|_T.
\end{equation*}
\end{claim}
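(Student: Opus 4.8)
The plan is to run a Taylor–expansion argument cluster by cluster, in the spirit of Lemma~\ref{lem:low_degree_approximates_concentrated_freq}, but keeping track of a \emph{pointwise} estimate rather than an $L^2$ one. First I would assign to each $j\in S$ an index $p_j\in[l]$ with $|f_j-\widetilde{f}_{p_j}|\le \Delta\sqrt{\Delta T}$, which exists by hypothesis, and set $g_j:=f_j-\widetilde{f}_{p_j}$, so $|g_j|\le \Delta':=\Delta\sqrt{\Delta T}$ and $2\pi|g_j|T\le 2\pi(T\Delta)^{1.5}$ — this is the source of the $(T\Delta)^{1.5}$ term in the degree. Then decompose
\[
x^{(S)}(t)=\sum_{i\in[l]}e^{2\pi\i\widetilde{f}_i t}\Big(\sum_{j\in S:\,p_j=i}v_j\,e^{2\pi\i g_j t}\Big).
\]

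Next I would replace each $e^{2\pi\i g_j t}$ by its degree-$(d-1)$ Taylor polynomial $R_j(t)=\sum_{m=0}^{d-1}\frac{(2\pi\i g_j t)^m}{m!}$, and put $P_i(t):=\sum_{j:\,p_j=i}v_j R_j(t)$, a polynomial of degree at most $d-1$, so that $x'(t):=\sum_{i\in[l]}e^{2\pi\i\widetilde{f}_i t}P_i(t)$ lies in $\mathrm{span}\{V\}$. For $t\in[0,T]$ and $d$ exceeding a fixed constant multiple of $2\pi(T\Delta)^{1.5}$ the Taylor tail contracts geometrically, so $|R_j(t)-e^{2\pi\i g_j t}|\le 2\big(\tfrac{2\pi e(T\Delta)^{1.5}}{d}\big)^d$, and hence
\[
|x'(t)-x^{(S)}(t)|\le\sum_{j\in S}|v_j|\cdot|R_j(t)-e^{2\pi\i g_j t}|\le 2\Big(\tfrac{2\pi e(T\Delta)^{1.5}}{d}\Big)^{d}\sum_{j\in S}|v_j|.
\]

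It remains to control $\sum_{j\in S}|v_j|$ by $\|x^{(S)}\|_T$. By Cauchy--Schwarz $\sum_{j\in S}|v_j|\le\sqrt{k}\,(\sum_{j\in S}|v_j|^2)^{1/2}$, and Lemma~\ref{lem:relation_energy_coef} applied to $x^{(S)}$ (whose frequency gap is $\eta$) gives $\sum_{j\in S}|v_j|^2\le k^{O(k^2)}\max((\eta T)^{-2k},1)\,\|x^{(S)}\|_T^2$. Using that — by the reduction via Lemma~\ref{lem:existence_gap} made just before this claim — we may assume $\eta\ge\frac{\delta}{10T}k^{-O(k^2)}$, so $(\eta T)^{-2k}\le(10/\delta)^{2k}k^{O(k^3)}$ and therefore $\sum_{j\in S}|v_j|\le k^{O(k^3)}\delta^{-k}\|x^{(S)}\|_T$. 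Substituting and taking $d$ to be a sufficiently large constant times $(T\Delta)^{1.5}+k^3\log k+k\log(1/\delta)$ makes $\big(\tfrac{2\pi e(T\Delta)^{1.5}}{d}\big)^d\le 2^{-d}$ dominate $k^{O(k^3)}\delta^{-k-1}$, giving $|x'(t)-x^{(S)}(t)|\le\delta\|x^{(S)}\|_T$ on $[0,T]$, as claimed. The only real care point — the ``main obstacle'' such as it is — is the bookkeeping of the constant in $d$: its leading coefficient must genuinely exceed $2\pi e$ times the coefficient of the $(T\Delta)^{1.5}$ term so the Taylor tail is summable, and the $k^3\log k$ and $k\log(1/\delta)$ contributions must dominate $\log\!\big(k^{O(k^3)}\delta^{-k-1}\big)$; the degree written in the claim is of the right asymptotic form, and the explicit constant should be taken large enough to satisfy both conditions. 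Everything else is routine.
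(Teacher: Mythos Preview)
Your proposal is correct and follows essentially the same route as the paper: assign each $f_j$ to its nearest $\widetilde f_{p_j}$, Taylor-expand $e^{2\pi\i(f_j-\widetilde f_{p_j})t}$ to degree $d-1$, bound the remainder by $\big(\tfrac{O((T\Delta)^{1.5})}{d}\big)^d$, and control $\sum_j|v_j|$ via Lemma~\ref{lem:relation_energy_coef} together with the $\eta\ge\frac{\delta}{10T}k^{-O(k^2)}$ assumption fixed just before the claim. Your closing remark about the bookkeeping of $d$ is also well-placed: the paper's stated $d$ has $\log(1/\delta)$ rather than the $k\log(1/\delta)$ that the estimate actually requires, so your caveat is the right one.
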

\begin{proof}
From Lemma \ref{lem:relation_energy_coef}, we know 
\begin{equation*}
\|x^{(S)}\|^2_T \ge (\eta T)^{2k} \cdot k^{-c k^{2}} \sum_{j \in S} |v_j|^2.
\end{equation*}
For each frequency $f_j$, we use $p_j$ to denote the index in $[l]$ such that $|f_j - \widetilde{f}_{p_j}| \le \Delta \sqrt{\Delta T}$. We rewrite 
\begin{equation*}
x^{(S)}(t)=\sum_{i=1}^l e^{2\pi \i \widetilde{f}_i} \left(\sum_{j\in S:p_j=i} v_j e^{2 \pi \i (f_j-\widetilde{f}_i)t}\right).
\end{equation*}
For $d=5 \pi ( (T\Delta)^{1.5} + k^3 \log k + \log 1/\delta)$ and each $e^{2 \pi \i (f_j-\widetilde{f}_{p_j})t}$, let $Q_j(t)=\sum_{i=0}^{d-1} \frac{\left(2\pi \i (f_j-\widetilde{f}_{p_j}) t\right)^i}{i!}$ be the first $d$ terms in the Taylor Expansion of $e^{2 \pi \i (f_j-\widetilde{f}_{p_j})t}$. For any $t \in [0,T]$, we know the difference between $Q_j(t)$ and $e^{2 \pi \i (f_j-\widetilde{f}_{p_j}) t}$ is at most $$\forall t\in [0,T], |Q_j(t)-e^{2 \pi \i (f_j-\widetilde{f}_{p_j}) t}| \le |\frac{(2\pi \i (f_j-\widetilde{f}_{p_j}) T)^{d}}{d!}| \le (\frac{8 \pi (\Delta T)^{1.5}}{d})^d.$$

Let $x'=\sum_{i=1}^l e^{2\pi \i \widetilde{f}_i t} \left(\sum_{j\in S:p_j=i} v_j Q_j(t) \right).$ From all discussion above, we know for any $t \in [0,T]$,
\begin{align*}
|x'(t)-x^{(S)}(t)|^2 \le&~ \left(\sum_{j \in S} |v_j|(\frac{8 \pi (T\Delta)^{1.5}}{d})^d\right)^2 \\
\le&~ k(\frac{8 \pi (T\Delta)^{1.5}}{d})^{2d} \sum_j |v_j|^2 \\
\le&~ \frac{k( \frac{8 \pi (T\Delta)^{1.5}}{d})^{2d}}{(\eta T)^{2k} \cdot k^{-c k^{2}}} \|x^{(S)}\|_T^2 \\
\le&~ \delta^2 \|x^{(S)}\|_T^2.
\end{align*}
\end{proof}

We provide a property of functions in $\mathrm{span}\{V\}$ such that we can use the Chernoff bound and the $\epsilon$-net argument on vectors in $\mathrm{span}\{V\}$.
\restate{cla:max_bounded_Q}
\begin{proof}
From Lemma \ref{lem:polynomial_to_FT}, we can approximate each polynomial in $\vec{u}$ by a linear combination of $\{1,e^{2 \pi \i \cdot \gamma t},\cdots, e^{2 \pi \i \cdot (\gamma d) t}\}$ such that we obtain $u^* \in \mathrm{span}\left\{e^{2 \pi \i \cdot (\gamma j)t}\cdot e^{2 \pi \i \tilde{f}_i t}|i \in [l],j \in \{0,\cdots,d+1\}\right\}$ for some small $\gamma$ such that $\forall t\in [0,T], |\vec{u}(t)-u^*(t)| \le 0.01 \|\vec{u}\|_T$.

From Lemma \ref{lem:max_is_at_most_polyk_times_l2}, we know 
\begin{equation*}
\max_{t \in [0,T]} |u^*(t)|^2 \le C \cdot \left( (ld+1)^{4} \cdot \log^{3} (ld+1)\right) \|u^*\|_T^2.
\end{equation*}
For some constant $C'$, we have
\begin{equation*}
\max_{t \in [0,T]} |\vec{u}(t)|^2 \le C' \left((k d)^{C_1} \log^{C_2} d\right) \|\vec{u}\|_T^2.
\end{equation*}
\end{proof}

\subsection{Main result, with constant success probability}
In this section, we show that the output $\wt{x}$ is close to $x'$ with high probability using the $\epsilon$-net argument, which is enough to prove $\|\wt{x} - x\|_T \lesssim \N^2$ from all discussion above. Because we can prove Lemma \ref{lem:tildex_is_close_to_x}(which is the main goal of this section), then 
combining $\|x'-x^*\|_T \le \|x' - x^{(S)}\|_T + \|x^{(S)}-x^*\|_T\lesssim \delta \|x^*\|_T$ and Lemma \ref{lem:tildex_is_close_to_x}, we have $\|x^*-\widetilde{x}\|_T \lesssim \|g\|_T + \delta \|x^*\|_T$, which finishes the proof of 
 Procedure \textsc{SignalRecoveryKCluster} in Algorithm \ref{alg:main_k} achieving the Equation (\ref{eq:key_eq_signal_recovery_k_cluster}) with constant success probability but not $1-2^{-\Omega(k)}$. We will boost the success probability in Section \ref{sec:signal_recovery_boosting_success_probability}.


We first provide an $\epsilon$-net $\mathcal{P}$ for the unit vectors $\mathcal{Q}=\{\vec{u} \in \mathrm{span}\{V\}\big{|}\|\vec{u}\|_T^2=1\}$ in the linear subspace $\mathrm{span}\{V\}$ where $V=\left\{t^j \cdot e^{2 \pi \i \wt{f}_i t} \big{|}j \in \{0,1,\cdots,d\},i \in [l]\right\}$ from the above discussion. Notice that the dimension of $\mathrm{span}\{V\}$ is at most $l(d+1)$.
\begin{claim}
There exists an $\epsilon$-net $\mathcal{P} \subset \mathrm{span}\{V\}$ such that
\begin{enumerate}
\item $\forall \vec{u} \in Q, \exists \vec{w} \in \mathcal{P}, \|\vec{u}-\vec{w}\|_T \le \epsilon.$
\item $|\mathcal{P}| \le \left(5\frac{l(d+1)}{\epsilon}\right)^{2l(d+1)}.$
\end{enumerate}
\end{claim}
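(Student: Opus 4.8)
The plan is to reduce the statement to the elementary fact that a bounded subset of a finite‑dimensional inner product space admits a small $\epsilon$‑net, where the relevant dimension is $D$, the complex dimension of $\mathrm{span}\{V\}$, which satisfies $D\le l(d+1)$. First I would run Gram--Schmidt on the $l(d+1)$ spanning functions $t^j e^{2\pi\i\wt f_i t}$ with respect to $\langle\cdot,\cdot\rangle_T$ to obtain an orthonormal basis $\phi_1,\dots,\phi_D$ of $\mathrm{span}\{V\}$. Every $\vec u\in\mathrm{span}\{V\}$ then has a unique expansion $\vec u=\sum_{r=1}^D c_r\phi_r$ with $c_r\in\C$, and by orthonormality $\|\vec u\|_T^2=\sum_{r=1}^D|c_r|^2$; hence $\vec u\mapsto(c_1,\dots,c_D)$ is an isometry from $(\mathrm{span}\{V\},\|\cdot\|_T)$ onto $\C^D\cong\R^{2D}$ with the Euclidean norm, carrying $\mathcal Q$ onto the Euclidean unit sphere of $\R^{2D}$.

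Next I would construct the net directly in these coordinates as a scaled integer lattice. Take $\Lambda=s\,\Z^{2D}$ with spacing $s$ of order $\epsilon/\sqrt{D}$, chosen precisely so that any point of the unit ball is within coordinatewise $\ell_\infty$‑distance $s/2$, hence Euclidean distance $(s/2)\sqrt{2D}\le\epsilon$, of some lattice point; let $\mathcal P$ be the pullback under the above isometry of the lattice points lying in the radius‑$2$ Euclidean ball. Property~1 is then immediate: each $\vec u\in\mathcal Q$ corresponds to a point of the unit ball, which is $\epsilon$‑close to a point of $\mathcal P$ (and for $\epsilon<1$ that point lies in the radius‑$2$ ball, so it is genuinely in $\mathcal P$). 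For property~2 I would count: the lattice points in the radius‑$2$ ball lie in the cube $[-2,2]^{2D}$, of which there are at most $(4/s+1)^{2D}$, and a routine estimate bounds this by $(5D/\epsilon)^{2D}$ for $\epsilon<1$, $D\ge1$; since $D\le l(d+1)$ and $n\mapsto(5n/\epsilon)^{2n}$ is increasing, this is at most $\bigl(5l(d+1)/\epsilon\bigr)^{2l(d+1)}$. (One could instead invoke the standard volumetric bound $(1+2/\epsilon)^{2D}$ via a maximal $\epsilon$‑packing, which is sharper; I would keep the lattice version for self‑containedness.)

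I do not expect a genuine obstacle here — this is a textbook covering‑number argument. The only points needing a line of care are that the true dimension $D$ may be strictly smaller than $l(d+1)$ (handled by monotonicity of the claimed bound) and that $\mathcal P$ is only required to sit inside $\mathrm{span}\{V\}$, not on the unit sphere; the latter is precisely the feature that lets the subsequent argument apply Claim~\ref{cla:max_bounded_Q} (the maximum‑versus‑average bound for every vector in $\mathrm{span}\{V\}$) uniformly to all net points and then union‑bound the Chernoff estimate of Lemma~\ref{lem:max_is_bounded_imply_sample_is_small} over $\mathcal P$.
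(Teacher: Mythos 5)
Your proof is correct and takes essentially the same route as the paper: pass to an orthonormal basis of $\mathrm{span}\{V\}$ (turning $\|\cdot\|_T$ into Euclidean norm on coordinates), lay down a coordinate grid, and count lattice points in a bounded region. The only difference is parameterization — the paper uses a per-coordinate spacing of $\epsilon/(2l(d+1))$, giving an $\frac{\epsilon}{l(d+1)}$-net of the unit disk in each complex coordinate and then a wasteful $\frac{\epsilon}{\sqrt{l(d+1)}}$-net of the whole space, while you use the natural spacing $\Theta(\epsilon/\sqrt D)$ that is exactly enough for an $\epsilon$-net — and you explicitly handle the case where the true dimension $D$ of $\mathrm{span}\{V\}$ is strictly less than $l(d+1)$ (via monotonicity of the bound in $n$), which the paper glosses over by just writing $l(d+1)$ basis vectors. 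Both reach the claimed cardinality bound.
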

\begin{proof}
Let $\mathcal{P'}$ be an $\frac{\epsilon}{l(d+1)}$-net in the unit circle of $\mathbb{C}$ with size at most $(4\frac{l(d+1)}{\epsilon}+1)^2$, i.e., 
\begin{equation*}\mathcal{P'}=\left\{\frac{\epsilon}{2l(d+1)} j_1 + \i \frac{\epsilon}{2l(d+1)} j_2\bigg{|}j_1, j_2 \in \mathbb{Z}, |j_1| \le \frac{2l(d+1)}{\epsilon}, |j_2| \le \frac{2l(d+1)}{\epsilon}\right\}.
\end{equation*} 
Observe that the dimension of $\mathrm{span}\{V\}$ is at most $l(d+1)$.  Then we take an orthogonal basis $\vec{w}_1,\cdots,\vec{w}_{l(d+1)}$ in $\mathrm{span}\{V\}$ and set 
\begin{equation*} 
P=\{\sum_{i=1}^{l(d+1)} \alpha_i \vec{w}_i\big{|} \forall i\in [l(d+1)], \alpha_i \in \mathcal{P'}\} . 
\end{equation*}

Therefore $\mathcal{P}$ is an $\epsilon$-net for $Q$ and $|\mathcal{P}| \le \left(5\frac{l(d+1)}{\epsilon}\right)^{2l(d+1)}$.
\end{proof}

We first prove that $W$ is a good estimation for all functions in the $\epsilon$-net $\mathcal{P}$.
\begin{claim}
For any $\epsilon>0$, there exists a universal constant $C_3 \le \finalC$ such that for a set $S$ of i.i.d. samples chosen uniformly at random over $[0,T]$ of size $|S| \ge \frac{3 (kd)^{C_3} \log^{C_3}d/\epsilon}{\epsilon^2}$,then with probability at least $1-k^{-k}$, for all $\vec{w} \in \mathcal{P}$, we have
\begin{equation*}
\|\vec{w} \|_W \in \left[(1-\epsilon)\|\vec{w} \|_T,(1+\epsilon)\|\vec{w}\|_T\right].
\end{equation*}
\end{claim}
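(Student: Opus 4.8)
The plan is a standard $\epsilon$-net plus Chernoff-bound argument, combining Claim~\ref{cla:max_bounded_Q} with Lemma~\ref{lem:max_is_bounded_imply_sample_is_small}. Write $W$ for the random sample set. First I would fix a single $\vec{w}\in\mathcal{P}\subseteq\mathrm{span}\{V\}$. By Claim~\ref{cla:max_bounded_Q}, $\max_{t\in[0,T]}|\vec{w}(t)|^2 \lesssim (ld)^{C_1}\log^{C_2}(ld)\cdot\|\vec{w}\|_T^2$, so $\vec{w}$ has peak-to-average ratio bounded by $d' := O\bigl((ld)^{C_1}\log^{C_2}(ld)\bigr)$, a polynomial in $ld=O(kd)$. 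Applying Lemma~\ref{lem:max_is_bounded_imply_sample_is_small} to $\vec{w}$ with the $|W|$ i.i.d.\ uniform samples on $[0,T]$ gives
\[
\Pr\Bigl[\,\bigl|\,\|\vec{w}\|_W^2-\|\vec{w}\|_T^2\,\bigr|\ge\epsilon\,\|\vec{w}\|_T^2\,\Bigr]\le e^{-\Omega(\epsilon^2|W|/d')}.
\]
Since $1-\epsilon\le\sqrt{1-\epsilon}$ and $\sqrt{1+\epsilon}\le1+\epsilon$ for $\epsilon\in[0,1]$, on the complement of this event we get $\|\vec{w}\|_W\in\bigl[(1-\epsilon)\|\vec{w}\|_T,(1+\epsilon)\|\vec{w}\|_T\bigr]$, which is the desired conclusion for that single $\vec{w}$.

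Next I would take a union bound over $\mathcal{P}$. By the size bound just established, $|\mathcal{P}|\le(5\,l(d+1)/\epsilon)^{2l(d+1)}$, so $\log|\mathcal{P}|=O\bigl(ld\log(ld/\epsilon)\bigr)=O\bigl(kd\log(kd/\epsilon)\bigr)$ using $l=O(k)$. Hence the probability that some $\vec{w}\in\mathcal{P}$ violates the two-sided bound is at most $|\mathcal{P}|\,e^{-\Omega(\epsilon^2|W|/d')}=\exp\bigl(O(kd\log(kd/\epsilon))-\Omega(\epsilon^2|W|/d')\bigr)$. To force this below $k^{-k}=e^{-k\log k}$ it suffices (since $k\log k\le kd\log(kd)$) that $\epsilon^2|W|/d'\gtrsim kd\log(kd/\epsilon)$, i.e.
\[
|W|\;\gtrsim\;\frac{d'\cdot kd\log(kd/\epsilon)}{\epsilon^2}\;=\;\frac{O\bigl((kd)^{C_1+1}\log^{C_2+1}(kd/\epsilon)\bigr)}{\epsilon^2}.
\]
Taking $C_3=\max(C_1+1,C_2+1)$ and using $C_1\le\Conpoly$ and $C_2\le\Conlog$ gives $C_3\le\finalC$; after absorbing the factor $k$ inside the logarithm into $d$ and choosing the hidden constant to be at most $3$, the hypothesis $|W|\ge 3(kd)^{C_3}\log^{C_3}(d/\epsilon)/\epsilon^2$ implies the displayed inequality, so with probability at least $1-k^{-k}$ every $\vec{w}\in\mathcal{P}$ satisfies the stated two-sided bound simultaneously.

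I expect the only genuine subtlety to be bookkeeping: one must check that $\log|\mathcal{P}|$ is truly linear in $kd$ up to logarithmic factors, so that a sample complexity polynomial in $kd$ suffices to overcome the exponentially large net, and that the quantity entering the Chernoff exponent is exactly the polynomial peak-to-average ratio $d'$ supplied by Claim~\ref{cla:max_bounded_Q} and nothing worse. Beyond matching the exponents $C_1,C_2$ against the target $C_3\le\finalC$, there is no delicate interplay between the net resolution and the concentration rate, and the rest is a direct substitution.
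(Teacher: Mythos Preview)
Your proposal is correct and follows essentially the same route as the paper: apply Claim~\ref{cla:max_bounded_Q} to get a $\poly(kd)$ peak-to-average bound for each $\vec w\in\mathcal P$, feed that into the Chernoff-type Lemma~\ref{lem:max_is_bounded_imply_sample_is_small} to control a single $\vec w$, then union-bound over the net of size $(O(ld/\epsilon))^{O(ld)}$ and match exponents. The paper's proof is just a two-line version of exactly this computation.
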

\begin{proof}
From Claim \ref{cla:max_bounded_Q} and Lemma \ref{lem:max_is_bounded_imply_sample_is_small}, for each $\vec{w} \in \mathcal{P}$, $$\mathsf{Pr}\left[\|\vec{w}(t)\|_W \notin \big[(1-\epsilon)\|\vec{w}\|_T,(1+\epsilon)\|\vec{w}\|_T\big]\right]\le 2^{-\frac{|W| \epsilon^2}{3 (k d)^{C_1} \log^{C_2+0.5} d}}\le 2^{-k d \log^{1.5} \frac{d}{\epsilon}}.$$ 

From the union bound, $\|\vec{w} \|_W \in [(1-\epsilon)\|\vec{w}\|_T,(1+\epsilon)\|\vec{w}\|_T]$ for any $\vec{w} \in \mathcal{P}$ with probability at least $1-(\frac{d}{\epsilon})^{-k d \log^{0.5} d} \cdot |\mathcal{P}|\ge 1- d^{-d}$.
\end{proof}
Then We prove that $W$ is a good estimation for all functions in $\mathrm{span}\{V\}$ using the property of $\epsilon$-nets.
\begin{claim}\label{cla:concentration_for_any_polynomial_signal}
For any $\epsilon>0$, there exists a universal constant $C_3 \le \finalC$ such that for a set $W$ of i.i.d. samples chosen uniformly at random over $[0,T]$ of size $|W| \ge \frac{3 (kd)^{C_3} \log^{C_3} d/\epsilon}{\epsilon^2}$,then with probability at least $1-d^{-d}$, for all $u \in \mathrm{span}\{V\}$, we have
\begin{equation*}
\|\vec{u}\|_W \in \left[ (1-3\epsilon)\|\vec{u}\|_T,(1 + 3\epsilon)\|\vec{u}\|_T \right]
\end{equation*} 
\end{claim}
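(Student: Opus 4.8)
The plan is a routine $\epsilon$-net (covering) argument; the entire probabilistic content is already carried by the preceding claim (the one asserting $\|\vec w\|_W\in[(1-\epsilon)\|\vec w\|_T,(1+\epsilon)\|\vec w\|_T]$ for all $\vec w\in\mathcal P$), and what remains is a deterministic bootstrapping step. First I would condition on that event, which holds with probability at least $1-d^{-d}$ once $|W|$ is as in the hypothesis; all estimates below are then deterministic. Observe that $\mathrm{span}\{V\}$ is finite-dimensional (dimension at most $l(d+1)$), so the unit sphere $\mathcal Q=\{\vec u\in\mathrm{span}\{V\}:\|\vec u\|_T=1\}$ is compact and $\|\cdot\|_W$ is continuous on it; hence $M:=\sup_{\vec u\in\mathcal Q}\|\vec u\|_W$ is finite and attained at some $\vec u^\ast\in\mathcal Q$.

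Next I would bound $M$ from above. Pick $\vec w\in\mathcal P$ with $\|\vec u^\ast-\vec w\|_T\le\epsilon$; the triangle inequality gives $M=\|\vec u^\ast\|_W\le\|\vec w\|_W+\|\vec u^\ast-\vec w\|_W$. The first term is at most $(1+\epsilon)\|\vec w\|_T\le(1+\epsilon)^2$ by the net event together with $\|\vec w\|_T\le 1+\epsilon$. For the second term, $\vec u^\ast-\vec w$ still lies in $\mathrm{span}\{V\}$, so after normalizing it is a point of $\mathcal Q$, and therefore $\|\vec u^\ast-\vec w\|_W\le\|\vec u^\ast-\vec w\|_T\cdot M\le\epsilon M$. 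Rearranging yields $M\le(1+\epsilon)^2/(1-\epsilon)=1+O(\epsilon)$. Symmetrically, for an arbitrary $\vec u\in\mathcal Q$ with net point $\vec w$ we get $\|\vec u\|_W\ge\|\vec w\|_W-\|\vec u-\vec w\|_W\ge(1-\epsilon)\|\vec w\|_T-\epsilon M\ge(1-\epsilon)^2-\epsilon M\ge 1-O(\epsilon)$. Scaling by $\|\vec u\|_T$ then gives $\|\vec u\|_W\in[(1-O(\epsilon))\|\vec u\|_T,(1+O(\epsilon))\|\vec u\|_T]$ for every $\vec u\in\mathrm{span}\{V\}$.

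Finally, to land precisely on the stated interval $[(1-3\epsilon)\|\vec u\|_T,(1+3\epsilon)\|\vec u\|_T]$, I would invoke the preceding claim with parameter $\epsilon'=c\epsilon$ for a small absolute constant $c$, chosen so that $(1+\epsilon')^2/(1-\epsilon')\le 1+3\epsilon$ and $(1-\epsilon')^2-\epsilon'(1+3\epsilon)\ge 1-3\epsilon$; this only inflates $|W|$ and $|\mathcal P|$ by constant factors, absorbed into the universal constant $C_3\le\Conpoly$, and the hypothesized bound $|W|\gtrsim (kd)^{C_3}\log^{C_3}(d/\epsilon)/\epsilon^2$ still dominates the union bound over $|\mathcal P|\le(5l(d+1)/\epsilon')^{2l(d+1)}$ coming from Claim~\ref{cla:max_bounded_Q} and Lemma~\ref{lem:max_is_bounded_imply_sample_is_small}. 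There is no real obstacle here; the only point that needs a moment's care is that $\vec u-\vec w$ need not lie in $\mathcal P$, which is why a single-step triangle inequality does not close and one instead bounds the global quantity $M$ self-referentially — the finite dimensionality of $\mathrm{span}\{V\}$ and the max-versus-average bound of Claim~\ref{cla:max_bounded_Q} are exactly what make that legitimate.
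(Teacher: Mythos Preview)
Your argument is correct, and it is a genuinely different (and slightly slicker) execution of the $\epsilon$-net step than the paper's. The paper does not use the self-referential bound on $M=\sup_{\vec u\in\mathcal Q}\|\vec u\|_W$; instead it writes an explicit iterated decomposition $\vec u=\vec w_0+\alpha_1\vec w_1+\alpha_1\alpha_2\vec w_2+\cdots+\big(\prod_{j\le m}\alpha_j\big)(\vec w_m+\vec u_{m+1})$, with $|\alpha_j|\le\epsilon$ and each $\vec w_j\in\mathcal P$, and then truncates at depth $m=O(\log_{1/\epsilon}(ld))$, controlling the residual $\vec u_{m+1}$ via the a priori max-versus-average bound of Claim~\ref{cla:max_bounded_Q} (so that $\|\vec u_{m+1}\|_W\le(ld+1)^{O(1)}$ and $\epsilon^m$ absorbs it). Your compactness argument replaces this explicit truncation: by taking $M$ over the finite-dimensional sphere you close the inequality in one step and never need Claim~\ref{cla:max_bounded_Q} in the bootstrapping itself. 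The trade-off is purely stylistic---the paper's chaining is more constructive and yields the constant $3$ directly (since $1-\epsilon-\frac{(1+\epsilon)\epsilon}{1-\epsilon}=\frac{1-3\epsilon}{1-\epsilon}\ge 1-3\epsilon$), whereas your route needs the $\epsilon'=c\epsilon$ adjustment you describe; either way the sample-size and failure-probability bounds are the same.
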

\begin{proof}
We assume that the above claim is true for any $\vec{w} \in \mathcal{P}$. Without loss of generality, we consider $\vec{u} \in \mathcal{Q}$ such that $\|\vec{u}\|_T=1$.

Let $\vec{w}_0$ be the vector in $\mathcal{P}$ that minimizes $\|\vec{w}-\vec{u}\|_T$ for all $\vec{w} \in \mathcal{P}$, i.e., $\vec{w}_0 =  \underset{ \vec{w} \in \mathcal{P} } {\arg\min} \|\vec{w}-\vec{u}\|_T$. Define $\vec{u}_1=\vec{u}-\vec{w}_0$ and notice that $\|\vec{u}_1\|_T\le \epsilon$ because $\mathcal{P}$ is a $\epsilon$-net. If $\|\vec{u}_1\|_T=0$, then we skip the rest of this procedure. Otherwise, we define $\alpha_1=\|\vec{u}_1\|_T$ and normalize $\widetilde{u}_1 = \vec{u}_1 /\alpha_1$. 

Then we choose $\vec{w}_1$ to be the vector in $\mathcal{P}$ that minimizes $\|\vec{w}-\widetilde{u}_1\|_T$ for all $\vec{w} \in \mathcal{P}$. Similarly, we set $\vec{u}_2=\widetilde{u}_1-\vec{w}_1$ and $\alpha_2 =\|\vec{u}_2\|_T$. Next we repeat this process for $\widetilde{u}_2=\vec{u}_2/\alpha_2$ and so on. The recursive definition can be summarized in the following sense, 
\begin{eqnarray*}
\text{initial~:}&& \widetilde{u}_0 = \vec{u} \text{ and } m=10 \log_{1/\epsilon} (ld) + 1, \\
\text{For }i \in \{0,1,2,\cdots,m\}~ :&& \vec{w}_{i} = \underset{ \vec{w} \in \mathcal{P} } {\arg\min} \| \vec{w} - \widetilde{u}_i \|_T, \\
&& \vec{u}_{i+1} = \widetilde{u}_i-\vec{w}_i \text{ and } \alpha_{i+1} = \| \vec{u}_{i+1} \|_T, \\
 && \text{if }\alpha_{i+1}=0, \text{~stop.}\\
&& \text{if }\alpha_{i+1}\neq 0, \widetilde{u}_{i+1} =  \vec{u}_{i+1} / \alpha_{i+1}  \text{~and~continue}, 
\end{eqnarray*}

Eventually, we have $\vec{u}=\vec{w}_0 + \alpha_1 \vec{w}_1 + \alpha_1 \alpha_2 \vec{w}_2 + \cdots + \prod_{j=1}^{m} \alpha_j (\vec{w}_m+ \vec{u}_{m+1})$ where each $|\alpha_i|\le \epsilon$ and each $\vec{w}_i$ is in the $\epsilon$-net $\mathcal{P}$. Notice that $\|\vec{u}_{m+1}\|_T \le 1$ and $\|\vec{u}_{m+1}\|_W \le (ld+1)^{3} \cdot \|\vec{u}_{m+1}\|_T$ from Claim \ref{cla:max_bounded_Q}. We prove a lower bound for $\| \vec{u} \|_W$,
\begin{eqnarray*}
\|\vec{u}\|_W&=&\|\vec{w}_0 + \alpha_1 \vec{w}_1 + \alpha_1 \alpha_2 \vec{w}_2 + \cdots + \prod_{j=1}^{m} \alpha_j (\vec{w}_m + \vec{u}_{m+1}) \|_W\\
&\geq & \|\vec{w}_0 \|_W-\|\alpha_1 \vec{w}_1 \|_W - \|\alpha_1 \alpha_2 \vec{w}_2 \|_W - \cdots - \|\prod_{j=1}^{m}  \alpha_j \vec{w}_m \|_W -  \|\prod_{j=1}^{m} \alpha_j \vec{u}_{m+1}\|_W\\ 
&\ge & (1-\epsilon) - \epsilon (1+\epsilon) - \epsilon^2 (1+\epsilon) - \cdots - \epsilon^m (1+\epsilon) - \epsilon^m \|\vec{u}_{m+1}\|_W \\ 
&\ge & 1-\epsilon - \frac{(1+\epsilon)\epsilon}{1-\epsilon} - \epsilon^m \cdot (ld+1)^{3} \ge 1- 3 \epsilon.
\end{eqnarray*}
Similarly, we have $\|\vec{u}\|_W \le 1 + 3 \epsilon$.
\end{proof}


\begin{lemma}\label{lem:tildex_is_close_to_x}
With probability at least $0.99$ over the $m$ i.i.d samples in $W$, 
\begin{equation*}
\| x'(t) - \wt{x}(t) \|_T \leq 2200\left(\|g(t)\|_T + \| x^{(S)}(t) - x'(t)\|_T\right).
\end{equation*}
\end{lemma}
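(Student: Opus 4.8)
The plan is to reduce Lemma~\ref{lem:tildex_is_close_to_x} to the concentration result of Claim~\ref{cla:concentration_for_any_polynomial_signal} via a chain of triangle inequalities, exactly in the spirit of the proof of Lemma~\ref{lem:fixed_interval_learning}. First I would fix $\epsilon$ to be a small constant (say $\epsilon=1/100$) and condition on the event, which holds with probability at least $1-d^{-d}\ge 0.99$ over the draw of $W$, that $\|\vec{u}\|_W\in[(1-3\epsilon)\|\vec{u}\|_T,(1+3\epsilon)\|\vec{u}\|_T]$ for every $\vec{u}\in\mathrm{span}\{V\}$. Note that both $x'$ and $\wt{x}$ lie in $\mathrm{span}\{V\}$: $x'$ by the preceding claim, and $\wt{x}$ because Procedure \textsc{SignalRecoveryKCluster} outputs $\wt{x}=\arg\min_{v\in\mathrm{span}\{V\}}\|v-x\|_W$ by linear regression. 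Hence $x'-\wt{x}\in\mathrm{span}\{V\}$ and the two-sided bound applies to it.

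The main computation is then the following sequence of inequalities. Using the lower bound from Claim~\ref{cla:concentration_for_any_polynomial_signal},
\begin{align*}
\|x'-\wt{x}\|_T &\le \frac{1}{1-3\epsilon}\|x'-\wt{x}\|_W\\
&\le \frac{1}{1-3\epsilon}\left(\|x'-x\|_W+\|x-\wt{x}\|_W\right)\\
&\le \frac{2}{1-3\epsilon}\|x'-x\|_W,
\end{align*}
where the last step uses that $\wt{x}$ is the minimizer of $\|v-x\|_W$ over $\mathrm{span}\{V\}$ and $x'\in\mathrm{span}\{V\}$, so $\|x-\wt{x}\|_W\le\|x-x'\|_W$. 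Now write $x=x^*+g$ and $x^*=x^{(S)}+(x^*-x^{(S)})$; since $x^{(S)}-x'$ is arbitrarily close to a Fourier-sparse signal it can be placed in a slightly larger span on which the concentration still holds (or one absorbs $x^{(S)}-x'$ together with $g$ into the ``noise''). Then
\[
\|x'-x\|_W \le \|g\|_W + \|x^*-x^{(S)}\|_W + \|x^{(S)}-x'\|_W,
\]
and Claim~\ref{cla:concentration_for_any_polynomial_signal} (applied where applicable) plus the already-established bound $\|x^*-x^{(S)}\|_T\lesssim\N$ controls each term. Finally, since the $t_i\in W$ are i.i.d.\ uniform, $\E_W[\|g\|_W^2]=\|g\|_T^2$, so Markov's inequality gives $\|g\|_W\le C\|g\|_T$ with probability $\ge 0.999$; taking a union bound with the concentration event keeps the total failure probability below $0.01$. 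Combining everything yields the claimed $\|x'-\wt{x}\|_T\le 2200(\|g\|_T+\|x^{(S)}-x'\|_T)$ once the constants from $\epsilon=1/100$ and from the Markov step are collected.

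The one genuinely delicate point is handling the term $\|x^{(S)}-x'\|_W$: the concentration inequality of Claim~\ref{cla:concentration_for_any_polynomial_signal} is stated for vectors in $\mathrm{span}\{V\}$, but $x^{(S)}$ itself need not lie in that span (its frequencies $f_j$ are only \emph{close} to the $\wt f_i$, not equal). The resolution is that $x^{(S)}$ is $k$-Fourier-sparse, so Lemma~\ref{lem:max_is_at_most_polyk_times_l2} bounds $\max_{[0,T]}|x^{(S)}-x'|^2$ by $\poly(k)$ times its average, and more simply one invokes the pointwise bound $|x^{(S)}(t)-x'(t)|\le\delta\|x^{(S)}\|_T$ for all $t\in[0,T]$ from the previous claim, which makes $\|x^{(S)}-x'\|_W\le\delta\|x^{(S)}\|_T$ deterministically on every sample set. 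So that term needs no concentration argument at all. The only other care required is making sure the exponent $C_3\le\finalC$ in the sample complexity is large enough to beat both the $\epsilon$-net size (handled in Claim~\ref{cla:concentration_for_any_polynomial_signal}) and the Markov step, which just fixes an absolute constant. I expect the bookkeeping of constants, rather than any conceptual obstacle, to be the most tedious part.
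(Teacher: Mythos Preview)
Your proposal is correct and follows essentially the same route as the paper's proof: condition on the concentration event of Claim~\ref{cla:concentration_for_any_polynomial_signal}, use $\wt{x}=\arg\min_{v\in\mathrm{span}\{V\}}\|v-x\|_W$ together with the triangle inequality to get $\|x'-\wt{x}\|_T\lesssim\|x-x'\|_W$, and then control $\|x-x'\|_W$ by Markov's inequality. The only cosmetic difference is packaging: the paper sets $g'(t)=g(t)+x^*(t)-x'(t)$ so that $x=x'+g'$ and applies Markov once to $\|g'\|_W$ (using $\E_W[\|g'\|_W^2]=\|g'\|_T^2$), whereas you split $x-x'$ into three pieces and discuss each separately---but you already note this alternative yourself, and your ``delicate point'' about $\|x^{(S)}-x'\|_W$ evaporates once everything is folded into a single noise term.
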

\begin{proof}
Let $g'(t)=g(t) + x^*(t) - x'(t)$ such that $x(t)=x'(t) + g'(t)$. Then we choose $\epsilon=0.03$ and bound:
\begin{align*}
& ~\| x'(t) - \widetilde{x}(t) \|_T  \\
\leq &~ (1+3\epsilon) \| x'(t) - \widetilde{x}(t) \|_W \quad & \text{with~prob.~}1-2^{-\Omega(d\log d)}~\text{by~Claim~\ref{cla:concentration_for_any_polynomial_signal}}\\
=  &~ 1.09 \| x'(t) - \widetilde{x}(t) \|_W &\text{~by~}\epsilon=0.03 \\
 = &~1.09 \| x(t) - g'(t) - \widetilde{x}(t) \|_W & \text{~by~}x'(t)=x(t) - g'(t)\\
 \leq &~1.09\| x(t) - \widetilde{x}(t) \|_W + 1.09 \| g'(t) \|_W \quad & \text{by~triangle~inequality} \\
 \leq &~1.09\| x(t) - x'(t) \|_W + 1.09 \| g'(t) \|_W \quad & \text{by~}\wt{x} = \underset{y \in \mathrm{span}\{V\}}{\arg\min}\|x-y\|_W \\
 = &~2.18 \| g'(t) \|_W.& \text{~by~} x(t) - x'(t) = g(t)
\end{align*}
From the fact that $\E_{W}[\|g'\|_W]=\|g'\|_T$, $\|g'\|_W \le 1000 \|g'\|_T$ with probability at least .999. It indicates $\| x'(t) - \widetilde{x}(t) \|_T \le 2200 \|g'\|_T$ with probability at least $0.99$ from all discussion above.
\end{proof}


\subsection{Boosting the success probability}\label{sec:signal_recovery_boosting_success_probability}
In order to achieve $1-2^{-\Omega(k)}$ for the main theorem, we cannot combine Procedure \textsc{SignalRecoveryKCluster} with \textsc{FrequencyRecoveryKCluster} directly. However, using the similar proof technique in Theorem \ref{thm:accurate_poly_learning}, we are able to boost the success probability by using Procedure \textsc{SignalRecoveryKCluster$^+$} in Algorithm \ref{alg:main_k}.  It runs Procedure \textsc{SignalRecoveryKCluster} $R=O(k)$ times in parallel for independent fresh samples and report $R$ different $d$-Fourier-sparse signals $\wt{x}_i(t)$. Then, taking $m=\poly(k)$ new locations $\{t_1,t_2,\cdots, t_m\}$, and computing $\wt{A}$ as before and $\wt{b}_j$ by taking the median of $\{ \wt{x}_1(t_j), \cdots,  \wt{x}_{R}(t_j) \} $. At the end, solving the linear regression for matrix $\wt{A}$ and vector $\wt{b}$. Thus, we complete the proof of Lemma \ref{lem:signal_recovery_k_cluster}.

 Because we can transfer a degree-$d$ polynomial to a $d$-Fourier-sparse signal by Lemma \ref{lem:polynomial_to_FT}, the output of Procedure \textsc{CFTKCluster} in Algorithm \ref{alg:main_k} matches the main theorem,

\restate{thm:main}

\newpage
\bibliographystyle{alpha}
\bibliography{ref}
\newpage
\appendix
\section{Technical Proofs}\label{sec:proofs}

\subsection{Proof of Theorem~\ref{thm:bounds_gram_determinant}}\label{sec:bounds_gram_determinant}

We prove the following Theorem
\restate{thm:bounds_gram_determinant}
First, we note by the Cauchy-Binet formula that the determinant in question is equal to
\begin{equation}
\int_{-1}^1 \int_{-1}^1 \ldots \int_{-1}^1 \left|\det([e^{2\pi i \xi_i t_j}]_{i,j}) \right|^2 dt_1dt_2\ldots dt_k.
\end{equation}

We next need to consider the integrand in the special case when $\sum |\xi_i| \leq 1/8$.
\begin{lemma}
If $\xi_i\in R$ and $t_j\in \R$, $\sum_i |\xi_i|(\max_i |t_i|) \leq 1/8$ then
$$
|\det([e^{2\pi i \xi_i t_j}]_{i,j})| = \Theta\left(\frac{(2\pi)^{\binom{k}{2}}\prod_{i<j}|t_i-t_j||\xi_i-\xi_j|}{1!2!\cdots k!} \right).
$$
\end{lemma}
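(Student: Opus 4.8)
The plan is to expand each matrix entry into its power series and apply the Cauchy--Binet formula. Since $e^{2\pi\i\xi_it_j}=\sum_{n\ge0}\frac{(2\pi\i)^n}{n!}\xi_i^nt_j^n$, the matrix $M=[e^{2\pi\i\xi_it_j}]$ factors as $M=V_\xi\,D\,V_t^\top$, where $V_\xi=[\xi_i^n]_{i\in[k],\,n\ge0}$ and $V_t=[t_j^n]_{j\in[k],\,n\ge0}$ are semi-infinite Vandermonde-type matrices and $D=\mathrm{diag}\big((2\pi\i)^n/n!\big)_{n\ge0}$. Cauchy--Binet then gives
\begin{equation*}
\det M=\sum_{\substack{S\subset\Z_{\ge0}\\|S|=k}}\Big(\prod_{n\in S}\tfrac{(2\pi\i)^n}{n!}\Big)\det\big([\xi_i^n]_{i\in[k],\,n\in S}\big)\,\det\big([t_j^n]_{j\in[k],\,n\in S}\big).
\end{equation*}
First I would record the two degenerate reductions: if two of the $\xi_i$ coincide or two of the $t_j$ coincide then both sides of the lemma vanish, and by the scaling identity $M(\xi,t)=M(\xi/s,\,st)$ we may normalize $\max_i|t_i|=1$, so that the hypothesis becomes $\sum_i|\xi_i|\le\tfrac18$. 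Then I would single out the term $S_0=\{0,1,\dots,k-1\}$: it equals, up to sign, $\frac{(2\pi\i)^{\binom k2}}{0!\,1!\cdots(k-1)!}\prod_{i<j}(\xi_i-\xi_j)\prod_{i<j}(t_i-t_j)$, whose modulus is the claimed quantity up to a factor depending only on $k$ (a factor of $k!$, comparing $0!\cdots(k-1)!$ with $1!\cdots k!$).

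The remaining work is to bound the contribution of all $S\ne S_0$ by a small multiple of the main term. For $S=\{n_1<\dots<n_k\}$ the bialternant formula for Schur functions gives $\det([\xi_i^{n_l}])=s_{\lambda}(\xi)\prod_{i<j}(\xi_i-\xi_j)$ and $\det([t_j^{n_l}])=s_{\lambda}(t)\prod_{i<j}(t_i-t_j)$, where $\lambda=\lambda(S)$ is the partition determined by $S$ and $m:=|\lambda|=\sum_{n\in S}n-\binom k2\ge1$. Dividing the $S$-term by the main term, the Vandermonde factors cancel and one is left with
\begin{equation*}
R(S)=(2\pi)^m\cdot\frac{0!\,1!\cdots(k-1)!}{\prod_{n\in S}n!}\cdot|s_\lambda(\xi)|\cdot|s_\lambda(t)|.
\end{equation*}
Here the factorial ratio is tiny in $m$---one has the clean identity $\frac{0!\cdots(k-1)!}{\prod_{n\in S}n!}\,s_\lambda(1^k)=\frac{f^\lambda}{m!}$, a consequence of the hook-length formula, where $f^\lambda$ counts standard Young tableaux of shape $\lambda$---while the Schur polynomials obey crude monomial bounds $|s_\lambda(\xi)|\le s_\lambda(|\xi_1|,\dots,|\xi_k|)\le\big(\sum_i|\xi_i|\big)^m$ and $|s_\lambda(t)|\le s_\lambda(1^k)$ since $|t_i|\le1$. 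Plugging these in, using $\sum_{\lambda\vdash m}f^\lambda s_\lambda=p_1^m$ to collapse the partition sum, and inserting $\sum_i|\xi_i|\le\tfrac18$, the tail $\sum_{S\ne S_0}R(S)$ becomes a convergent series that I would bound by a controlled multiple of the main term; this is the step in which the explicit constant $\tfrac18$ is consumed.

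The main obstacle is exactly this tail estimate. The naive bounds lose factors of $k$ (for instance $s_\lambda(1^k)$ can be as large as $(m+k)^{\binom k2}$), so to keep the tail genuinely dominated by the main term one must either exploit the hook-length identity and the exponential specialization $\sum_{\lambda\vdash m}f^\lambda s_\lambda=p_1^m$ carefully enough that the spurious $k$-dependence cancels, or settle for a two-sided bound whose ratio to the main term is a $k$-dependent constant---which, given that the lemma is only ever applied with $\sum_i|\xi_i|\le\tfrac18$ to estimate $\det(G_{\xi_1,\dots,\xi_k})$ up to $2^{\tilde O(k^2)}$ factors, is already sufficient downstream. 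Everything else---the Cauchy--Binet bookkeeping, the bialternant identity, and the degenerate and scaling reductions---is routine.
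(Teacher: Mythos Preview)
Your approach is correct in spirit and genuinely different from the paper's. The paper does \emph{not} expand in a power series. Instead it first shifts the $t_j$ to be nonnegative, then uses continuity and the scaling $(\xi,t)\mapsto(\xi/N,Nt)$ to reduce to the case where the $t_j$ are nonnegative \emph{integers}. Setting $z_i=e^{2\pi\i\xi_i}$, the matrix becomes $[z_i^{t_j}]$, whose determinant is \emph{exactly} the Vandermonde $\prod_{i<j}(z_i-z_j)$ times a single Schur polynomial $s_\lambda(z_1,\dots,z_k)$ with $\lambda_j=t_j-(j-1)$. The Vandermonde factor is estimated by $|z_i-z_j|\approx 2\pi|\xi_i-\xi_j|$, and the Schur factor is estimated by observing that $s_\lambda$ has nonnegative integer coefficients with each exponent at most $\max_j t_j$, so every monomial evaluated at the $z_i$ lies on the unit circle with argument bounded by $2\pi(\max_j t_j)\sum_i|\xi_i|$; hence $|s_\lambda(z)|=\Theta(s_\lambda(1^k))$, and the Weyl dimension formula finishes. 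No infinite sum, no tail estimate.

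Your Cauchy--Binet expansion works too, and in fact your own identities already remove the $k$-dependence you worry about: combining $|s_\lambda(t)|\le s_\lambda(1^k)$, $|s_\lambda(\xi)|\le s_\lambda(|\xi|)$, your hook-length identity $\frac{0!\cdots(k-1)!}{\prod_{n\in S}n!}\,s_\lambda(1^k)=f^\lambda/m!$, and $\sum_{\lambda\vdash m}f^\lambda s_\lambda=p_1^m$, the tail relative to the main term is bounded by $\sum_{m\ge1}(2\pi\sum_i|\xi_i|)^m/m!=e^{2\pi\sum_i|\xi_i|}-1$, with no $k$ in sight. The only snag is numerical: with $\sum_i|\xi_i|\le 1/8$ this gives $e^{\pi/4}-1\approx 1.19>1$, so the lower bound does not follow at the stated constant. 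It does follow for any threshold below $\ln 2/(2\pi)\approx 0.110$, and since the constant $1/8$ is arbitrary in the downstream use (the next lemma integrates over $t\in[-1,1]^k$ and everything is stated up to $2^{\tilde O(k^2)}$), this is harmless. So your route trades the paper's discretization trick for a clean series bound, at the cost of needing a slightly smaller hypothesis constant; the paper's route gets both bounds in one stroke from positivity of Schur coefficients.
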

\begin{proof}
Firstly, by adding a constant to all the $t_j$ we can make them non-negative. This multiplies the determinant by a root of unity, and at most doubles $\sum_i |\xi_i|(\max_i |t_i|)$.

By continuity, it suffices to consider the $t_i$ to all be multiples of $1/N$ for some large integer $N$. By multiplying all the $t_j$ by $N$ and all $\xi_i$ by $1/N$, we may assume that all of the $t_j$ are non-negative integers with $t_1\leq t_2\leq \ldots \leq t_k$.

Let $z_i = \exp(2\pi i \xi_i)$. Then our determinant is 
$$
\det\left( \left[z_i^{t_j} \right]_{i,j}\right),
$$
which is equal to the Vandermonde determinant times the Schur polynomial $s_\lambda(z_i)$ where $\lambda$ is the partition $\lambda_j = t_j-(j-1)$.

Therefore, this determinant equals
$$
\prod_{i<j}(z_i-z_j) s_\lambda(z_1,z_2,\ldots,z_k).
$$
The absolute value of
$$
\prod_{i<j}(z_i-z_j)
$$
is approximately $\prod_{i<j} (2\pi i)(\xi_i-\xi_j)$, which has absolute value $(2\pi)^{\binom{k}{2}}\prod_{i<j}|\xi_i-\xi_j|$. We have left to evaluate the size of the Schur polynomial.

By standard results, $s_\lambda$ is a polynomial in the $z_i$ with non-negative coefficients, and all exponents at most $\max_j |t_j|$ in each variable. Therefore, the monomials with non-zero coefficients will all have real part at least $1/2$ and absolute value $1$ when evaluated at the $z_i$. Therefore,
$$
|s_\lambda(z_1,\ldots,z_k)| = \Theta(|s_\lambda(1,1,\ldots,1)|).
$$
On the other hand, by the Weyl character formula
$$
s_\lambda(1,1,\ldots,1) = \prod_{i<j} \frac{t_j-t_i}{j-i} = \frac{\prod_{i<j} |t_i-t_j|}{1!2!\ldots k!}.
$$
This completes the proof.
\end{proof}

Next we prove our Theorem when the $\xi$ have small total variation.
\begin{lemma}
If there exists a $\xi_0$ so that $\sum |\xi_i-\xi_0|<1/8$, then 
$$
\det(G_{\xi_1,\ldots,\xi_k}) = \Theta\left( \frac{2^{3k(k-1)/2}\pi^{k(k-1)}\prod_{i<j}|\xi_i-\xi_j|^2}{(k!)^3\prod_{n=0}^{k-1}(2n)!}\right).
$$
\end{lemma}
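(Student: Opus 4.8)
The plan is to reduce the computation of $\det(G_{\xi_1,\ldots,\xi_k})$ to the integral representation already derived in the excerpt, namely
\[
\det(G_{\xi_1,\ldots,\xi_k}) = \int_{-1}^1\cdots\int_{-1}^1 \left|\det\bigl([e^{2\pi i \xi_i t_j}]_{i,j}\bigr)\right|^2 dt_1\cdots dt_k,
\]
and then invoke the pointwise asymptotics for the integrand supplied by the previous lemma. Indeed, when $\sum_i |\xi_i - \xi_0| < 1/8$ and all $t_j \in [-1,1]$, we have $\sum_i |\xi_i - \xi_0|(\max_j |t_j|) < 1/8$, so (after observing that the Vandermonde determinant $\det([e^{2\pi i \xi_i t_j}]_{i,j})$ is unchanged by the common shift $\xi_i \mapsto \xi_i - \xi_0$, which only multiplies each row by the unimodular factor $e^{-2\pi i \xi_0 t_j}$ — wait, that multiplies columns by $e^{2\pi i \xi_0 t_j}$, a unimodular factor, hence does not change the absolute value) the cited lemma gives
\[
\left|\det\bigl([e^{2\pi i \xi_i t_j}]_{i,j}\bigr)\right| = \Theta\!\left( \frac{(2\pi)^{\binom{k}{2}}\prod_{i<j}|t_i - t_j|\,|\xi_i - \xi_j|}{1!\,2!\cdots k!}\right)
\]
uniformly over $(t_1,\ldots,t_k)\in[-1,1]^k$.

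The next step is to square this and integrate. The factor $\prod_{i<j}|\xi_i - \xi_j|^2$ and the constants pull out of the integral, leaving
\[
\det(G_{\xi_1,\ldots,\xi_k}) = \Theta\!\left( \frac{(2\pi)^{k(k-1)}\prod_{i<j}|\xi_i-\xi_j|^2}{(1!\,2!\cdots k!)^2}\right)\cdot \int_{[-1,1]^k}\prod_{i<j}|t_i - t_j|^2\,dt_1\cdots dt_k.
\]
So the whole statement comes down to evaluating the Selberg-type integral
\[
I_k := \int_{[-1,1]^k}\prod_{i<j}(t_i - t_j)^2\,dt_1\cdots dt_k,
\]
which is a classical instance of Selberg's integral (the case $\alpha=\beta=1$, $\gamma=1$, after rescaling $[-1,1]$ to $[0,1]$): one gets a closed-form product of factorials, and after simplification $I_k = \Theta\bigl(2^{?}\prod_{n} (2n)! / \text{something}\bigr)$ — the exact shape being what is needed to match the claimed constant $\frac{2^{3k(k-1)/2}\pi^{k(k-1)}}{(k!)^3\prod_{n=0}^{k-1}(2n)!}$. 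I would compute $I_k$ by Selberg's formula, then multiply by $(2\pi)^{k(k-1)}/(1!\cdots k!)^2$ and check the arithmetic of exponents: $(1!\,2!\cdots k!)^2$ in the denominator together with the factorials from Selberg should collapse to $(k!)^3\prod_{n=0}^{k-1}(2n)!$, and the powers of $2$ and $\pi$ should combine to $2^{3k(k-1)/2}\pi^{k(k-1)}$.

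The main obstacle is bookkeeping rather than conceptual: one must be careful that the $\Theta(\cdot)$ in the pointwise lemma is \emph{uniform} over the cube $[-1,1]^k$ (it is, since the hypothesis $\sum|\xi_i-\xi_0|<1/8$ holds with room to spare for every choice of $t_j\in[-1,1]$), so that it may legitimately be pulled through the integral; and then the algebraic identity matching Selberg's closed form against the target constant must be verified term by term in the powers of $2$, $\pi$, and the various factorials. A minor subtlety is handling the rescaling from $[-1,1]$ to the unit interval in Selberg's integral, which contributes its own power of $2$ (namely $2^{\binom{k}{2}}\cdot 2^{k}$-type factors from the Jacobian and the shift), and must be tracked alongside the $2^{\binom{k}{2}}$ already present in $(2\pi)^{\binom{k}{2}}$ squared. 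Once these constants are reconciled, the lemma follows.
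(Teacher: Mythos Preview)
Your proposal is essentially the same as the paper's proof: both translate the $\xi_i$ so that $\xi_0=0$, apply the previous pointwise lemma uniformly over $[-1,1]^k$ to factor out $(2\pi)^{k(k-1)}\prod_{i<j}|\xi_i-\xi_j|^2/(1!\cdots k!)^2$, and reduce to the evaluation of $I_k=\int_{[-1,1]^k}\prod_{i<j}(t_i-t_j)^2\,dt$. The only difference is in the endgame for $I_k$: you invoke Selberg's integral, while the paper observes via Cauchy--Binet that $I_k$ is the Gram determinant of the monomials $1,t,\ldots,t^{k-1}$ on $[-1,1]$ and evaluates it by passing to the (rescaled) Legendre basis, yielding $I_k=\prod_{n=0}^{k-1}\frac{2^{2n+1}(n!)^4}{(2n+1)((2n)!)^2}$. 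Both routes are correct; the paper's is slightly more self-contained and sidesteps the rescaling bookkeeping you flagged, since the Legendre norms $\|L_n\|^2=2/(2n+1)$ and leading coefficients $(2n)!/(2^n(n!)^2)$ are standard facts already used elsewhere in the paper.
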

\begin{proof}
By translating the $\xi_i$ we can assume that $\xi_0=0$.

By the above we have 
$$
\Theta(\frac{(2\pi)^{k(k-1)}\prod_{i<j}|\xi_i-\xi_j|^2}{(1!2!\cdots k!)^2}) \int_{-1}^1 \ldots \int_{-1}^1 \prod_{i<j} |t_i-t_j|^2 dt_1\ldots dt_k.
$$
We note that by the Cauchy-Binet formula the latter term is the determinant of the matrix $M$ with $M_{i,j} = \int_{-1}^1 t^{i+j}dt$. This is the Graham matrix associated to the polynomials $t^i$ for $0\leq i \leq k-1$. Applying Graham-Schmidt (without the renormalization step) to this set yields the basis $P_n \alpha_n$ where $\alpha_n = \frac{2^n (n!)^2}{(2n)!}$ is the inverse of the leading term of $P_n$. This polynomial has norm $\alpha_n^2 2/(2n+1)$. Therefore, the integral over the $t_i$ yields
$$
\prod_{n=0}^{k-1} \frac{2^{n+1} (n!)^2}{(n+1)(2n)!}.
$$
This completes the proof.
\end{proof}

Next we extend this result to the case that all the $\xi$ are within $\poly(k)$ of each other.
\begin{proposition}
If there exists a $\xi_0$ so that $|\xi_i-\xi_0|=\poly(k)$ for all $i$, then
$$
\det(G_{\xi_1,\ldots,\xi_k}) = 2^{\tilde O(k^2)}\prod_{i<j} \min(|\xi_i-\xi_j|^2,1).
$$
\end{proposition}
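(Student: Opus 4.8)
The plan is to bootstrap from the small-total-variation Lemma just proved. We are given $\xi_1,\dots,\xi_k$ all lying in a window of width $W=\poly(k)$ around some $\xi_0$, and we want to reduce to the regime $\sum_i |\xi_i - \xi_0| < 1/8$ to which the previous Lemma applies. First I would partition the $\xi_i$ into clusters: define an equivalence relation on indices generated by $i\sim j$ whenever $|\xi_i - \xi_j| \le 1$, and let $S_1,\dots,S_m$ be the resulting clusters, each of diameter at most $k$ (since a chain of at most $k$ steps connects any two points in a cluster). Within each cluster, the total variation is still only $O(k)$, not $O(1/8)$, so I would further subdivide: the key observation is that the determinant factorizes \emph{approximately} across clusters that are far apart, because the Gram-matrix entry $\frac{1}{2}\int_{-1}^1 e^{2\pi i(\xi_i-\xi_j)t}dt = \sinc(\xi_i-\xi_j)$ decays like $O(1/|\xi_i-\xi_j|)$ when $|\xi_i-\xi_j|$ is large.

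Concretely, the main steps I would carry out are: (1) Show that if the $\xi_i$ split into two groups $A,B$ with $\dist(A,B) \ge R$, then $\det(G) = \det(G_A)\det(G_B)(1 + O(|A||B|/R))$ or, more robustly, use a Schur-complement / block argument to bound $\det(G)$ above and below by $\det(G_A)\det(G_B)$ times factors that are $2^{\tilde O(k^2)}$ and that absorb the cross-cluster terms $\prod_{i\in A, j\in B}\min(|\xi_i-\xi_j|^2,1)$; note that for such cross pairs $\min(|\xi_i-\xi_j|^2,1)$ is typically $1$ or close to it, which is why these pairs should contribute only a $2^{\tilde O(k^2)}$ factor overall. (2) Reduce to a single cluster of diameter $O(k)$. (3) Within a single cluster, rescale time: replacing $t\in[-1,1]$ by $t/C$ for $C=\Theta(k)$ shrinks the effective frequencies so that $\sum|\xi_i-\xi_0|<1/8$, apply the small-total-variation Lemma to get the exact asymptotic $\Theta(\cdots\prod_{i<j}|\xi_i-\xi_j|^2)$ for the rescaled problem, and track how the rescaling changes the determinant — this only costs a $2^{\tilde O(k^2)}$ factor because it amounts to multiplying the Gram matrix of a degree-$\le k-1$ polynomial basis by a diagonal rescaling whose entries are $C^{\pm\Theta(k)}$, and $C=\poly(k)$.

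Putting these together: the within-cluster contribution gives $2^{\tilde O(k^2)}\prod_{i<j \text{ same cluster}}|\xi_i-\xi_j|^2$, and since same-cluster pairs have $|\xi_i-\xi_j|\le k$ so that $|\xi_i-\xi_j|^2 = 2^{\tilde O(1)}\min(|\xi_i-\xi_j|^2,1)\cdot$ (a factor at most $k^2$), and there are at most $\binom{k}{2}$ such pairs, the discrepancy between $\prod |\xi_i-\xi_j|^2$ and $\prod\min(|\xi_i-\xi_j|^2,1)$ is at most $k^{O(k^2)} = 2^{\tilde O(k^2)}$. The cross-cluster pairs each have $\min(|\xi_i-\xi_j|^2,1)$ within a constant of $1$ (up to boundary effects near distance $1$, which I would handle by choosing the clustering threshold carefully, e.g. using a threshold in $[1, 2]$ where no $\xi_i-\xi_j$ gap accumulates, via a pigeonhole over $O(k)$ candidate thresholds), so their product is $2^{\tilde O(k^2)}$ as well. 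Hence $\det(G_{\xi_1,\dots,\xi_k}) = 2^{\tilde O(k^2)}\prod_{i<j}\min(|\xi_i-\xi_j|^2,1)$.

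The main obstacle I anticipate is step (1): making the cross-cluster factorization rigorous with \emph{two-sided} bounds. Proving $\det(G) \ge$ (something) requires controlling the smallest singular value of the full Gram matrix, and naively the cross-cluster perturbations could in principle be destructive. I would address this by a Schur-complement induction on the number of clusters: writing $G = \begin{bmatrix} G_A & E \\ E^* & G_B\end{bmatrix}$, we have $\det G = \det(G_A)\det(G_B - E^* G_A^{-1}E)$, and the task becomes showing $G_B - E^*G_A^{-1}E$ is still well-conditioned relative to $G_B$ — this needs an estimate that $\|E\|$ is small compared to $\sigma_{\min}(G_A)^{1/2}\sigma_{\min}(G_B)^{1/2}$, which follows from the decay of $\sinc$ together with the (inductively known) lower bound $\sigma_{\min}(G_A), \sigma_{\min}(G_B) \ge 2^{-\tilde O(k^2)}$ coming from the within-cluster analysis. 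Getting the bookkeeping on these error terms to close the induction cleanly, while only losing $2^{\tilde O(k^2)}$ total, is the delicate part; everything else is routine manipulation of Vandermonde/Schur factorizations and rescaling.
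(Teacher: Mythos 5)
Your high-level strategy — reduce to the small-total-variation Lemma by clustering and applying it inside each cluster — is in the right spirit, but there are two concrete places where the plan would not close, and in both the paper takes a different, lighter route.

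\paragraph{The cross-cluster factorization does not work.} You cluster at threshold $\approx 1$, so cross-cluster pairs have $|\xi_i-\xi_j|>1$, and the corresponding Gram entries are $\sinc$ of that gap, i.e.\ of order $\Theta(1/|\xi_i-\xi_j|)$, which can be as large as a constant. Meanwhile the within-cluster blocks can have smallest eigenvalue as small as $2^{-\tilde O(k^2)}$ (indeed the Proposition must hold when frequencies coincide to exponential precision, in which case $\det(G_A)$ is tiny). So the quantity you need, $\|E\| \ll \sigma_{\min}(G_A)^{1/2}\sigma_{\min}(G_B)^{1/2}$, is false by an exponential margin, and the Schur complement $G_B-E^*G_A^{-1}E$ can differ from $G_B$ by a factor you cannot absorb. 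The one direction that \emph{is} available for free is Fischer's inequality for positive semidefinite block matrices, $\det(G)\le \prod_i \det(G_{\mathcal{C}_i})$, which needs no smallness of $E$ at all; this is what the paper uses for the upper bound. For the lower bound the paper avoids cross-cluster factorization entirely: it notes that shrinking the integration domain from $[-1,1]$ to $[-x,x]$ only decreases the (Cauchy–Binet) integral, giving $\det(G_{\xi_1,\ldots,\xi_k})\ge x^k\det(G_{x\xi_1,\ldots,x\xi_k})$, and with $x=1/\poly(k)$ the scaled frequencies land in the small-total-variation regime. No clustering and no Schur complement are needed in that direction.

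\paragraph{The within-cluster rescaling step is also not as clean as you assert.} You claim that rescaling $t\mapsto t/C$ ``amounts to multiplying the Gram matrix of a degree-$\le k-1$ polynomial basis by a diagonal rescaling,'' but $G_{\xi_i}$ and $G_{\xi_i/C}$ (over the same domain $[-1,1]$) are not diagonally conjugate; the $\sinc$ kernel is not multiplicative in frequency. What rescaling \emph{does} give cleanly is the inequality $\det(G_{\xi_i}) \ge (1/C)^k\det(G_{\xi_i/C})$ by shrinking the domain, which again is only one-sided. If you also want an upper bound within a cluster, you still need Fischer's inequality, and you still want the cluster small enough that the small-total-variation Lemma applies \emph{directly}. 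This is why the paper clusters at threshold $\approx 1/k$ rather than $\approx 1$: a cluster of diameter $<1/k$ already has $\sum_i |\xi_i-\xi_0| < 1$, and the cross-cluster separation $\ge 1/k^2$ guarantees $\min(|\xi_i-\xi_j|^2,1)\ge 1/k^4$, so the ``missing'' cross-cluster factors in the product contribute only $k^{O(k^2)}=2^{\tilde O(k^2)}$.

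\paragraph{Summary.} The correct architecture is: lower bound by shrinking the integration domain on the full Gram matrix (one shot, no clustering), and upper bound by clustering tightly ($1/k$, not $1$) and invoking Fischer's inequality followed by the small-total-variation Lemma on each tight cluster. Your proposal gets the flavor right but relies on a two-sided Schur-complement factorization that is quantitatively false in this regime.
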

\begin{proof}
We begin by proving the lower bound. We note that for $0<x<1$,
$$
\det(G_{\xi_1,\ldots,\xi_k}) \geq \int_{-x}^x \int_{-x}^x \ldots \int_{-1}^1 \left|\det([e^{2\pi i \xi_i t_j}]_{i,j}) \right|^2 dt_1dt_2\ldots dt_k = x^k\det(G_{\xi_1/x,\xi_2/x,\ldots,\xi_k/k}).
$$
Taking $x=1/\poly(k)$, we may apply the above Lemma to compute the determinant on the right hand side, yielding an appropriate lower bound.

To prove the lower bound, we note that we can divide our $\xi_i$ into clusters, $\mathcal{C}_i$, where for any $i,j$ in the same cluster $|\xi_i-\xi_j|<1/k$ and for $i$ and $j$ in different clusters $|\xi_i-\xi_j|\geq 1/k^2$. We then note as a property of Graham matrices that
$$
\det(G_{\xi_1,\ldots,\xi_k}) \leq \prod_{\mathcal{C}_i} \det(G_{\{\xi_j\in \mathcal{C}_i\}}) = 2^{\tilde O(k^2)}\prod_{i<j,\textrm{ in same cluster}}|\xi_i-\xi_j|^2 = 2^{\tilde O(k^2)}\prod_{i<j}|\xi_i-\xi_j|^2.
$$
This completes the proof.
\end{proof}

Finally, we are ready to prove our Theorem.
\begin{proof}
Let $I(t)$ be the indicator function of the interval $[-1,1]$.

Recall that there is a function $h(t)$ so that for any function $f$ that is a linear combination of at most $k$ complex exponentials that $|h(t)f(t)|_2 = \Theta(|I(t)f(t)|_2)$ and so that $\hat{h}$ is supported on an interval of length $\poly(k)<k^C$ about the origin.

Note that we can divide our $\xi_i$ into clusters, $\mathcal{C_i}$, so that for $i$ and $j$ in a cluster $|\xi_i-\xi_j| < k^{C+1}$ and for $i$ and $j$ in different clusters $|\xi_i-\xi_j| > k^C$.

Let $\tilde G_{\xi_1,\xi_2,\ldots,\xi_k'}$ be the matrix with $(i,j)$-entry $\int_\R |h(t)|^2 e^{(2\pi i)(\xi_i-\xi_j)t} dt.$

We claim that for any $k'\leq k$ that
$$
\det(\tilde G_{\xi_1,\xi_2,\ldots,\xi_k'}) = 2^{O(k')}\det( G_{\xi_1,\xi_2,\ldots,\xi_k'}).
$$
This is because both are Graham determinants, one for the set of functions $I(t)\exp((2\pi i)\xi_j t)$ and the other for $h(t)\exp((2\pi i)\xi_j t)$. However since any linear combination of the former has $L^2$ norm a constant multiple of that the same linear combination of the latter, we have that
$$
\tilde G_{\xi_1,\xi_2,\ldots,\xi_k'} = \Theta(G_{\xi_1,\xi_2,\ldots,\xi_k'})
$$
as self-adjoint matrices. This implies the appropriate bound.

Therefore, we have that
$$
\det(G_{\xi_1,\ldots,\xi_k}) = 2^{O(k)}\det(\tilde G_{\xi_1,\ldots,\xi_k}).
$$
However, note that by the Fourier support of $h$ that
$$
\int_\R |h(t)|^2 e^{(2\pi i)(\xi_i-\xi_j)t} dt = 0 
$$
if $|\xi_i-\xi_j|>k^C$, which happens if $i$ and $j$ are in different clusters. Therefore $\tilde G$ is block diagonal and hence its determinant equals
$$
\det(\tilde G_{\xi_1,\ldots,\xi_k}) = \prod_{\mathcal{C_i}} \det(\tilde G_{\{\xi_j \in \mathcal{C}_i\}}) = 2^{O(k)}\prod_{\mathcal{C_i}} \det(G_{\{\xi_j \in \mathcal{C}_i\}}).
$$
However the Proposition above shows that
$$
\prod_{\mathcal{C_i}} \det(G_{\{\xi_j \in \mathcal{C}_i\}}) = 2^{\tilde O(k^2)}\prod_{i<j} \min(1,|\xi_i-\xi_j|^2).
$$
This completes the proof.
\end{proof}

\subsection{Proofs of Lemma \ref{lem:bound_residual_polynomial_coefficients} and Lemma \ref{lem:existence_poly_k_roots}}\label{sec:supplement_proof_poly}
We fix $z_1,\cdots,z_k$ to be complex numbers on the unit circle and use $Q(z)$ to denote the degree-$k$ polynomial $\overset{ k } { \underset{i=1}{ \prod } } (z- z_i)$.
\restate{lem:bound_residual_polynomial_coefficients}
\begin{proof}
By definition, $r_{n,k}(z_i)=z_i^n$. From the polynomial interpolation, we have
\begin{equation*}
r_{n,k}(z) = \sum_{i=1}^k \frac{  \underset{ j\in [k] \backslash i }{ \prod}  (z-z_j) z_i^n }{ \underset{j \in [k] \backslash i}{ \prod } (z_i - z_j) }.
\end{equation*}
Let $\Sym_{S,i}$ be the symmetry polynomial of $z_1,\cdots,z_k$ with
degree $i$ among subset $S \subseteq [k]$, i.e., $\Sym_{S,i}=
\underset{ S' \subseteq \binom{S}{i} }{ \sum} \underset{j \in S'}{
  \prod} z_j$. Then \[r_{n,k}^{(l)}= (-1)^{k-1-l} \sum_{i=1}^k
\frac{\Sym_{[k]\setminus i,k-1-l} \cdot z_i^n }{ \underset{j \in [k]
    \backslash i}{ \prod } (z_i - z_j) }.\] We omit $(-1)^{k-1-l}$ in
the rest of proof and use induction on $n,k,$ and $l$ to prove
$|r_{n,k}^{(l)}|\le \binom{k-1}{l}\binom{n}{k-1}$.

Base Case of $n$: For any $n<k$, from the definition, $r(z) = z^n$ and $|r_{n,k}^{(l)}| \le 1$.

Suppose it is true for any $n<n_0$. We consider $r_{n_0,k}^l$ from now on. When $k=1$, $r_{n,0}=z_1^n$ is bounded by $1$ because $z_1$ is on the unit circle of $\mathbb{C}$.

Given $n_0$, suppose the induction hypothesis is true for any $k<k_0$ and any $l<k$. For $k=k_0$, we first prove that $|r_{n_0,k_0}^{(k_0-1)}| \le \binom{n_0 }{ k_0  - 1}$ then prove that $|r_{n_0,k_0}^{(l)}|\le \binom{k_0-1 }{ l} \binom{n_0 }{ k_0-1}$ for $l=k_0-2, \cdots,0$.
\begin{eqnarray*}
r_{n_0,k_0}^{(k_0-1)} & = &  \sum_{i=1}^{k_0} \frac{  z_i^{n_0} }{\underset{j\in [k_0] \backslash i }{\prod} (z_i - z_j)} \\
& = & \sum_{i=1}^{k_0-1 } \frac{  z_i^{n_0} }{\underset{j\in [k_0] \backslash i }{\prod} (z_{i} - z_j)} + \frac{  z_{k_0}^{n_0} }{\underset{j\in [k_0] \backslash {k_0} }{\prod} (z_{k_0} - z_j)} \\
& = & \sum_{i=1}^{k_0-1 } \frac{  z_i^{n_0} - z_i^{n_0-1}z_{k_0} + z_i^{n_0-1}z_{k_0}  }{\underset{j\in [k_0] \backslash i }{\prod} (z_i - z_j)} + \frac{  z_{k_0}^{n_0} }{\underset{j\in k_0 \backslash {k_0} }{\prod} (z_{k_0} - z_j)} \\
& = & \sum_{i=1}^{k_0-1 } \left( \frac{    z_i^{n_0-1}   }{\underset{j\in [k_0-1] \backslash i }{\prod} (z_i - z_j)} + \frac{  z_i^{n_0-1}z_{k_0}  }{\underset{j\in k_0 \backslash i }{\prod} (z_i - z_j)} \right) + \frac{  z_{k_0}^{n_0} }{\underset{j\in k_0 \backslash {k_0} }{\prod} (z_{k_0} - z_j)} \\
& =&  \left( \sum_{i=1}^{k_0-1 } \frac{    z_i^{n_0-1}   }{\underset{j\in [k_0-1] \backslash i }{\prod} (z_i - z_j)} \right) +  \left( z_{k_0} \sum_{i=1}^{k_0 } \frac{    z_i^{n_0-1}   }{\underset{j\in k_0 \backslash i }{\prod} (z_i - z_j)} \right) \\
& = & r_{n_0 -1, k_0-1}^{(k_0-2)} + z_{k_0} \cdot r_{n_0 -1, k_0}^{(k_0-1)}
\end{eqnarray*}
Hence $|r_{n_0,k_0}^{(k_0-1)}| \le |r_{n_0 -1, [k_0-1]}^{(k_0-2)}| + |r_{n_0 -1, k_0}^{(k_0-1)}| \le \binom{n_0 -2 }{ k_0 - 2} + \binom{n_0-2 }{ k_0 -1} = \binom{n_0-1 }{ k_0 - 1}$. For $l<k_0-1$, we have
\begin{eqnarray*}
r_{n_0,k_0}^{(l)} 
& = &\sum_{i=1}^{k_0} \frac{ \Sym_{[k_0]\setminus i,k_0-1-l} \cdot z_i^{n_0} }{\underset{j\in [k_0] \backslash i }{\prod} (z_i - z_j)} \quad  \text{let~} l'=k_0-1-l\\ 
& = & \sum_{i=1}^{k_0-1} \frac{ \left(\Sym_{[k_0 - 1]\setminus i,l'} + \Sym_{[k_0 - 1]\setminus i,l'-1} \cdot z_{k_0}\right) z_i^{n_0} }{\underset{j\in [k_0] \backslash i }{\prod} (z_i - z_j)} + \frac{\Sym_{[k_0-1],l'}\cdot z_{k_0}^{n_0} }{ \underset{j<k_0}{\prod} (z_{k_0}-z_j)}\\
& = & \sum_{i=1}^{k_0-1} \frac{ \Sym_{[k_0 - 1]\setminus i,l'} \cdot (z_i - z_{k_0}) z_i^{n_0-1}+ \Sym_{[k_0 - 1]\setminus i,l'} \cdot  z_{k_0} z_i^{n_0-1}+ \Sym_{[k_0 - 1]\setminus i,l'-1} \cdot z_{k_0} z_i^{n_0} }{\underset{j\in [k_0] \backslash i }{\prod} (z_i - z_j)} \\
& & + \frac{\Sym_{[k_0-1],l'}\cdot z_{k_0}^{n_0} }{ \underset{j<k_0 }{ \prod } (z_{k_0}-z_j)}\\
& = & \sum_{i=1}^{k_0-1} \frac{ \Sym_{[k_0 - 1]\setminus i,l'} \cdot (z_i - z_{k_0}) z_i^{n_0-1}+ \Sym_{[k_0 - 1],l'} \cdot z_{k_0} z_i^{n_0-1} }{\underset{j\in [k_0] \backslash i }{\prod} (z_i - z_j)} + \frac{\Sym_{[k_0-1],l'}\cdot z_{k_0}^{n_0} }{  \underset{j<k_0}{ \prod }(z_{k_0}-z_j)}\\
& = & \sum_{i=1}^{k_0-1} \frac{ \Sym_{[k_0 - 1]\setminus i,l'} z_i^{n_0-1}}{\underset{j\in [k_0-1] \backslash i }{\prod} (z_i - z_j)} + \sum_{i=1}^{k_0-1} \frac{\Sym_{[k_0 - 1],l'} \cdot z_{k_0} z_i^{n_0-1}}{\underset{j\in [k_0] \backslash i }{\prod} (z_i - z_j)} + \frac{\Sym_{[k_0-1],l'}\cdot z_{k_0}^{n_0} }{ \underset{ j<k_0} { \prod }(z_{k_0}-z_j)}\\
& = & r_{n_0-1,k_0-1}^{(l-1)} + \Sym_{[k_0-1],k_0-1-l} \cdot z_{k_0} \cdot r_{n_0-1,k_0}^{(k_0-1)}
\end{eqnarray*}
By induction hypothesis, $|r_{n_0,k_0}^{(l)}| \le \binom{k_0 -2 }{ l-1} \binom{n_0 -1 }{ k_0 -2 } + \binom{k_0 - 1 }{ l}\binom{n_0 - 1 }{ k_0 - 1} \le \binom{k_0 -1 }{ l}\binom{n_0 }{ k_0-1}.$
\end{proof}
Now we finish the proof of Lemma \ref{lem:existence_poly_k_roots}.
\restate{lem:existence_poly_k_roots}

Let $m=10 k^2 \log k$ and $\mathcal{P}$ denote a set of polynomials that has degree at most $m$, and all the coefficients are integers chosen from $\{-5,\cdots,-1, 0, 1, \cdots,5\}$, i.e., 
\begin{equation*}
\mathcal{P}:=\left\{P(z)=\sum_{i=0}^{m} \alpha_i z^i ~|~ \forall i \in\{ 0,1,\cdots,m\}, |\alpha_i| \le 2 \right\}.
\end{equation*}
\begin{claim}
There exists $P^*(z)= \overset{m }{ \underset{i=0} { \sum}}   \alpha_i z^i$  with coefficient $|\alpha_i| \le 10$ for any $i \in \{0,1,\cdots,m\}$,  such that every coefficient of $P^*(z) \mod Q(z)$ is bounded by $2^{-m}$.
\end{claim}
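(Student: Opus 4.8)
The plan is a pigeonhole (volume‑counting) argument over the finite family $\mathcal{P}$. For $P(z)=\sum_{i=0}^m\alpha_i z^i\in\mathcal{P}$ write $r_P(z)=\sum_{l=0}^{k-1}r_P^{(l)}z^l$ for the residue of $P$ modulo $Q$, and view $r_P=(r_P^{(0)},\dots,r_P^{(k-1)})$ as a point of $\mathbb{C}^k\cong\mathbb{R}^{2k}$. Since reduction modulo $Q$ is linear and $z^i\equiv r_{i,k}(z)\pmod{Q(z)}$, we have $r_P=\sum_{i=0}^m\alpha_i r_{i,k}$, so the first step is to invoke Lemma~\ref{lem:bound_residual_polynomial_coefficients}, which gives $|r_{i,k}^{(l)}|\le 2^k i^{k-1}$ for every $i\le m$ and every $l$, \emph{uniformly in the locations} $z_1,\dots,z_k$. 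Hence
\[
|r_P^{(l)}|\ \le\ \sum_{i=0}^m|\alpha_i|\,2^k i^{k-1}\ \le\ 5(m+1)\,2^k m^{k-1}\ =:\ R ,
\]
so every $r_P$ lies in the box $[-R,R]^{2k}$. Using Lemma~\ref{lem:bound_residual_polynomial_coefficients} rather than the explicit interpolation formula for $r_{i,k}$ is essential here: the explicit denominators $\prod_{j'\neq j}(z_j-z_{j'})$ could be tiny, whereas the bound above does not degrade when the $z_j$ are close.

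Next I would partition $[-R,R]^{2k}$ into axis‑parallel subcubes of side length $\varepsilon$, of which there are $(2R/\varepsilon)^{2k}$; two points in a common subcube agree within $\varepsilon$ in every real and imaginary coordinate, hence within $\sqrt2\,\varepsilon$ in each complex coordinate. Since $|\mathcal{P}|=11^{m+1}$ (eleven integer values per coefficient), as soon as $11^{m+1}>(2R/\varepsilon)^{2k}$ two distinct $P_1,P_2\in\mathcal{P}$ have $r_{P_1},r_{P_2}$ in the same subcube. Setting $P^*:=P_1-P_2$ gives $P^*\not\equiv0$, $\deg P^*\le m$, integer coefficients of absolute value at most $10$, and $P^*\bmod Q=r_{P_1}-r_{P_2}$ with every coefficient of modulus at most $\sqrt2\,\varepsilon$; choosing $\varepsilon$ to be the desired target accuracy finishes the argument. (Downstream, in Lemma~\ref{lem:existence_poly_k_roots} one normalizes by the constant term of $P^*-(P^*\bmod Q)$, so one really only needs the residue below a fixed constant for the normalized coefficients to stay $\le 11$; ensuring the constant term is nonzero needs only a minor tweak of the pigeonhole and is not required for the present claim.)

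The main obstacle is the quantitative check $11^{m+1}>(2R/\varepsilon)^{2k}$ for the prescribed $m=10k^2\log k$. Taking logarithms, it reads $(m+1)\log 11\gtrsim 2k\big(\log(1/\varepsilon)+\log R\big)$, and since $m$ is polynomial in $k$ we have $\log R=O(k\log k)$; thus $m=\Theta(k^2\log k)$ is chosen just large enough that the left side $\Theta(m)$ dominates the $2k\log R=\Theta(k^2\log k)$ term, which is exactly what controls how small a residue $\varepsilon$ can be forced. Getting this balance right — $m$ large enough for the count, $\varepsilon$ small enough for the conclusion, and crucially the gap‑independent coefficient bound of Lemma~\ref{lem:bound_residual_polynomial_coefficients} keeping $\log R=O(k\log k)$ — is the only real content; everything else is bookkeeping. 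An alternative that avoids re‑estimating $R$ is to pigeonhole directly on the $m+1$ vectors $z^i\bmod Q\in\mathbb{C}^k$, but one still needs Lemma~\ref{lem:bound_residual_polynomial_coefficients} to know they lie in a box of controlled size.
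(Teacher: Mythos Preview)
Your approach is essentially identical to the paper's: map $\mathcal{P}$ to $\mathbb{C}^k$ via reduction modulo $Q$, use Lemma~\ref{lem:bound_residual_polynomial_coefficients} to confine the image to a box of side $O(2^k m^k)$, and apply pigeonhole over $11^{m+1}$ polynomials to force two residues into the same small subcube, taking $P^*=P_1-P_2$. The paper compresses this into a single line (``from the pigeonhole principle and $\bigl(2^k m^k/(2^{-m})^2\bigr)^k<11^m$''), while you spell out the box--count and the role of $m=\Theta(k^2\log k)$, but the content is the same.
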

\begin{proof}
For $P(z)= \overset{m}{ \underset{i=0}{ \sum} } \alpha_i z^i \in \mathcal{P}$, $P(z) \mod Q(z) \equiv \overset{m}{ \underset{i=0}{\sum} } \alpha_i r_{n,k}(z)$ from the definition $r_{n,k}(z)$. Hence \[P(z) \mod Q(z) = \sum_{i=0}^m \alpha_i \sum_{l=0}^{k-1} r_{i,k}^{(l)}z^l = \sum_{l=0}^{k-1} z^l\sum_{i=0}^m \alpha_i r_{i,k}^{(l)}.\]
Each coefficient in $P(z) \mod Q(z)$ is bounded by $| \overset{m }{ \underset{i=0}{ \sum}} \alpha_i r_{i,k}^{(l)}| \le 5 \overset{m }{ \underset{i=0}{ \sum}} 2^k i^{k-1} \le 2^k m^k.$ 

At the same time, $|\mathcal{P}|=11^m$. From the pigeonhole principle and $\left(\frac{2^k m^k}{(2^{-m})^2}\right)^k<11^m$, there exists $P_1 , P_2 \in \mathcal{P}$ such that for $P^*(z)=P_1(z)-P_2(z)$, $P^*(z) \mod Q(z) = \overset{k-1}{\underset{i=0}{\sum}} \gamma_i z^i$ where each coefficient $|\gamma_i| \le 2^{-m}$.
\end{proof}
Let $r(z)= \overset{k-1}{\underset{i=0}{\sum}} \gamma_i z^i=P^*(z) \mod Q(z)$ for convenience. If $P^*(0)$ (the constant term of $P^*$) is nonzero, then $|P^*(0) - r(0)| \ge 0.99$ from the above lemma. Therefore the polynomial $\frac{P^*(z)-r(z)}{P^*(0) - r(0)}$ satisfies the three properties in Lemma \ref{lem:existence_poly_k_roots}.

Otherwise, we assume $z^l$ is the first term in $P^*(z)$ with a non-zero coefficient. Let 
\[
r_{-l,k}(z) = z^{-l} \mod Q(z) =\sum_{i=1}^k \frac{  \underset{ j\in [k] \backslash i }{ \prod}  (z-z_j) z_i^{-l} }{ \underset{j \in [k] \backslash i}{ \prod } (z_i - z_j) }.\]
For convenience, we use $z_{S}= \underset{ i \in S}{\prod}  z_i$ for any subset $S \subseteq [k]$. Notice that $z_i^{-l}=\frac{z^l_{[k]\setminus i}}{z_{[k]}^l}$. Hence $r_{-l,k}(z)=r'_{l,k}(z) / z_{[k]}^l$ where $r'$ is the polynomial for $k$ units roots $z_{[k] \setminus 1},\cdots, z_{[k] \setminus k}$. So each coefficients of $r$ is still bounded by $2^k l^k$, which is less than $2^{-m/2}$.

Eventually we choose $P^*(z)/z^l - r(z) \cdot r_{-l,k}(z)$ and renormalize it to satisfy the three properties in Lemma \ref{lem:existence_poly_k_roots}.


\subsection{Proof of Lemma \ref{lem:derivative_poly}}\label{sec:proof_lem_derivative_poly}
\restate{lem:derivative_poly}
Given a degree $d$ polynomial $P(x)$, we rewrite $P(x)$ as a linear combination of the Legendre polynomials: $$P(x)=\sum_{i=0}^d \alpha_i L_i(x).$$
We use $F_i(x)=(1-x^2)L'_i(x)$ for convenience. From the definition of the Legendre polynomials in the Equation \eqref{eq:Legendre_differential}, $F'_i(x)=-i(i+1) \cdot L_i(x)$ and $F''_i(x)=-i(i+1) \cdot L'_i(x)$.

Hence we have 
\begin{align*}
\int_{1}^{-1} (1-x^2) |P'(x)|^2 \mathrm{d} x  =\quad & \int_{1}^{-1} (1-x^2) P'(x) \cdot \overline{P'}(x) \mathrm{d} x\\
 =\quad & \int_1^{-1} \left( \sum_{i \in [d]} \alpha_i F_i(x) \right) \cdot \left( \sum_{i \in [d]} \overline{\alpha_i} \frac{-F''_i(x)}{i(i+1)} \right) \mathrm{d} x\\
 =\quad & \left( \sum_{i \in [d]} \alpha_i F_i(x) \right) \cdot \left( \sum_{i \in [d]} \overline{\alpha_i} \frac{-F'_i(x)}{i(i+1)} \right) \bigg{|}_{-1}^1 \\
  +\quad & \int_1^{-1} \left( \sum_{i \in [d]} \alpha_i F'_i(x) \right) \cdot \left( \sum_{i \in [d]} \overline{\alpha_i} \frac{F'_i(x)}{i(i+1)} \right) \mathrm{d} x\\
 = \quad& \int_1^{-1} \left( \sum_{i \in [d]} \alpha_i \cdot i(i+1) \cdot L_i(x) \right) \cdot \left( \sum_{i \in [d]} \overline{\alpha_i} \frac{i(i+1) \cdot L_i(x)}{i(i+1)} \right) \mathrm{d} x\\
 =\quad & \sum_{i\in [d]} |\alpha_i|^2 i(i+1) \|L_i\|_T^2\\
 \le \quad & d(d+1) \|P\|^2_T
\end{align*}

\subsection{Proof of Lemma \ref{lem:permutation}}\label{sec:proof_permutation}
\restate{lem:permutation}
\begin{proof}
Let's compute the Fourier Transform of $(P_{\sigma,a,b}x)(t)$,
\begin{align*}
& \widehat{P_{\sigma,a,b} x}(f) \\
= & \int_{-\infty}^{+\infty} (P_{\sigma,a,b}(x))(t) e^{-2\pi \i ft} \mathrm{d} t \\
= & \int_{-\infty}^{+\infty} x(\sigma(t-a))e^{-2\pi \i \sigma b t} e^{-2\pi \i f t} \mathrm{d} t \\
= & e^{-2\pi\i (\sigma ab + fa)} \int_{-\infty}^{+\infty} x(\sigma(t-a))e^{-2\pi \i \sigma b (t-a)} e^{-2\pi \i f (t-a)} \mathrm{d} t &\text{~by~shifting~$t$~by~$a$}\\
= & e^{-2\pi\i (\sigma ab + fa)} \int_{-\infty}^{+\infty} x(\sigma t)e^{-2\pi \i \sigma b t} e^{-2\pi \i f t} \mathrm{d} t &\text{~by~replacing~$t-a$~by~$t$}\\
= & \frac{1}{\sigma} e^{-2\pi\i (\sigma ab + fa)} \int_{-\infty}^{+\infty} x(\sigma t)e^{-2\pi \i b \sigma t} e^{-2\pi \i f \sigma t /\sigma} \mathrm{d} \sigma t \\
= & \frac{1}{\sigma} e^{-2\pi\i (\sigma ab + fa)} \int_{-\infty}^{+\infty} x( t)e^{-2\pi \i (b+f/\sigma)   t} \mathrm{d}   t  &\text{~by~replacing~$t\sigma$~by~$t$} \\
= & \frac{1}{\sigma} e^{-2\pi\i a\sigma (f/\sigma + b )} \widehat{x}(f/\sigma+b)    &\text{~by~definition~of~FT}
\end{align*}
The first result follows immediately by replacing $f/\sigma +b$ by $f'$, which gives
\begin{equation*}
\widehat{P_{\sigma,a,b} x}(\sigma(f'-b)) = \frac{1}{\sigma}e^{-2\pi \i a \sigma f'} \widehat{x}(f').
\end{equation*}
Thus, we complete the proof of this Lemma.
\end{proof}

\subsection{Proof of Lemma \ref{lem:max_is_bounded_imply_sample_is_small} }\label{sec:proo_max_is_bounded_imply_sample_is_small}
\restate{lem:max_is_bounded_imply_sample_is_small}
\begin{proof}
Let $M$ denote  $\underset{t\in [0,T] }{\max} |x(t)|^2$. Replacing $X_i$ by $ \frac{ |x(t_i)|^2}{ M}$ and $n$ by $|S|$ in Lemma \ref{lem:chernoff_bound}, we obtain that
\begin{eqnarray*}
&&\mathsf{Pr}[ | X - \mu | > \epsilon \mu ] \leq 2 \exp(-\frac{\epsilon^2}{3} \mu) \\
&\implies & \mathsf{Pr} \left[ \left| \sum_{i\in S} \frac{ |x(t_i)|^2 }{ M } -  |S|\frac{\| x(t) \|_T^2}{ M}    \right| > \epsilon  |S|\frac{\| x(t) \|_T^2}{  M} \right] \leq 2\exp(-\frac{\epsilon^2}{3}\mu) \\
&\implies & \mathsf{Pr} \left[ \left| \frac{1}{|S|} \sum_{i\in S} |x(t_i)|^2- \| x(t) \|_T^2  \right|  \geq \epsilon \|x(t)\|_T^2 \right] \leq 2\exp(-\frac{\epsilon^2}{3}\mu) \\
&\implies & \mathsf{Pr} \left[ \left| \frac{1}{|S|} \sum_{i\in S} |x(t_i)|^2- \| x(t) \|_T^2  \right|  \geq \epsilon \|x(t)\|_T^2 \right] \leq 2\exp(-\frac{\epsilon^2}{3} |S|\frac{ \|x(t)\|_T^2}{  M}  )  
\end{eqnarray*}
which is less than $2\exp(-\frac{\eps^2}{3} |S| /d)$, thus completes the proof.
\end{proof}

\subsection{Proof of Lemma \ref{lem:max_average_square}}\label{sec:proof_max_average_square}
\restate{lem:max_average_square}
\begin{proof}
Let $t^*= \underset{ t \in [S,T] }{ \arg\max } |P(t)|^2$. If $t^* \in (S,T)$, then it is enough to prove that \[|P(t^*)|^2 \le (d+1)^2 \frac{1}{t^*-S}\int_{S}^{t^*} |P(x)|^2 \mathrm{d} x \text{ and } |P(t^*)|^2 \le (d+1)^2 \frac{1}{T-t^*}\int_{t^*}^T |P(x)|^2 \mathrm{d} x\] on the two intervals $[S,t^*]$ and $[t^*,T]$ separately. 

Without loss of generality, we will prove the inequality for $S=-1$ and $t^*=T=1$. We find the minimum $\|P(x)\|^2_T$ assuming $|P(1)|^2 = 1$. Because the first $(d+1)$ Legendre polynomials provide a basis of polynomials of degree at most $d$ and their evaluation $L_n(1) = 1$ for any $n$,  we consider:
\begin{eqnarray*}
\underset{ \alpha_0, \alpha_1, \cdots, \alpha_{d}\in \mathbb{C} }{\min} &&  \int_{-1}^{1} |P(x) |^2 \mathrm{d} x\\
\text{s.t.} && P(x)=\sum_{i=0}^{d} \alpha_i L_i(x)\\
			&& |P(1)|=|\sum_{i=0}^{d} \alpha_i|= 1.
\end{eqnarray*}
We simplify the integration of $P(x)^2$ over $[-1,1]$ by the orthogonality of Legendre polynomials:
\begin{eqnarray*}
\int_{-1}^1 |P(x)|^2 \mathrm{d} x 
& = & \int_{-1}^1 \left( \sum_{i=0}^{d} \alpha_i L_i(x) \right) \cdot \left( \sum_{j=0}^{d} \overline{\alpha_j} \overline{L_j}(x)  \right) \mathrm{d} x \\
& = & \int_{-1}^1 \sum_{i=0}^{d} |\alpha_i|^2  L_i(x)^2 + \sum_{i\neq j} \alpha_i \overline{\alpha_j} L_i(x) \overline{L_j}(x) \mathrm{d} x\\
& = &  \sum_{i=0}^{d} |\alpha_i|^2 \frac{2}{2i+1}  \text{~by~Fact~\ref{fac:legendre_inner_product}}
\end{eqnarray*}

Using $\int_{-1}^1 |P(x)|^2 \mathrm{d} x = \overset{d}{ \underset{i=0}{ \sum} }  |\alpha_i|^2 \frac{2}{2i+1}$, we simplify the optimization problem to
\begin{eqnarray*}
\underset{ \alpha_0, \alpha_1, \cdots, \alpha_{d}\in \C }{\min} &&  \sum_{i=0}^{d} |\alpha_i|^2 \frac{2}{2i+1} \\
\text{s.t.} && \left|\sum_{i=0}^{d} \alpha_i \right|= 1 
\end{eqnarray*}
From the Cauchy-Schwarz inequality, we have 
\begin{equation*}
\left|\sum_{i=0}^{d} \alpha_i \right|^2 \le \left(\sum_{i=0}^{d} |\alpha_i|^2 \frac{2}{2i+1} \right) \left(\sum_{i=0}^{d} \frac{2i+1}{2} \right).
\end{equation*}
Therefore $\overset{d}{ \underset{i=0}{\sum}} |\alpha_i|^2 \frac{2}{2i+1} \ge \frac{2}{(d+1)^2}$ and $|P(1)|^2 \le (d+1)^2 \cdot \frac{1}{2} \int_{-1}^{1} |P(x) |^2 \mathrm{d} x$.
\end{proof}

\section{Known Facts}\label{sec:known_facts}
This section provides a list of well-known facts existing in literature.

\subsection{Inequalities}
We state the H\"older's inequality for complex numbers. We will use the corresponding version $p=q=2$ of Cauchy-Schwarz inequality for complex numbers.
\begin{lemma}[H\"older's inequality]
If $S$ is a measurable subset of $\mathbb{R}^n$ with the Lebesgue measure, and $f$ and $g$ are measurable complex-valued functions on $S$, then
\begin{equation*}
\int_S |f(x) g(x) | \mathrm{d} x \leq (\int_S |f(x)|^p \mathrm{d}x )^{\frac{1}{p} } ( \int_S |g(x)|^q \mathrm{d} x )^{\frac{1}{q}}
\end{equation*}
\end{lemma}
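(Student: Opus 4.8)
The plan is to reduce the complex-valued statement to the nonnegative real-valued case and then deduce it from the pointwise Young inequality. First I would observe that $|f(x)g(x)| = |f(x)|\,|g(x)|$, so it suffices to prove the claim for the nonnegative measurable functions $|f|$ and $|g|$; the absolute-value signs inside the integral on the left then play no further role. Write $\|f\|_p = \left(\int_S |f(x)|^p\,\mathrm{d}x\right)^{1/p}$ and $\|g\|_q = \left(\int_S |g(x)|^q\,\mathrm{d}x\right)^{1/q}$. Before doing anything quantitative I would dispose of the degenerate cases: if $\|f\|_p = 0$ then $|f| = 0$ almost everywhere, so $|fg| = 0$ a.e.\ and both sides vanish (similarly if $\|g\|_q = 0$); and if either $\|f\|_p$ or $\|g\|_q$ is $+\infty$ the right-hand side is $+\infty$ and there is nothing to prove. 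So assume from now on that both quantities are finite and strictly positive.

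The key step is the pointwise inequality (Young's inequality): for real numbers $a,b \ge 0$ and conjugate exponents $p,q>1$ with $\frac1p+\frac1q=1$,
\[
ab \le \frac{a^p}{p} + \frac{b^q}{q}.
\]
I would prove this by concavity of $\log$ on $(0,\infty)$: if $a,b>0$ then, since $\frac1p+\frac1q=1$,
\[
\log\!\left(\frac{a^p}{p} + \frac{b^q}{q}\right) \ge \frac1p\log(a^p) + \frac1q\log(b^q) = \log a + \log b = \log(ab),
\]
and exponentiating gives the claim; the cases $a=0$ or $b=0$ are immediate. (Equivalently one can invoke convexity of $\exp$ applied to $\frac1p\log a^p+\frac1q\log b^q$.)

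Now normalize: set $F(x) = |f(x)|/\|f\|_p$ and $G(x) = |g(x)|/\|g\|_q$, so that $\int_S F^p = \int_S G^q = 1$. Applying Young's inequality with $a=F(x)$, $b=G(x)$ and integrating over $S$ against Lebesgue measure,
\[
\int_S F(x)G(x)\,\mathrm{d}x \le \frac1p\int_S F(x)^p\,\mathrm{d}x + \frac1q\int_S G(x)^q\,\mathrm{d}x = \frac1p + \frac1q = 1.
\]
Multiplying both sides by $\|f\|_p\|g\|_q$ and using $|f(x)g(x)| = |f(x)||g(x)| = \|f\|_p\|g\|_q\,F(x)G(x)$ yields exactly
\[
\int_S |f(x)g(x)|\,\mathrm{d}x \le \|f\|_p\,\|g\|_q = \left(\int_S|f(x)|^p\,\mathrm{d}x\right)^{1/p}\left(\int_S|g(x)|^q\,\mathrm{d}x\right)^{1/q}.
\]
There is no real obstacle here: the only points requiring care are the degenerate cases above and the justification of Young's inequality, which is the crux of the argument; everything else is the normalization bookkeeping. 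The $p=q=2$ special case specializes this to the Cauchy--Schwarz inequality used repeatedly in the paper.
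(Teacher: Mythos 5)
Your proof is correct and complete; it is the standard Young's-inequality argument for H\"older, with the degenerate cases handled properly. Note, however, that the paper does not actually prove this lemma: it is stated without proof in the ``Known Facts'' appendix (Appendix~\ref{sec:known_facts}) as a classical result imported from the literature, so there is no in-paper argument to compare against. Your write-up would serve as a fine self-contained justification if one were wanted.
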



\begin{lemma}[Chernoff Bound \cite{tarjan09},\cite{chernoff1952} ]\label{lem:chernoff_bound}
Let $X_1, X_2, \cdots, X_n$ be independent random variables. Assume that $0\leq X_i \leq 1$ always, for each $i \in [n]$. Let $X= X_1+X_2+\cdots+X_n$ and $\mu = \mathbb{E}[X] = \overset{n}{  \underset{i=1}{\sum} } \mathbb{E}[X_i]$. Then for any $\epsilon>0$,
\[\mathsf{Pr} [ X \geq (1+\epsilon) \mu ] \leq \exp(-\frac{\epsilon^2 }{2+\epsilon} \mu) \textit{ and } \mathsf{Pr} [ X \geq (1-\epsilon) \mu ] \leq \exp(-\frac{\epsilon^2 }{2} \mu).\]
\end{lemma}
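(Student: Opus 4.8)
The plan is to follow the classical moment-generating-function argument. First I would fix a parameter $t>0$ and apply Markov's inequality to the nonnegative random variable $e^{tX}$, obtaining
\[
\mathsf{Pr}[X \ge (1+\epsilon)\mu] = \mathsf{Pr}[e^{tX} \ge e^{t(1+\epsilon)\mu}] \le e^{-t(1+\epsilon)\mu}\,\E[e^{tX}].
\]
By independence, $\E[e^{tX}] = \prod_{i=1}^n \E[e^{tX_i}]$, so it suffices to bound each factor. Using convexity of $s\mapsto e^{ts}$ on $[0,1]$ we have $e^{ts} \le 1 + s(e^t-1)$ for $s\in[0,1]$; taking expectations and then using $1+x\le e^x$ gives $\E[e^{tX_i}] \le 1 + \E[X_i](e^t-1) \le \exp\!\big(\E[X_i](e^t-1)\big)$. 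Multiplying over $i$ yields $\E[e^{tX}]\le \exp\!\big(\mu(e^t-1)\big)$, hence
\[
\mathsf{Pr}[X \ge (1+\epsilon)\mu] \le \exp\!\Big(\mu\big(e^t - 1 - t(1+\epsilon)\big)\Big).
\]

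Next I would optimize over $t$. The exponent on the right is minimized at $t=\ln(1+\epsilon)>0$, giving $\exp\!\big(-\mu[(1+\epsilon)\ln(1+\epsilon)-\epsilon]\big)$. To obtain the stated form it then remains to verify the elementary scalar inequality $(1+\epsilon)\ln(1+\epsilon)-\epsilon \ge \tfrac{\epsilon^2}{2+\epsilon}$ for all $\epsilon>0$, which can be checked by comparing derivatives at $\epsilon=0$ or via a short series estimate; this yields $\mathsf{Pr}[X\ge(1+\epsilon)\mu]\le \exp(-\epsilon^2\mu/(2+\epsilon))$.

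For the lower tail (the second inequality, whose intended event is $X \le (1-\epsilon)\mu$), I would run the symmetric computation with the exponential $e^{-tX}$ for $t>0$: Markov gives $\mathsf{Pr}[X\le(1-\epsilon)\mu]\le \exp\!\big(\mu(e^{-t}-1+t(1-\epsilon))\big)$, optimizing at $t=-\ln(1-\epsilon)$ (legitimate since we may assume $\epsilon\in(0,1)$; for $\epsilon\ge1$ the bound is vacuous because $X\ge0$), and one finishes with the inequality $(1-\epsilon)\ln(1-\epsilon)+\epsilon \ge \epsilon^2/2$. The only genuinely non-mechanical parts of the whole argument are establishing these two one-variable inequalities and the initial convexity step that lets us handle arbitrary $[0,1]$-valued summands (not just Bernoulli ones); everything else is the standard Chernoff recipe, so I would present it compactly, or simply cite \cite{tarjan09,chernoff1952} as the paper does elsewhere.
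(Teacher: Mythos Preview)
Your proof is correct and is the standard moment-generating-function argument for the Chernoff bound. However, the paper does not actually prove this lemma at all: it is listed in the appendix of ``Known Facts'' and is simply stated with citations to \cite{tarjan09,chernoff1952}. So there is no approach in the paper to compare against; your write-up supplies a proof where the paper (appropriately) just cites the literature.
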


\subsection{Linear regression}\label{sec:l2_regression}
Given a linear subspace $\mathrm{span}\{\vec{v}_1,\cdots,\vec{v}_d\}$ and $n$ points, we always use $\ell_2$-regression to find a vector as the linear combination of $\vec{v}_1,\cdots,\vec{v}_d$ that minimizes the distance of this vector to those $n$ points.
\begin{fact}\label{fac:basic_l2_regression}
Given an $n\times d$ matrix $A$ and an $n\times 1$ column vector $b$ , it takes $O(nd^{\omega-1})$ time to output an $x'$ such that
\begin{equation*}
x' = \underset{x}{ \arg \min } \| Ax - b\|_2.
\end{equation*}
where $\omega$ is the exponent of matrix multiplication\cite{W12}.
\end{fact}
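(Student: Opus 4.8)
The plan is to reduce the least-squares problem to the normal equations and then use fast matrix multiplication to assemble the relevant $d\times d$ Gram matrix quickly. Concretely, a vector $x'$ satisfies $x' = \arg\min_x \norm{Ax - b}_2$ if and only if the residual $Ax'-b$ is orthogonal to the column space $\mathrm{col}(A)$, i.e. $A^{\top}(Ax'-b)=0$, equivalently $A^{\top}A\,x' = A^{\top}b$ (for complex $A$ one replaces $A^{\top}$ by the conjugate transpose throughout). Indeed the minimum of $\norm{Ax-b}_2$ equals $\dist(b,\mathrm{col}(A))$ and is attained exactly when $Ax'$ is the orthogonal projection of $b$ onto $\mathrm{col}(A)$, and that projection is characterized by $Ax'-b\perp\mathrm{col}(A)$. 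In particular the system $A^{\top}A\,x = A^{\top}b$ is always consistent, so it suffices to output any one of its solutions.

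First I would bound the cost of forming $A^{\top}A$. By padding $A$ with at most $d$ extra zero rows — which changes neither $A^{\top}A$ nor $A^{\top}b$ and at most doubles $n$ — assume $n$ is a multiple of $d$. Write $A$ as a stack of $n/d$ blocks $A_1,\dots,A_{n/d}$, each of size $d\times d$, so that $A^{\top}A = \sum_{i=1}^{n/d} A_i^{\top}A_i$. Each block product costs $O(d^{\omega})$, and there are $n/d$ of them, for a total of $O\!\big((n/d)\cdot d^{\omega}\big) = O(nd^{\omega-1})$; the vector $A^{\top}b$ is formed directly in $O(nd)$ time.

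Next I would solve the $d\times d$ system $(A^{\top}A)\,x = A^{\top}b$ in $O(d^{\omega})$ time. If $A$ has full column rank, $A^{\top}A$ is invertible and $x' = (A^{\top}A)^{-1}(A^{\top}b)$: inverting a $d\times d$ matrix costs $O(d^{\omega})$ and the matrix--vector product costs $O(d^{2})$. Otherwise run Gaussian elimination on the (consistent) augmented system $[\,A^{\top}A \mid A^{\top}b\,]$, pick a maximal set of pivot columns, set the remaining free variables to zero, and back-substitute; this still costs $O(d^{\omega})$ and returns a valid minimizer. Summing the three stages gives $O(nd^{\omega-1}) + O(nd) + O(d^{\omega}) = O(nd^{\omega-1})$, using $n\ge d$ and $\omega\ge 2$.

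The main obstacle — really the only subtlety — is the rank-deficient case: one must observe that the statement asks only for \emph{some} minimizer $x'$, so it is enough that Gaussian elimination on the consistent (possibly singular) normal-equations system returns an arbitrary solution within $O(d^{\omega})$ time, rather than insisting on, say, the minimum-norm solution (which would instead call for an SVD — still $O(nd^{\omega-1})$, but more awkward to state in the arithmetic model). A minor bookkeeping point is the zero-padding step used to make $n$ divisible by $d$ so that the block-multiplication count is exactly $n/d$.
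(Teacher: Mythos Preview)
Your proposal is correct and is the standard argument. The paper, however, does not actually prove this statement: it is listed in the appendix of ``Known Facts'' (Section~\ref{sec:l2_regression}) and simply stated as Fact~\ref{fac:basic_l2_regression} with a citation, without any accompanying proof. So there is nothing to compare against; your normal-equations reduction together with the block computation of $A^{\top}A$ in $O((n/d)\cdot d^{\omega}) = O(nd^{\omega-1})$ and the $O(d^{\omega})$ solve of the $d\times d$ system is exactly the folklore justification one would supply.
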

Notice that weighted linear regression can be solved by linear regression solver as a black-box.

\begin{algorithm}[t]
\caption{Linear regression algorithms}\label{alg:linear_regression}
\begin{algorithmic}[1]
\Procedure{\textsc{LinearRegression}}{$A,b,$} --- Fact \ref{fac:basic_l2_regression}
	\State $x' \leftarrow \underset{x}{\arg \min} \| Ax - b\|_2$.
	\State \Return $x'$
\EndProcedure
\Procedure{\textsc{LinearRegressionW}}{$A,b,w$} --- Fact \ref{fac:basic_l2_regression}
	\State $x' \leftarrow \underset{x}{\arg \min} \overset{d}{\underset{i=1}{\sum}} w_i | (Ax)_i - b_i |^2$.
	\State \Return $x'$
\EndProcedure
\end{algorithmic}
\end{algorithm}

\begin{algorithm}[t]
\caption{Multipoint evaluation of a polynomial}\label{alg:multipoint_evaluation}
\begin{algorithmic}[1]
\Procedure{\textsc{MultipointEvaluation}}{$P,\{t_1,t_2,\cdots, t_d\}$} --- Fact \ref{fac:multipoint_evaluation_of_polynomial}
	\State \Return $P(t_1), P(t_2), \cdots, P(t_d)$
\EndProcedure
\end{algorithmic}
\end{algorithm}

\subsection{Multipoint evaluation of a polynomial}
Given a degree-$d$ polynomial, and $n$ locations. The naive algorithm of computing the evaluations at those $n$ locations takes $O(nd)$. However, the running time can be improved to $O(n \poly(\log d))$ by using this well-known result,
\begin{fact}[\cite{BS12}]\label{fac:multipoint_evaluation_of_polynomial}
Given a degree-$d$ polynomial $P(t)$, and a set of $d$ locations $\{t_1,t_2,\cdots, t_d\}$. There exists an algorithm that takes $O(d\log^c d)$ time to output the evaluations $\{ P(t_1), P(t_2), \cdots, P(t_d)\}$, for some constant $c$.
\end{fact}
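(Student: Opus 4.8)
\textbf{Proof proposal for Fact~\ref{fac:multipoint_evaluation_of_polynomial}.}
The plan is to use the classical divide-and-conquer "subproduct tree" algorithm for simultaneous evaluation of a polynomial at many points, which reduces the task to a sequence of polynomial multiplications and divisions-with-remainder, each of which can be done in near-linear time via the Fast Fourier Transform. First I would build the subproduct tree: arrange the $d$ points $t_1,\dots,t_d$ at the leaves of a balanced binary tree, and at each internal node store the product $\prod_{i\in \mathrm{leaves}(v)} (t - t_i)$ of the linear factors beneath it. Computing each node's polynomial from its two children is one polynomial multiplication; summing the cost level by level, the total is $O(d \log^c d)$ since the FFT-based multiplication of two degree-$m$ polynomials takes $O(m \log m)$ time and the products at each fixed level of the tree have total degree $O(d)$.

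The second phase is the "going down" recursion. The key identity is that $P(t_i) = \big(P \bmod (t-t_i)\big)$ for each $i$, and more generally, for any node $v$ with subtree polynomial $M_v(t) = \prod_{i\in\mathrm{leaves}(v)}(t-t_i)$, the values of $P$ at the points under $v$ depend only on $P \bmod M_v$. So, starting from the root with the full $P$ (reduced mod $M_{\mathrm{root}}$, which is a no-op if $\deg P = d$), I would recursively compute, at each node $v$ with children $v_\ell, v_r$, the residues $(P\bmod M_v) \bmod M_{v_\ell}$ and $(P \bmod M_v)\bmod M_{v_r}$. Each such step is a polynomial division with remainder, again computable in near-linear time by Newton iteration / FFT (fast division reduces to fast multiplication). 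At the leaves we are left with the constants $P(t_1),\dots,P(t_d)$. As in the upward phase, the total degree of the polynomials handled at each level is $O(d)$, so the per-level cost is $O(d\log d)$ and the total over $O(\log d)$ levels is $O(d\log^c d)$ for a suitable constant $c$.

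The main obstacle — or rather the main technical content to be cited — is establishing that polynomial multiplication, and hence division with remainder, runs in time $O(m\log^{O(1)} m)$ over the relevant ring (here $\mathbb{C}$, or a sufficiently large finite field if one wants exact arithmetic), via the FFT; this is the classical Schönhage–Strassen / Cantor–Kaltofen machinery, and it is exactly what the cited reference~\cite{BS12} packages. Given that, the correctness of the two recursions follows from the Chinese Remainder-style identity $P(t_i) = P\bmod M_v \bmod (t-t_i)$ and a straightforward induction on the tree depth, and the running-time bound follows from the level-by-level degree accounting sketched above. I would therefore present the argument by (i) recalling fast multiplication and fast division as black boxes, (ii) describing the subproduct tree construction with its cost analysis, (iii) describing the downward residue recursion with its correctness proof and cost analysis, and (iv) concluding the stated $O(d\log^c d)$ bound; for the purposes of this paper it suffices to invoke~\cite{BS12} directly, so the detailed recursion need only be sketched.
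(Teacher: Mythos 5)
Your proposal is a correct sketch of the standard subproduct-tree multipoint-evaluation algorithm, which is exactly the machinery that the cited reference establishes; the paper itself offers no proof and simply invokes~\cite{BS12} for this Fact. Nothing to compare beyond noting that your level of detail (subproduct tree construction, downward remainder recursion, level-by-level degree accounting) is precisely what one would cite this reference for.
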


\section{Analysis of Hash Functions and Filter Functions }\label{sec:proof_hashfilter}

\subsection{Analysis of filter function \texorpdfstring{$(H(t), \widehat{H}(f))$} \quad}\label{sec:properties_of_H}
We construct the Filter function $(H(t), \widehat{H}(f))$ in this section. 

We fix the interval to be $\supp(\rect_1)=[-1/2,1/2]$ instead of $[0,T]$ for convenience. We first define the filter function $H_1(t)$ which preserves the energy of a $k$-Fourier-sparse signal $x^*$ on $[-1/2,1/2]$ to the signal $H_1 \cdot x^*$ on $[-\infty,+\infty]$. 
\begin{definition}
Let $s_1 = \Theta( k^4 \log^4 k )$, $\ell = \Omega( k \log k /\delta)$ be a even number, and $s_0=C_0 s_1\sqrt{\ell}$ for some constant $C_0$ that will normalize $H_1(0)=1$. Recall that $\rect_s(t)=1$ iff $|t| \le s/2$ and $\wh{\rect_s(f)}=\sinc(fs)=\frac{\sin(\pi fs)}{\pi fs}$. 

We define the filter function $H_1(t)$ and its Fourier transform $\wh{H}_1(f)$ as follows:
\begin{eqnarray}
\widehat{H}_1(f) & = & s_0 \cdot \left( \rect_{s_1}(f) \right)^{*\ell} \cdot \sinc\left( f s_2\right), \notag \\ 
 & =& s_0 \cdot \bigl( {\vcenter{\hbox{\includegraphics[width=0.2\textwidth]{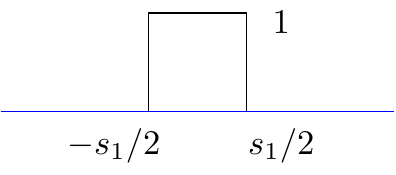}}}} \bigr)^{*\ell} \cdot {\vcenter{\hbox{\includegraphics[width=0.4\textwidth]{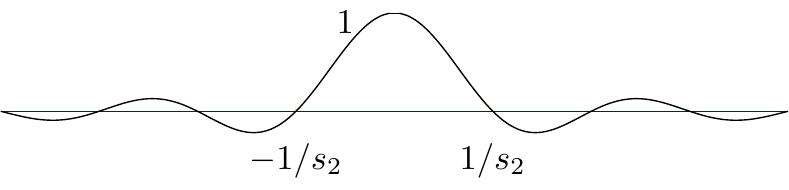}}}}  \notag\\
H_1(t) &= & s_0 \cdot \left(\sinc( s_1 t) \right)^{\cdot \ell} * \rect_{s_2}(t) \notag \\
&= & s_0 \cdot \bigl( {\vcenter{\hbox{\includegraphics[width=0.4\textwidth]{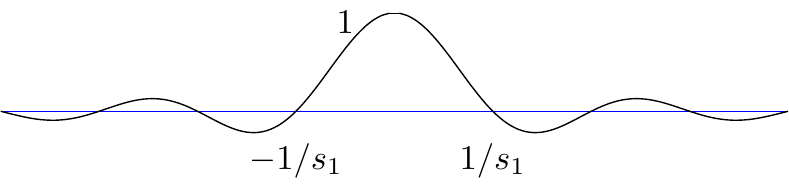}}}} \bigr)^{\cdot \ell} * {\vcenter{\hbox{\includegraphics[width=0.2\textwidth]{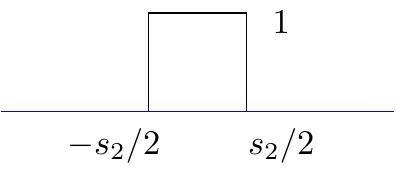}}}}  \notag
\end{eqnarray}
where $s_0$ is a fixed parameter s.t. $H_1(0)=1$. 
\end{definition}
We provide some basic properties about our filter function. Notice that $\sinc(t)=\frac{\sin(\pi t)}{\pi t}$ ( $\sinc(0)$ is defined to be $1$ ) has the following properties (shown in Figure \ref{fig:sinc_property}):

\begin{enumerate}
\item $\forall t \in \mathbb{R}, 1 - \frac{(\pi t)^2}{3!} \le \sinc(t) \le 1$.
\item $\forall |t| \le 1.2/\pi, \sinc(t) \le 1 - \frac{t^2}{8}$.
\item $\forall |t|>1.2/\pi, |\sinc(t)| \le \frac{1}{\pi |t|}$.
\end{enumerate}

\begin{figure}
 \centering
    \includegraphics[width=0.8\textwidth]{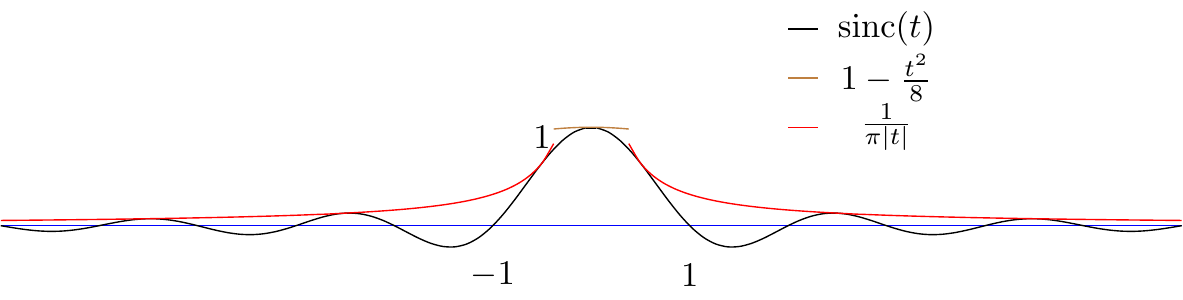}
    \caption{The Property of $\sinc(t)$.}\label{fig:sinc_property}
\end{figure}

\begin{claim}\label{clm:concentrate_around_0}
$\int_{-\frac{1.2}{\pi s_1}}^{\frac{1.2}{\pi s_1}} ( \sinc(s_1 t) )^{\ell} \mathrm{d} t \eqsim \frac{1}{s_1\sqrt{\ell}}$.
\end{claim}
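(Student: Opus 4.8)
The plan is to reduce to a scale-free integral and then sandwich it between two Gaussians. Substituting $u = s_1 t$ gives
\[
\int_{-\frac{1.2}{\pi s_1}}^{\frac{1.2}{\pi s_1}} \big(\sinc(s_1 t)\big)^{\ell}\,\mathrm{d}t \;=\; \frac{1}{s_1}\int_{-\frac{1.2}{\pi}}^{\frac{1.2}{\pi}} \big(\sinc(u)\big)^{\ell}\,\mathrm{d}u,
\]
so it suffices to show $\int_{-1.2/\pi}^{1.2/\pi} \big(\sinc(u)\big)^{\ell}\,\mathrm{d}u \eqsim \frac{1}{\sqrt{\ell}}$. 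Note first that on this interval $|u| < 1$, so $\sinc(u) > 0$ and the integrand is well-defined and positive; also $\ell$ is even, so raising to the $\ell$th power is harmless even outside the interval.

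For the upper bound I would invoke the listed property $\sinc(u) \le 1 - u^2/8$ for $|u| \le 1.2/\pi$, so $\big(\sinc(u)\big)^{\ell} \le (1 - u^2/8)^{\ell} \le e^{-\ell u^2/8}$, and then enlarge the domain to all of $\mathbb{R}$:
\[
\int_{-1.2/\pi}^{1.2/\pi} \big(\sinc(u)\big)^{\ell}\,\mathrm{d}u \;\le\; \int_{-\infty}^{\infty} e^{-\ell u^2/8}\,\mathrm{d}u \;=\; \sqrt{\tfrac{8\pi}{\ell}} \;=\; O\!\left(\tfrac{1}{\sqrt{\ell}}\right).
\]
For the lower bound I would use $\sinc(u) \ge 1 - (\pi u)^2/6$ together with the elementary inequality $1 - x \ge e^{-2x}$ valid for $0 \le x \le 1/2$; since $(\pi u)^2/6 \le 1.44/6 < 1/2$ on the interval, this yields $\big(\sinc(u)\big)^{\ell} \ge e^{-\ell \pi^2 u^2/3}$. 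Restricting the integral to the smaller window $|u| \le 1/\sqrt{\ell}$ — which is contained in $[-1.2/\pi,\,1.2/\pi]$ once $\ell$ exceeds a fixed constant, guaranteed by the hypothesis $\ell = \Omega(k\log(k/\delta))$ — and rescaling by $v = u\sqrt{\ell}$ gives
\[
\int_{-1.2/\pi}^{1.2/\pi} \big(\sinc(u)\big)^{\ell}\,\mathrm{d}u \;\ge\; \int_{-1/\sqrt{\ell}}^{1/\sqrt{\ell}} e^{-\ell \pi^2 u^2/3}\,\mathrm{d}u \;=\; \frac{1}{\sqrt{\ell}}\int_{-1}^{1} e^{-\pi^2 v^2/3}\,\mathrm{d}v \;=\; \Omega\!\left(\tfrac{1}{\sqrt{\ell}}\right).
\]
Combining the two bounds shows the rescaled integral is $\Theta(1/\sqrt{\ell})$, and hence the original integral is $\eqsim \frac{1}{s_1\sqrt{\ell}}$.

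There is no serious obstacle here; the only points requiring a little care are checking that the constants in the two $\sinc$ estimates and in $1-x \ge e^{-2x}$ are all consistent on the interval $[-1.2/\pi,\,1.2/\pi]$, and verifying that the restricted window $|u|\le 1/\sqrt\ell$ genuinely lies inside that interval, which is where the assumption that $\ell$ is large (not just a constant) gets used.
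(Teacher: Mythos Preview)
Your proof is correct and follows essentially the same approach as the paper: the same substitution $u = s_1 t$ and the same two $\sinc$ inequalities (properties 1 and 2) drive both arguments. The only cosmetic difference is that you pass to Gaussian integrals via $1-x \le e^{-x}$ and $1-x \ge e^{-2x}$, whereas the paper bounds the upper tail by a dyadic sum over shells of width $\sqrt{8/\ell}$; your version is arguably cleaner.
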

\begin{proof}
We use the above properties for the sinc function to prove the upper bound:
\begin{align*}
\int_{-\frac{1.2}{\pi s_1}}^{+\frac{1.2}{\pi s_1}} ( \sinc(s_1 t) )^{\ell} \mathrm{d} t & = \frac{1}{s_1} \int_{-1.2/\pi}^{+1.2/\pi} ( \sinc( t) )^{\ell} \mathrm{d} t \\
& = \frac{2}{s_1} \left( \int_{0}^{\sqrt{8/\ell} }( \sinc{t} )^\ell \mathrm{d} t + \int_{\sqrt{8/\ell}}^{1.2/\pi} (\sinc(t))^{\ell} \mathrm{d} t \right)\\
& \le \frac{2}{s_1} \left( \sqrt{8/\ell} + \sum_{i=1}^{\frac{1.2/\pi}{\sqrt{8/\ell}}-1}\int_{i\sqrt{8/\ell}} ^{(i+1) \sqrt{8/\ell} }(1- x^2/8 )^\ell \mathrm{d} x\right) \\
& \le \frac{2}{s_1} \left( \sqrt{8/\ell} + \sum_{i=1}^{\frac{1.2/\pi}{\sqrt{8/\ell}}-1} \sqrt{8/\ell} \cdot 2^{-i^2} \right) \\
& \lesssim \frac{1}{s_1\sqrt{\ell}}.
\end{align*}
We prove the lower bound:
\begin{align*}
\int_{-\frac{1.2}{\pi s_1}}^{\frac{1.2}{\pi s_1}} ( \sinc(s_1 t) )^{\ell} \mathrm{d} t & = \frac{2}{s_1} \left( \int_{0}^{\sqrt{8/\ell} }( \sinc{t} )^\ell \mathrm{d} t + \int_{\sqrt{8/\ell}}^{1.2/\pi} (\sinc(t))^{\ell} \mathrm{d} t \right)\\
& \ge \frac{2}{s_1} \left( \int_{0}^{\sqrt{8/\ell} }( 1 - \frac{\pi^2 t^2}{6})^\ell \mathrm{d} t \right)  \\
& \gtrsim \frac{1}{s_1\sqrt{\ell}}.
\end{align*}
\end{proof}
We bound the integration outside $[-\frac{1.2}{\pi s_1},\frac{1.2}{\pi s_1}]$ from the last property of the sinc function.
\begin{claim}\label{clm:the_rest_is_tiny}
$\int_{\frac{1.2}{\pi s_1}}^{+\infty} ( \sinc(s_1 t) )^{\ell} \mathrm{d} t = O(1.2^{-\ell})$.
\end{claim}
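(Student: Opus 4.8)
The plan is to exploit the tail bound $|\sinc(t)|\le \frac{1}{\pi|t|}$ valid for $|t|>1.2/\pi$ (Property 3 of the sinc function listed just above), which applies throughout the range of integration: $t\ge \frac{1.2}{\pi s_1}$ forces $|s_1 t|\ge 1.2/\pi$. First I would observe that on this whole range $|\sinc(s_1 t)|\le \frac{1}{\pi s_1 t}\le \frac{1}{1.2}<1$, and since $\ell$ is an even integer the integrand $(\sinc(s_1 t))^\ell$ is nonnegative, so raising to the $\ell$-th power only shrinks it.

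The key step is to peel off two factors of $\sinc$ so the remaining integral converges absolutely: for $\ell\ge 2$ (true since $\ell=\Omega(k\log(k/\delta))$),
\[
(\sinc(s_1 t))^\ell \le \left(\tfrac{1}{1.2}\right)^{\ell-2}\left(\tfrac{1}{\pi s_1 t}\right)^2 .
\]
Then I would integrate the right-hand side directly,
\[
\int_{\frac{1.2}{\pi s_1}}^{+\infty}\left(\tfrac{1}{\pi s_1 t}\right)^{2}\mathrm{d}t=\frac{1}{\pi^2 s_1^2}\cdot\frac{\pi s_1}{1.2}=\frac{1}{1.2\,\pi s_1},
\]
so that $\int_{\frac{1.2}{\pi s_1}}^{+\infty}(\sinc(s_1 t))^\ell\,\mathrm{d}t\le \frac{1.2^{2}}{1.2\,\pi s_1}\,1.2^{-\ell}=\frac{1.2}{\pi s_1}\,1.2^{-\ell}$. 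Since $s_1=\Theta(k^4\log^4 k)\ge 1$, the residual $1/s_1$ factor is harmless and this is $O(1.2^{-\ell})$ with an absolute constant.

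I do not expect a genuine obstacle here; the only points to verify are that the sinc tail bound is in force at the endpoint ($s_1 t=1.2/\pi$ gives $|\sinc|\le 1/1.2$), that $\ell\ge 2$ so the peeling argument is valid, and that $s_1\ge 1$ lets us drop the $1/s_1$ factor. A slightly cruder alternative would avoid the improper integral entirely: bound $|\sinc(s_1 t)|\le 1/1.2$ on the first dyadic block $[\frac{1.2}{\pi s_1},\frac{2}{\pi s_1}]$ and use the $1/(\pi s_1 t)$ decay on the blocks $[\frac{2^i}{\pi s_1},\frac{2^{i+1}}{\pi s_1}]$ for $i\ge 1$, then sum the resulting geometric series in $i$; this is strictly more bookkeeping but conceptually identical.
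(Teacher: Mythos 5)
Your proof is correct and follows essentially the same route as the paper, which simply states the claim after pointing to the third sinc property $|\sinc(t)|\le 1/(\pi|t|)$ and leaves the remaining one-line computation implicit. Your only variation is that you peel off two factors of $\frac{1}{\pi s_1 t}$ and bound the other $\ell-2$ by $1/1.2$ rather than integrating $(\pi s_1 t)^{-\ell}$ directly (as the paper does in the analogous bound inside the proof of Property~\RN{1} of $H_1$, which would give the slightly stronger extra factor $1/(\ell-1)$); both yield $O(1.2^{-\ell})$ with an absolute constant, and your checks at the endpoint and that $s_1\ge 1$ are the right ones.
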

From these two claims, we have the existence of $s_0$.
\begin{claim}
There exists a universal constant $C_0$ and $s_0=C_0 s_1\sqrt{\ell}$ such that $H_1(0)=1$.
\end{claim}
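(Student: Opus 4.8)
The plan is to evaluate $H_1(0)$ directly from the definition of convolution and then combine Claims~\ref{clm:concentrate_around_0} and~\ref{clm:the_rest_is_tiny}. Since $\rect_{s_2}$ is the (even) indicator of $[-s_2/2,s_2/2]$, the definition $H_1(t)=s_0\cdot(\sinc(s_1 t))^{\cdot\ell}*\rect_{s_2}(t)$ gives
\[
H_1(0)=s_0\int_{-\infty}^{\infty}(\sinc(s_1\tau))^{\ell}\,\rect_{s_2}(-\tau)\,\mathrm{d}\tau=s_0\int_{-s_2/2}^{s_2/2}(\sinc(s_1\tau))^{\ell}\,\mathrm{d}\tau .
\]
Because $\ell$ is even the integrand is nonnegative, so the integral is a positive real number and $H_1(0)=1$ is equivalent to taking $s_0=\bigl(\int_{-s_2/2}^{s_2/2}(\sinc(s_1\tau))^{\ell}\mathrm{d}\tau\bigr)^{-1}$. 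Hence it suffices to show this integral equals $\Theta\!\left(\tfrac{1}{s_1\sqrt{\ell}}\right)$, for then $s_0=\Theta(s_1\sqrt{\ell})$, which is the asserted form, with $C_0$ the resulting proportionality constant (bounded away from $0$ and $\infty$ by absolute constants).

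Next I would split the integral at $\pm\frac{1.2}{\pi s_1}$. By the parameter choices ($s_1$ is large and $s_2$ is of constant order), we have $s_2/2\ge \frac{1.2}{\pi s_1}$, so
\[
\int_{-s_2/2}^{s_2/2}(\sinc(s_1\tau))^{\ell}\mathrm{d}\tau
=\int_{-1.2/(\pi s_1)}^{1.2/(\pi s_1)}(\sinc(s_1\tau))^{\ell}\mathrm{d}\tau
+2\int_{1.2/(\pi s_1)}^{s_2/2}(\sinc(s_1\tau))^{\ell}\mathrm{d}\tau .
\]
The first term is $\eqsim\frac{1}{s_1\sqrt{\ell}}$ by Claim~\ref{clm:concentrate_around_0}. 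The second term is nonnegative and, by Claim~\ref{clm:the_rest_is_tiny}, at most $2\int_{1.2/(\pi s_1)}^{\infty}(\sinc(s_1\tau))^{\ell}\mathrm{d}\tau=O(1.2^{-\ell})$.

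Finally I would observe that the tail is negligible compared with the central term: since $\ell=\Omega(k\log(k/\delta))$ and $s_1=\poly(k)$, once $\ell$ exceeds a fixed absolute multiple of $\log s_1$ we get $1.2^{\ell}\ge (s_1\sqrt{\ell})^{2}$, hence $1.2^{-\ell}=o\!\left(\frac{1}{s_1\sqrt{\ell}}\right)$. Therefore $\int_{-s_2/2}^{s_2/2}(\sinc(s_1\tau))^{\ell}\mathrm{d}\tau=\Theta\!\left(\frac{1}{s_1\sqrt{\ell}}\right)$, its reciprocal $s_0$ equals $\Theta(s_1\sqrt{\ell})$, and we may write $s_0=C_0 s_1\sqrt{\ell}$ for a universal constant $C_0$, giving $H_1(0)=1$ for that choice. (If one wanted an asymptotically exact normalization one could instead apply Laplace's method to $\int(\sinc u)^{\ell}\mathrm{d}u\sim\sqrt{6/(\pi\ell)}$, yielding $C_0=\sqrt{\pi/6}\,(1+o(1))$, but only the $\Theta$ bound is used downstream.)

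The main obstacle is not conceptual — the claim is essentially an assembly of the two lemmas just proved — but parameter bookkeeping: one must confirm that in the chosen regime $s_2/2>\frac{1.2}{\pi s_1}$ and that $\ell$ is large enough for the $O(1.2^{-\ell})$ tail to be dominated by the $\Theta(1/(s_1\sqrt{\ell}))$ central mass, and one must read ``universal constant $C_0$'' as an order-of-magnitude constant (equivalently $s_0=\Theta(s_1\sqrt{\ell})$), which is all that the later use of $H_1$ requires.
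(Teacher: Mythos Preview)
Your proposal is correct and follows essentially the same approach as the paper: the paper's proof is the one-line observation that, by the two preceding claims, $\int_{-s_2/2}^{s_2/2}(\sinc(s_1 t))^{\ell}\,\mathrm{d}t \eqsim \tfrac{1}{s_1\sqrt{\ell}}$ (with $s_2=1-2/s_1$), which is exactly what you spell out in detail by splitting at $\pm 1.2/(\pi s_1)$ and checking the tail is negligible.
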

\begin{proof}
Because $\ell$ is a large even number, $\int_{\rect_{1-2/s_1}} \sinc(s_1 t)^{\ell} \mathrm{d} t \eqsim \frac{1}{s_1\sqrt{\ell}}$ from all discussion above.
\end{proof}



We show several useful properties about the Filter functions $\big(H_1(t),\wh{H_1}(f)\big)$.
\begin{lemma}\label{lem:property_of_filter_H1}
Given $s_0, s_1, s_2, \ell$, where $ \frac{s_2}{2}+\frac{1}{s_1} \leq 1/2$ and $s_0 = C_0 s_1 \sqrt{\ell}$ for some constant $C_0$. The filter function $(H_1(t),\widehat{H}_1(f))[s_0,s_1,s_2,\ell]$ has the following properties,
\begin{eqnarray*}
&\mathrm{Property~\RN{1}} : & H_1(t) \in [1- \frac{s_0}{s_1} \cdot  \frac{2 \pi^{-\ell}}{\ell-1} , 1 ], \text{~if~} |t| \leq  \frac{s_2}{2} - \frac{1}{s_1}. \\
&\mathrm{Property~\RN{2}} : & H_1(t) \in [0,1], \text{~if~}   \frac{s_2}{2} -\frac{1}{s_1} \leq |t| \leq \frac{1}{2} \\
&\mathrm{Property~\RN{3}} : & H_1(t) \leq s_0  s_2\left(  (s_1 |t|-  s_1  + 2 )^2 + 1\right)^{-\ell} ,\forall |t| > \frac{1}{2} \\
&\mathrm{Property~\RN{4}} : & \supp(\widehat{H}_1(f) ) \subseteq [-\frac{s_1  \ell}{2}, \frac{s_1 \ell }{2}] 
\end{eqnarray*}
\end{lemma}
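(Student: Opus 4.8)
The plan is to work entirely in the time domain, writing
$H_1(t) = s_0 \int_{t - s_2/2}^{t + s_2/2} \psi(u)\,\mathrm{d}u$ where $\psi(u) := \sinc(s_1 u)^{\ell} \ge 0$ (nonnegative since $\ell$ is even), and where $s_0 = C_0 s_1\sqrt{\ell}$ is exactly the constant making $H_1(0) = s_0\int_{-s_2/2}^{s_2/2}\psi = 1$. Property~\RN{4} is the only one I would prove from the frequency side: since $\widehat{H}_1 = s_0\,(\rect_{s_1})^{*\ell}\cdot\sinc(f s_2)$ and $\rect_{s_1}$ is supported on $[-s_1/2,s_1/2]$, its $\ell$-fold convolution is supported on $[-\ell s_1/2,\ell s_1/2]$, and multiplying by $\sinc(f s_2)$ does not enlarge the support. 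All the remaining properties reduce to estimates on the ``windowed mass'' $\int_a^{a+s_2}\psi$, for which the only tools needed are: $\psi \ge 0$, the bounds $1 - (\pi x)^2/6 \le \sinc(x)\le 1$ and $|\sinc(x)| \le 1/(\pi|x|)$ for $|x|>1.2/\pi$, and the concentration estimates of Claims~\ref{clm:concentrate_around_0} and~\ref{clm:the_rest_is_tiny}, which say the main lobe $[-1/s_1,1/s_1]$ of $\psi$ carries mass $\Theta(1/(s_1\sqrt{\ell}))$ while everything outside it carries mass $O(\pi^{-\ell}/s_1)$.

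For the lower bounds: $H_1(t)\ge 0$ always, being an integral of $\psi\ge0$. For Property~\RN{1}, observe that $|t|\le s_2/2 - 1/s_1$ forces $t - s_2/2 \le -1/s_1$ and $t + s_2/2 \ge 1/s_1$, so the window $[t-s_2/2,t+s_2/2]$ contains the whole main lobe $[-1/s_1,1/s_1]$. Hence $H_1(t) \ge s_0\int_{-1/s_1}^{1/s_1}\psi = H_1(0) - s_0\int_{1/s_1\le|u|\le s_2/2}\psi$, and I would bound the subtracted term by $2 s_0\int_{1/s_1}^{\infty}(\pi s_1 u)^{-\ell}\,\mathrm{d}u = \frac{s_0}{s_1}\cdot\frac{2\pi^{-\ell}}{\ell-1}$, using $|\sinc(s_1 u)|\le 1/(\pi s_1 u)$ on $u\ge 1/s_1$ and $\int_1^\infty x^{-\ell}\,\mathrm{d}x = 1/(\ell-1)$. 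This gives exactly the claimed lower endpoint $1 - \frac{s_0}{s_1}\cdot\frac{2\pi^{-\ell}}{\ell-1}$.

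For the upper bound $H_1(t)\le 1$ (the ``$1$'' endpoint in both Property~\RN{1} and~\RN{2}), I would let $\Psi(y) = \int_0^y\psi$, which is odd since $\psi$ is even, so $H_1(t)/s_0 = \Psi(t+s_2/2) - \Psi(t-s_2/2)$ and the claim becomes that this expression is maximized at $t=0$. After a short reduction using the oddness of $\Psi$, this amounts to showing $\int_J \psi \le \int_{-s_2/2}^{s_2/2}\psi$ for every interval $J$ of length at most $s_2$. This is where the concentration estimates enter: if $J$ is disjoint from the main lobe then $\int_J\psi = O(\pi^{-\ell}/s_1)$ is negligible against $\int_{-s_2/2}^{s_2/2}\psi = 1/s_0 = \Theta(1/(s_1\sqrt{\ell}))$; and if $J$ meets the main lobe, then because $J$ is no longer than $s_2$ its overlap with the main lobe is contained in a translate of an interval inside $[-s_2/2,s_2/2]$, so the centered window still dominates, up to the negligible tail. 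A two-case split ($|t|\le s_2/2$ and $s_2/2<|t|\le 1/2$, the range where Property~\RN{2} is not already covered) makes this precise.

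Finally, Property~\RN{3}: for $|t|>1/2$, the hypothesis $s_2/2 + 1/s_1\le 1/2$ gives $|t| - s_2/2 \ge |t| - 1/2 + 1/s_1 > 1/s_1$, so the entire window lies outside the main lobe and $|\sinc(s_1 u)|\le 1/(\pi s_1 u)\le 1/(\pi s_1(|t|-s_2/2))$ throughout it; integrating, $H_1(t)\le s_0 s_2\,(\pi s_1(|t|-s_2/2))^{-\ell}$. I would then use $s_1(|t|-s_2/2)\ge s_1|t| - s_1/2 + 1 \ge s_1|t| - s_1 + 2$ (the second step using $s_1\ge 2$) together with the elementary inequality $(\pi x)^2\ge x^2 + 1$ (valid once $x$ is bounded below by a constant) to bring this into the stated form $s_0 s_2\,((s_1|t|-s_1+2)^2+1)^{-\ell}$. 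The one place I expect real care is exactly here, for $|t|$ just above $1/2$: the crude envelope $|\sinc(x)|\le 1/(\pi|x|)$ at $x$ near $1$ must be sharpened using the fact that $\sinc$ vanishes at the integers (so $\psi$ is genuinely tiny, not merely $\pi^{-\ell}$, near the left edge of the window), and the constants appearing in ``$s_1|t|-s_1+2$'' have to be matched against this refined bound. This constant-tracking step is the main obstacle; the other three properties are routine once the windowed-mass viewpoint and Claims~\ref{clm:concentrate_around_0}--\ref{clm:the_rest_is_tiny} are in place.
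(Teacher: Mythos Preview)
Your windowed-mass framework $H_1(t)=s_0\int_{t-s_2/2}^{t+s_2/2}\psi$ with $\psi=\sinc(s_1\cdot)^\ell\ge0$ is exactly what the paper uses, and your arguments for the lower bound in Property~\RN{1}, the nonnegativity in Property~\RN{2}, and Property~\RN{4} match the paper's almost line for line. The two places your route diverges are the upper bound $H_1(t)\le1$ and Property~\RN{3}.

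For the upper bound, the paper does \emph{not} go through concentration. It writes $H_1(0)-H_1(t)$ as the boundary difference $\int_{-s_2/2}^{-s_2/2+t}\psi-\int_{s_2/2}^{s_2/2+t}\psi$ and asserts this is $\ge0$ once $s_1$ is taken to be an integer (so that the hump structure of $\psi$ aligns with $s_2/2$). This is a one-line structural argument yielding the sharp constant $1$. Your concentration-based comparison is fine in spirit but, as you wrote it, only produces $H_1(t)\le 1+O(\pi^{-\ell}\sqrt{\ell})$: when the shifted window still contains the full main lobe, the tails on the two sides can differ, and all Claims~\ref{clm:concentrate_around_0}--\ref{clm:the_rest_is_tiny} give you is their \emph{size}, not that the centered window wins exactly. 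If you want the literal $1$, you need the paper's alignment trick or an equivalent.

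For Property~\RN{3}, the paper takes the opposite envelope from yours: instead of $|\sinc(x)|\le1/(\pi|x|)$ it runs the chain $(\sin x/x)^\ell\le(1-x^2/8)^\ell\le e^{-x^2\ell/8}\le(1+x^2)^{-\ell}$, which is how the quadratic base $(s_1(|t|-s_2/2))^2+1$ emerges. Your bound $H_1(t)\le s_0s_2(\pi s_1(|t|-s_2/2))^{-\ell}$ is perfectly valid and is in fact the form that the downstream Lemma~\ref{lem:property_of_filter_H} and the later applications actually use (there Property~\RN{3} is stated with the linear base $s_1(|t|/s_3-1/2)+2$, no square). But it does not imply the squared form stated here when $|t|$ is just above $1/2$ and $s_1$ is large---exactly the regime you flagged. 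The paper's own chain is shaky in that same regime (the quadratic bound on $\sin x/x$ is invoked for $|x|>\pi$, outside its stated validity), so your instinct that this is the delicate spot is correct; your envelope gives the bound that is actually consumed later, even if it doesn't match the quadratic statement literally.
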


\begin{figure}[t]
  \centering
    \includegraphics[width=1.0\textwidth]{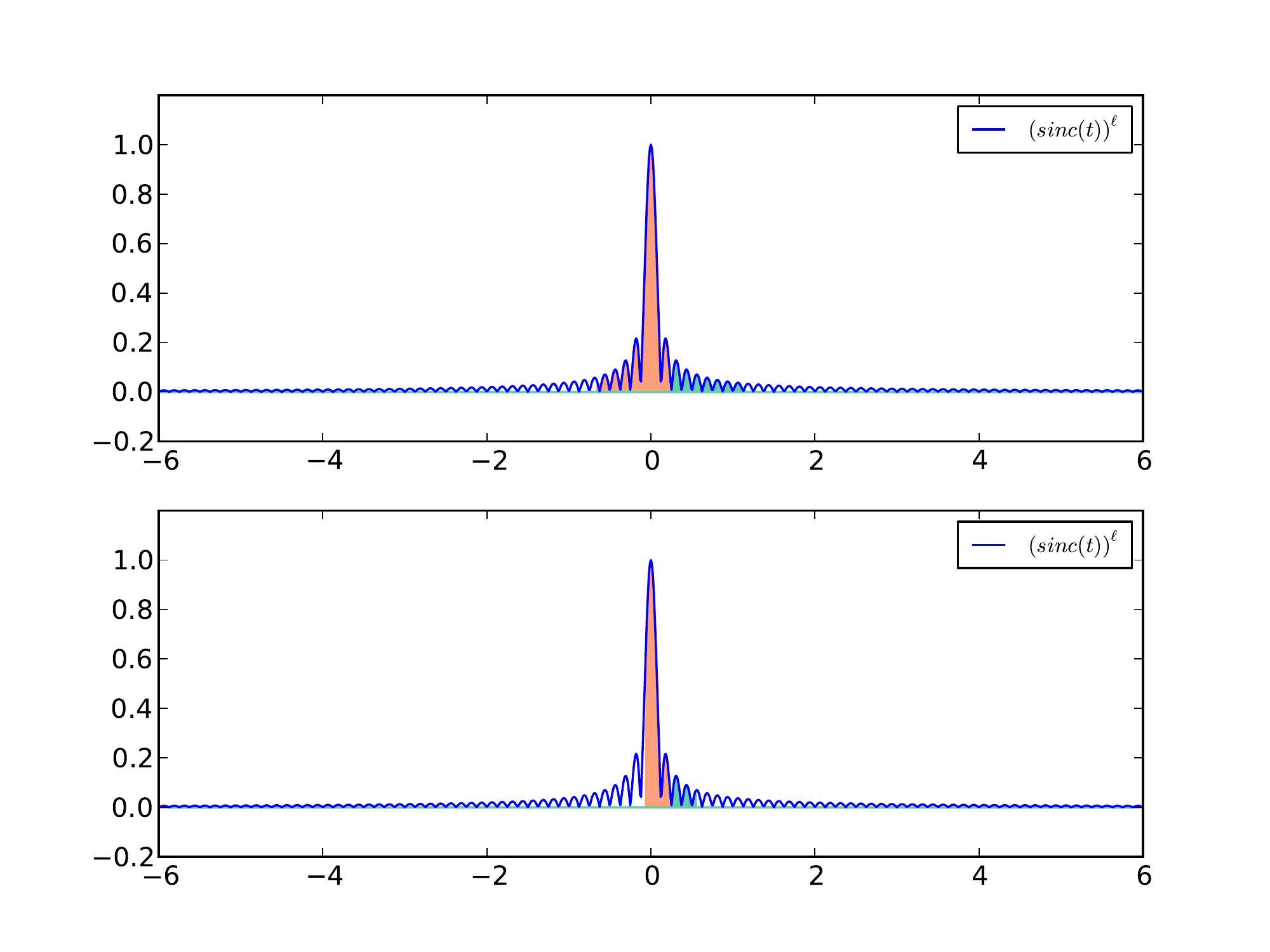}
    \caption{The light red area represents $\int_{( 1/2-2/s_1) }^{  ( 1/2-2/s_1)-t } s_0 \cdot \sinc \left(s_1 (\tau)\right)^{\ell} \mathrm{d}\tau$ and the light green area represents $\int_{ 1/2-2/s_1 }^{ 1/2-2/s_1 +t} s_0 \cdot \sinc \left(s_1 (\tau)\right)^{\ell} \mathrm{d}\tau$.}\label{fig:H1_property_1_proof}
\end{figure}

\begin{proofof}{Property \RN{1}}
First,  $H_1(0) = 1$ follows by definition of $s_0$,
 then we can prove the upper bound for $H_1(t)$ by showing for any $t>0$, $H_1(0) - H_1(t) >0$ always holds  ,

 By definition of sinc function, we know that $\sinc(s_1 t)^\ell$ reaches $0$ at all the points $\{ \frac{1}{s_1} + i \frac{2}{s_1} | i\in \mathbb{N} \}$. By definition of $s_1$, we know that $\frac{1}{s_1} \ll \frac{1}{2}-\frac{1}{s_1}$. For any $t>0$,
\begin{align*}
~&  H_1(0) - H_1(t) \\
 = ~& \int s_0 \cdot \sinc \left(s_1 (\tau)\right)^{\ell}\cdot \rect_{1-2/s_1}(0-\tau) \mathrm{d}\tau - \int s_0 \cdot \sinc \left(s_1 (\tau)\right)^{\ell}\cdot \rect_{1-2/s_1}(t-\tau) \mathrm{d}\tau \\
  = ~& \int_{- ( 1/2-2/s_1) }^{ 1/2-2/s_1 } s_0 \cdot \sinc \left(s_1 (\tau)\right)^{\ell} \mathrm{d}\tau - \int_{- ( 1/2-2/s_1)+t }^{ 1/2-2/s_1 +t} s_0 \cdot \sinc \left(s_1 (\tau)\right)^{\ell} \mathrm{d}\tau \\
   = ~& \int_{- ( 1/2-2/s_1) }^{ - ( 1/2-2/s_1)+t } s_0 \cdot \sinc \left(s_1 (\tau)\right)^{\ell} \mathrm{d}\tau - \int_{ 1/2-2/s_1 }^{ 1/2-2/s_1 +t} s_0 \cdot \sinc \left(s_1 (\tau)\right)^{\ell} \mathrm{d}\tau & \text{~shown~in~Figure~\ref{fig:H1_property_1_proof}} \\
   \geq ~& 0,
\end{align*} 
where the last inequality follows by choosing $s_1$ to be an integer. Thus, we prove an upper bound for $H_1(t)$.
Third, we show the lower bound for $H_1(t)$,
\begin{eqnarray*}
H_1(t) & = & \int_{-\infty}^{+\infty} s_0 \cdot \sinc(s_1 \tau)^{\cdot \ell} \rect_{s_2}(t-\tau) \mathrm{d} \tau  \\
& = & \int_{t-\frac{s_2}{2}}^{t+\frac{s_2}{2}} s_0 \cdot \sinc(s_1 \tau)^{\cdot \ell} \mathrm{d} \tau \\
& = & 1 - \underbrace{ \int_{ t+\frac{s_2}{2} }^{ + \infty } s_0 \cdot \sinc(s_1 \tau)^{\cdot \ell} \mathrm{d} \tau }_{A} - \underbrace{ \int_{ - \infty }^{ t-\frac{s_2}{2}  } s_0 \cdot \sinc(s_1 \tau)^{\cdot \ell} \mathrm{d} \tau }_{B} \\
\end{eqnarray*}
Thus, as long as we can upper bound the term $A$ and $B$, then we will have a lower bound for the $H_1(t)$, for any $|t| \leq \frac{s_2}{2} - \frac{1}{s_1}$.
\begin{eqnarray*}
A & = & \int_{ t+\frac{s_2}{2} }^{ + \infty } s_0 \cdot \sinc(s_1 \tau)^{\cdot \ell} \mathrm{d} \tau \\
& \leq & \int_{\frac{1}{s_1}}^{+\infty} s_0 \cdot \sinc(s_1 \tau)^{\cdot \ell} \mathrm{d} \tau \\
& \leq &  \int_{\frac{1}{s_1}}^{+\infty} s_0 \cdot (s_1 \pi \tau)^{-\ell} \mathrm{d} \tau \\
& = & s_0 \cdot (s_1 \pi)^{-\ell} \frac{1}{\ell-1} (1/s_1)^{-\ell+1} \\
& = & \frac{s_0}{s_1} \cdot (\pi)^{-\ell} \frac{1}{\ell-1}
\end{eqnarray*}
Similarly, we can bound the term $B$ in the same way.
\end{proofof}

\begin{proofof}{Property \RN{2}}
In the proof of Property \RN{1}, we already show that $\forall t$, $H_1(t) \leq 1$. Thus, the upper bound of Property \RN{2} is also holding. The lower bound follows by both $\sinc(s_1t)^{\cdot \ell}$ and $\rect_{s_2}(t)$ are always nonnegative, thus the convolution of these two functions has to be nonnegative.
\end{proofof}

\begin{proofof}{Property \RN{3}}
Let's prove the case when $t>1$, since $H_1(t)$ is symmetric, then the case $t<-1$ will also hold. By definition of $(H_1(t), \widehat{H}_1(f))$, we have
\begin{eqnarray*}
H_1(t) & = & s_0 \cdot \int_{-\infty}^{+\infty} \sinc(s_1 (t-\tau))^{\cdot \ell} \rect_{s_2}(\tau) \mathrm{d} \tau \\
& = & s_0 \cdot \int_{-\frac{s_2}{2}}^{\frac{s_2}{2}} \sinc(s_1 (t-\tau))^{\cdot \ell}  \mathrm{d} \tau \\
& = & s_0 \cdot \int_{-\frac{s_2}{2}}^{\frac{s_2}{2}} \sinc(s_1 (\tau-t))^{\cdot \ell}  \mathrm{d} \tau \\
\end{eqnarray*}
We'd like to choose a middle point $\tau_0$, and then separated the interval into two parts, one is $[-\frac{s_2}{2},\tau_0]$ and the other is $[\tau_0, \frac{s_2}{2}]$. To choose a reasonable $\tau_0$, we need to use the following simple facts,
\begin{eqnarray*}
( \frac{\sin(x)}{x} )^\ell & \leq & x^{-\ell}  \text{~if~} x\geq 1.2\\
( \frac{\sin(x)}{x} )^\ell & \leq & (1- \frac{x^2}{8})^{\ell} \text{~if~} x < 1.2\\
\end{eqnarray*}
Thus, $ | \pi s_1(\tau_0 - t) | = 1.2$, which implies that $\tau_0^+ = t+ \frac{1.2}{\pi s_1}$ or $\tau_0^- = t - \frac{1.2}{\pi s_1}$. By relationship between $s_1$ and $s_2$, we know $\tau_{0}^- > \frac{1}{2}- \frac{1.2}{\pi s_1} > \frac{1}{2} -\frac{1}{s_1} \geq \frac{s_2}{2}$. Thus, we can use the case $x<1.2$ to upper bound the $H_1(t)$,
\begin{align*}
 &~H_1(t) \\
 \leq ~ & s_0 \cdot s_2 \cdot \underset{\tau \in [-s_2/2, s_2/2] }{\max} \sinc(s_1(\tau-t))^{\cdot \ell} \\
 \leq ~ & s_0 \cdot s_2 \underset{\tau \in [-s_2/2, s_2/2] }{\max}  (1- \frac{(s_1\pi(\tau-t))^2 }{8})^{\ell} \\
 = ~ & s_0 \cdot s_2  (1- \frac{(s_1\pi(\frac{s_2}{2}-t))^2 }{8})^{\ell} \\
 \leq ~ & s_0 \cdot s_2  \cdot ( e^{ - \frac{(s_1\pi(\frac{s_2}{2}-t))^2 }{8} } )^{\ell} &\text{~by~} 1-x \leq e^{-x} \\
 \leq ~ & s_0 \cdot s_2  \cdot ( e^{(s_1 (t-s_2/2))^2 } )^{-\ell} &\text{~by~} 1< \pi^2/8 \\
 \leq ~ & s_0 \cdot s_2  \cdot (1+(s_1 (t-s_2/2))^2  )^{-\ell} &\text{~by~} 1+ x\leq e^x \\
 \leq ~& s_0 \cdot  (1+(s_1 (t-s_2/2))^2  )^{-\ell} &\text{~by~} s_1 \leq 1.
\end{align*}
Thus, we complete the proof.
\end{proofof}

\begin{proofof}{Property \RN{4}} Because of the support of $\rect_{s_1} (f)$ is $s_1$, then the support of $( \rect_{s_1} (f) )^{*\ell} = s_1 \ell$. Since $\widehat{H}_1(f)$ is defined to be the  $( \rect_{s_1} (f) )^{*\ell}$ multiplied by $\sinc( f s_2)$, thus $\supp(\widehat{H}_1 (f )) \subseteq [- \frac{s_1 \ell}{2} , \frac{s_1 \ell}{2} ]$.
\end{proofof}

\begin{definition}\label{def:def_of_filter_H}
Given any $ 0< s_3 < 1$, $0<\delta <1$, we define $(H(t),\widehat{H}(f))$ to be the filter function $(H_1(t), \widehat{H_1}(f))$ by doing the following operations
\begin{itemize}
\item Setting $\ell = \Theta( k \log ( k /\delta ) )$,
\item Setting $ s_2  = 1 -\frac{2}{s_1}$,
\item Shrinking by a factor $s_3$ in time domain,
\end{itemize}
\begin{eqnarray}
H(t) &= & H_1( t/s_3) \label{eq:definition_H} \\
\widehat{H}(f) & = & s_3 \widehat{H_1}(s_3 f) \label{eq:definition_hatH}
\end{eqnarray}
We call the ``heavy cluster" around a frequency $f_0$ to be the support of $\delta_{f_0}(f) * \wh{H}(f)$ in the frequency domain and use 
\begin{equation}
\Delta_h=|\supp(\widehat{H}(f) )| = \frac{ s_1 \cdot \ell }{s_3} 
\end{equation} to denote the width of the cluster.
\end{definition}

We show several useful properties about the Filter functions $\big(H(t),\wh{H}(f)\big)$.
\restate{lem:property_of_filter_H}

The Property \RN{1}, \RN{2}, \RN{3} and \RN{4} follow by filter function $H(t), \widehat{H}(f)$ inheriting $H_1(t), \widehat{H}_1(f)$.

\begin{figure}[t]
  \centering
    \includegraphics[width=1.0\textwidth]{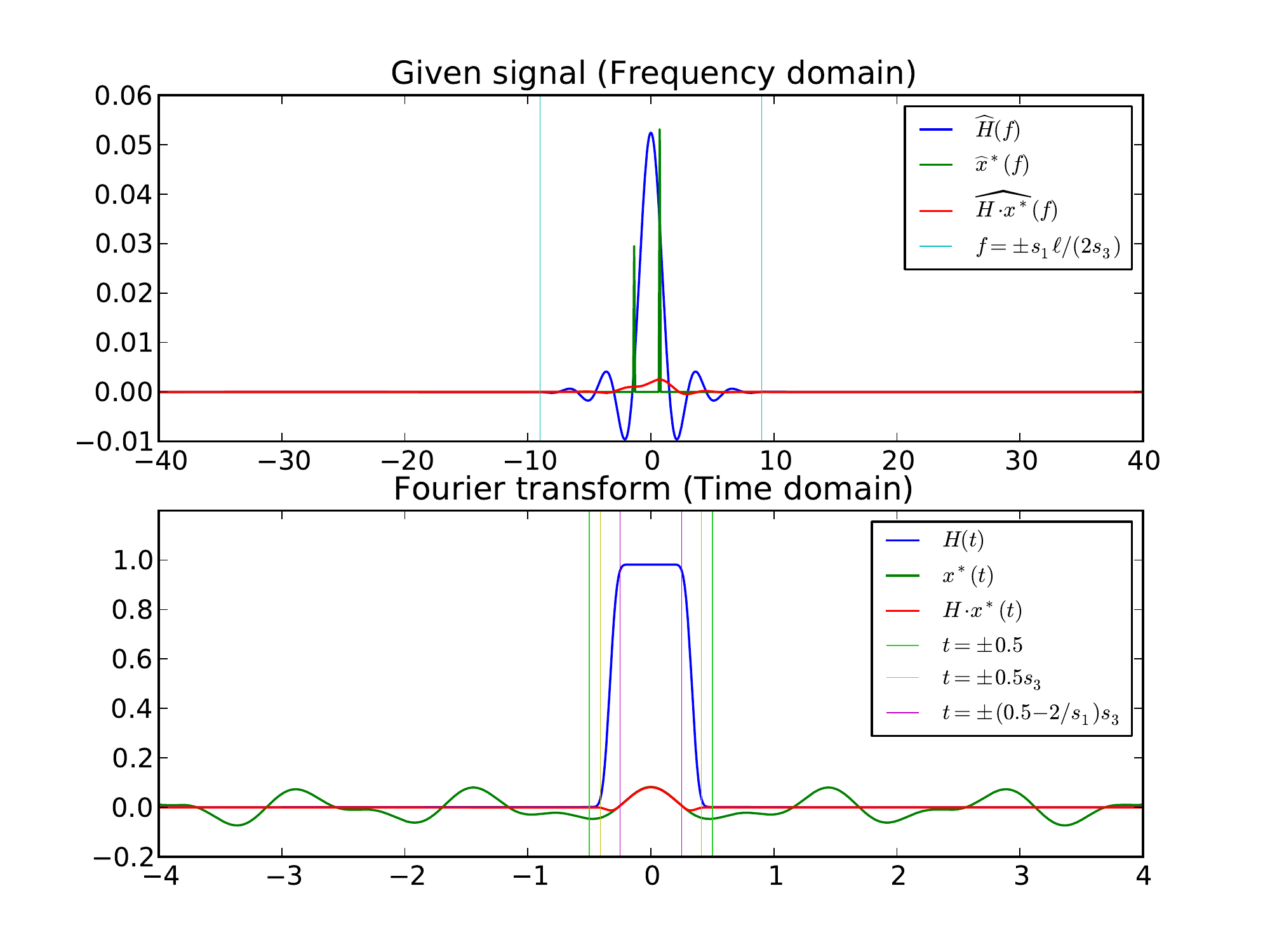}
    \caption{$\widehat{H \cdot x^*}(f)$ and $H \cdot x^*(t)$.}
\end{figure}

\begin{proofof}{Property \RN{5}}
$\forall t \notin [-1/2, 1/2]$, we have, 
\begin{align}\label{eq:why_ell_at_least_k_log_k_over_delta}
  ~& |x^*(t) \cdot H(t)|^2 \notag \\
 \leq ~ & |x^*(t)|^2 \cdot | H (t)|^2 \notag  \\
\leq ~ & |x^*(t)|^2 \cdot (s_1(|t|-1/2)+2)^{-\ell} &\text{~by~Property~\RN{3} of $H_1(t)$} \notag \\
\leq  ~& k^7 \cdot (2kt)^{2.5k} \cdot \int_{-\infty}^{+\infty} | x^*(t) \cdot \rect_1 (t) |^2 \mathrm{d} t \cdot (s_1( \frac{|t|}{s_3}- \frac{1}{2} )+2)^{-\ell} ~&\text{~by~Lemma~\ref{lem:x_dot_H_is_small_outside_T}}\notag \\
\le ~& t^{O(k\log k)}\cdot \int_{-\infty}^{+\infty} | x^*(t) \cdot \rect_1 (t) |^2 \mathrm{d} t   \cdot (s_1( \frac{|t|}{s_3}- \frac{1}{2})+2)^{-\ell/2}.
\end{align}
Thus taking the integral finishes the proof because $\ell \gtrsim k \log (k/\delta)$.
\end{proofof}

\begin{proofof}{Property \RN{6}}
 First, because of for any $t$, $|H_1(t)|\leq 1$, thus we prove the upper bound for $\text{LHS}$,
\begin{equation*}
\int_{-\infty}^{+\infty} |x^*(t) \cdot H(t) \cdot \rect_{1}(t)|^2 \mathrm{d} t \leq \int_{-\infty}^{+\infty} |x^*(t) \cdot 1 \cdot \rect_{1}(t)|^2 \mathrm{d} t.
\end{equation*}
Second, as mentioned early, we need to prove the general case when $s_3 = 1 - 1/\poly(k)$. Define interval $S = [-s_3( \frac{1}{2}- \frac{1}{s_1}), s_3( \frac{1}{2}- \frac{1}{s_1}) ]$, by definition, $S \subset [-1/2,1/2]$. Then define $\overline{S} =[-1/2,1/2]\setminus S$, which is $ [-1/2, -s_3( \frac{1}{2}- \frac{1}{s_1}) ) \cup (s_3( \frac{1}{2}- \frac{1}{s_1}) , 1/2]$. By Property \RN{1}, we have
\begin{equation}\label{eq:eq1_proof_of_property_6}
\int_S |x^*(t) \cdot H(t)|^2 \mathrm{d} t \geq (1-\delta)^2 \int_S |x^*(t)|^2 \mathrm{d} t
\end{equation}
Then we can show
\begin{align}\label{eq:eq2_proof_of_property_6} 
  ~ & \int_{\overline{S}} |x^*(t) |^2 \mathrm{d} t \notag \\ 
 \leq ~ & |\overline{S}|\cdot \underset{t\in [-1/2,1/2]}{\max} |x^*(t)|^2   \notag \\
 \leq ~ & (1-s_3(1-\frac{2}{s_1})) \cdot \wt{O}(k^4) \int_{-\frac{1}{2}}^{\frac{1}{2}} |x^*(t)|^2 \mathrm{d} t &\text{~by~Lemma~\ref{lem:max_is_at_most_polyk_times_l2}}\notag \\
 \lesssim~ & \int_{-\frac{1}{2}}^{\frac{1}{2}} |x^*(t)|^2 \mathrm{d} t & \text{~by~} \min( \frac{1}{1-s_3}, s_1 ) \geq \wt{O}(k^4) 
\end{align}
Combining Equations (\ref{eq:eq1_proof_of_property_6}) and (\ref{eq:eq2_proof_of_property_6}) gives a lower bound for $\text{LHS}$,
\begin{align*}
 & \int_{-\infty}^{+\infty} |x^*(t) \cdot H(t) \cdot \rect_{1}(t)|^2 \mathrm{d} t\\
 \geq ~ & \int_S |x^*(t) H(t)|^2 \mathrm{d} t \\
 \geq ~ & (1-2\delta) \int_S |x^*(t) |^2 \mathrm{d} t &\text{~by~Equation~(\ref{eq:eq1_proof_of_property_6})} \\
 \geq ~ & (1-2\delta) \int_{S\cup \overline{S}} |x^*(t) |^2 \mathrm{d} t - (1-2\delta) \int_{\overline{S} } |x^*(t) |^2 \mathrm{d} t \\
 \geq ~ & (1-2\delta) \int_{S\cup \overline{S}} |x^*(t) |^2 \mathrm{d} t - (1-2\delta) \epsilon \int_{S \cup \overline{S} } |x^*(t)|^2 \mathrm{d} t & \text{~by~Equation~(\ref{eq:eq2_proof_of_property_6})}\\
 \geq ~ & (1-2\delta - \epsilon ) \int_{-\frac{1}{2}}^{\frac{1}{2}} |x^*(t)|^2 \mathrm{d} t \\
 \geq ~ & (1-2\epsilon) \int_{-\infty}^{+\infty} | x^*(t) \cdot \rect_1(t) |^2 \mathrm{d} t  &\text{~by~}\epsilon \gg \delta
\end{align*}

\end{proofof}

\begin{figure}[!th]
  \centering
  
    \includegraphics[width=0.8\textwidth]{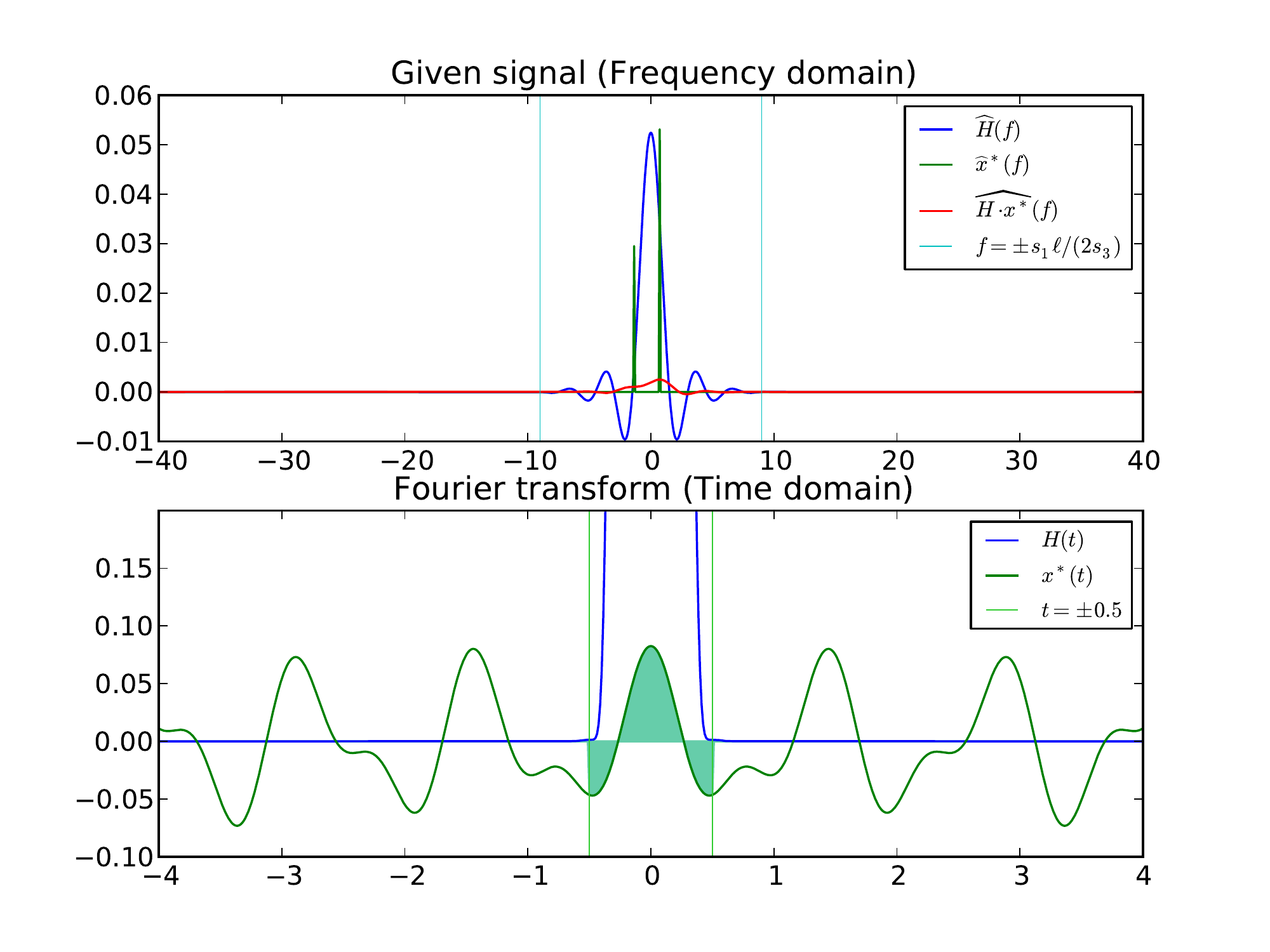}
    \vspace{-0.2cm}
    \includegraphics[width=0.8\textwidth]{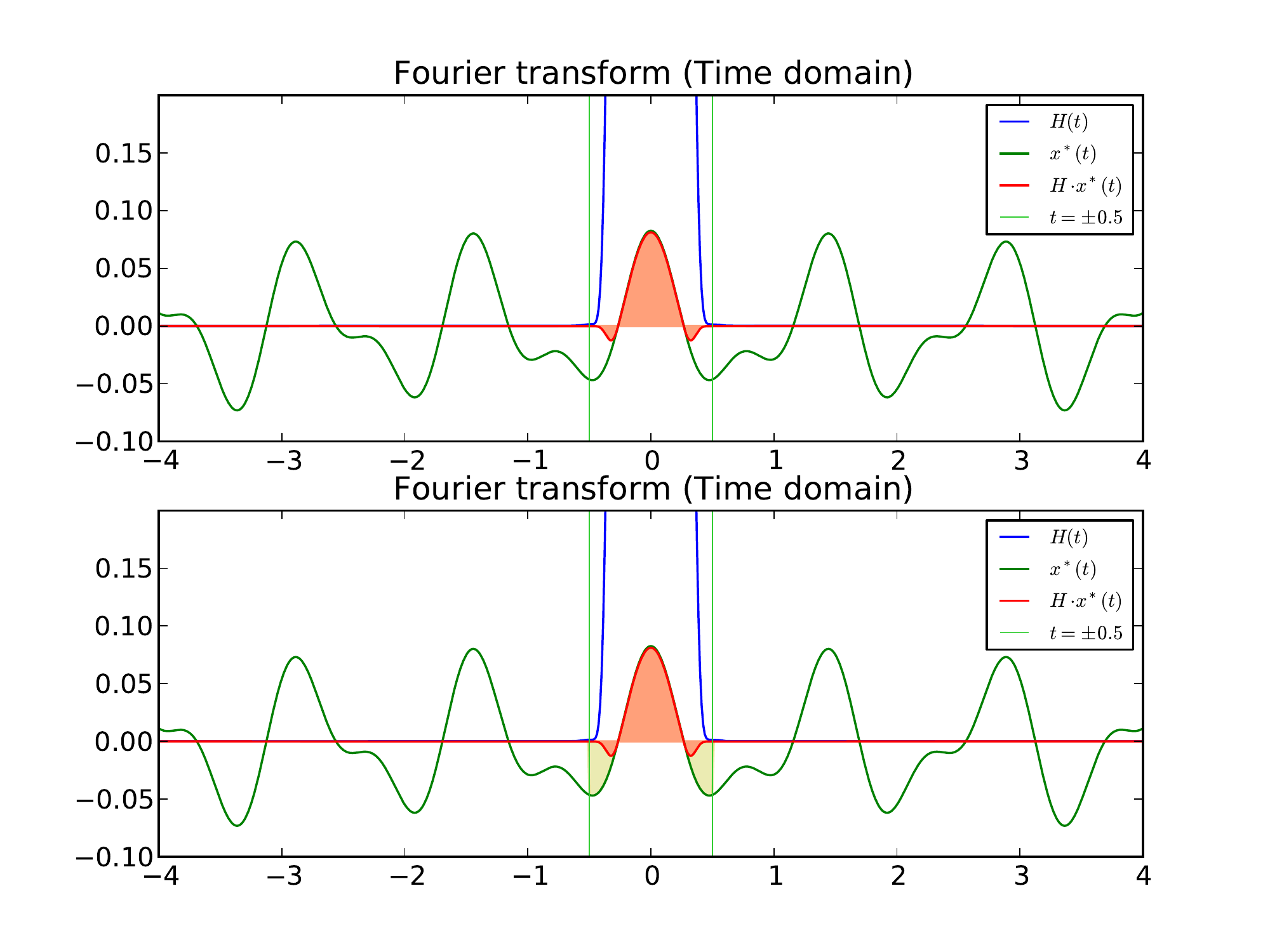}
    \caption{Property \RN{6} of filter function $H(t)$, the light green area represents $\RHS$(without scalar) of Property \RN{6} of filter $H$, the light red area represents $\LHS$ of Property \RN{6} of filter $H$, the light yellow area represents the difference. Property \RN{6} says the light yellow area is only a small constant fraction of the light green area. }
    
\end{figure}


\begin{remark}

To match $(H(t),\widehat{H}(f))$ on $[-1/2,1/2]$ with signal $x(t)$ on $[0,T]$, we will scale the time domain from $[-1/2,1/2]$ to $[-T/2,T/2]$ and shift it to $[0,T]$. For example, the rectangle function in Property \RN{5} and \RN{6} will be replaced by $\rect_T(t-T/2)$. For the parameters $s_0, s_1,s_3,\delta,\ell$ in the definition of $H$, we always treat them as numbers. We assume $T$ has seconds as unit and $\Delta_h$ has Hz as unit . For example, in time domain, the Property \RN{1} becomes that given $T>0$,
\begin{equation*}
H(t) \in [ 1 - \delta, 1] \text{~if~} |t-\frac{T}{2}| \leq (\frac{1}{2} -\frac{1}{s_1}) s_3 \cdot T
\end{equation*}
In frequency domain, the Property \RN{4} becomes
\begin{equation}\label{eq:def_of_delta_h}
\supp( \widehat{H}(f) ) \subseteq [ -\frac{\Delta_h}{2} , \frac{\Delta_h}{2} ], \text{~where~} \Delta_h = \frac{s_1\ell}{s_3 T}.
\end{equation}
\end{remark}

\begin{lemma}\label{lem:approximation_H_in_poly_time}
Let $H(t)$ denote the function defined in Definition \ref{def:def_of_filter_H}. For any $t \in [-\frac{1}{2},\frac{1}{2}]$, there exists an algorithm that takes $O( s_1+ \ell \log(s_1) + \log(1/\epsilon))$ time to output a value $\wt{H}(t)$ such that
\begin{equation*}
(1-\epsilon) H(t) \leq \wt{H}(t) \leq (1+\epsilon) H(t).
\end{equation*}
\end{lemma}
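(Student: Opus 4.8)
The plan is to reduce the evaluation of $H$ to numerically integrating one scalar integrand and then to control the truncation and quadrature errors finely enough to obtain a \emph{multiplicative} approximation. By~\eqref{eq:definition_H} and the definition of $H_1$, for every $t\in[-\tfrac12,\tfrac12]$,
\[
H(t)=H_1(t/s_3)=s_0\int_{t/s_3-s_2/2}^{t/s_3+s_2/2}\sinc(s_1\tau)^{\ell}\,\mathrm d\tau=\frac{s_0}{s_1}\int_A^B\sinc(\sigma)^{\ell}\,\mathrm d\sigma,
\]
where $A=s_1(t/s_3-s_2/2)$, $B=s_1(t/s_3+s_2/2)$, so $B-A=s_1s_2\in[\tfrac{s_1}{2},s_1]$. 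Since $s_0=C_0s_1\sqrt{\ell}$ is an explicitly computable constant, it suffices to approximate $J:=\int_A^B\sinc(\sigma)^{\ell}\,\mathrm d\sigma$ to relative error $\epsilon$. I would use the pointwise facts about $\sinc$ recorded before Claim~\ref{clm:concentrate_around_0} together with Claims~\ref{clm:concentrate_around_0}--\ref{clm:the_rest_is_tiny}, which say that $\sinc(\sigma)^{\ell}$ carries all but a $\poly(1/\epsilon)\cdot 2^{-\Omega(\ell)}$ fraction of its mass on $[-c/\sqrt{\ell},c/\sqrt{\ell}]$ with $c=\Theta(\sqrt{\log(1/\epsilon)})$, and essentially nothing outside $[-\tfrac{1.2}{\pi},\tfrac{1.2}{\pi}]$.

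\textbf{The algorithm.} I would split $[A,B]$ into the $O(s_1)$ unit ``lobes'' $[j,j+1]\cap[A,B]$, $j\in\Z$. For $|j|\ge 2$ the bound $|\sinc(\sigma)|\le 1/(\pi|\sigma|)$ gives $\int_{[j,j+1]\cap[A,B]}\sinc(\sigma)^\ell\,\mathrm d\sigma\le(\pi|j|)^{-\ell}$, so discarding all of them loses at most $O(2^{-\ell})$ (for exponentially small $\epsilon$ one keeps lobes down to the index where this per-lobe bound drops below the additive target chosen below). There remain $O(1)$ ``central'' lobes — or $O(\log(1/\epsilon))$ of them if $\epsilon$ is exponentially small — and on each, $\sinc(\sigma)^\ell=(\sin\pi\sigma)^\ell/(\pi\sigma)^\ell$ extends holomorphically to a fixed complex neighborhood on which it is bounded by $\big(e^{O(1)}/\max(1,|j|)\big)^\ell$; Cauchy estimates then give $|f^{(m)}|\le m!\,O(1)^m\big(e^{O(1)}/\max(1,|j|)\big)^\ell$, so a Gauss (or Newton--Cotes) rule with $N=O(\ell+\log(1/\epsilon)+\log s_1)$ nodes approximates the lobe integral within $2^{-\Omega(N)}\|f\|_\infty$. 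Each node value costs one sine evaluation plus one $\ell$-th power by repeated squaring, i.e.\ $O(\log\ell)$ multiplications. (When the window $[A,B]$ contains the whole bulk $[-\tfrac{1.2}{\pi},\tfrac{1.2}{\pi}]$, the same analysis degenerates to the shortcut $H(t)=1\pm O(2^{-\Omega(\ell)})$, since $s_0\int_{\R}\sinc(s_1\tau)^\ell\,\mathrm d\tau = H_1(0)\pm O(2^{-\Omega(\ell)}) = 1\pm O(2^{-\Omega(\ell)})$ by the normalization of $s_0$.)

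\textbf{From additive to relative error, and the running time.} To make this multiplicative I would first establish a uniform a~priori lower bound $H(t)\ge 2^{-\wt{O}(\ell\log s_1)}$ on $[-\tfrac12,\tfrac12]$: the integrand is nonnegative and the window has length $s_2\ge\tfrac12$, so $[A,B]$ contains at least $\lfloor s_1s_2\rfloor\ge s_1/2-1$ full unit lobes, each of index $|j|=O(s_1)$, and one such lobe contributes at least $\Omega(1/\ell)\cdot(O(s_1))^{-\ell}$ because $|\sinc|$ has a half--sine profile of amplitude $\sim 1/|j|$ on the lobe and stays within a factor $2$ of its peak on a sub-interval of length $\Omega(1/\ell)$; hence $J\ge s_1^{-O(\ell)}/\poly(\ell)$ and $H(t)=\tfrac{s_0}{s_1}J\ge 2^{-\wt{O}(\ell\log s_1)}$. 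It therefore suffices to run all arithmetic at precision $O(\ell\log s_1+\log(1/\epsilon))$ bits and to target additive error $\epsilon\cdot 2^{-\wt{O}(\ell\log s_1)}$ in $J$, which fixes the number of retained lobes and quadrature nodes above. Counting gives $O(s_1)$ for scanning the lobes, $O(\log\ell)$ power operations per quadrature node, and $\wt{O}(\ell\log s_1+\log(1/\epsilon))$ for the precision, which yields the claimed $O(s_1+\ell\log(s_1)+\log(1/\epsilon))$ bound on a RAM whose word size accommodates that precision.

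\textbf{Main obstacle.} The delicate point is precisely the \emph{relative} guarantee. Near $t=\pm s_3s_2/2$ the window endpoint slices through the central peak of $\sinc(s_1\cdot)^\ell$, and for slightly larger $|t|$ the window misses the peak entirely, so $H(t)$ can be as small as $2^{-\wt{\Theta}(\ell)}$ (consistent with Property~\RN{3} of Lemma~\ref{lem:property_of_filter_H}). Obtaining a clean relative error therefore hinges on (a) the uniform lower bound on $H(t)$, which pins down the required precision, and (b) a tail bound showing the discarded far lobes and the truncated tails of the central peak are small \emph{relative} to $H(t)$, not just in absolute terms. The numerical-analysis ingredients — the choice of quadrature and the Cauchy estimates on the derivatives of $\sinc^\ell$ — are routine once (a) and (b) are in hand.
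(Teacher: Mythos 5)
Your proposal is correct in spirit but takes a genuinely different route from the paper. The paper's proof does \emph{not} split $[A,B]$ into lobes and apply quadrature; instead it Taylor-expands $\sinc(\tau)=\sin(\tau)/\tau$ as $A(\tau)+B(\tau)$ where $A$ is the degree-$d$ truncation (with $d\gtrsim s_1+\ell\log s_1+\log(1/\epsilon)$), and shows via a binomial expansion that $\int_{I^-}^{I^+}A(\tau)^\ell\,d\tau$ differs from $\int_{I^-}^{I^+}(A+B)^\ell\,d\tau=H(t)\cdot\pi s_1/s_0$ by at most $\ell\,2^\ell\max_\tau|B(\tau)|$; this is then turned into a multiplicative bound using exactly the same two ingredients you identify — an exponential tail bound on $|B|$ (Claim~\ref{cla:upper_bound_for_B_tail}) and the a priori lower bound $\min_{t\in[-1/2,1/2]}H(t)\ge s_1^{-\Omega(\ell)}$ (Claim~\ref{cla:min_cost_of_Ht}). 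So the ``main obstacle'' you single out — converting additive to relative error through a uniform lower bound on $H$ — is indeed the crux, and the paper handles it the same way you do. What differs is the numerical ingredient: the paper integrates a polynomial power symbolically, while you use Gauss/Newton--Cotes rules per unit lobe with Cauchy estimates. Your route is more explicitly algorithmic (the paper does not actually explain how $\int A^\ell$ is computed in $O(d)$ time, and naively that costs $\Omega(d\ell)$), whereas the paper's route avoids the lobe bookkeeping. Both analyses incur extra logarithmic factors over the stated $O(s_1+\ell\log s_1+\log(1/\epsilon))$ bound that are being swept under the rug; your extra $\log\ell$ from repeated squaring and $\log(1/\epsilon)$ retained lobes are of the same order of slack as what the paper leaves implicit, so this is not a disqualifying gap.

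One small imprecision in your lower-bound derivation: you invoke a lobe of index $|j|=O(s_1)$ contributing $\Omega(1/\ell)\cdot(O(s_1))^{-\ell}$, but the integral is actually dominated by the lobe of \emph{smallest} index inside $[A,B]$, which gives the much better $H(t)\ge 2^{-O(\ell)}$ in the worst case $t=\pm 1/2$. Your bound is a valid (weaker) lower bound and suffices, but quoting the smallest-index lobe would both match the paper's Claim~\ref{cla:min_cost_of_Ht} more tightly and reduce the working precision you need by a $\log s_1$ factor.
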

\begin{proof}
We will show that using a low degree polynomial with sufficiently large degree is able to approximate the sinc function. 
By definition of filter function, 
\begin{align*}
H(t) & =  s_0 \cdot \int_{-\infty}^{+\infty} \sinc( s_1 \tau  )^{\cdot \ell} \rect_{s_2}(t-\tau) \mathrm{d} \tau \\ 
& = s_0 \cdot \int_{t-\frac{s_2}{2}}^{t+\frac{s_2}{2}} ( \frac{\sin (\pi s_1 \tau )}{\pi s_1 \tau } )^{\ell} \mathrm{d} \tau \\
& = \frac{s_0}{\pi s_1} \int_{ (t-\frac{s_2}{2}) \pi s_1} ^{ (t+\frac{s_2}{2}) \pi s_1} (\frac{\sin(\tau)}{\tau})^\ell \mathrm{d} \tau \\
& = \frac{s_0}{\pi s_1} \int_{ (t-\frac{s_2}{2}) \pi s_1} ^{ (t+\frac{s_2}{2}) \pi s_1} \left( \sum_{i=0}^{\infty} (-1)^i \frac{\tau^{2i}}{ (2i+1)!} \right)^\ell \mathrm{d} \tau & \text{~by~Taylor~expansion} \\
& = \frac{s_0}{\pi s_1} \int_{ (t-\frac{s_2}{2}) \pi s_1} ^{ (t+\frac{s_2}{2}) \pi s_1} (A+B)^\ell \mathrm{d} \tau
\end{align*}
where the last step follows by setting $A=\sum_{i=0}^{d} (-1)^i \frac{\tau^{2i}}{(2i+1)!}$, and $B=\sum_{i=d+1}^{\infty} (-1)^i \frac{\tau^{2i}}{(2i+1)!}$.

Denote $I^+ = (t+\frac{s_2}{2})\pi s_1$ and $I^- = (t-\frac{s_2}{2})\pi s_1$. Because of $t\in [-1/2,1/2]$, then $\max(|I^+|, |I^-|) = O(s_1)$. The goal is to show that
\begin{equation*}
(1-\eps) \int_{I^-}^{I^+} (A+B)^\ell \mathrm{d} \tau  \leq \int_{I^-}^{I^+} A^\ell \mathrm{d} \tau \leq (1+\eps) \int_{I^-}^{I^+} (A+B)^\ell \mathrm{d} \tau 
\end{equation*} 
Let's prove an upper first,
\begin{align*}
& \int_{I^-}^{I^+} (A+B -B )^\ell \mathrm{d} \tau  \\
= & \int_{I^-}^{I^+} (A + B)^\ell \mathrm{d} \tau  + \sum_{j=1}^\ell \int_{I^-}^{I^+} {\ell \choose j}(A+B)^{\ell-j} (-B)^j \mathrm{d} \tau \\
\leq & \int_{I^-}^{I^+} (A + B)^\ell \mathrm{d} \tau  + \sum_{j=1}^\ell \int_{I^-}^{I^+} {\ell \choose j}|A+B|^{\ell-j} |B|^j \mathrm{d} \tau \\
\leq & \int_{I^-}^{I^+} (A+B)^\ell \mathrm{d} \tau  + \sum_{j=1}^\ell \int_{I^-}^{I^+} {\ell \choose j}|A+B|^{\ell-j} \mathrm{d} \tau  \cdot \max_{\tau \in [I^-, I^+]} |B|^j\\
\leq & \int_{I^-}^{I^+} (A+B)^\ell \mathrm{d} \tau  + \ell 2^{\ell} \cdot \max_{\tau \in [I^-, I^+]} |B| &\text{~by~}|H(t)|\leq 1 \text{~and~} |B|^j \leq |B|\\
\leq &   \int_{I^-}^{I^+} (A+B)^\ell \mathrm{d} \tau + \epsilon \cdot (s_1)^{-\Theta(\ell)} &\text{~by~Claim~\ref{cla:upper_bound_for_B_tail}}\\
\leq &  (1+\epsilon) \int_{I^-}^{I^+} (A+B)^\ell \mathrm{d} \tau &\text{~by~Claim~\ref{cla:min_cost_of_Ht}}
\end{align*}
where all the steps by setting $d\gtrsim s_1 +\ell \log(s_1) +\log(1/\epsilon)$. Similarly, we can prove a lower bound.
\end{proof}

\begin{claim}\label{cla:upper_bound_for_B_tail}
Let $B(\tau)= \sum_{i=d+1}^{+\infty} (-1) \frac{\tau^{2i}}{(2i+1)!}$, if $d\gtrsim \tau + \ell \log(s_1) + \log(1/\epsilon)$ then $|B(\tau)|\leq \epsilon (1/s_1)^{O(\ell)}$.
\end{claim}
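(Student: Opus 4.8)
The plan is to recognize $B(\tau)$ as the tail of the alternating Taylor series for $\sin(\tau)/\tau$ and to control it by its first omitted term once the terms are monotone, followed by a Stirling-type estimate. Write $B(\tau)=\sum_{i\ge d+1}(-1)^i\tau^{2i}/(2i+1)!$. The ratio of the magnitude of the $(i+1)$-st term to that of the $i$-th is $\tau^2/\big((2i+2)(2i+3)\big)$, which drops below $1$ as soon as $i\gtrsim\tau$. Hence, assuming $d\gtrsim\tau$, the series from index $d+1$ onward is a strictly decreasing alternating series, so $|B(\tau)|\le \tau^{2(d+1)}/(2d+3)!$.

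Next I would bound $\tau^{2(d+1)}/(2d+3)!$ directly. Setting $m=d+1$ and using $(2m)!\ge (2m/e)^{2m}$, we get $\tau^{2m}/(2m)!\le (e\tau/(2m))^{2m}$, and this is at most $2^{-2m}$ once $m\ge e\tau$, which is again ensured by $d\gtrsim\tau$ (with a slightly larger absolute constant). Since $(2d+3)!\ge (2m)!$, this gives $|B(\tau)|\le 2^{-2(d+1)}=4^{-(d+1)}$.

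Finally I would fix the constants. The conclusion demands $|B(\tau)|\le \epsilon\,(1/s_1)^{c\ell}$ for whatever absolute constant $c$ is hidden in the ``$O(\ell)$'' exponent, so it suffices to have $4^{-(d+1)}\le \epsilon\, s_1^{-c\ell}$; taking logarithms, this is $(d+1)\log 4\ge c\ell\log s_1+\log(1/\epsilon)$, which holds whenever $d\ge C(\ell\log s_1+\log(1/\epsilon))$ for a suitable absolute $C$. Combining the regime $d\gtrsim\tau$ (needed to make the alternating series decreasing and to trigger the Stirling decay) with the regime $d\gtrsim\ell\log s_1+\log(1/\epsilon)$ (needed in this last step) yields exactly the hypothesis $d\gtrsim\tau+\ell\log s_1+\log(1/\epsilon)$.

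There is no substantive obstacle; the only point requiring care is the bookkeeping of constants, ensuring that the ``$O(\ell)$'' in the conclusion is a genuine constant multiple of $\ell$ independent of $\tau$ and $\epsilon$, and that the single implicit constant in $d\gtrsim\tau$ is large enough to serve both the monotonicity requirement $i\gtrsim\tau$ and the Stirling requirement $m\ge e\tau$. One could alternatively skip the alternating-series step and bound $|B(\tau)|$ termwise by $\sum_{i\ge d+1}(e\tau/(2i))^{2i}$, which is dominated by a geometric series of ratio $\le 1/4$ once $2(d+1)\ge 2e\tau$, giving the same $4^{-(d+1)}$ bound; I would use whichever is cleaner in the final write-up.
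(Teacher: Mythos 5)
Your proof is correct and takes essentially the same route as the paper: bound the leading term $\tau^{2(d+1)}/(2d+3)!$ by $4^{-(d+1)}$ via Stirling once $d\gtrsim\tau$, observe the tail is dominated by (a constant times) that leading term, and then note $4^{-(d+1)}\le\epsilon\,s_1^{-O(\ell)}$ once $d\gtrsim\ell\log s_1+\log(1/\epsilon)$. You are slightly more explicit than the paper in justifying why the sum is controlled by its first term (Leibniz alternating-series bound, or the geometric fallback you mention); the paper states that step without elaboration.
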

\begin{proof}
We first show, for any $i \geq d+1$,
\begin{align*}
  ~& \frac{\tau^{2i}}{(2i+1)!} \\
 \leq~ & \frac{\tau^{2i}}{ e( (2i+1)/e)^{2i+1}} &\text{~by~} e(n/e)^n \leq n!\\
 \leq~ & 2^{-2i} & \text{~by~} i \gtrsim \tau \\
 \leq ~ & \eps (1/s_1)^{O(\ell)} & \text{~by~} i \gtrsim \ell \log(s_1) +\log(1/\epsilon)
\end{align*}
Second, we can show that
\begin{equation*}
\sum_{i=d+1}^{+\infty} (-1)\frac{\tau^{2i}}{ (2i+1)! } \lesssim \frac{\tau^{2(d+1)}}{ (2(d+1)+1)! } \leq \eps (1/s_1)^{O(\ell)}
\end{equation*}
Thus, we complete the proof.
\end{proof}



\begin{claim}\label{cla:min_cost_of_Ht}
$\min_{t\in [-1/2,1/2]} |H(t)|\geq  (s_1)^{-\Omega(\ell)}$.
\end{claim}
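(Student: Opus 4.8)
\textbf{Proof proposal for Claim~\ref{cla:min_cost_of_Ht}.}
The plan is to bound $H(t)$ from below by isolating a single "hump" of the $\sinc$ function inside the convolution integral defining it. Recall that $H(t)=H_1(t/s_3)=s_0\int_{t/s_3-s_2/2}^{t/s_3+s_2/2}\sinc(s_1\tau)^{\cdot\ell}\,\mathrm d\tau$ with $s_2=1-2/s_1$ and $s_0=C_0 s_1\sqrt{\ell}$; since $\ell$ is even the integrand $\sinc(s_1\tau)^{\cdot\ell}$ is everywhere nonnegative, so it suffices to exhibit one short subinterval of $J:=[t/s_3-s_2/2,\ t/s_3+s_2/2]$ on which $\sinc(s_1\tau)^{\cdot\ell}\ge s_1^{-O(\ell)}$.

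First I would record two elementary facts about $J$. Its length is exactly $s_2\ge 1/2$ since $s_1$ is large; and because $|t|\le 1/2$ and $s_3$ is close to $1$ in the regime we use (recall the proof of Property~\RN{6} needs $\min(\tfrac1{1-s_3},s_1)\ge\wt O(k^4)$), we have $J\subseteq[-1.1,1.1]$. As $|J|=s_2\ge 3/s_1$ for $s_1\ge 5$, the interval $J$ contains at least two consecutive zeros $j/s_1$ and $(j+1)/s_1$ of $\sin(\pi s_1\tau)$, and hence, being an interval, contains the full hump $[j/s_1,(j+1)/s_1]$.

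Next I would restrict to the middle half $M:=[(j+\tfrac14)/s_1,\ (j+\tfrac34)/s_1]$ of that hump, which has length $1/(2s_1)$. On $M$ one has $|\sin(\pi s_1\tau)|\ge\sin(\pi/4)=1/\sqrt2$, while $|\tau|\le 1.1$, so $|\sinc(s_1\tau)|=|\sin(\pi s_1\tau)|/(\pi s_1|\tau|)\ge c_0/s_1$ for the absolute constant $c_0:=1/(1.1\pi\sqrt2)$; therefore $\sinc(s_1\tau)^{\cdot\ell}\ge(c_0/s_1)^{\ell}$ throughout $M$. Using $s_0=C_0s_1\sqrt\ell$ this yields
\[
H(t)\ \ge\ s_0\cdot|M|\cdot\Big(\frac{c_0}{s_1}\Big)^{\ell}\ =\ \frac{C_0\sqrt\ell}{2}\,c_0^{\ell}\,s_1^{-\ell}.
\]

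Finally, since $c_0$ is an absolute constant and $s_1\ge 2$ we have $c_0^{\ell}\ge 2^{-\ell\log_2(1/c_0)}\ge s_1^{-\ell\log_2(1/c_0)}$, so $H(t)\ge\frac{C_0\sqrt\ell}{2}\,s_1^{-(1+\log_2(1/c_0))\ell}=s_1^{-\Omega(\ell)}$, and the polynomial prefactor only helps. The one point requiring care is that the integrand genuinely vanishes at the grid points $m/s_1$, so no pointwise lower bound on all of $J$ is available; the whole argument hinges on locating a complete hump inside $J$ and bounding $\sinc$ away from zero on its central part, together with the implicit assumption that $s_3=1-o(1)$ (or at least $s_3=\Omega(1)$), without which $t/s_3$ could fall into the polynomial-decay tail governed by Property~\RN{3} and the claim would be false.
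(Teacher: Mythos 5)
Your proof is correct and takes essentially the same route as the paper: both lower-bound the convolution integral defining $H(t)$ by restricting to a sub-region of the $\sinc^{\cdot\ell}$ domain where the integrand is pointwise at least $(c/s_1)^{\ell}$ for an absolute constant $c$, and then multiply by the length of that region, using $s_0 \eqsim s_1\sqrt{\ell}$ to absorb the prefactor. You isolate a single full hump of $\sin(\pi s_1\tau)$ and its middle half, whereas the paper integrates over the right half of the whole window $[(t-s_2/2)\pi s_1,(t+s_2/2)\pi s_1]$ and asserts the analogous bound away from the zeros; your version is a bit tidier, avoids the paper's initial reduction to $|t|$ near $1/2$, and usefully makes explicit the standing assumption $s_3=1-o(1)$ that keeps $J\subseteq[-1.1,1.1]$.
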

\begin{proof}

By the property of $H(t)$, 
\begin{equation*}
\min_{ \frac{1}{2}s_3 < |t| \leq \frac{1}{2} } H(t) = \min_{|t|\leq \frac{1}{2} } H(t)
\end{equation*}
Thus, it suffices to prove a lower bound on $H(t)$ for any $t$ such that $ \frac{1}{2}s_3 < |t| \leq \frac{1}{2}$. 
Because of symmetric property, we only need to prove a lower bound for one side.
Let's consider $t\in [ \frac{1}{2} s_3, 1/2]$,
\begin{align*}
H(t) & ~ \geq \frac{s_0}{\pi s_1} \int_{ (t-\frac{s_2}{2})\pi s_1 }^{ ( t + \frac{s_2}{2} ) \pi s_1 } ( \frac{\sin(\tau)}{\tau} )^{\ell} \mathrm{d} \tau \\
& ~ \geq \frac{s_0}{\pi s_1} \int_{ (t+\frac{s_2}{4})\pi s_1 }^{ ( t + \frac{s_2}{2} ) \pi s_1 } ( \frac{\sin(\tau)}{\tau} )^{\ell} \mathrm{d} \tau \\
& ~ \geq \frac{s_0}{\pi s_1} \cdot \Theta( ( t+\frac{s_2}{2}) s_1 ) \cdot \frac{1}{2} \cdot \pi \cdot \Theta( (t+ \frac{s_2}{2}) \pi s_1 )^{-\ell} \\
& ~\geq (s_1)^{-\Omega(\ell)}
\end{align*}
\end{proof}

\subsection{Analysis of filter function \texorpdfstring{$(G(t), \widehat{G}(f) )$}\quad}\label{sec:properties_of_G}

We construct $(G(t), \widehat{G}(f))$ in a similar way of $(H_1(t), \widehat{H_1}(f))$ by switching the time domain and the frequency domain of $(H_1(t), \widehat{H_1}(f))$ and modify the parameters for the permutation hashing $P_{\sigma,a,b}$. 

\begin{definition}
Given $B >1$, $\delta >0$, $\alpha>0$, we construct $G(t),\widehat{G}(f)$ by doing the following operations,
\begin{itemize}
\item $s_2 = \frac{\pi}{2B}$,
\item $s_1 = \frac{B}{\alpha\pi}$,
\item $\ell = l = \Theta( \log(k/\delta) )$.
\end{itemize}
Then $G(t), \widehat{G}(f)$ becomes
\begin{eqnarray*}
G(t) & = & b_0  \cdot (\rect_{s_1}(t))^{*l} \cdot \sinc(t s_2)\\
& = & b_0 \cdot (\rect_{ \frac{B}{(\alpha \pi)}} (t) )^{* l} \cdot  \sinc(t \frac{\pi}{2B}), \\
\widehat{G}(f) & = & b_0 \cdot ( \sinc(s_1 f) )^{\cdot l} * \rect_{s_2}(f) \\
& = & b_0 \cdot ( \sinc(\frac{B}{\alpha \pi} f) )^{\cdot l} * \rect_{\frac{\pi}{2B}}(f).
\end{eqnarray*}
where the scalar $b_0 = \Theta(s_1\sqrt{l}) = \Theta(B \sqrt{l}/\alpha)$ satisfying $\wh{G}(0)=1$.
\end{definition}

\restate{lem:property_of_filter_G}

\begin{proof}
The first five Properties follows from Lemma \ref{lem:property_of_filter_H} directly.

\end{proof}

\subsection{Parameters setting for filters}\label{sec:parameters_setting_for_filters}
\paragraph{One-cluster Recovery.}

\begin{figure}[t]
  \centering
    \includegraphics[width=0.8\textwidth]{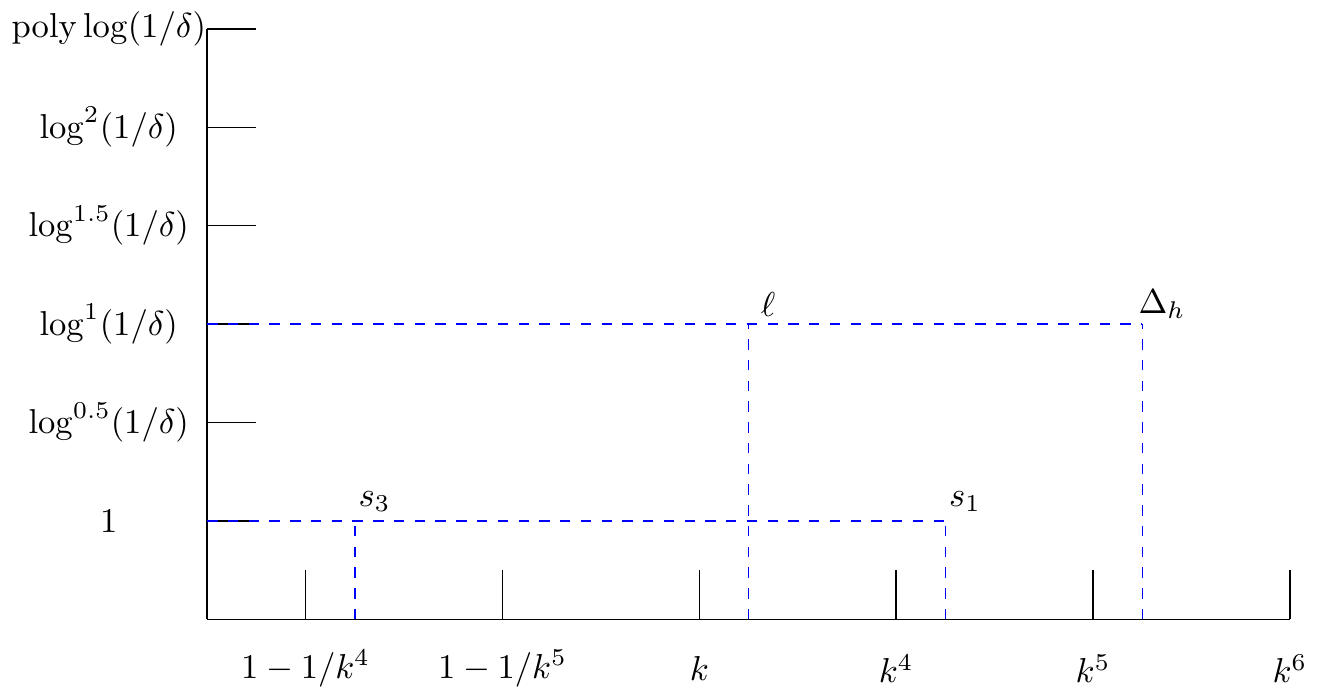}
    \caption{Parameters for $s_1, s_3$ and $\ell$.}\label{fig:parameters_setting1}
\end{figure}

In one-cluster, we donot need filter function $(G(t), \wh{G}(f))$.

In section \ref{sec:properties_of_H}, by Equation (\ref{eq:eq2_proof_of_property_6}) in the proof of Property \RN{6} of filter function  $(H(t), \widehat{H}(f))$, we need $\min (\frac{1}{1-s_3}, s_1) \geq \wt{O}(k^4)$.

In section \ref{sec:properties_of_H}, by Equation (\ref{eq:why_ell_at_least_k_log_k_over_delta}) in the proof of Property \RN{5} of filter function $(H(t),\wh{H}(f))$, we set $\ell \gtrsim k \log(k/\delta)$.

$\Delta_h$ is determined by the parameters of filter $(H(t), \widehat{H}(f))$ in Equation (\ref{eq:def_of_delta_h}): $\Delta_h \eqsim \frac{s_1 \ell}{s_3 T}$ in section \ref{sec:properties_of_H}.
Combining the setting of $s_1$, $s_3$ $\ell$, we should set $\Delta_h \geq \wt{O}(k^5 \log(1/\delta))/T$.

\paragraph{$k$-cluster Recovery.}
Note that in the $k$-cluster recovery, we need to use filter function $( G(t),\wh{G}(f) )$. We choose $l = \log(k/\delta)$, $\alpha \eqsim 1$, $B \eqsim k$ , and $D =l/\alpha$.

By proof of Property \RN{2} of $z$ in Lemma \ref{lem:full_proof_of_3_properties_true_for_z} from section \ref{sec:k_mountain_frequency_recovery}, we need $T(1-s_3) >\sigma B l$. By the same reason in one-cluster recovery, $1-s_3 \leq \frac{1}{\wt{O}(k^4)}$. Combining $T(1-s_3)> \sigma Bl$ and  $1-s_3 \leq \frac{1}{\wt{O}(k^4)}$, we obtain
\begin{equation}\label{eq:T_over_polyk_at_least_sigma_B_l}
\frac{T}{\wt{O}(k^4)} > \sigma B l 
\end{equation}
Because in our algorithm, we will sample $\sigma$ from $[\frac{1}{B \Delta_h}, \frac{2}{B\Delta_h}]$. Thus, plugging $\sigma = \Theta(\frac{1}{B \Delta_h})$ in Equation (\ref{eq:T_over_polyk_at_least_sigma_B_l}) we have
\begin{equation*}
\frac{T}{\wt{O}(k^4)} > \frac{l}{\Delta_h}
\end{equation*}
which implies another lower bound for $\Delta_h$,
\begin{equation*}
\Delta_h \geq \wt{O}(k^4) l /T
\end{equation*}
Combining the above bound with previous lower bound in one-cluster recovery, we get
\begin{equation*}
\Delta_h \geq \wt{O}(k^4 \log(1/\delta))/T + \wt{O}(k^5 \log(1/\delta ))/T =  \wt{O}(k^5 \log(1/\delta ))/T
\end{equation*}
For $s_1$ and $\ell$, we still choose the same setting as before, $s_1 \eqsim \wt{O}(k^4)$ and $\ell \eqsim O(k\log(k/\delta))$.

\subsection{Analysis of {\normalfont \texorpdfstring{\textsc{HashToBins}} \quad} }\label{sec:proof_hashtobins}
In this section, we explain the correctness of Procedure $\textsc{HashToBins}$ in Algorithm \ref{alg:locateksignal_locatekinner_hashtobins}. Before giving the proof of that algorithm, we show how to connect $\mathrm{CFT}$, $\mathrm{DTFT}$ and $\mathrm{DFT}$.

\begin{lemma}\label{lem:property_with_DTFT_and_DFT}
For any signal $W:\mathbb{R} \rightarrow \mathbb{C}$, let $A:\mathbb{Z} \rightarrow \mathbb{C}$ and $B: [n] \rightarrow \mathbb{C}$ be defined as follows:
\begin{eqnarray*}
A[i] = W(i), \forall i \in \mathbb{Z} \text{~and~} B[i] = \sum_{j\in \mathbb{Z} } A[i+jn], \forall i \in [n].
\end{eqnarray*} 
Then we consider the Fourier transform on $W, A, $ and $B$: 
\begin{align*}
\mathrm{CFT}\quad & \quad\widehat{W} : \mathbb{R} \rightarrow \mathbb{C}, \\
\mathrm{DTFT} \quad&\quad \widehat{A} : [0,1] \rightarrow \mathbb{C},\\
\mathrm{DFT}\quad & \quad\widehat{B} : [n] \rightarrow \mathbb{C}.
\end{align*}
We have:
\[ \forall f \in [0,1), \wh{A}(f) = \sum_{j \in \mathbb{Z}} \wh{W}(f+j); \quad \forall i \in [n], \wh{B}[i]=\sum_{j \in \mathbb{Z}}\wh{W}(i/n+j).\]
\end{lemma}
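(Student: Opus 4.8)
The plan is to prove the two identities by unwinding the definitions of the three Fourier transforms and applying the Poisson summation formula (equivalently, the comb identity $\widehat{\Comb}_1(f) = \Comb_1(f)$ already recorded in the preliminaries). First I would establish the $\mathrm{DTFT}$ statement. Recall $\widehat{A}(f) = \sum_{i\in\Z} A[i] e^{-2\pi\i f i} = \sum_{i\in\Z} W(i) e^{-2\pi\i f i}$. The key observation is that sampling $W$ at the integers corresponds, on the Fourier side, to convolving $\widehat{W}$ with a unit-spacing Dirac comb: $W \cdot \Comb_1$ has Fourier transform $\widehat{W} * \widehat{\Comb}_1 = \widehat{W} * \Comb_1 = \sum_{j\in\Z} \widehat{W}(\cdot - j)$, shifting by integers since $\widehat{\Comb}_1 = \Comb_1$. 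Evaluating at $f \in [0,1)$ gives $\widehat{A}(f) = \sum_{j\in\Z}\widehat{W}(f+j)$ (the sign of $j$ is irrelevant since we sum over all of $\Z$). I would present this cleanly by writing $\sum_{i\in\Z} W(i)e^{-2\pi\i fi}$, substituting $W(i) = \int_{\R}\widehat{W}(\xi)e^{2\pi\i\xi i}\,\mathrm{d}\xi$, interchanging sum and integral, splitting $\R = \bigcup_{j\in\Z}[j,j+1)$, substituting $\xi = \eta + j$, and recognizing $\sum_{i\in\Z} e^{2\pi\i(\eta+j-f)i} = \sum_{i\in\Z}e^{2\pi\i(\eta-f)i} = \Comb_1(\eta-f)$, which forces $\eta = f$.

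Next I would derive the $\mathrm{DFT}$ statement, for which there are two natural routes. The direct route: $\widehat{B}[m] = \sum_{i=0}^{n-1} B[i] e^{-2\pi\i m i/n} = \sum_{i=0}^{n-1}\sum_{j\in\Z} A[i+jn] e^{-2\pi\i m i/n}$; reindexing $\ell = i + jn$ over all of $\Z$ and noting $e^{-2\pi\i m i/n} = e^{-2\pi\i m\ell/n}$ since $e^{-2\pi\i m j} = 1$, this collapses to $\sum_{\ell\in\Z} A[\ell]e^{-2\pi\i (m/n)\ell} = \widehat{A}(m/n)$. Then applying the already-proved $\mathrm{DTFT}$ identity at $f = m/n$ gives $\widehat{B}[m] = \sum_{j\in\Z}\widehat{W}(m/n + j)$, which is exactly the claim (after relabeling $m$ as $i$). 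This is the cleaner of the two routes and avoids re-deriving Poisson summation a second time.

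I do not anticipate a genuine obstacle here — the lemma is a standard ``sampling in time $\leftrightarrow$ periodization in frequency, periodization in time $\leftrightarrow$ sampling in frequency'' fact — but the one point requiring a little care is the interchange of summation and integration when passing from $W(i)$ to its inverse Fourier integral: this needs either an absolute-convergence/Fubini justification or, more in keeping with the paper's style, an appeal to the fact that in all applications $W = (x\cdot H)\cdot(\text{filter})$ is a nice (rapidly decaying, compactly frequency-supported) function so that all the relevant series converge absolutely and the Dirac-comb manipulations are legitimate. I would state this convergence caveat briefly and otherwise keep the argument at the level of formal manipulation with $\Comb$, consistent with the other Fourier facts in Section~\ref{sub:fourier_facts}. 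The second identity then follows from the first with essentially no extra work via the reindexing argument above.
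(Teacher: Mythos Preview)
Your proposal is correct and follows essentially the same route as the paper: for the DTFT identity the paper writes $\sum_{j\in\Z}W(j)e^{2\pi\i jf}$ as the CFT of $W\cdot\Comb_1$ and uses $\widehat{W\cdot\Comb_1}=\widehat{W}*\Comb_1$, and for the DFT identity it uses exactly your reindexing $\ell=i+jn$ to show $\widehat{B}[i]=\widehat{A}(i/n)$ and then invokes the first part. The only discrepancy is a sign convention (the paper's DTFT/DFT use $e^{+2\pi\i}$ rather than your $e^{-2\pi\i}$), which is immaterial to the argument.
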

\begin{proof}
Recall that $\Comb_s(t)=\sum_{j \in \mathbb{Z}} \delta_{js}(t)$. First, we show $\widehat{A}(f) = \sum_{j \in \Z } e^{2\pi \i j f} A[j]$ equals to $\underset{j\in \Z}{\sum} \widehat{W}(f+j)$:
\begin{align}\label{eq:hat_A_is_sum_of_hat_W}
\widehat{A}(f) \notag 
 = \quad & \sum_{j \in \Z } e^{2\pi \i j f} W[j] &\text{~by~$A[j] = W(j)$} \notag \\
 = \quad & \int_{-\infty}^{+\infty} e^{2\pi\i fj } W(j) \cdot \Comb_1(j) \mathrm{d} j \notag \\
 = \quad & \wh{W \cdot \Comb_1}(f) \notag\\
 = \quad & ( \widehat{W} *  \widehat{\Comb}_1 )(f) \notag \\
 =\quad &\sum_{j\in \Z} \widehat{W}(f+j).
\end{align}

Next, we prove that $\forall i\in [n],\widehat{B}[i] = \widehat{A}(i/n)$,
\begin{align}\label{eq:hat_B_is_hat_A}
 \widehat{B}[i] \notag = \quad & \sum_{j=1}^n B[j] e^{ \frac{2\pi\i}{n} i j} & \text{~by~$\mathrm{DFT}$} \notag \\
 = \quad & \sum_{j=1}^n (\sum_{k \in \Z} A[j+kn] )e^{\frac{2\pi\i}{n} ij} & \text{~by~ $B[j] = \sum_{k\in \Z} A[j+kn]$} \notag \\
 = \quad & \sum_{j=1}^n \sum_{k\in \Z} A[j+kn] e^{\frac{2\pi\i}{n} i (j + kn) } & \text{~by~$e^{\frac{2\pi\i}{n} \cdot ikn} =1$} \notag \\
 = \quad & \sum_{j \in \Z} A[j]e^{ 2\pi\i j \frac{i}{n}} = \widehat{A}(i/n) & \text{~by~DTFT}.
\end{align}
Combining Equation (\ref{eq:hat_B_is_hat_A}) and Equation (\ref{eq:hat_A_is_sum_of_hat_W}), we obtain that $\widehat{B}[j] = \widehat{A}(j/n) = \sum_{i\in \Z}\widehat{W}(j/n + i)$ for all $j \in [n]$.
\end{proof}
\begin{claim}\label{cla:DFT_B_to_BD}
Let $u\in \C^B$ and $V\in \C^{BD}$ such that for any $j\in B$, $u[j] =\underset{i\in [D] }{\sum} V[j+(i-1)B]$. Then
\[ \widehat{u}[j ] = \widehat{V}[jD] , \forall j \in [B].\]
\end{claim}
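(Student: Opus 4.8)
The plan is to unfold the definition of the DFT on both sides of the claimed identity and observe that the aliasing relation defining $u$ from $V$ is precisely dual to subsampling in the frequency domain. This is the finite (DFT--to--DFT) analogue of the DTFT computation already carried out inside Lemma~\ref{lem:property_with_DTFT_and_DFT}, in particular the chain of equalities leading to \eqref{eq:hat_B_is_hat_A}, so I would essentially reuse that argument with $n=BD$ and the sampling factor $D$ in place of $k n$ there.

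Concretely, I would start from the right-hand side. Using the DFT normalization fixed in Lemma~\ref{lem:property_with_DTFT_and_DFT} (namely $\widehat{w}[i]=\sum_{m} w[m] e^{2\pi \i\, i m/N}$ for a length-$N$ vector $w$), write
$\widehat{V}[jD] = \sum_{n=1}^{BD} V[n]\, e^{2\pi \i (jD) n/(BD)} = \sum_{n=1}^{BD} V[n]\, e^{2\pi \i j n/B}$.
Then reindex the outer sum by the bijection $n = m+(i-1)B$ with $m\in\{1,\dots,B\}$ and $i\in\{1,\dots,D\}$. The key point is that $e^{2\pi \i j n/B} = e^{2\pi \i j m/B}\cdot e^{2\pi \i j(i-1)} = e^{2\pi \i j m/B}$, since $j(i-1)\in\Z$, so the phase depends only on $m$. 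Hence
$\widehat{V}[jD] = \sum_{m=1}^{B} e^{2\pi \i j m/B}\sum_{i=1}^{D} V[m+(i-1)B] = \sum_{m=1}^{B} u[m]\, e^{2\pi \i j m/B} = \widehat{u}[j]$,
where the middle equality is the hypothesis $u[m]=\sum_{i\in[D]} V[m+(i-1)B]$ and the last is the definition of $\widehat{u}$.

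This is a short verification, so I do not expect a genuine obstacle; the only place needing care is the index bookkeeping — checking that $n\mapsto(m,i)$ is a bijection onto $\{1,\dots,BD\}$ and that the wrap-around phase $e^{2\pi \i j(i-1)}$ is trivial (this is exactly where the hypothesis that the long length is $BD$, an integer multiple of $B$, is used), together with keeping the $1$-indexed conventions consistent with the rest of the paper so that the identity $\widehat{u}[j]=\widehat{V}[jD]$ reads off cleanly for all $j\in[B]$.
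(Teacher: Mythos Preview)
Your proof is correct and follows essentially the same route as the paper's: both start from the definition of $\widehat{V}[jD]$, simplify the exponent $e^{2\pi\i (jD)n/(BD)}=e^{2\pi\i jn/B}$, reindex the sum as $n=m+(i-1)B$, observe that the phase depends only on $m$, and then invoke the hypothesis together with the definition of $\widehat{u}$. Your index bookkeeping is in fact a bit cleaner than the paper's (which has a minor slip between $V[i+kB]$ and $V[i+(k-1)B]$ across consecutive lines).
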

\begin{proof}
We prove it through the definition of the Fourier transform:
\begin{align*}
\widehat{V}[jD]  = \quad & \sum_{i=1}^{BD} V[i] \cdot e^{ \frac{2\pi \i}{BD }\cdot i \cdot (jD)} &\text{~by~definition~of~DFT} \\
  = \quad & \sum_{i=1}^B \sum_{k=1}^D V[i+kB] e^{\frac{2\pi\i }{B}  \cdot (i+kB) \cdot j} &\text{~by~replacing~$i$~by~$i+kB$}\\
  = \quad & \sum_{i=1}^B e^{\frac{2\pi\i}{B} \cdot j \cdot i} \sum_{k=1}^D V[i+(k-1)B] &\text{~by~$e^{2\pi\i jk}=1$}  \\
    = \quad & \sum_{i=1}^B e^{\frac{2\pi\i}{B} \cdot j \cdot i} u[i] = \widehat{u}[j]& \text{~by~definition~of~DFT on $u$}
\end{align*}
\end{proof}

We use Definition \ref{def:permutation} and Lemma \ref{lem:permutation} to generalize Lemma \ref{lem:property_with_DTFT_and_DFT},
\begin{corollary}\label{cor:DTFT}
If for all $j\in [n]$, $B[j] = \underset{i\in\Z}{\sum} W\big( (j+in)\sigma -\sigma a \big)$, then $\forall j\in [n]$,
\begin{equation*}
\widehat{B}[j] = \sum_{i\in \Z} \widehat{W} \left( (\frac{j}{n} + i)/\sigma \right)\cdot \frac{1}{\sigma} e^{-2\pi \i (\frac{j}{n} + i) a}.
\end{equation*}
If for all $j\in [n]$, $B[j] = \underset{i\in\Z}{\sum} W\big( (j+in)\sigma -\sigma a \big)e^{-2\pi \i \sigma b(j+in)}$, then $\forall j\in [n]$,
\begin{equation*}
\widehat{B}[j] = \sum_{i\in \Z} \widehat{W} \left( (\frac{j}{n} + i)/\sigma +b\right)\cdot \frac{1}{\sigma} e^{-2\pi \i (\frac{j}{n} + i) a - 2\pi\i \sigma ab}.
\end{equation*}
\end{corollary}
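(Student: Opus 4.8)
\textbf{Proof proposal for Corollary~\ref{cor:DTFT}.}
The plan is to derive this corollary from Lemma~\ref{lem:property_with_DTFT_and_DFT} by an appropriate change of variables, using the permutation function $P_{\sigma,a,b}$ (Definition~\ref{def:permutation}) as the bookkeeping device. First I would treat the simpler case, where $B[j] = \sum_{i\in\Z} W\big((j+in)\sigma - \sigma a\big)$. The idea is to define an auxiliary signal $\wt{W}(t) := W(\sigma t - \sigma a)$, i.e.\ $\wt{W} = P_{\sigma,a,0} W$ up to the missing modulation factor, so that $B[j] = \sum_{i\in\Z}\wt{W}(j+in)$. Then $B$ is exactly the aliasing of the integer samples of $\wt{W}$ in the sense of Lemma~\ref{lem:property_with_DTFT_and_DFT}, so that lemma gives $\wh{B}[j] = \sum_{i\in\Z}\wh{\wt{W}}(j/n + i)$ for all $j\in[n]$.

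Next I would compute $\wh{\wt W}$ in terms of $\wh W$. By the scaling/shift rule (which is the $b=0$ special case of Lemma~\ref{lem:permutation}), $\wh{\wt W}(f) = \frac{1}{\sigma} e^{-2\pi\i a f}\,\wh{W}(f/\sigma)$; one can also verify this directly from the definition of the continuous Fourier transform with the substitution $t\mapsto \sigma t - \sigma a$. Substituting $f = j/n + i$ into this expression and plugging into the formula from the previous paragraph yields
\[
\wh B[j] \;=\; \sum_{i\in\Z} \frac{1}{\sigma}\, e^{-2\pi\i (\frac{j}{n}+i) a}\,\wh W\!\left(\Big(\tfrac{j}{n}+i\Big)\big/\sigma\right),
\]
which is the first claimed identity.

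For the second case I would repeat the argument with the extra modulation absorbed, setting $\wt W(t) := W(\sigma t - \sigma a) e^{-2\pi\i\sigma b t}$, so that $B[j] = \sum_{i\in\Z}\wt W(j+in)$ again (using $e^{-2\pi\i\sigma b(j+in)} = e^{-2\pi\i\sigma b t}$ evaluated at $t=j+in$). Now $\wt W = P_{\sigma,a,b}W$ exactly, and the full statement of Lemma~\ref{lem:permutation} gives $\wh{\wt W}(f) = \frac{1}{\sigma} e^{-2\pi\i\sigma a(f/\sigma + b)}\wh W(f/\sigma + b)$. Applying Lemma~\ref{lem:property_with_DTFT_and_DFT} to $\wt W$ and substituting $f = j/n+i$ gives $\wh B[j] = \sum_{i\in\Z}\frac{1}{\sigma}e^{-2\pi\i\sigma a((\frac{j}{n}+i)/\sigma + b)}\wh W((\frac{j}{n}+i)/\sigma + b)$, and expanding the exponent as $-2\pi\i(\frac{j}{n}+i)a - 2\pi\i\sigma ab$ produces the stated formula.

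The only genuinely delicate point — and the step I expect to require the most care — is justifying the interchange of the (infinite) summation over $i$ with the Fourier transform, i.e.\ that the DTFT of the sampled-and-aliased sequence really equals the periodization of $\wh W$. This is exactly the content of Lemma~\ref{lem:property_with_DTFT_and_DFT} (whose proof in turn uses $\Comb_1$ and the convolution theorem), so for the corollary itself it suffices to cite that lemma; I would only need to check that the hypotheses under which it was stated (integrability / suitable decay of $W$, which holds for the filtered signals $x\cdot H$ and their convolutions with $G^{(j)}_{\sigma,b}$ that arise in \textsc{HashToBins}) are met in the intended application, and everything else is the routine change of variables sketched above.
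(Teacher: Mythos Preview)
Your proposal is correct and follows exactly the route the paper indicates: it states just before the corollary that one ``use[s] Definition~\ref{def:permutation} and Lemma~\ref{lem:permutation} to generalize Lemma~\ref{lem:property_with_DTFT_and_DFT},'' and your argument carries out precisely that---define $\wt W = P_{\sigma,a,b}W$, apply Lemma~\ref{lem:property_with_DTFT_and_DFT} to its integer samples, and read off $\wh{\wt W}$ from Lemma~\ref{lem:permutation}. The paper gives no further details, so your write-up is in fact more explicit than the original.
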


\begin{remark}[Samples of $\textsc{HashToBins}$]
Procedure $\textsc{HashToBins}$ in Algorithm \ref{alg:locateksignal_locatekinner_hashtobins} takes $BD$ samples in $x(t)$:
\begin{equation*}
x(\sigma (1-a)), x(\sigma(2-a)), \cdots, x(\sigma(BD-a)).\end{equation*}
\end{remark}

\begin{figure}[t]
  \centering
    \includegraphics[width=1.0\textwidth]{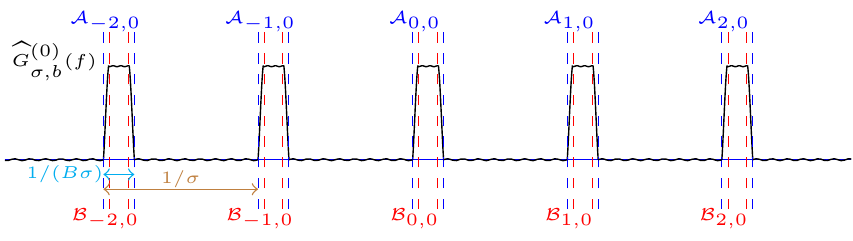}
    \includegraphics[width=1.0\textwidth]{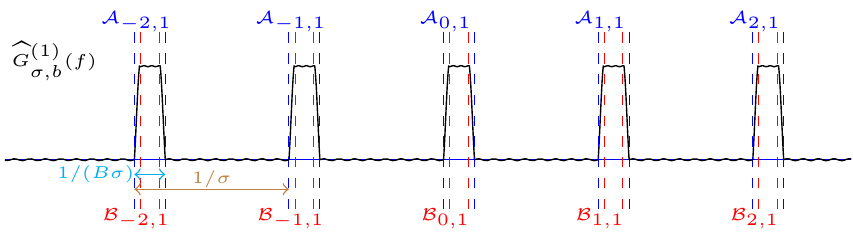}
    \caption{$\widehat{G}_{\sigma,b}^{(j)}(f)$ where the top one is $j=0$ and the bottom one is $j=1$, ${\cal A}_{i,j} = [ \frac{1}{\sigma} (2\pi( i + \frac{j}{B}) - \frac{2\pi}{2B} ), \frac{1}{\sigma} (2\pi( i + \frac{j}{B}) + \frac{2\pi}{2B} ) ]$, ${\cal B}_{i,j} = [ \frac{1}{\sigma} (2\pi( i + \frac{j}{B}) - \frac{2\pi(1-\alpha)}{2B} ), \frac{1}{\sigma} (2\pi( i + \frac{j}{B}) + \frac{2\pi(1-\alpha)}{2B} ) ]$}\label{fig:hatG_j_sigma_b}
\end{figure}

To analyze our algorithm, we use filter function $(G(t), \widehat{G}(f))$ and $\Comb_s(t) = \underset{j\in \Z}{\sum} \delta_{sj}(t)$ to define the discretization of $G$.
\begin{definition}\label{def:G_discretization}
Define the discretization of $G(t)$ and $\widehat{G}(f)$,
\begin{eqnarray*}
G^{\dis}(t)  & = & G(t) \cdot \Comb_s(t) \\
\widehat{G}^{\dis}(f) &=& \frac{1}{s} ( \widehat{G} * \Comb_{1/s})(f) \\
&=&   ( \widehat{G} * \Comb_{1})(f) \\
& = &  \bigl( {\vcenter{\hbox{\includegraphics[width=0.25\textwidth]{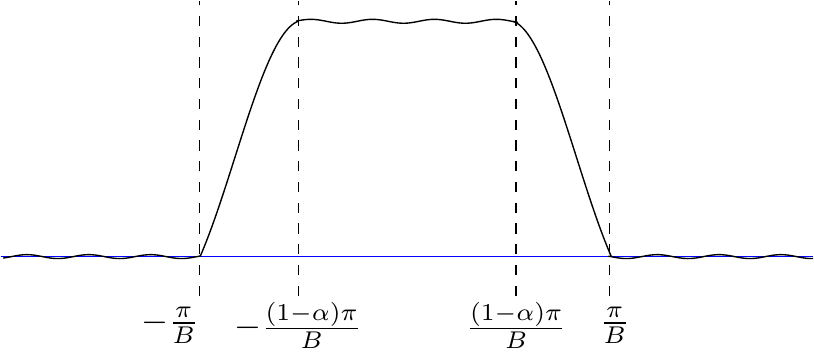}}}} \bigr) * {\vcenter{\hbox{\includegraphics[width=0.3\textwidth]{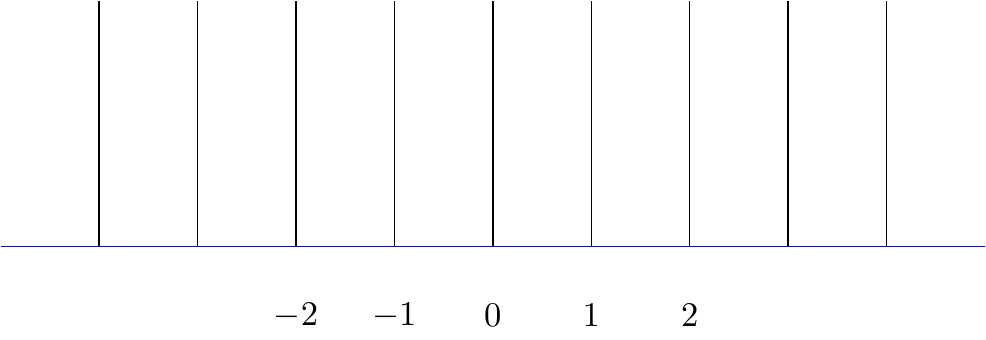}}}} 
\end{eqnarray*}
where $| \supp(G(t)) | = \frac{lB}{\pi \alpha} $, $D = \frac{l}{\pi\alpha}$, $s = | \supp(G(t)) | / (BD) = l /(\pi\alpha D) =1$.
\end{definition}

\restate{def:G_j_sigma_b}

\restate{lem:hashtobins}

\begin{proof}
Recall $B$ is the number of hash bins. $B\cdot D$ is the number of samples in time signal. 
Let $W(t) = x\cdot H(t)$, define vector $y \in \C^{BD}$, then $\forall j \in [BD]$, define
\begin{equation*}
y[j] = W(\sigma(j-a))e^{2\pi\i \sigma bj}
\end{equation*}

Recall $G(t)$ denote the $\rect^{*l}_{B/\alpha}(t) \cdot \sinc(t/B)$, then $|\supp(G(t))| = \frac{l B}{\alpha}$. Let vector $G'  \in \mathbb{C}^{BD}$  is the discretization of $G(t)$, where $G'[i] = G(i)$.  Then, $\forall j \in [B]$,
\begin{equation*}
u[j] = \sum_{i\in [D]} V[j + iB]
\end{equation*}
where $V[j] = y[j] \cdot G'[j]$ and $G'[j]$ is the value at the $j$th nonzero point of $G^{\dis}(t)$. Applying Claim \ref{cla:DFT_B_to_BD} with the definition of $u[j]$ and $V[j+iB]$, gives $\wh{u}[j] = \wh{V}[jD], \forall j \in [B]$.


Because of $u $ is the result of $ \textsc{HashToBins}(x\cdot H, P_{\sigma,a,b}, G)$ and $| \supp(G(t) ) | = BD$(choosing $D =l/\alpha$), then
\begin{equation*}
u[j] = \sum_{i\in \Z}W(\sigma(j+iB -a)) e^{-2\pi\i \sigma b(j+iB)} G(j+iB)
\end{equation*}

Then we define $G''(t) = G(t/\sigma+a) e^{-2\pi\i b\sigma(t/\sigma+a)}$ and $Y(t) =W(t) \cdot G''(t)$, then immediately, we have
\begin{equation*}
\widehat{G}''(f) = \sigma\widehat{G}(\sigma (f-b) ) e^{2\pi\i a \sigma f} \quad \text{~and~}\quad \widehat{Y}(f) = \widehat{W}(f) * \widehat{G}''(f)
\end{equation*}
Thus, we can rewrite $u[j]$ in the following sense,
\begin{align*}
\quad  &u[j] \\
= \quad &\sum_{i\in \Z}W(\sigma(j+iB -a)) e^{-2\pi\i \sigma b(j+iB)} G(j+iB) \\
=\quad &\sum_{i\in \Z}W(\sigma(j+iB -a)) G''(\sigma(j+iB-a)) & \text{~by~$G''(t) = G(t/\sigma+a) e^{-2\pi\i b\sigma(t/\sigma+a)}$} \\
=\quad &\sum_{i\in \Z}Y(\sigma(j+iB -a)) &\text{~by~$Y(t) =W(t) \cdot G''(t)$}
\end{align*}

Then
\begin{align*}
& \widehat{u}[j] \\
= \quad &  \sum_{i\in \Z} \widehat{Y}( (\frac{j}{B} + i)/\sigma ) \cdot \frac{1}{\sigma} \cdot e^{-2\pi \i (\frac{j}{B}+i) a} &\text{~by~Corollary~\ref{cor:DTFT}} \\
= \quad & \sum_{i\in \Z}\int_{-\infty}^{+\infty} \widehat{W}(s) \cdot \widehat{G}''( \frac{j/B+i}{\sigma} -s ) \cdot \frac{1}{\sigma} \cdot e^{-2\pi \i \cdot (j/B+i ) a} \mathrm{d} s&\text{~by~$\widehat{Y}(f) = \widehat{W}(f)* \widehat{G}''(f)$}\\
= \quad & \sum_{i\in \Z}\int_{-\infty}^{+\infty} \widehat{W}(s) \cdot \widehat{G}(  j/B+i -\sigma s -\sigma b) \cdot  e^{-2\pi \i \cdot ( -\sigma s) a}  \mathrm{d} s&\text{~by~$\widehat{G}''(f) = \sigma\widehat{G}(\sigma (f-b) ) e^{2\pi\i a \sigma f}$}\\
= \quad &\int_{-\infty}^{+\infty} \widehat{W}(s) \cdot \sum_{i\in \Z} \widehat{G} ( (j/B+i) - \sigma s -\sigma b ) e^{2\pi\i a \sigma s} \mathrm{d} s \\
= \quad & \int_{-\infty}^{+\infty} \widehat{W}(s) \cdot \widehat{G}^{\dis}( \frac{j}{B} -\sigma s -\sigma b ) e^{-2\pi\i a \sigma s} \mathrm{d} s &\text{~by~$\widehat{G}^{\dis}(f) = \sum_{i\in \Z} \widehat{G}(f+i) $}
\end{align*}
By definition \ref{def:G_discretization},
\begin{equation*}
\widehat{G}^{(j)}_{\sigma,b} = \widehat{G}^{\dis}(\frac{j}{B}-\sigma s -\sigma b) = \sum_{i\in \Z} \widehat{G}(i + \frac{j}{B} - \sigma s -\sigma b)
\end{equation*}

By definition of $\widehat{z}$, we have
\begin{equation*}
\widehat{z}(s) = \widehat{x\cdot H}(s) \cdot \widehat{G}^{(j)}(s) =  \widehat{W}(s) \cdot \widehat{G}^{(j)}(s)
\end{equation*}
Then $\widehat{u}[j]$ is the $(a\sigma)^{th}$ inverse Fourier coefficients of $\widehat{z}$, basically,
\begin{equation*}
\widehat{u}[j] = z_{a\sigma} = z(a\sigma)
\end{equation*}
Thus, we can conclude first computing vector $u\in \mathbb{C}^B$. Getting vector $\widehat{u} \in \mathbb{C}^B $ by using the Discrete Fourier transform $\widehat{u} = \mathrm{DFT}(u)$. This procedure allows us to sample from time domain to implicitly access the time signal's Fourier transform $\widehat{z}$. If $z$ is one-cluster in frequency domain, then apply one-cluster recovery algorithm.

\end{proof}



%



\section{Acknowledgments}
The authors would to like thank Aaron Sidford and David Woodruff for useful discussions.


\section{Algorithm}
This section lists the pseudocode of our algorithms.



\begin{algorithm}[ht]
\caption{}\label{alg:getempirical1enery_getlegal1sample}
\begin{algorithmic}[1]
\Procedure{$\textsc{GetEmpirical1Energy}$}{$z,T,\Delta$} --- Claim \ref{cla:get_empirical_1_energy}
	\State $R_{\est} \leftarrow (T \Delta )^2$
	\For {$i=1 \to R_{\est}$}
		\State Choose $\alpha_i \in [0,T]$ uniformly at random
		\State $z_{\emp} \leftarrow z_{\emp} + |z(\alpha_i)|^2$
	\EndFor
	\State $z_{\emp}\leftarrow \sqrt{ z_{\emp} / R_{\est}}$
	\State \Return $z_{\emp}$
\EndProcedure
\Procedure{$\textsc{GetLegal1Sample}$}{$z,\Delta,T,\beta,z_{\emp}$} --- Lemma \ref{lem:get_legal_1_sample}
	\State $R_{\repeats} \leftarrow (T \Delta )^3, S_{\heavy} \leftarrow \emptyset$
	\For {$i=1 \to R_{\repeats}$}
		\State Choose $\alpha_i \in [0,T]$ uniformly at random
		\If {$|z(\alpha_i)| \geq 0.5 \cdot z_{\emp} $}
			\State $S_{\heavy} \leftarrow S_{\heavy} \cup i$
		\EndIf		
	\EndFor
	\For { $i \in S_{\heavy}$}
		\State $w(i)\leftarrow |z(\alpha_i)|^2+|z(\alpha_i+\beta)|^2 $
	\EndFor
	\State $\alpha \leftarrow \alpha_i$ with probability $w(i)/\sum_{j\in S_{\heavy}} w(j)$ for $i \in S_{\heavy}$
	\State \Return $\alpha$
\EndProcedure
\end{algorithmic}
\end{algorithm}

\begin{algorithm}[!t]
\caption{ }\label{alg:locate1signal_locate1inner_frequencyrecovery1cluster}
\begin{algorithmic}[1]

\Procedure{$\textsc{Locate1Signal}$}{$z,T,F,\Delta,z_{\emp}$} --- Lemma \ref{lem:locate_1_signal}
	\State Set $t \eqsim \log(FT)$, $t' = t/4$, $D_{\max} \eqsim \log_{t'}(FT)$, $R_{\loc} \eqsim \log_{1/c}(tc)$, $L^{(1)}=2F$
	\For{$i\in [D_{\max}]$}
		\State $ l \eqsim 2F/(t')^{i-1}\Delta$, $s\eqsim c$, $\wh{\beta} = \frac{ts}{2 \Delta l}$
		\If {$ \wh{\beta} \gtrsim T/(T\Delta)^{3/2}$}
			\State {\bf break}
		\Else
			\State $L^{(i)} \leftarrow \textsc{Locate1Inner}$($z,\Delta,T,\wh{\beta},z_{\emp}, L^{(i-1)}$)
		\EndIf
	\EndFor
	\State \Return $L^{(i)}$
\EndProcedure
\Procedure{$\textsc{Locate1Inner}$}{$z,\Delta,T,\wh{\beta},z_{\emp},\wt{L}$}
	\State Let $v_{q}\leftarrow 0$ for $q\in [t]$ 
	\While{ $ r=1\to R_{\loc}$}
		\State Choose $\beta \in [\frac{1}{2}\wh{\beta}, \wh{\beta} ]$ uniformly at random
		\State $\gamma\leftarrow \textsc{GetLegal1Sample}(z,\Delta,T,\beta,z_{\emp}) $
			\For {$i \in [m]$}
				\State $s_i\in [ \beta( \wt{L} -\Delta l/2) ,  \beta( \wt{L} + \Delta l/2)  ] \cap \mathbb{Z}_+,\theta_{i} = \frac{1}{2\pi\sigma\beta} ( \phi(x(\gamma)/x(\gamma+\beta)) +2\pi s_i)$
				\State Let $\theta_{i}$ belong to $\text{region}(q)$
				\State Then add a vote to region($q$) and its two neighbors, i.e., region($q-1$) and region($q+1$)
			\EndFor
	\EndWhile	
	\State $q_j^* \leftarrow \{q | v_{q}>\frac{R_{\loc}}{2} \}$
	\State \Return $L \leftarrow \mathrm{center~of~region}( q_j^*)$
\EndProcedure
\Procedure{$\textsc{FrequencyRecovery1Cluster}$}{$z,T,F,\Delta$} --- Theorem \ref{thm:frequency_recovery_1_cluster}
	\State $z_{\emp} \leftarrow$ \textsc{GetEmpirical1Energy}($z,T,\Delta$)
	\For{$r=1\to O(  k)$} 
		\State $L_r \leftarrow \textsc{Locate1Signal}$($z,T,F,\Delta,z_{\emp}$)
	\EndFor
	\State \Return $L^*\leftarrow \underset{r\in [O( k)]}{\median} ~L_r$
\EndProcedure
\end{algorithmic}
\end{algorithm}

\begin{algorithm}[t]
\caption{Main algorithm for one-cluster recovery}\label{alg:main_1}
\begin{algorithmic}[1]
\Procedure{$\textsc{CFT1Culster}$}{$x,H,T,F$} --- Theorem \ref{thm:cft1cluster}
	\State $\wt{f_0}  \leftarrow\textsc{FrequencyRecovery1Cluster}$($x,H,T,F$)
	\State $\wt{x}\leftarrow \textsc{SignalRecovery1Cluster}$($\wt{f}_0, \poly(k) \Delta_h $)
	\State \Return $\wt{x}$
\EndProcedure

\Procedure{$\textsc{GenerateIntervals}$}{$d$}
\State $n \leftarrow y_0 \leftarrow i \leftarrow 0$, $m \leftarrow \Theta(d)$
	\While{$y_i \le 1 - \frac{9}{m^2}$}
		\State $y_{i+1}\leftarrow y_i + \frac{\sqrt{1-y_i^2}}{m}$,  $I_{n+1}\leftarrow[y_i,y_{i+1}], I_{n+2}\leftarrow[-y_{i+1},-y_i]$
		\State $i\leftarrow i+1, n\leftarrow n+2$
	\EndWhile
	\State  $I_{n+1}\leftarrow [y_i,1], I_{n+2}\leftarrow [-y_{i},-1]$, $n\leftarrow n+2$
	\State \Return $n,I$
\EndProcedure
\Procedure{\textsc{RobustPolynomialLearning}}{$x,d,T$}  --- Theorem \ref{thm:faster_poly_learning}
	\State $(n, I) \leftarrow \textsc{GenerateIntervals}$($d$)
	\For {$j=1\to n$}
		\State $w_j \leftarrow |I_j|/2$
		\State Choose $t_j$ from $I_j$ uniformly at random
		\State $z_j \leftarrow x(T \cdot \frac{t_j+1}{2}) $
	\EndFor 
	\State $\wt{A}_{j,i}\leftarrow t_j^i$, for each $(j,i)\in [n]\times \{0,1,\cdots,d\}$
	\State $\alpha \leftarrow$ \textsc{LinearRegressionW}($\wt{A},\wt{b}=z,w$)
	\State $Q(t) \leftarrow \sum_{i=0}^d \alpha_i t^i$
	\State \Return $\wt{Q}(t)=Q(T \cdot \frac{t+1}{2})$
\EndProcedure
\Procedure{\textsc{RobustPolynomialLearning$^{+}$}}{$x,d,T$} 
--- Theorem \ref{thm:accurate_poly_learning} --- a.k.a. \textsc{SignalRecovery1Cluster} 
\State $R \leftarrow \Theta(d)$
\State $(n, I) \leftarrow \textsc{GenerateIntervals}$($d$)
\State $w_j\leftarrow |I_j|/2$, for each $j\in [n]$
\For {$i=1\to R$}
	\State $Q_i \leftarrow \textsc{RobustPolynomialLearning}$($x,d,T$)
\EndFor
\State Choose $t_j$ from $I_j$ uniformly at random, for each $j\in [n]$
\For {$i=1\to R$}
	\State $Q_i(t_1), Q_i(t_2), \cdots, Q_i(t_n)\leftarrow$ \textsc{MultipointEvaluation}($Q_i,\{t_1,t_2,\cdots,t_n\}$)
\EndFor

\State $\wt{Q}_j \leftarrow \underset{i\in [R] }{\median} ~Q_i(t_j)$, for each $j \in[n]$
\State $\wt{A}_{j,i}\leftarrow t_j^i$, for each $(j,i)\in [n]\times \{0,1,\cdots,d\}$
\State $\alpha \leftarrow$ \textsc{LinearRegressionW}($\wt{A},\wt{b}=\wt{Q},w$)
\State \Return $Q(t) \leftarrow \sum_{i=0}^d \alpha_i t^i$

\EndProcedure
\end{algorithmic}
\end{algorithm}

\begin{algorithm}[t]
\caption{}\label{alg:locateksignal_locatekinner_hashtobins}
\begin{algorithmic}[1]
\Procedure{$\textsc{LocateKSignal}$}{$x,H,G,T,\Delta,\sigma,b,z_{\emp}$} --- Clain~\ref{cla:locate_k_signal}
	\State Set $t \eqsim \log(FT)$, $t' = t/4$, $D_{\max} \eqsim \log_{t'}(FT)$, $R_{\loc} \eqsim \log_{1/c}(tc)$, $L^{(1)}=2F$
	\For{$i\in [D_{\max}]$}
		\State $\Delta l \eqsim 2F/(t')^{i-1}$, $s\eqsim c$, $\wh{\beta} = \frac{ts}{2\sigma \Delta l}$
		\If {$\sigma \wh{\beta} \gtrsim T/(T\Delta)^{3/2}$}
			\State {\bf break}
		\Else
			\State $L^{(i)} \leftarrow \textsc{LocateKInner}$($x,H,G,T,\Delta,\sigma,b,z_{\emp}\wh{\beta},U, L^{(i-1)}$)
		\EndIf
	\EndFor
	\State \Return $L^{(i)}$
\EndProcedure
\Procedure{$\textsc{LocateKInner}$}{$x,H,G,T,\Delta,\sigma,b,z_{\emp}\wh{\beta},U,\wt{L}$}
	\State Let $v_{j,q}\leftarrow 0$ for $(j,q)\in [B]\times [t]$
	\For {$r=1 \to R_{\loc}$}
		\State Choose $\beta \in [\frac{1}{2}\wh{\beta}, \wh{\beta} ]$ uniformly at random
		\State $\wh{u}, \wh{u}' \leftarrow$ \textsc{GetLegalKSample}($x,H,G, T,\Delta,\sigma,\beta, z_{\emp}$)
		\For {$j\in [B]$}
			\For {$i \in [m]$}
				\State $\theta_{j,i} = \frac{1}{2\pi\sigma\beta} ( \phi(\widehat{u}[j]/\widehat{u'}[j]) +2\pi s_i), s_i\in [ \sigma\beta( \wt{L}_j -\Delta l/2) ,  \sigma\beta( \wt{L}_j + \Delta l/2)  ] \cap \mathbb{Z}_+$

				\State $f_{j,i} = \theta_{j,i} + b \pmod F$
				\State suppose $f_{j,i}$ belongs to $\text{region}(j,q)$, 
				\State add a vote to both region($j,q$) and two neighbors nearby that region, e.g. region($j,q-1$) and region($j,q+1$)
			\EndFor
		\EndFor
	\EndFor
	\For {$j \in [B]$}
		\State $q_j^* \leftarrow \{q | v_{j,q}>\frac{R_{\loc}}{2} \}$
		\State $L_j \leftarrow \mathrm{center~of~region}(j, q_j^*)$
	\EndFor
	\State \Return $L$
\EndProcedure
\Procedure{$\textsc{HashToBins}$}{$x,H,G,P_{\sigma,a,b}$} --- Lemma \ref{lem:hashtobins}
	\State Compute $u[j] = \sum_{i\in D} v[j+iB]$
	\State $\wh{u} \leftarrow \mathrm{FFT}(u)$
	\State \Return $\wh{u}$ 
\EndProcedure
\end{algorithmic}
\end{algorithm}

\begin{algorithm}[ht]
\caption{}\label{alg:getempiricalkenergy_getlegalksample_onestage}
\begin{algorithmic}[1]
\Procedure{$\textsc{GetEmpiricalKEnergy}$}{$x,H,G,T,\Delta,\sigma,b$} --- Claim \ref{cla:get_empirical_k_energy}
	\State $R_{\est} \leftarrow (T \Delta )^2$
	\For {$i=1 \to R_{\est}$}
		\State Choose $\alpha \in [0,T]$ uniformly at random
		\State $\wh{u} \leftarrow \textsc{HashToBins}$($x,H,G,P_{\sigma,\alpha,b}$)
		\For{$j=1\to B$}
			\State $z_{\emp}^j \leftarrow z_{\emp}^j + |\wh{u}_j|^2$
		\EndFor
		
	\EndFor
	\For {$j=1 \to B$}
		\State $z_{\emp}^j\leftarrow \sqrt{ z_{\emp}^j / R_{\est}}$
	\EndFor
	\State \Return $z_{\emp}$.
\EndProcedure
\Procedure{$\textsc{GetLegalKSample}$}{$x,H, G,T,\Delta,\beta,z_{\emp}$} --- Lemma \ref{lem:get_legal_k_sample}
	\State $R_{\repeats} \leftarrow (T \Delta )^3$.
	\State $S_{\heavy}^j \leftarrow \emptyset, \forall j \in [B]$
	\For {$i=1 \to R_{\repeats}$}
		\State Choose $\alpha \in [0,T]$ uniformly at random
		\State $\wh{u}^{i} \leftarrow \textsc{HashToBins}$($x,H,G,P_{\sigma,\alpha,b}$)
		\State $\wh{u}^{'i} \leftarrow \textsc{HashToBins}$($x,H,G,P_{\sigma,\alpha+\beta,b}$)
		\For {$j=1\to B$}
			\If {$| \wh{u}^{i}_j | \geq 0.5 \cdot z_{\emp}^j $}
				\State $S_{\heavy,j} \leftarrow S_{\heavy}^j \cup i$
			\EndIf		
		\EndFor
	\EndFor
	\For{$j=1 \to B$}
		\For { $i \in S_{\heavy}^j$}
			\State $w(i)\leftarrow |\wh{u}_j^i|^2+|{\wh{u}}_j^{'i}|^2 $
		\EndFor
		\State $(\wh{v}_j, \wh{v}'_j) \leftarrow (\wh{u}_j^i, \wh{u}_j^{'i})$ with probability $w(i)/\sum_{i'\in S_{\heavy}^j} w(i')$ for $i \in S_{\heavy}^j$
	\EndFor
	\State \Return $\wh{v}, \wh{v}'\in \mathbb{C}^{B}$
\EndProcedure
\Procedure{$\textsc{OneStage}$}{$x,H,G,\sigma,b$} --- Lemma \ref{lem:one-round-full}
	\State $z_{\emp} \leftarrow $ \textsc{GetEmpiricalKEnergy}($x,H,G,T,\Delta,\sigma,b$)
	\State $L \leftarrow \textsc{LocateKSignal}$($x,H,G,T,\Delta,\sigma,b,z_{\emp}$)
\EndProcedure
\end{algorithmic}
\end{algorithm}

\begin{algorithm}[t]
\caption{Main algorithm for $k$-cluster recovery}\label{alg:main_k}
\begin{algorithmic}[1]
\Procedure{$\textsc{CFTKCluster}$}{$x,H,G,T,F$}
	\State $ \{\wt{f}_1,\cdots,\widetilde{f}_l\} \leftarrow$ \textsc{FrequencyRecoveryKCluster($x,H,G,T,F$)} 
	\State $\wt{x} \leftarrow $ \textsc{SignalRecoveryKCluster$^+$}$(\wt{f}_1,\cdots,\widetilde{f}_l,\Delta=\polydelta,T)$
	\State \Return $\wt{x}$ as our hypothesis
\EndProcedure
\Procedure{$\textsc{FrequencyRecoveryKCluster}$}{$x,H,G$} --- Theorem \ref{thm:frequency_recovery_k_cluster}
	\For {$r\in [R]$}
		\State {Choose $\sigma \in [\frac{1}{B\Delta_h}, \frac{2}{B \Delta_h}]$ uniformly at random}
		\State {Choose $b \in [0, \frac{ 2\pi\lfloor F/\Delta_h \rfloor] }{ (\sigma B) } ]$ uniformly at random}
		\State $L_r \leftarrow\textsc{OneStage}$($x,H,G,\sigma,b$)
	\EndFor
	\State $L^* \leftarrow \textsc{MergedStages}(L_1,L_2, \cdots, L_R)$
\EndProcedure
\Procedure{$\textsc{SignalRecoveryKCluster}$}{$\wt{f}_1,\cdots,\widetilde{f}_l,\Delta,T$}
	\State  $d \leftarrow 5 \pi ( (\Delta T)^{1.5} + k^3 \log k + k \log 1/\delta)$
	\State  $m \leftarrow O( (kd)^{C_3} \cdot \log^{C_3} d )$ for a constant $C_3=5$
	\For {$j= 1\to m$}
		\State {Sample $t_j$ from $[0,T]$ uniformly at random} 
		\State $\wt{A}_{j, i_1 \cdot l + i_2} \leftarrow t_j^{i_1} \cdot e^{2 \pi \i \wt{f}_{i_2} t_j}$ for each $(i_1,i_2) \in \{0,\cdots,d\} \times [l]$
		\State  $\wt{b}_j \leftarrow x(t_j)$
	\EndFor
	\State $\alpha \leftarrow $ \textsc{LinearRegression}($\wt{A},\wt{b}$)
	\State \Return $\wt{x}(t) \leftarrow \overset{d}{ \underset{i_1=0}{\sum}}  \overset{l}{\underset{i_2=1}{\sum}} \alpha_{i_1\cdot l + i_2} t^{i_1} \cdot e^{2 \pi \i \wt{f}_{i_2} t}$
\EndProcedure
\Procedure{\textsc{SignalRecoveryKCluster$^+$}}{$\wt{f}_1,\cdots,\widetilde{f}_l,\Delta,T$} --- Theorem \ref{lem:signal_recovery_k_cluster}
	\State $R\leftarrow \Theta(k)$
	\State  $d \leftarrow 5 \pi ( (\Delta T)^{1.5} + k^3 \log k + k \log 1/\delta)$
	\State  $m \leftarrow O( (kd)^{C_3} \cdot \log^{C_3} d )$ for a constant $C_3=5$ 
	\For{$i=1\to R$}
		\State $\wt{x}_i(t) \leftarrow \textsc{SignalRecoveryKCluster}$($\wt{f}_1,\cdots,\widetilde{f}_l,\Delta,T$)
	\EndFor
	\For {$j= 1\to m$}
		\State {Sample $t_j$ from $[0,T]$ uniformly at random} 
		\State $\wt{A}_{j, i_1 \cdot l + i_2} \leftarrow t_j^{i_1} \cdot e^{2 \pi \i \wt{f}_{i_2} t_j}$ for each $(i_1,i_2) \in \{0,\cdots,d\} \times [l]$
		\State  $\wt{b}_j \leftarrow \underset{i\in [R]}{\median} ~ \wt{x}_i(t_j)$
	\EndFor
	\State $\alpha \leftarrow$ \textsc{LinearRegression}($\wt{A},\wt{b}$)
	\State \Return $\wt{x}(t) \leftarrow \overset{d}{ \underset{i_1=0}{\sum}}  \overset{l}{\underset{i_2=1}{\sum}} \alpha_{i_1\cdot l + i_2} t^{i_1} \cdot e^{2 \pi \i \wt{f}_{i_2} t}$
\EndProcedure
\end{algorithmic}
\end{algorithm}

\end{document}